\documentclass[english,10pt]{article}
\usepackage{fullpage}
\usepackage[T1]{fontenc}
\usepackage[latin9]{inputenc}
\usepackage{geometry}
\geometry{verbose,tmargin=1in,bmargin=1in,lmargin=1in,rmargin=1in}
\usepackage{array}
\usepackage{verbatim}
\usepackage{float}
\usepackage{amsmath}
\usepackage{amssymb}
\usepackage{graphicx}
\usepackage{esint}
\usepackage{enumitem}

\makeatletter

\providecommand{\tabularnewline}{\\}
\floatstyle{ruled}
\newfloat{algorithm}{tbp}{loa}
\providecommand{\algorithmname}{Algorithm}
\floatname{algorithm}{\protect\algorithmname}

 \usepackage{color}
\usepackage{gnuplottex}

\usepackage{tikz, pgf, pgfplots}
\usetikzlibrary{automata,positioning}
\usetikzlibrary{arrows,backgrounds,plotmarks}

\usepackage{cite}\usepackage{amsthm}\usepackage{dsfont}\usepackage{array}\usepackage{mathrsfs}

\usepackage{hyperref}

\definecolor{ejc}{RGB}{255,0,0}
\definecolor{yxc}{RGB}{0,0,200}

\newcommand{\yxc}[1]{\textcolor{yxc}{[YXC: #1]}}

\makeatother

\usepackage{babel}
\begin{document}
\theoremstyle{plain}\newtheorem{lemma}{\textbf{Lemma}}\newtheorem{theorem}{\textbf{Theorem}}\newtheorem{fact}{\textbf{Fact}}\newtheorem{corollary}{\textbf{Corollary}}\newtheorem{assumption}{\textbf{Assumption}}\newtheorem{example}{\textbf{Example}}\newtheorem{definition}{\textbf{Definition}}\newtheorem{proposition}{\textbf{Proposition}}

\theoremstyle{definition}

\theoremstyle{remark}\newtheorem{remark}{\textbf{Remark}}

\date{May 2015; ~Revised February 2016}

\title{Solving Random Quadratic Systems of Equations\\
  Is Nearly as Easy as Solving Linear Systems}

\author{Yuxin Chen \thanks{Department of Statistics, Stanford
    University, Stanford, CA 94305, U.S.A.}  \and Emmanuel J. Cand\`es
  \footnotemark[1] \thanks{Department of Mathematics, Stanford
    University, Stanford, CA 94305, U.S.A.} }

\maketitle


\begin{abstract}
  We consider the fundamental problem of solving quadratic systems of
  equations in $n$ variables, where
  $y_i = |\langle \boldsymbol{a}_i, \boldsymbol{x} \rangle|^2$,
  $i = 1, \ldots, m$ and $\boldsymbol{x} \in \mathbb{R}^n$ is
  unknown. We propose a novel method, which starting with an initial
  guess computed by means of a spectral method, proceeds by minimizing
  a nonconvex functional as in the Wirtinger flow approach
  \cite{candes2014wirtinger}. There are several key distinguishing
  features, most notably, a distinct objective functional and novel
  update rules, which operate in an adaptive fashion and drop terms
  bearing too much influence on the search direction.  These careful
  selection rules provide a tighter initial guess, better descent
  directions, and thus enhanced practical performance. On the
  theoretical side, we prove that for certain unstructured models of
  quadratic systems, our algorithms return the correct solution in
  linear time, i.e.~in time proportional to reading the data
  $\{\boldsymbol{a}_i\}$ and $\{y_i\}$ as soon as the ratio $m/n$
  between the number of equations and unknowns exceeds a fixed
  numerical constant.  We extend the theory to deal with noisy systems
  in which we only have
  $y_i \approx |\langle \boldsymbol{a}_i, \boldsymbol{x} \rangle|^2$
  and prove that our algorithms achieve a statistical accuracy, which
  is nearly un-improvable. We complement our theoretical study with
  numerical examples showing that solving random quadratic systems is
  both computationally and statistically not much harder than solving
  linear systems of the same size---hence the title of this paper. For
  instance, we demonstrate empirically that the computational cost of
  our algorithm is about four times that of solving a least-squares
  problem of the same size.

\end{abstract}

\section{Introduction}
 
\subsection{Problem formulation}

Imagine we are given a set of $m$ quadratic equations taking the form
\begin{equation}
	y_{i} = \left| \langle \boldsymbol{a}_i,\boldsymbol{x} \rangle \right|^2 , \quad\quad i = 1, \cdots, m,
	\label{eq:Noiseless}
\end{equation}
where the data $\boldsymbol{y}= [y_{i}]_{1\leq i\leq m}$ and design
vectors $\boldsymbol{a}_{i} \in\mathbb{R}^n/\mathbb{C}^n$ are known
whereas $\boldsymbol{x}\in\mathbb{R}^n/\mathbb{C}^{n}$ is unknown.  Having
information about
$\left| \langle \boldsymbol{a}_i,\boldsymbol{x} \rangle
\right|^2$---or,
equivalently,
$\left| \langle \boldsymbol{a}_i,\boldsymbol{x} \rangle
\right|$---means
that we a priori know nothing about the phases or signs of the linear
products $\langle \boldsymbol{a}_i,\boldsymbol{x} \rangle$.  The
problem is this: can we hope to identify a solution, if any,
compatible with this nonlinear system of equations?

This problem is combinatorial in nature as one can alternatively pose
it as recovering the missing signs of $\langle
\boldsymbol{a}_i,\boldsymbol{x} \rangle$ from magnitude-only
observations.  As is well known, many classical combinatorial problems
with Boolean variables may be cast as special instances of
(\ref{eq:Noiseless}). 
%
As an example, consider the NP-hard {\em stone problem} \cite{ben2001lectures} in which
we have $n$ stones each of weight $w_i >0$ ($1\leq i\leq n$), which we
would like to divide into two groups of equal sum weight.  Letting
$x_i \in \{-1, 1\}$ indicate which of the two groups the $i$th stone
belongs to, one can formulate this problem as solving the following quadratic
system
\begin{align}
	\begin{cases}
	x_i^2 = 1,  \quad\quad i = 1, \cdots, n,  \\
        (w_1 x_1 + \cdots + w_n x_n)^2 = 0.	
	\end{cases}
	\label{eq:stones}
\end{align}
However simple this formulation may seem, even checking whether a
solution to (\ref{eq:stones}) exists or not is known to be NP-hard.

Moving from combinatorial optimization to the physical sciences, one
application of paramount importance is the \emph{phase retrieval}
\cite{gerchberg1972practical,fienup1982phase} problem, which permeates
through a wide spectrum of techniques including X-ray crystallography,
diffraction imaging, microscopy, and even quantum mechanics. In a nutshell, the problem of
phase retrieval arises due to the physical limitation of optical
sensors, which are often only able to record the intensities of the
diffracted waves scattered by an object under study.  Notably, upon
illuminating an object $\boldsymbol{x}$, the diffraction pattern is of
the form of $\boldsymbol{A}\boldsymbol{x}$; however, it is only
possible to obtain intensity measurements $\boldsymbol{y} =
|\boldsymbol{A}\boldsymbol{x}|^2$ leading to the quadratic system
\eqref{eq:Noiseless}.\footnote{Here and below, for $\boldsymbol{z} \in
  \mathbb{C}^n$, $|\boldsymbol{z}|$ (resp.~$|\boldsymbol{z}|^2$) represents
  the vector of magnitudes $(|z_1|, \cdots, |z_n|)^{\top}$ (resp.~squared magnitudes $(|z_1|^2, \cdots, |z_n|^2)^{\top}$).} In the
Fraunhofer regime where data is collected in the far-field zone,
$\boldsymbol{A}$ is given by the spatial Fourier transform.
We refer to \cite{shechtman2014phase} for in-depth reviews of this
subject.

Continuing this motivating line of thought, in any real-world
application recorded intensities are always corrupted by at least a
small amount of noise so that observed data are only about
$|\langle \boldsymbol{a}_i, \boldsymbol{x} \rangle|^2$; i.e.
\begin{equation}
	y_{i} \approx \left| \langle \boldsymbol{a}_i,\boldsymbol{x} \rangle \right|^2 , \quad\quad i = 1, \cdots, m.
	\label{eq:approx}
\end{equation}
Although we present results for arbitrary noise distributions---even
for non-stochastic noise---we shall pay particular attention to the
Poisson data model, which assumes
\begin{equation}
	y_{i}\stackrel{\text{ind.}}{\sim} \mathsf{Poisson} \big(\left|\left\langle \boldsymbol{a}_i, \boldsymbol{x}\right\rangle \right|^2 \big),
	\quad\quad i = 1, \cdots, m. 
  	\label{eq:Poisson}
\end{equation}
The reason why this statistical model is of special interest is that it
naturally describes the variation in the number of photons detected by
an optical sensor in various imaging applications.  

\subsection{Nonconvex optimization}

Under a stochastic noise model with independent samples, a first
impulse for solving (\ref{eq:approx}) is to seek the maximum
likelihood estimate (MLE), namely,
\begin{eqnarray}
	\text{minimize}_{\boldsymbol{z}} \quad - \sum\nolimits_{i=1}^{m}\ell \left(\boldsymbol{z}; y_i\right),
\end{eqnarray}
where $\ell \left(\boldsymbol{z}; y_i\right)$ denotes the log-likelihood
of a candidate solution $\boldsymbol{z}$ given the outcome $y_{i}$.
For instance, under the Poisson data model (\ref{eq:Poisson})
one can write 
\begin{equation}
	 \ell  (\boldsymbol{z}; y_i) = y_i \log(|\boldsymbol{a}_i^{*}\boldsymbol{z}|^2)
				 - |\boldsymbol{a}_i^{*}\boldsymbol{z}|^2  
	\label{eq:PoissonLL}
\end{equation} 
modulo some constant offset.  Unfortunately, the log-likelihood is
usually not concave, thus making the problem of computing the MLE NP-hard in general.

To alleviate this computational intractability, several convex
surrogates have been proposed that work particularly well when the
design vectors $\{\boldsymbol{a}_{i}\}$ are chosen at random
\cite{candes2012phaselift,candes2012solving, waldspurger2012phase,chen2013exact,shechtman2011sparsity, ohlsson2011compressive, 
  demanet2012stable, li2013sparse,
  cai2015rop, jaganathan2012recovery, chen2014convex, kueng2014low, bahmani2015efficient, gross2014partial, gross2014improved}. The basic idea
is to introduce a rank-one matrix
$\boldsymbol{X}=\boldsymbol{x}\boldsymbol{x}^{*}$ to linearize the
quadratic constraints, and then relax the rank-one constraint. Suppose
we have Poisson data, then this strategy converts the problem into a
convenient convex program:
\[
\begin{array}{ll}
  \text{minimize}_{\boldsymbol{X}} &  \quad \sum_{i=1}^m  (\mu_i - y_i \log \mu_i)  + \lambda\mathsf{Tr}(\boldsymbol{X}) \nonumber\\
  \text{subject to} &  \quad \mu_{i}=\boldsymbol{a}_i^{\top} \boldsymbol{X}\boldsymbol{a}_i,  \quad1\leq i\leq m, \\
  &  \quad \boldsymbol{X}\succeq{\bf 0}.
\end{array}
\]
Note that the log-likelihood function is augmented by the trace functional $\mathsf{Tr}(\cdot)$
whose role is to promote low-rank solutions.  While such convex
relaxation schemes enjoy intriguing performance guarantees in many
aspects (e.g.~they achieve minimal sample complexity
and near-optimal error bounds for certain noise models), the
computational cost typically far exceeds the order of $n^3$. This
limits applicability to large-dimensional data.

This paper follows another route: rather than lifting the problem into
higher dimensions by introducing matrix variables, this paradigm
maintains its iterates within the vector domain and optimize the
nonconvex objective directly
(e.g. \cite{gerchberg1972practical,candes2014wirtinger,netrapalli2013phase,elser2003phase,Schniter2015,zheng2015convergent,marchesin2014alternating,repetti2014nonconvex,shechtman2013gespar,white2015local}).
One promising approach along this line is the recently proposed
two-stage algorithm called \emph{Wirtinger Flow} (WF)
\cite{candes2014wirtinger}.  Simply put, WF starts by computing a
suitable initial guess $\boldsymbol{z}^{(0)}$ using a spectral method,
and then successively refines the estimate via an update rule that
bears a strong resemblance to a gradient descent scheme, namely,
\[
	\boldsymbol{z}^{(t+1)}=\boldsymbol{z}^{(t)} + \frac{\mu_{t}}{m} \sum\nolimits_{i=1}^{m}\nabla\ell ( \boldsymbol{z}^{(t)}; y_i ),
	\label{eq:WF}
\]
where $\boldsymbol{z}^{(t)}$ denotes the $t$th iterate of the
algorithm, and $\mu_{t}$ is the step size (or learning rate). Here,
$\nabla\ell(\boldsymbol{z}; y_i)$ stands for the Wirtinger derivative
w.r.t. $\boldsymbol{z}$, which in the real-valued case reduces to the
ordinary gradient.  The main results of \cite{candes2014wirtinger}
demonstrate that WF is surprisingly accurate for independent Gaussian design. Specifically, when
$\boldsymbol{a}_i \sim \mathcal{N}( {\bf 0}, \boldsymbol{I})$ or
$\boldsymbol{a}_i \sim \mathcal{N}( {\bf 0}, \boldsymbol{I} ) + j
\mathcal{N}( {\bf 0}, \boldsymbol{I} )$:
\begin{enumerate}
\item WF achieves exact recovery from
  $m= O\left(n\log n\right)$ quadratic equations when there is no noise;\footnote{
    The standard notation $f(n)= O \left(g(n)\right)$ or
    $f(n)\lesssim g(n)$ (resp. $f(n)=\Omega\left(g(n)\right)$ or
    $f(n)\gtrsim g(n)$) means that there exists a constant $c>0$ such
    that $\left|f(n)\right|\leq c|g(n)|$ (resp.
    $|f(n)|\geq c\left|g(n)\right|$).
$f(n)\asymp g(n)$ means that there
 exist constants $c_{1},c_{2}>0$ such that $c_{1}|g(n)|\leq|f(n)|\leq c_{2}|g(n)|$.
} 
\item WF attains $\epsilon$-accuracy---in a relative sense---within
  $O (mn^2\log ({1}/{\epsilon}))$ time (or flops);
\item  In the presence of Gaussian noise, WF is stable
    and converges to the MLE as shown in
    \cite{soltanolkotabi2014algorithms}.
\end{enumerate}
While these results formalize the advantages of WF, the computational
complexity of WF is still much larger than the best one can hope for.  Moreover,
the statistical guarantee in terms of the sample complexity is weaker
than that achievable by convex relaxations.\footnote{
    M.~Soltanolkotabi recently informed us that the sample complexity
    of WF may be improved if one employs a better initialization
    procedure.  }


\subsection{This paper: Truncated Wirtinger Flow}

This paper develops an efficient linear-time algorithm, which also enjoys
near-optimal statistical guarantees.  Following the spirit of WF, we
propose a novel procedure called \emph{Truncated Wirtinger Flow} (TWF)
adopting a more adaptive gradient flow. Informally, TWF proceeds in two
stages:
\begin{enumerate}
\item {\bf Initialization:} compute an initial guess
  $\boldsymbol{z}^{(0)}$ by means of a spectral method applied to a
  subset $\mathcal{T}_0$ of the observations $\{y_i\}$;
  \item {\bf Loop:} for $0\leq t<T$,
  \begin{equation}
    \boldsymbol{z}^{(t+1)} = \boldsymbol{z}^{(t)} + \frac{\mu_{t}}{m} \sum_{i\in\mathcal{T}_{t+1}}\nabla \ell (\boldsymbol{z}^{(t)}; y_i )
  \end{equation}
  for some index subset $\mathcal{T}_{t+1} \subseteq \{1,\cdots,m\} $ determined by $\boldsymbol{z}^{(t)}$. 
\end{enumerate}

Three remarks are in order. 
\begin{itemize}
\item Firstly, we regularize both the initialization and the gradient flow in a data-dependent fashion by operating only upon some iteration-varying index subsets $\mathcal{T}_{t}$. This is a distinguishing feature of TWF in
  comparison to WF and other gradient descent variants.  In words, $\mathcal{T}_{t}$ corresponds to
  those data $\{y_i\}$ whose resulting spectral or gradient components are in some sense not
  excessively large; see Sections \ref{sec:Algorithm} and
  \ref{sec:Why-it-works} for details.  As we shall see later, the main point is
  that this careful data trimming procedure gives us a tighter initial guess and more stable search directions.
\item Secondly, we recommend that the step size $\mu_{t}$ is either
  taken as some appropriate constant or determined by a backtracking
  line search.  For instance, under appropriate conditions, we can
  take $\mu_t = 0.2$ for all $t$.
\item Finally, the most expensive part of the gradient stage consists
  in computing $\nabla\ell(\boldsymbol{z};y_{i})$, $1\leq i\leq
  m$, which can often be performed in an efficient manner.  More
  concretely, under the {\em real-valued} Poisson data model (\ref{eq:Poisson}) one has
\begin{equation*}
	\nabla\ell(\boldsymbol{z};y_{i})
	= 2\left\{ \frac{y_{i}}{|\boldsymbol{a}_{i}^{\top}\boldsymbol{z}|^{2}}\boldsymbol{a}_{i}\boldsymbol{a}_{i}^{\top}\boldsymbol{z}-\boldsymbol{a}_{i}\boldsymbol{a}_{i}^{\top}\boldsymbol{z}\right\} 
	= 2\left(\frac{y_{i}-|\boldsymbol{a}_i^{\top}\boldsymbol{z}|^{2}}{\boldsymbol{a}_i^{\top}\boldsymbol{z}}\right) \, \boldsymbol{a}_{i},
\end{equation*}
Thus, calculating $\{\nabla\ell(\boldsymbol{z};y_{i})\}$ essentially
amounts to two matrix-vector products. Letting
$\boldsymbol{A} := [\boldsymbol{a}_1, \cdots,\boldsymbol{a}_m
]^{\top}$ as before, we have
\begin{equation*}
\sum_{i\in\mathcal{T}_{t+1}}\nabla \ell (\boldsymbol{z}^{(t)}; y_i ) = \boldsymbol{A}^{\top} \boldsymbol{v}, \qquad v_i = \begin{cases} 2\, \frac{y_{i}-|\boldsymbol{a}_i^{\top}\boldsymbol{z}|^{2}}{\boldsymbol{a}_i^{\top}\boldsymbol{z}},  & i \in \mathcal{T}_{t+1},\\
  0, & \text{otherwise}.
\end{cases}
\end{equation*}
Hence, $\boldsymbol{A}\boldsymbol{z}$ gives $\boldsymbol{v}$ and 
$\boldsymbol{A}^{\top}\boldsymbol{v}$ the desired regularized
gradient. 
\end{itemize}
A detailed specification of the algorithm is deferred to Section
\ref{sec:Algorithm}.

\subsection{Numerical surprises}

To give the readers a sense of the practical power of TWF, we present
here three illustrative numerical examples. Since it is
impossible to recover the global sign---i.e.~we cannot distinguish
$\boldsymbol{x}$ from $-\boldsymbol{x}$---we will evaluate our
solutions to the quadratic equations through the distance measure put
forth in \cite{candes2014wirtinger} representing the Euclidean
distance modulo a global sign: for complex-valued signals, 
\begin{eqnarray}
	\mathrm{dist}\left(\boldsymbol{z},\boldsymbol{x}\right) := \min\nolimits_{\varphi:\in[0,2\pi)}\|e^{-j\varphi}\boldsymbol{z} - \boldsymbol{x}\|,
	\label{eq:defn-dist}
\end{eqnarray}
while it is simply $\min\|\boldsymbol{z} \pm \boldsymbol{x}\|$ in the
real-valued case.  We shall use
$\mathrm{dist}(\hat{\boldsymbol{x}}, \boldsymbol{x} ) / \|
\boldsymbol{x}\|$
throughout to denote the relative error of an estimate
$\hat{\boldsymbol{x}}$.  In the sequel, TWF proceeds by attempting to maximize the
Poisson log-likelihood (\ref{eq:PoissonLL}).
%
%
Standalone Matlab implementations of TWF are available at
\url{http://statweb.stanford.edu/~candes/publications.html} (see \cite{mahdiWFcode} for straight WF implementations).

The first numerical example concerns the following two problems under noiseless real-valued
data:
\begin{align*}
  (\text{a})\text{ }\quad\text{find }\boldsymbol{x} \in \mathbb{R}^{n} \quad \hspace{4em}
	\text{s.t. }  b_i &= \boldsymbol{a}_i^{\top}\boldsymbol{x}, &1\leq i\leq m;\\
  (\text{b})\quad\text{ }\text{find }\boldsymbol{x} \in \mathbb{R}^{n} \quad \hspace{4em}
	\text{s.t. }  b_i &= | \boldsymbol{a}_i^{\top}\boldsymbol{x} |, &1\leq i\leq m.
\end{align*}
Apparently, (a) involves solving a linear system of equations (or a linear least squares problem), while (b) is tantamount to solving a quadratic system. Arguably the most popular method for solving large-scale
least squares problems is the conjugate gradient (CG) method
\cite{nocedal2006numerical} applied to the normal equations.  We are
going to compare the computational efficiency between CG (for solving
least squares) and TWF with a step size $\mu_t\equiv 0.2$ (for solving
a quadratic system). Set $m=8n$ and generate
$\boldsymbol{x} \sim \mathcal{N} ({\bf 0},\boldsymbol{I} )$ and
$\boldsymbol{a}_{i} \sim \mathcal{N} ({\bf 0},\boldsymbol{I} )$,
$1\leq i\leq m$, independently.  This gives a matrix
$\boldsymbol{A}^{\top} \boldsymbol{A}$ with
a low condition number equal to about
$(1+\sqrt{1/8})^2/(1-\sqrt{1/8})^2 \approx 4.38$ by the
Marchenko-Pastur law. Therefore, this is an ideal setting for CG as it
converges extremely rapidly \cite[Theorem 38.5]{trefethen1997numerical}.
Fig.~\ref{Fig:CG-TWF} shows the relative estimation error of each
method as a function of the iteration count, where TWF is seeded
through 10 power iterations.  For ease of comparison, we illustrate
the iteration counts in different scales so that 4 TWF iterations are
equivalent to 1 CG iteration.

Recognizing that each iteration of CG and TWF involves two matrix
vector products $\boldsymbol{A}\boldsymbol{z}$ and
$\boldsymbol{A}^{\top}\boldsymbol{v}$, for such a design we reach a
suprising observation: 
{
\setlist{rightmargin=\leftmargin}
\begin{itemize}
\item[] {\em Even when all phase information is missing, TWF is
    capable of solving a quadratic system of equations only about 4
    times slower than solving a least squares problem of the same
    size!}
\end{itemize}
}

\begin{figure}[h]
  \centering
  \includegraphics[width=0.45\textwidth]{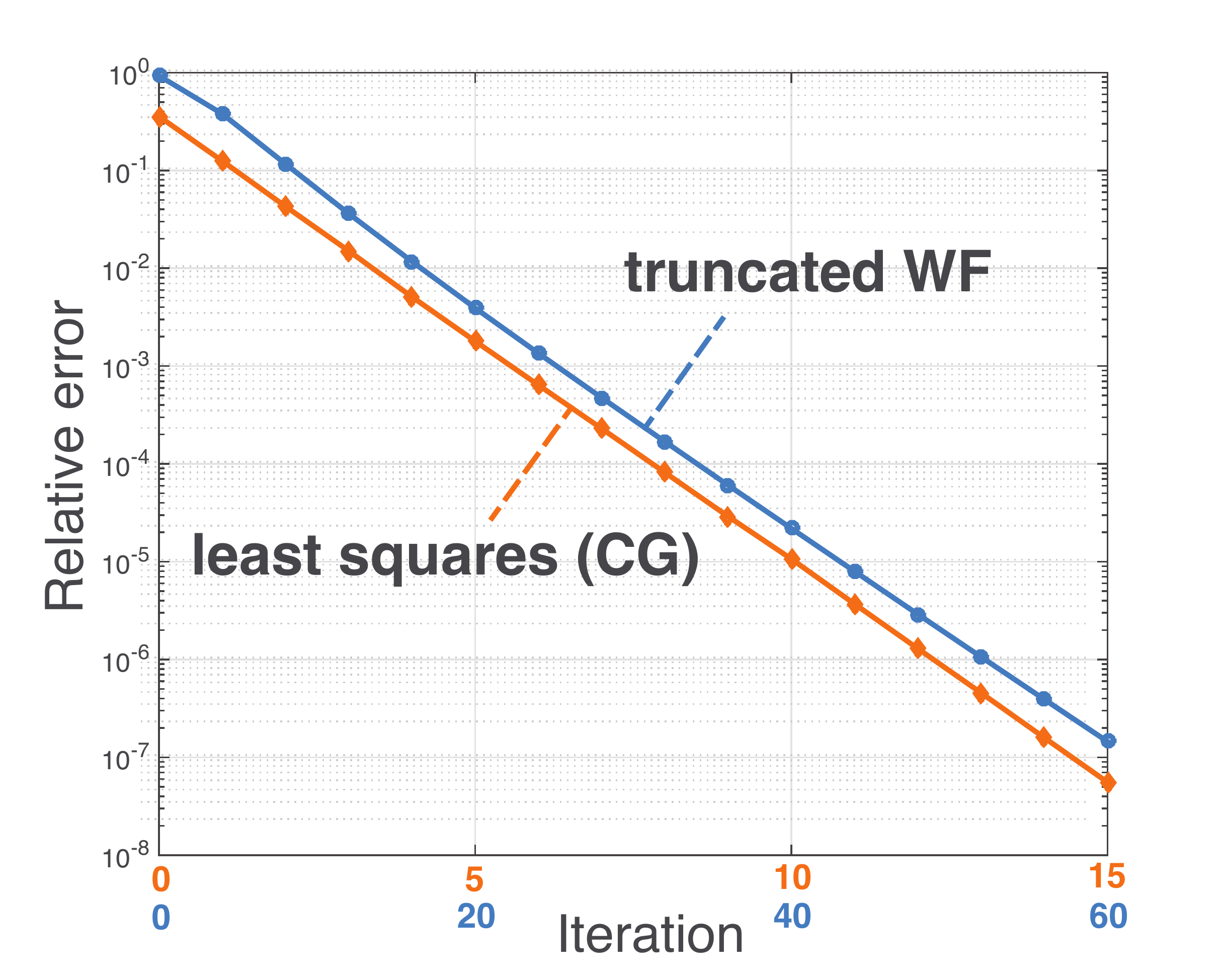}
  \caption{
    Relative errors of CG and TWF vs.~iteration count. Here, $n = 1000$,
    $m=8n$, and TWF is seeded using just 10 power iterations.}
  \label{Fig:CG-TWF}
\end{figure}

To illustrate the applicability of TWF on real images, we turn to
testing our algorithm on a  digital photograph of
Stanford main quad containing $320\times1280$ pixels. We consider a type of measurements that falls
under the category of coded diffraction patterns (CDP)
\cite{candes2014phase} and set
\begin{equation}
	\boldsymbol{y}^{(l)} = | \boldsymbol{F} \boldsymbol{D}^{(l)}  \boldsymbol{x} |^2,
	\quad\quad 1\leq l\leq L.
	\label{eq:CDP}
\end{equation}
Here, $\boldsymbol{F}$ stands for a discrete Fourier transform (DFT)
matrix, and $\boldsymbol{D}^{(l)}$ is a diagonal matrix whose diagonal
entries are independently and uniformly drawn from $\{ 1,-1,j,-j \}$
(phase delays).  In phase retrieval, each $\boldsymbol{D}^{(l)}$
represents a random mask placed after the object so as to modulate the
illumination patterns. When $L$ masks are employed, the total number
of quadratic measurements is $m=nL$.  In this example, $L=12$ random
coded patterns are generated to measure each color band (i.e. red,
green, or blue) separately. The experiment is carried out on a MacBook
Pro equipped with a 3 GHz Intel Core i7 and 16GB of memory.  We run 50
iterations of the truncated power method for initialization, and 50
regularized gradient iterations, which in total costs 43.9 seconds or
2400 FFTs for each color band. The relative error after regularized spectral
initialization and after 50 TWF iterations are 0.4773 and
$2.16\times10^{-5}$, respectively, with the recovered images displayed
in Fig. \ref{fig:real-image}.  
In comparison, the spectral initialization using 50 untruncated power iterations returns an image of relative error 1.409, which is almost like a random guess and extremely far from the truth.

\begin{figure}[h]
	\centering
	\begin{tabular}{c}
		\includegraphics[width=1\textwidth]{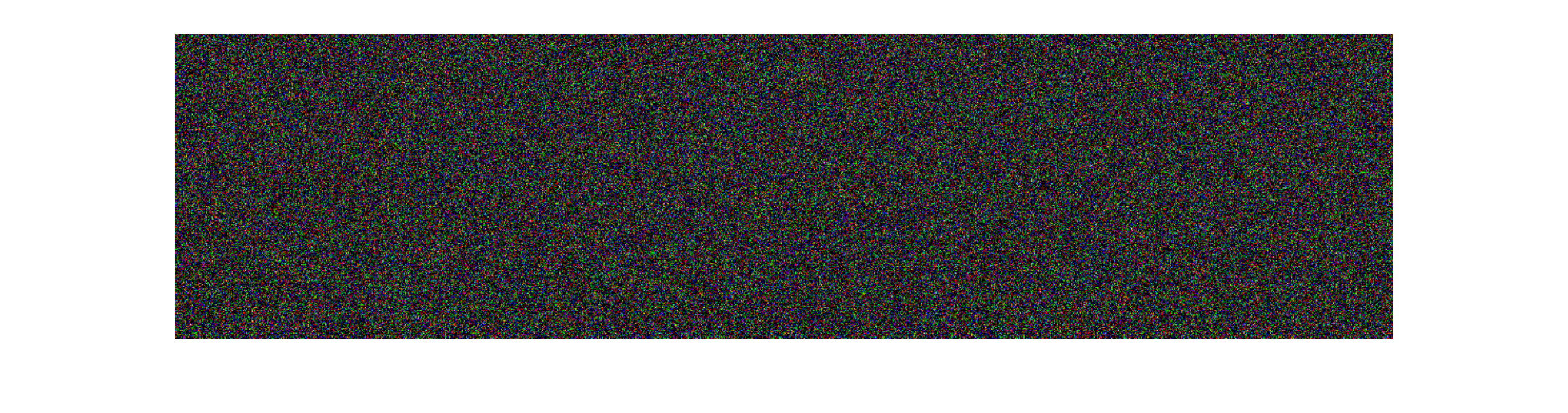} \tabularnewline
		(a) \tabularnewline
		\includegraphics[width=1\textwidth]{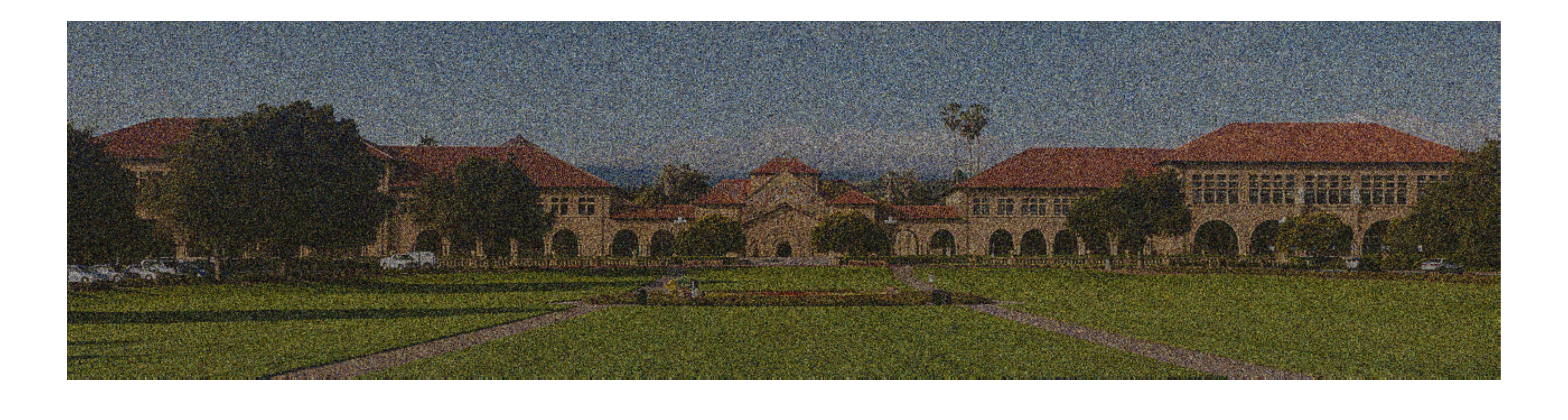} \tabularnewline
		(b) \tabularnewline
		\includegraphics[width=1\textwidth]{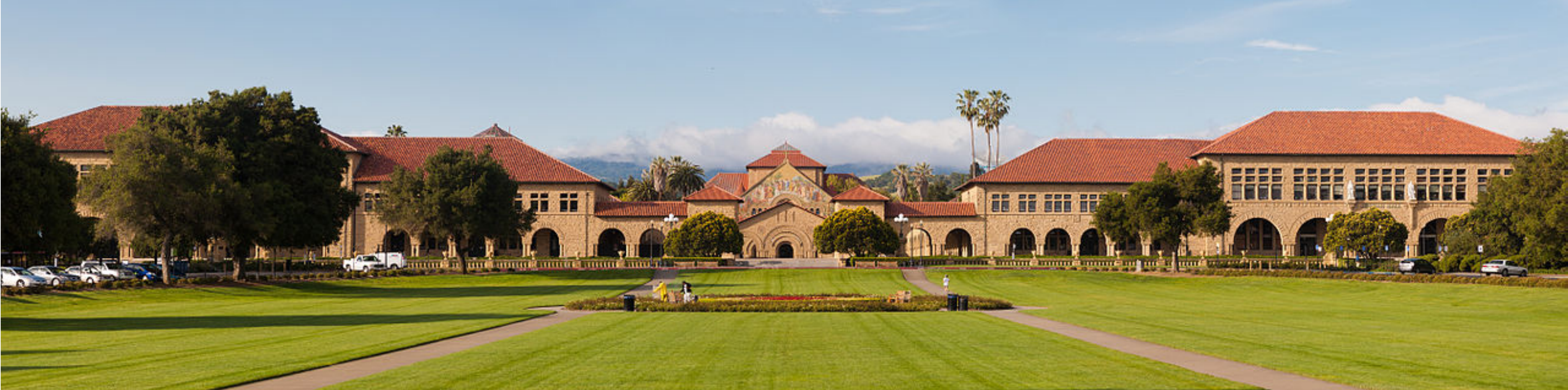} \tabularnewline
		(c) \tabularnewline
	\end{tabular}
	
	\caption{ The recovered images after (a) spectral initialization; (b) regularized spectral initialization; and (c) 50 TWF gradient iterations following the regularized initiliazation. 
          \label{fig:real-image}}
\end{figure}

\begin{figure}[h]
	\centering
	\includegraphics[width=0.48\textwidth]{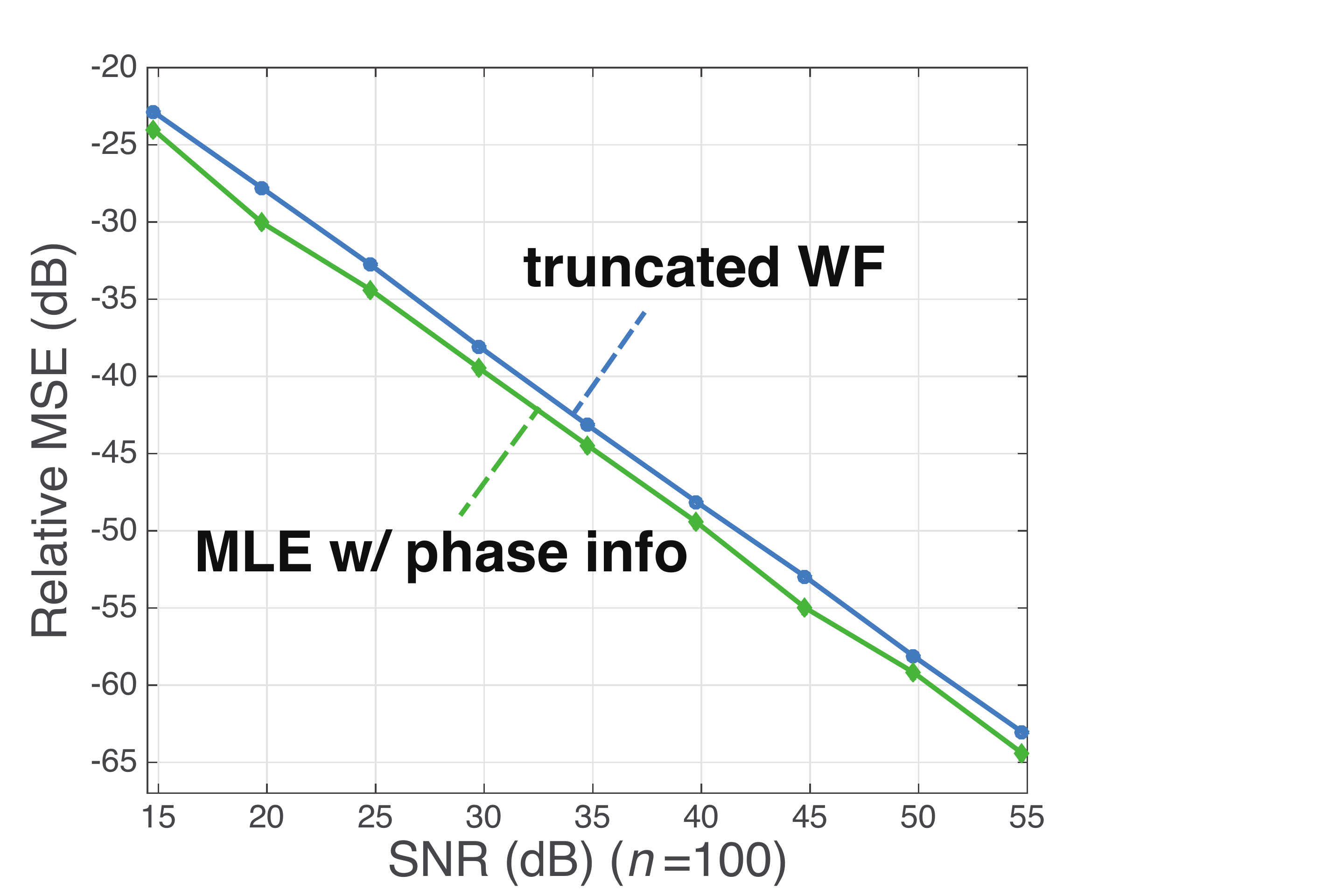}
	\caption{Relative MSE vs.~SNR in dB. The curves are shown for
          two settings: TWF for solving quadratic equations (blue),
          and MLE had we observed additional phase information
          (green). The results are shown for $n=100$, and each point
          is averaged over 50 Monte Carlo trials. }
	\label{Fig:MSE-MLE}
\end{figure}

While the above experiments concern noiseless data, the numerical
surprise extends to the noisy realm. Suppose the data are drawn
according to the Poisson noise model (\ref{eq:Poisson}), with
$\boldsymbol{a}_i \sim \mathcal{N}\left({\bf 0},\boldsymbol{I}\right)$
independently generated. Fig.~\ref{Fig:MSE-MLE} displays the empirical
relative mean-square error (MSE) of TWF as a function of the
signal-to-noise ratio (SNR), where the relative MSE for an estimate
$\hat{\boldsymbol{x}}$ and the SNR are defined as\footnote{To justify
  the definition of SNR, note that the signals and noise are captured
  by $\mu_i= (\boldsymbol{a}_i^{\top}\boldsymbol{x})^2$ and $y_i -
  \mu_i$, $1 \leq i \leq m$, respectively.  The ratio of the signal
  power to the noise power is therefore $\frac{ \sum_{i=1}^m
    \mu_i^2}{\sum_{i=1}^m {\bf Var} [y_i] } = \frac { \sum_{i=1}^m
    |\boldsymbol{a}_i ^{\top} \boldsymbol{x} |^4 } { \sum_{i=1}^m
    |\boldsymbol{a}_i ^{\top} \boldsymbol{x} |^2 } \approx \frac{3m
    \|\boldsymbol{x}\|^4 } { m \|\boldsymbol{x}\|^2 } = 3
  \|\boldsymbol{x} \|^2.$ }
\begin{equation}
  	\text{MSE}:= \frac{\text{dist}^2(\hat{\boldsymbol{x}}, \boldsymbol{x})} { \|\boldsymbol{x}\|^2}, \quad \text{and}\quad
	\text{SNR}:= 3\Vert \boldsymbol{x} \Vert ^2.
	\label{eq:SNR}
\end{equation}
Both SNR and MSE are displayed on a dB scale (i.e.~the values of
$10\log_{10}(\text{SNR})$ and $10\log_{10}(\text{rel. MSE})$ are
plotted).  To evaluate the accuracy of the TWF solutions, we consider
the performance achieved by MLE applied to an \emph{ideal} problem in
which the true phases are revealed.  In this ideal scenario, in
addition to the data $\{y_i\}$ we are further given exact phase
information $\{\varphi_{i}=\mathrm{sign}
(\boldsymbol{a}_{i}^{\top}\boldsymbol{x} )\}$. Such precious
information gives away the phase retrieval problem and makes the MLE
efficiently solvable since the MLE problem with side information
\[
\begin{array}{ll}
  \text{minimize}_{\boldsymbol{z}\in \mathbb{R}^n} & \quad  - \sum_{i=1}^{m}y_{i}\log\big( |\boldsymbol{a}_{i}^{\top}\boldsymbol{z} |^{2} \big)
  + ( \boldsymbol{a}_{i}^{\top}\boldsymbol{z} )^2 \\
  \text{subject to} &\quad   \varphi_{i}=\mathrm{sign} (\boldsymbol{a}_{i}^{\top}\boldsymbol{z})
\end{array}
\]
can be cast as a convex program
\begin{eqnarray*}
  \text{minimize}_{\boldsymbol{z}\in \mathbb{R}^n} \quad 
  - \sum\nolimits_{i=1}^{m} 2 y_{i} \log \left( \varphi_{i}\boldsymbol{a}_{i}^{\top}\boldsymbol{z} \right)
  +  (\boldsymbol{a}_{i}^{\top}\boldsymbol{z} )^2.
\end{eqnarray*}
Fig.~\ref{Fig:MSE-MLE} illustrates the empirical performance for this
ideal problem. The plots demonstrate that even when all phases are
erased, TWF yields a solution of nearly the best possible quality,
since it only incurs an extra $1.5$ dB loss compared to ideal MLE
computed with all true phases revealed. This phenomenon arises
regardless of the SNR!

\subsection{Main results}

The preceding numerical discoveries unveil promising features of TWF
in three aspects: (1) exponentially fast convergence; (2) exact
recovery from noiseless data with sample complexity
$O\left(n\right)$; (3) nearly minimal mean-square loss in
the presence of noise. This paper offers a formal substantiation of
all these findings. To this end, we assume a tractable model in which
the design vectors $\boldsymbol{a}_{i}$'s are independent Gaussian:
\begin{equation}
	\boldsymbol{a}_{i}\sim\mathcal{N}\left({\bf 0},\boldsymbol{I}_{n}\right).
	\label{eq:Gaussian}
\end{equation}
For concreteness, our results are concerned with TWF designed based on the
Poisson log-likelihood function
\begin{eqnarray}
	\ell_i (\boldsymbol{z}) := \ell\left(\boldsymbol{z}; y_i \right)  
		= y_{i}\log\left(| \boldsymbol{a}_{i}^{\top} \boldsymbol{z} |^2 \right)
				 - |\boldsymbol{a}_{i}^{\top}\boldsymbol{z}|^2,
\end{eqnarray}
where we shall use $\ell_i (\boldsymbol{z})$ as a shorthand for  $\ell (\boldsymbol{z}; y_i)$ from now on.
We begin with the performance guarantees of TWF in the absence of noise.

\begin{theorem}[\textbf{Exact recovery}]
  \label{theorem-Truncated-WF}
  Consider the noiseless case (\ref{eq:Noiseless}) with an arbitrary
  signal $\boldsymbol{x}\in\mathbb{R}^{n}$. Suppose that the step size
  $\mu_t$ is either taken to be a positive constant $\mu_t\equiv \mu$
  or chosen via a backtracking line search.  Then there exist some
  universal constants $0<\rho, \nu<1$ and $ \mu_0, c_{0}, c_1, c_{2}>0$
  such that with probability exceeding
  $1-c_{1}\exp\left(-c_{2}m\right)$, the truncated Wirtinger Flow
  estimates (Algorithm \ref{alg:TWF} with parameters specified in
  Table \ref{table:pars}) obey 
  \begin{eqnarray}
	\mathrm{dist} (\boldsymbol{z}^{(t)},\boldsymbol{x} ) & \leq & \nu (1-\rho)^{t} \Vert \boldsymbol{x} \Vert ,\quad \forall t\in\mathbb{N},
	\label{eq:Contraction-noiseless}
  \end{eqnarray}
  provided that 
  \[ m\geq c_{0}n \quad \text{and} \quad 0< \mu \leq \mu_0.  \] As
  explained below, we can often take $\mu_0 \approx 0.3$.
\end{theorem}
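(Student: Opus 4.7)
The plan is the standard two-stage recipe for nonconvex iterative algorithms with spectral initialization. First I would show that the truncated spectral step lands inside a small basin of attraction around $\boldsymbol{x}$, say $\mathcal{B} := \{\boldsymbol{z} : \mathrm{dist}(\boldsymbol{z}, \boldsymbol{x}) \leq \epsilon_0 \|\boldsymbol{x}\|\}$ for some universal $\epsilon_0 > 0$. Second, I would establish a local \emph{Regularity Condition} on the truncated gradient implying that one TWF iterate contracts $\mathrm{dist}(\cdot, \boldsymbol{x})$ by a factor $1 - \rho$; then \eqref{eq:Contraction-noiseless} follows by induction with $\nu = \epsilon_0$.

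For the initialization, the truncated spectral matrix is $\boldsymbol{Y} := m^{-1}\sum_{i \in \mathcal{T}_0} y_i\,\boldsymbol{a}_i\boldsymbol{a}_i^{\top}$, whose population counterpart (at least up to the truncation constants) has the form $\gamma_1 \boldsymbol{I} + \gamma_2 \boldsymbol{x}\boldsymbol{x}^{\top}/\|\boldsymbol{x}\|^2$, so its leading eigenvector aligns with $\pm \boldsymbol{x}/\|\boldsymbol{x}\|$. I would prove $\|\boldsymbol{Y} - \mathbb{E}[\boldsymbol{Y}]\| \leq \epsilon \|\boldsymbol{x}\|^2$ via a matrix Bernstein inequality applied to the \emph{truncated} summands; this truncation is the reason we only need $m \gtrsim n$, since the sub-exponential tails of the un-truncated $y_i\,\boldsymbol{a}_i\boldsymbol{a}_i^{\top}$ would otherwise force a $\log n$ factor. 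A Davis--Kahan/Wedin perturbation bound for the top eigenvector, combined with separate concentration of the scaling factor $\lambda_0 \approx \|\boldsymbol{x}\|$ (a sum of $y_i$'s), then delivers $\mathrm{dist}(\boldsymbol{z}^{(0)}, \boldsymbol{x}) \leq \epsilon_0 \|\boldsymbol{x}\|$ with probability $1 - c_1 \exp(-c_2 m)$.

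For the iterative stage, writing $\nabla \ell_{\mathrm{tr}}(\boldsymbol{z}) := m^{-1}\sum_{i \in \mathcal{T}(\boldsymbol{z})} \nabla \ell_i(\boldsymbol{z})$ and $\boldsymbol{h} := \boldsymbol{z} - \boldsymbol{x}$ (taking the signed version closest to $\boldsymbol{z}$), I would establish the Regularity Condition
\[
\langle \nabla \ell_{\mathrm{tr}}(\boldsymbol{z}),\, \boldsymbol{h} \rangle \;\geq\; \alpha \|\boldsymbol{h}\|^2 + \beta \|\nabla \ell_{\mathrm{tr}}(\boldsymbol{z})\|^2 \qquad \forall \boldsymbol{z} \in \mathcal{B},
\]
for absolute constants $\alpha, \beta > 0$. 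Given this, the elementary expansion
\[
\|\boldsymbol{z}^{(t+1)} - \boldsymbol{x}\|^2 \;=\; \|\boldsymbol{z}^{(t)} - \boldsymbol{x}\|^2 - 2\mu \langle \nabla \ell_{\mathrm{tr}}, \boldsymbol{h} \rangle + \mu^2 \|\nabla \ell_{\mathrm{tr}}\|^2 \;\leq\; (1 - 2\mu \alpha)\|\boldsymbol{z}^{(t)} - \boldsymbol{x}\|^2 + (\mu^2 - 2\mu\beta)\|\nabla \ell_{\mathrm{tr}}\|^2
\]
shows that any $0 < \mu \leq \mu_0 := 2\beta$ yields contraction by $1 - 2\mu\alpha$, hence forward invariance of $\mathcal{B}$ and the claimed geometric rate with $\rho = \mu\alpha$.

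The main obstacle is proving the Regularity Condition uniformly on $\mathcal{B}$. Two features make a naive proof fail: each summand $\nabla \ell_i(\boldsymbol{z}) \propto (y_i - |\boldsymbol{a}_i^{\top}\boldsymbol{z}|^2)\boldsymbol{a}_i / (\boldsymbol{a}_i^{\top}\boldsymbol{z})$ is unbounded near the nullspace of $\boldsymbol{a}_i$, and the index set $\mathcal{T}(\boldsymbol{z})$ is itself random and $\boldsymbol{z}$-adaptive. The truncation rules are designed precisely to discard terms with tiny denominators or oversized residuals, so that every surviving $\nabla \ell_i$ is $O(\|\boldsymbol{x}\|)$-bounded and sub-Gaussian in $\boldsymbol{a}_i$; this gives the upper bound $\beta\|\nabla \ell_{\mathrm{tr}}\|^2$ for free. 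I would then split $\langle \nabla \ell_{\mathrm{tr}}, \boldsymbol{h} \rangle$ into an expected component, which by a direct Gaussian moment computation provides $\alpha \|\boldsymbol{h}\|^2$ (thanks to the non-degeneracy of the truncated population operator along $\boldsymbol{h}$), plus a fluctuation controlled for fixed $\boldsymbol{z}$ by Bernstein at level $o(\|\boldsymbol{h}\|^2)$ once $m \gtrsim n$. Promoting this to a uniform statement over $\mathcal{B}$ requires an $\epsilon$-net argument together with Lipschitz control of $\nabla \ell_{\mathrm{tr}}$ in $\boldsymbol{z}$, where the subtle point is bounding the jumps caused by indices entering or leaving $\mathcal{T}(\boldsymbol{z})$---again resolved by showing that the marginal contribution of such boundary indices is controlled in expectation and concentrates. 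This uniform-over-$\boldsymbol{z}$ regularity estimate is where essentially all the technical work lives.
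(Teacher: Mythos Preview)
Your two-stage skeleton (truncated spectral initialization via concentration plus Davis--Kahan, then a Regularity Condition yielding geometric contraction) is exactly the paper's architecture, and your contraction algebra from the Regularity Condition is the same computation the paper carries out in Section~\ref{sec:Why-it-works}. So strategically you are on target.

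Where your plan diverges from what actually works is in the mechanics of proving the uniform Regularity Condition. Two concrete obstacles:
\begin{enumerate}
\item \textbf{The truncation set is not sample-wise.} The event $\mathcal{E}_2^i$ in the paper's truncation rule involves the \emph{global} quantity $\tfrac{1}{m}\|\boldsymbol{y}-\mathcal{A}(\boldsymbol{z}\boldsymbol{z}^\top)\|_1$, so the surviving summands are \emph{not} independent and a direct ``expectation plus Bernstein fluctuation'' split does not apply. The paper's fix is to first prove (Lemmas~\ref{lemma:RIP-Candes}--\ref{lemma: bound-h}) that this global quantity is uniformly sandwiched by $c\|\boldsymbol{h}\|\|\boldsymbol{z}\|$, which lets one replace $\mathcal{E}_2^i$ by deterministic per-sample events $\mathcal{E}_3^i\subseteq\mathcal{E}_2^i\subseteq\mathcal{E}_4^i$; only then do the summands decouple. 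A further algebraic step (Lemma~\ref{Lemma:superset-D}) rewrites the quadratic constraint defining $\mathcal{E}_3^i,\mathcal{E}_4^i$ as a union of two \emph{linear} constraints on $\boldsymbol{a}_i^\top\boldsymbol{h}$, one near $0$ and one near $2\boldsymbol{a}_i^\top\boldsymbol{z}$; the latter is shown to be a rare event (Lemma~\ref{Lemma:I3}) and this is what kills the dangerous $(\boldsymbol{a}_i^\top\boldsymbol{h})^2/(\boldsymbol{a}_i^\top\boldsymbol{z})$ residual in the gradient decomposition $\nabla\ell_i = -4(\boldsymbol{a}_i^\top\boldsymbol{h})\boldsymbol{a}_i + 2\frac{(\boldsymbol{a}_i^\top\boldsymbol{h})^2}{\boldsymbol{a}_i^\top\boldsymbol{z}}\boldsymbol{a}_i$. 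Your proposal does not isolate this decomposition or the two-region analysis, and the ``direct Gaussian moment computation'' you suggest would have to contend with the non-integrable residual.
\item \textbf{Indicators are not Lipschitz.} Your $\epsilon$-net step relies on ``Lipschitz control of $\nabla\ell_{\mathrm{tr}}$ in $\boldsymbol{z}$,'' but the truncated gradient is discontinuous in $\boldsymbol{z}$ because of the indicator functions. Bounding ``jumps caused by indices entering or leaving $\mathcal{T}(\boldsymbol{z})$'' in expectation is not how the paper proceeds; instead it replaces each indicator by a smooth Lipschitz surrogate $\chi(\cdot)$ (see the proofs of Lemmas~\ref{Lemma:I1_I2}--\ref{Lemma:I3}) that sandwiches the indicator, proves concentration for the smooth version, and then transfers the bound. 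Without this smoothing device the covering argument does not close.
\end{enumerate}
In short, the outline is right, but the two devices above---deterministic sandwiching of the adaptive truncation threshold, and smooth surrogates for the indicators---are the load-bearing technical ideas, and your proposal either omits them or gestures at alternatives that would not go through as stated.
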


\begin{remark}As will be made precise in Section \ref{sec:Proof-of-Theorem-TruncatedWF} (and in particular Proposition \ref{prop-Contraction}), one can take
  \[
    \mu_0 = \frac{0.994-\zeta_{1}-\zeta_{2}-\sqrt{ 2/(9\pi)} \alpha_h^{-1} }{ 2 \left(1.02 + 0.665/ \alpha_h \right)}
  \]
  for some small quantities $\zeta_{1},\zeta_{2}$ and some predetermined threshold $\alpha_{h}$ that is usually taken to be $\alpha_{h}\geq 5$. 
  Under appropriate conditions, one can treat $\mu_0$ as $\mu_0 \approx 0.3$.
\end{remark}


Theorem \ref{theorem-Truncated-WF} justifies at least two appealing
features of TWF: (i) {\em minimal sample complexity} and (ii) {\em
  linear-time computational cost}.  Specifically, TWF allows exact recovery
from $O(n)$ quadratic equations, which is optimal since one
needs at least $n$ measurements to have a well-posed problem.  Also,
because of the geometric convergence rate, TWF achieves
$\epsilon$-accuracy (i.e. $\mathrm{dist}
(\boldsymbol{z}^{(t)},\boldsymbol{x} ) \leq \epsilon\left\Vert
  \boldsymbol{x}\right\Vert $) within at most
$O\left(\log (1/\epsilon)\right)$ iterations.  The
total computational cost is therefore
$O\left(mn\log(1/\epsilon)\right)$, which is linear in
the problem size. These outperform the performance guarantees of WF
\cite{candes2014wirtinger}, which runs in
$O(mn^2\log (1/\epsilon))$ time and requires $O(n\log n)$
sample complexity.

We emphasize that enhanced performance vis-\`a-vis WF is not the
result of a sharper analysis, but rather, the result of key
algorithmic changes.  In both the initialization and iterative
refinement stages, TWF proceeds in a more prudent manner by means of
proper regularization, which effectively trims away those components that are too influential on either the initial
guess or search directions, thus reducing the volatility of each movement.  
With a tighter initialization and better-controlled search directions in place, we
take the step size in a far more liberal fashion---which is some constant bounded
away from 0---compared to a step size which is $O(1/n)$ as explained in \cite{candes2014wirtinger}. In fact, what
enables the movement to be more aggressive is exactly the cautious
choice of $\mathcal{T}_t$, which precludes adverse effects from
high-leverage samples.

To be broadly applicable, the proposed algorithm must guarantee
reasonably faithful estimates in the presence of noise. Suppose that
\begin{equation}
	y_{i} = \left| \left\langle \boldsymbol{a}_i,\boldsymbol{x} \right\rangle \right|^2 + \eta_i,\qquad 1\leq i\leq m,
	\label{eq:Noisy}
\end{equation}
where $\eta_{i}$ represents an error term. 
We claim that TWF is
stable against additive noise, as demonstrated in the theorem below.

\begin{theorem}[\textbf{Stability}]
  \label{theorem-Truncated-WF-noisy}
  Consider the noisy case (\ref{eq:Noisy}). Suppose that the step size
  $\mu_t$ is either taken to be a positive constant $\mu_t\equiv \mu$
  or chosen via a backtracking line search.  If 
	\begin{equation}
		m\geq c_{0}n, \quad \mu\leq\mu_{0},  
		\quad\text{and}\quad 
		\left\Vert \boldsymbol{\eta} \right\Vert_{\infty} \leq c_1\left\Vert \boldsymbol{x}\right\Vert^2, 
	\end{equation}
	then with probability at least $1-c_{2}\exp\left(-c_{3}m\right)$, the truncated Wirtinger Flow estimates 
	(Algorithm \ref{alg:TWF} with parameters specified in Table \ref{table:pars}) satisfy
	\begin{eqnarray}
		\mathrm{dist} ( \boldsymbol{z}^{(t)},\boldsymbol{x} ) 
		& \lesssim & \frac{\Vert \boldsymbol{\eta} \Vert }{\sqrt{m}\Vert \boldsymbol{x} \Vert }
		 + (1-\rho )^t \Vert \boldsymbol{x} \Vert ,
		\quad\forall t\in\mathbb{N}
		\label{eq:noisy-converge}
	\end{eqnarray}
	simultanesouly for all $\boldsymbol{x}\in \mathbb{R}^n$. Here, $0<\rho<1$ and $\mu_{0}, c_{0}, c_1, c_2, c_{3} > 0$ are some universal
  	constants.

        Under the Poisson noise model (\ref{eq:Poisson}), one has 
        \begin{eqnarray}
          \label{eq:noisy-poisson}
	  \mathrm{dist} ( \boldsymbol{z}^{(t)},\boldsymbol{x} ) ~\lesssim ~ 1 + (1-\rho )^t \Vert \boldsymbol{x} \Vert, ~~\forall t\in\mathbb{N} 
        \end{eqnarray}
        with probability approaching one, 
	provided that $\|\boldsymbol{x}\| \geq \log^{1.5}m$.
\end{theorem}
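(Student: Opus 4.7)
The proof plan is to follow the same two-stage strategy already set up for Theorem \ref{theorem-Truncated-WF} (initialization plus geometric contraction), but track where the noise $\boldsymbol{\eta}$ enters and show that, after truncation, its effect can be pushed into a single additive error floor. Throughout, let $\boldsymbol{h}^{(t)} := \boldsymbol{z}^{(t)} - \boldsymbol{x}$ (modulo global sign) and write $y_i = |\boldsymbol{a}_i^\top \boldsymbol{x}|^2 + \eta_i$ so that the truncated gradient decomposes as
\[
\frac{1}{m}\sum_{i\in\mathcal{T}_{t+1}}\nabla\ell_i(\boldsymbol{z}^{(t)}) \;=\; \underbrace{\frac{1}{m}\sum_{i\in\mathcal{T}_{t+1}} \frac{|\boldsymbol{a}_i^\top\boldsymbol{x}|^2-|\boldsymbol{a}_i^\top\boldsymbol{z}^{(t)}|^2}{\boldsymbol{a}_i^\top\boldsymbol{z}^{(t)}}\,\boldsymbol{a}_i}_{\text{noiseless part}} \;+\; \underbrace{\frac{1}{m}\sum_{i\in\mathcal{T}_{t+1}} \frac{\eta_i}{\boldsymbol{a}_i^\top\boldsymbol{z}^{(t)}}\,\boldsymbol{a}_i}_{=:\boldsymbol{r}(\boldsymbol{z}^{(t)})}.
\]
The noiseless part was already analyzed to satisfy a Regularity Condition (the same machinery invoked in Theorem \ref{theorem-Truncated-WF}); the only new work is controlling $\boldsymbol{r}(\boldsymbol{z})$ uniformly over a neighborhood of $\boldsymbol{x}$ and, separately, showing that the spectral initialization is still accurate when the data are perturbed.

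First I would prove a noisy analogue of the spectral initialization result: the data matrix $\frac{1}{m}\sum_{i\in\mathcal{T}_0} y_i \boldsymbol{a}_i \boldsymbol{a}_i^\top$ differs from its noiseless counterpart by a perturbation of operator norm $\lesssim \|\boldsymbol{\eta}\|/\sqrt{m}$, so by Davis--Kahan the leading eigenvector is still within $O(\|\boldsymbol{\eta}\|/(\sqrt{m}\|\boldsymbol{x}\|^2))$ of the truth in angle; scaling by $\|\boldsymbol{x}\|$ (estimated from $\tfrac{1}{m}\sum y_i \approx \|\boldsymbol{x}\|^2+\tfrac{1}{m}\sum\eta_i$) then lands $\boldsymbol{z}^{(0)}$ inside the same basin $\{\,\boldsymbol{z}:\mathrm{dist}(\boldsymbol{z},\boldsymbol{x})\leq \tfrac{1}{11}\|\boldsymbol{x}\|\,\}$ used in the noiseless proof, provided $\|\boldsymbol{\eta}\|_\infty\lesssim\|\boldsymbol{x}\|^2$ so that the truncation thresholds behave essentially as in the noiseless case.

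Second, I would establish a noisy Regularity Condition in that basin. The truncation event $\mathcal{T}_{t+1}$ is chosen precisely so that $|\boldsymbol{a}_i^\top\boldsymbol{z}^{(t)}|$ is not too small and $|y_i - |\boldsymbol{a}_i^\top\boldsymbol{z}^{(t)}|^2|/\|\boldsymbol{z}^{(t)}\|$ is not too large; these two cutoffs let me bound each summand of $\boldsymbol{r}(\boldsymbol{z})$ and then apply a union-bound/covering-number argument to show
\[
\sup_{\boldsymbol{z}\,:\,\mathrm{dist}(\boldsymbol{z},\boldsymbol{x})\leq \frac{1}{11}\|\boldsymbol{x}\|}\|\boldsymbol{r}(\boldsymbol{z})\|\;\lesssim\;\frac{\|\boldsymbol{\eta}\|}{\sqrt{m}\,\|\boldsymbol{x}\|}
\]
with high probability. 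Combined with the noiseless Regularity Condition, the update inequality becomes
\[
\mathrm{dist}^2(\boldsymbol{z}^{(t+1)},\boldsymbol{x})\;\leq\;(1-\rho)\,\mathrm{dist}^2(\boldsymbol{z}^{(t)},\boldsymbol{x}) + C\,\frac{\|\boldsymbol{\eta}\|^2}{m\,\|\boldsymbol{x}\|^2},
\]
which unrolls to \eqref{eq:noisy-converge}. One also needs to argue that once the iterate has contracted to this noise floor, it cannot escape the initial basin; this is automatic since the floor is smaller than $\frac{1}{11}\|\boldsymbol{x}\|$ under the hypothesis $\|\boldsymbol{\eta}\|_\infty\lesssim\|\boldsymbol{x}\|^2$.

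Finally, for the Poisson statement \eqref{eq:noisy-poisson}, I cannot invoke the deterministic hypothesis $\|\boldsymbol{\eta}\|_\infty\lesssim\|\boldsymbol{x}\|^2$ verbatim---individual Poisson fluctuations exceed this by logarithmic factors. Instead I would: (i) use Bernstein-type tails for Poisson random variables together with $\|\boldsymbol{a}_i\|^2\lesssim n$, and $\max_i|\boldsymbol{a}_i^\top\boldsymbol{x}|^2\lesssim \|\boldsymbol{x}\|^2\log m$, to show $\|\boldsymbol{\eta}\|_\infty\lesssim \|\boldsymbol{x}\|^2\log m + \|\boldsymbol{x}\|\log^{3/2}m$, which is absorbed into the truncation thresholds once $\|\boldsymbol{x}\|\geq\log^{1.5}m$; (ii) show by a Poisson concentration argument that $\|\boldsymbol{\eta}\|^2\lesssim \sum_i|\boldsymbol{a}_i^\top\boldsymbol{x}|^2\asymp m\|\boldsymbol{x}\|^2$ with high probability, so that the floor in \eqref{eq:noisy-converge} becomes $O(1)$, which is precisely \eqref{eq:noisy-poisson}. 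The main obstacle is step (ii) together with making the uniform bound on $\|\boldsymbol{r}(\boldsymbol{z})\|$ compatible with the data-dependent truncation set $\mathcal{T}_{t+1}$; the cleanest way is to decouple via a deterministic envelope and then invoke concentration separately on each indicator, which is standard but technical.
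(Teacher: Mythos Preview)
Your overall plan is sound in spirit, and the additive-error recursion you target is exactly what the paper ultimately delivers. However, there is a real gap in your treatment of the ``noiseless part'': the truncation set $\mathcal{T}_{t+1}$ (specifically the event $\mathcal{E}_2^i$) is defined through the noisy data $y_i$, both via the residual $|y_i-|\boldsymbol{a}_i^\top\boldsymbol{z}|^2|$ and via the adaptive threshold $K_t=\frac{1}{m}\sum_l|y_l-|\boldsymbol{a}_l^\top\boldsymbol{z}|^2|$. So the object you call the ``noiseless part'' is \emph{not} the truncated gradient analyzed in Theorem~\ref{theorem-Truncated-WF}, and the Regularity Condition proved there does not transfer automatically. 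In particular, once $\|\boldsymbol{h}\|$ is at or below the noise floor $\|\boldsymbol{\eta}\|/(\sqrt{m}\|\boldsymbol{z}\|)$, the quantity $K_t$ is dominated by noise rather than by $\|\boldsymbol{h}\|\|\boldsymbol{z}\|$, the sandwich $\mathcal{E}_3^i\subseteq\mathcal{E}_2^i\subseteq\mathcal{E}_4^i$ that underpins the entire noiseless analysis can fail, and $-\langle\boldsymbol{h},\frac{1}{m}\nabla\ell_{\mathrm{tr}}(\boldsymbol{z})\rangle\gtrsim\|\boldsymbol{h}\|^2$ need not hold. Your last paragraph acknowledges this as the ``main obstacle'' but the suggestion to ``decouple via a deterministic envelope'' does not address it.

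The paper's fix is a two-regime argument rather than a single ``RC plus noise'' bound. In Regime~1, where $\|\boldsymbol{h}\|\geq c_3\|\boldsymbol{\eta}\|/(\sqrt{m}\|\boldsymbol{z}\|)$, one first shows $K_t\asymp\|\boldsymbol{h}\|\|\boldsymbol{z}\|$ and then introduces a good-index set $\mathcal{G}=\{i:|\eta_i|\leq C_\epsilon\|\boldsymbol{\eta}\|/\sqrt{m}\}$ of size $(1-\epsilon)m$; for $i\in\mathcal{G}$ the noisy $\mathcal{E}_2^i$ is again sandwiched between clean events, so the noiseless RC analysis carries over on $\mathcal{G}$, while the $\epsilon m$ remaining indices contribute only $O(\sqrt{\epsilon})\|\boldsymbol{h}\|^2$. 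This yields genuine contraction in Regime~1. In Regime~2, where $\|\boldsymbol{h}\|$ is below the noise floor, the paper abandons RC entirely and uses only the crude bound $\|\frac{1}{m}\nabla\ell_{\mathrm{tr}}(\boldsymbol{z})\|\lesssim\|\boldsymbol{h}\|+\|\boldsymbol{\eta}\|/(\sqrt{m}\|\boldsymbol{z}\|)$ to show one step cannot push the error past a fixed multiple of the noise floor; combining the two regimes gives~\eqref{eq:noisy-converge}. Two smaller points: the noise term you call $\boldsymbol{r}(\boldsymbol{z})$ is bounded directly via $\|\boldsymbol{A}^\top\boldsymbol{w}\|\leq\|\boldsymbol{A}\|\|\boldsymbol{w}\|$ with $|w_i|\leq 2|\eta_i|/(\alpha_z^{\mathrm{lb}}\|\boldsymbol{z}\|)$, so no covering argument is needed; and for the Poisson specialization the paper establishes the \emph{relative} bound $|\eta_i|\leq\varepsilon\max\{\|\boldsymbol{x}\|^2,|\boldsymbol{a}_i^\top\boldsymbol{x}|^2\}$ via Chernoff (this is what the initialization proposition actually consumes), rather than your absolute $\|\boldsymbol{\eta}\|_\infty$ estimate.
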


\begin{remark} In the main text, we will prove Theorem
  \ref{theorem-Truncated-WF-noisy} only for the case where
  $\boldsymbol{x}$ is fixed and independent of the design vectors
  $\{\boldsymbol{a}_i\}$. Interested readers are referred to the
  supplemental materials \cite{supplement2015} for the proof
  of the universal theory (i.e. the case simultaneously accommodating
  all $\boldsymbol{x}\in \mathbb{R}^n$). Note that when
    there is no noise ($\boldsymbol{\eta} = \boldsymbol{0}$), this
    stronger result guarantees the universality of the noiseless
    recovery.
\end{remark}
\begin{remark}
   \cite{soltanolkotabi2014algorithms} establishes
    stability estimates using the WF approach under Gaussian
    noise. There, the sample and computational
    complexities are still on the order of $n \log n$ and $m n^2$
    respectively whereas the computational complexity in Theorem
    \ref{theorem-Truncated-WF-noisy} is linear, i.e.~on the order of
    $m n$.
\end{remark}

Theorem \ref{theorem-Truncated-WF-noisy} essentially reveals that the
estimation error of TWF rapidly shrinks to
$O\left(\frac{\Vert \boldsymbol{\eta} \Vert
    /\sqrt{m}}{\Vert \boldsymbol{x} \Vert }\right)$
within logarithmic iterations.  Put another way, since the SNR for the model (\ref{eq:Noisy}) is captured by
\begin{equation}
	\text{SNR}:= \frac{ \sum_{i=1}^m |\langle\boldsymbol{a}_i, \boldsymbol{x}\rangle|^4 }{ \| \boldsymbol{\eta} \|^2 } 
		   \approx \frac{ 3m \| \boldsymbol{x} \|^4} { \| \boldsymbol{\eta} \|^2 },
\end{equation}
we immediately arrive at an alternative form of the performance guarantee:
\begin{eqnarray}
	\mathrm{dist} ( \boldsymbol{z}^{(t)},\boldsymbol{x} ) 
	& \lesssim & \frac{1 }{ \sqrt{\text{SNR}} } \Vert \boldsymbol{x} \Vert
	 + (1-\rho )^t \Vert \boldsymbol{x} \Vert,
	\quad\forall t\in\mathbb{N},
	\label{eq:noisy-converge-SNR}
\end{eqnarray}
revealing the stability of TWF as a function of SNR.   We
  emphasize that this estimate holds for any error term
  $\boldsymbol{\eta}$---i.e.~any noise structure, even deterministic.  This being
  said, specializing this estimate to the Poisson noise model
  (\ref{eq:Poisson}) with $\left\Vert \boldsymbol{x} \right\Vert
  \gtrsim \log^{1.5} m$ gives an estimation error that will eventually
  approach a numerical constant, independent of $n$ and $m$.
%
%

Encouragingly, this is already the best statistical guarantee any
algorithm can achieve. We formalize this claim by deriving a
fundamental lower bound on the minimax estimation error.

\begin{theorem}[\textbf{Lower bound on the minimax risk}]
  \label{theorem-converse}
    Suppose that  $\boldsymbol{a}_i \sim
  \mathcal{N}({\bf 0}, \boldsymbol{I})$, $m=\kappa n$ for some fixed
  $\kappa$ independent of $n$, and $n$ is sufficiently large. 
  For  any $K \geq \log^{1.5} m$, define\footnote{Here, 0.1 can be replaced by any positive constant within (0, 1/2).}
  \[
	\Upsilon(K):=\{\boldsymbol{x}\in \mathbb{R}^n \mid
  	\|\boldsymbol{x}\| \in (1\pm 0.1)K\}.
  \] 
   With
  probability approaching one, the minimax risk under the Poisson model
  (\ref{eq:Poisson}) obeys
  \begin{equation}
    \inf_{\hat{\boldsymbol{x}}} \sup_{\boldsymbol{x}\in \Upsilon(K)}
    \mathbb{E}\big[\mathrm{dist}\left(\hat{\boldsymbol{x}},\boldsymbol{x}\right) ~\big|~ \{\boldsymbol{a}_i\}_{1 \le i \le m} \big]
    ~\geq~ \frac{\varepsilon_{1}}{\sqrt{\kappa}},
	\label{eq:MinimaxLoss}
  \end{equation}
	where the infimum is over all estimator $\hat{\boldsymbol{x}}$. Here, $\varepsilon_{1}>0$ is a numerical constant independent of $n$ and $m$. 
\end{theorem}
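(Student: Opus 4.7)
The plan is to derive the lower bound via a local application of Fano's inequality, conditional on a high-probability event $\mathcal{E}$ over the design vectors on which standard concentration holds (in particular $\tfrac{1}{m}\sum_{i=1}^{m}\boldsymbol{a}_{i}\boldsymbol{a}_{i}^{\top}\preceq 2\boldsymbol{I}_{n}$ and $\|\boldsymbol{a}_{i}\|^{2}\lesssim n$ for every $i$). Standard Gaussian concentration together with $m=\kappa n$ gives $\mathbb{P}(\mathcal{E})\to 1$, so it suffices to produce the minimax lower bound on $\mathcal{E}$.

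I would fix a baseline $\boldsymbol{x}_{0}$ with $\|\boldsymbol{x}_{0}\|=K$ and take $\{\boldsymbol{u}_{1},\ldots,\boldsymbol{u}_{N}\}$ to be a $\tfrac{1}{4}$-packing of the unit sphere $S^{n-1}$, so that $\log N\gtrsim n$. Setting $\boldsymbol{x}_{j}=\boldsymbol{x}_{0}+\varepsilon\boldsymbol{u}_{j}$ with $\varepsilon\ll K$ keeps every $\boldsymbol{x}_{j}$ inside $\Upsilon(K)$. The sign ambiguity in the distance is harmless because $\|\boldsymbol{x}_{j}+\boldsymbol{x}_{j'}\|\approx 2K\gg\varepsilon$, giving $\mathrm{dist}(\boldsymbol{x}_{j},\boldsymbol{x}_{j'})=\varepsilon\|\boldsymbol{u}_{j}-\boldsymbol{u}_{j'}\|\geq\varepsilon/4$.

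Next I would upper bound the pairwise KL between the induced product Poisson measures $P_{j}=\prod_{i}\mathsf{Poisson}(\lambda_{j}^{(i)})$ with $\lambda_{j}^{(i)}=(\boldsymbol{a}_{i}^{\top}\boldsymbol{x}_{j})^{2}$. Using
\[
\mathrm{KL}\!\left(\mathsf{Poi}(\lambda_{j})\,\|\,\mathsf{Poi}(\lambda_{j'})\right)=\lambda_{j'}-\lambda_{j}-\lambda_{j}\log\frac{\lambda_{j'}}{\lambda_{j}}\;\approx\;\frac{(\lambda_{j}-\lambda_{j'})^{2}}{2\lambda_{j}}
\]
in the regime $|\lambda_{j}^{(i)}-\lambda_{j'}^{(i)}|\ll\lambda_{j}^{(i)}$, together with the first-order expansion $\lambda_{j}^{(i)}-\lambda_{j'}^{(i)}\approx 2\varepsilon(\boldsymbol{a}_{i}^{\top}\boldsymbol{x}_{0})\bigl(\boldsymbol{a}_{i}^{\top}(\boldsymbol{u}_{j}-\boldsymbol{u}_{j'})\bigr)$, the per-index contribution becomes roughly $2\varepsilon^{2}\bigl(\boldsymbol{a}_{i}^{\top}(\boldsymbol{u}_{j}-\boldsymbol{u}_{j'})\bigr)^{2}$; summing on $\mathcal{E}$ yields $\mathrm{KL}(P_{j}\|P_{j'})\lesssim\varepsilon^{2}m$. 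Choosing $\varepsilon=c\sqrt{n/m}=c/\sqrt{\kappa}$ makes $\max_{j,j'}\mathrm{KL}(P_{j}\|P_{j'})\leq\tfrac{1}{4}\log N$, whereupon Fano's inequality produces
\[
\inf_{\hat{\boldsymbol{x}}}\sup_{j}\,\mathbb{E}\bigl[\mathrm{dist}(\hat{\boldsymbol{x}},\boldsymbol{x}_{j})\,\big|\,\{\boldsymbol{a}_{i}\}\bigr]\;\gtrsim\;\frac{\varepsilon}{8}\;\gtrsim\;\frac{1}{\sqrt{\kappa}}.
\]

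The main obstacle lies in making the Poisson KL Taylor expansion uniform over all $m$ indices: the approximation $\mathrm{KL}\approx(\lambda_{j}-\lambda_{j'})^{2}/(2\lambda_{j})$ degrades exactly where $|\boldsymbol{a}_{i}^{\top}\boldsymbol{x}_{0}|$ is tiny, and over $m$ Gaussian samples a few indices will have $|\boldsymbol{a}_{i}^{\top}\boldsymbol{x}_{0}|$ as small as $K/m$. The assumption $K\geq\log^{1.5}m$ is used precisely to supply enough signal margin: I would split the sum over $i$ into a bulk where $|\boldsymbol{a}_{i}^{\top}\boldsymbol{x}_{0}|\gtrsim K/\mathrm{polylog}(m)$, on which the Taylor bound is rigorous and gives the $\varepsilon^{2}m$ estimate, and a small exceptional set whose contribution is controlled via a crude per-sample KL bound, the polynomial growth of $K$ guaranteeing that the exceptional contribution is dominated by $\log N\asymp n$. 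Executing this truncation cleanly, and verifying the design concentration event $\mathcal{E}$, constitute the bulk of the technical work.
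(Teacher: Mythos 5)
Your overall strategy---Fano's inequality over a packing of hypotheses near a base point, at radius $\varepsilon \asymp \sqrt{n/m}$---matches the paper's, but there is a genuine gap exactly at the spot you flag as ``the main obstacle,'' and the fix you sketch does not close it. There is no ``crude per-sample KL bound'' for indices where $|\boldsymbol{a}_i^{\top}\boldsymbol{x}_0|$ is tiny: the Poisson KL $\lambda_0 - \lambda_1 + \lambda_1\log(\lambda_1/\lambda_0)$ is unbounded as $\lambda_1/\lambda_0 \to \infty$. Worse, you propose $\max_{j,j'}\mathrm{KL}(P_j\|P_{j'}) \leq \tfrac{1}{4}\log N$, which fails with a generic packing: a fixed $\tfrac14$-packing of $S^{n-1}$ contains, for every $i$, a $\boldsymbol{u}_{j^*}$ nearly parallel to $\boldsymbol{a}_i/\|\boldsymbol{a}_i\|$, so $|\boldsymbol{a}_i^{\top}\boldsymbol{u}_{j^*}|\approx\sqrt{n}$, and at the extreme bad index with $|\boldsymbol{a}_i^{\top}\boldsymbol{x}_0|\approx K/m$ this single pair already contributes $\lambda_1\approx\varepsilon^2 n=n^2/m$, $\lambda_1/\lambda_0\approx n^2m/K^2$, hence a KL term of order $(n^2/m)\log(n^2m/K^2)\asymp (n\log n)/\kappa$, which exceeds $\log N\asymp n$ for large $n$. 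Passing to the Fano average over $j$ might rescue this, but that needs a quantitative statement that the deterministic packing averages against each bad $\boldsymbol{a}_i$ like the uniform measure on $S^{n-1}$---substantial extra work, not a rough bound. Also note $K \geq \log^{1.5}m$ is polylogarithmic, not ``polynomial growth''; its margin over the $\log^3 m$ factor that unavoidably appears is tight, so there is no room to be crude.

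The paper resolves this by building the constraint into the hypothesis set rather than into the KL estimate. Lemma \ref{lemma:MinimaxConstruction} samples $\exp(n/20)$ i.i.d.\ Gaussian perturbations of $\boldsymbol{x}$ and retains only those $\boldsymbol{w}$ whose inner product with $\boldsymbol{a}_i$ is $O(\|\boldsymbol{w}-\boldsymbol{x}\|/m)$ on every bad index $i$ (those with $|\boldsymbol{a}_i^{\top}\boldsymbol{x}|$ small); a counting/pigeonhole argument shows $\exp(n/30)$ hypotheses survive this screening. The resulting uniform ratio bound $|\boldsymbol{a}_i^{\top}(\boldsymbol{w}-\boldsymbol{x})|^2/|\boldsymbol{a}_i^{\top}\boldsymbol{x}|^2 \lesssim \log^3 m\cdot\|\boldsymbol{w}-\boldsymbol{x}\|^2/\|\boldsymbol{x}\|^2$ (property (iii) of the lemma) is exactly what lets the $\chi^2$-based KL estimate of Lemma \ref{lemma-KL-UB} produce a clean $\lesssim m\|\boldsymbol{w}-\boldsymbol{x}\|^2$ bound with no exceptional set at all. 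Your argument needs an analogue of this adaptive filtering; that is where the real technical content of the proof lives, and a generic sphere packing does not supply it.
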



When the number $m$ of measurements is proportional to $n$ and the
energy of the planted solution exceeds $\log^3m$, Theorem
\ref{theorem-converse} asserts that there exists absolutely no
estimator that can achieve an estimation error that vanishes as $n$
increases.  This lower limit matches the estimation error of TWF,
which corroborates the optimality of TWF under noisy data.

Recall that in many optical imaging applications, the output data we collect are the intensities of the diffractive waves scattered by the sample or specimen under study.      
The Poisson noise model employs the input $\boldsymbol{x}$ and output $\boldsymbol{y}$ to describe the numbers
of photons diffracted by the specimen and detected by the optical
sensor, respectively.  Each specimen needs to be
sufficiently illuminated in order for the receiver to sense the
diffracted light. In such settings, the low-intensity regime 
$\|\boldsymbol{x}\| \leq \log^{1.5} m$ is of little practical interest as it 
corresponds to an illumination with just very few photons. 
We forego the details.


It is worth noting that apart from WF, various other nonconvex
procedures have been proposed as well for phase retrieval, including
the error reduction schemes dating back to Gerchberg-Saxton and Fienup
\cite{gerchberg1972practical,fienup1982phase}, iterated projections
\cite{elser2003phase}, alternating
minimization \cite{netrapalli2013phase}, generalized approximate message passing
\cite{Schniter2015}, Kaczmarz method \cite{wei2015phase}, and  greedy methods that exploit additional sparsity constraint
\cite{shechtman2013gespar}, to name just a few. While these paradigms
enjoy favorable empirical behavior, most of them fall short of
theoretical support, except for a version of alternating minimization
(called {AltMinPhase}) \cite{netrapalli2013phase} that requires fresh
samples for each iteration. In comparison, AltMinPhase attains
$\epsilon$-accuracy when the sample complexity exceeds the order
of $n\log^3n + n\log^2n \log({1}/{\epsilon})$, which is at least a
factor of $\log^3n$ from optimal and is empirically
largely outperformed by the variant that reuses all samples.  In contrast, our algorithm uses the same
set of samples all the time and is therefore practically appealing. 
Furthermore, none of these
algorithms come with provable stability guarantees, which are
particularly important in most realistic scenarios.  Numerically, each iteration of Fienup's algorithm
(or alternating minimization) involves solving a least squares problem, and the algorithm converges
in tens or hundreds of iterations. This is computationally more expensive than TWF, whose computational
complexity is merely about 4 times that of solving a least squares problem.
Interesting
readers are referred to \cite{candes2014wirtinger}  for a comparison of
several non-convex schemes, and  \cite{candes2014phase} for a discussion of other alternative approaches (e.g. \cite{alexeev2014phase,balan2009painless}) and performance lower bounds (e.g. \cite{eldar2014phase,bandeira2014saving}).

\section{Algorithm: Truncated Wirtinger Flow}
\label{sec:Algorithm}

This section describes the two stages of TWF
in details, presented in a reverse order. For each stage, we start
with some algorithmic issues encountered by WF, which is then used
to motivate and explain the basic principles of TWF. 
Here and throughout, we let $\mathcal{A}:\mathbb{R}^{n\times n}\mapsto\mathbb{R}^{m}$
be the linear map 
\[
	\boldsymbol{M}\in\mathbb{R}^{n\times n}\quad\mapsto\quad\mathcal{A}\left(\boldsymbol{M}\right):=
	\left\{ \boldsymbol{a}_{i}^{\top}\boldsymbol{M}\boldsymbol{a}_{i}\right\} _{1\leq i\leq m} 
\]
and $\boldsymbol{A}$ the design matrix 
\[
	\boldsymbol{A} := [\boldsymbol{a}_1, \cdots,\boldsymbol{a}_m ]^{\top}.
\]

\subsection{Regularized gradient stage}


For independent samples, the
gradient of the real-valued Poisson log-likelihood  obeys
\begin{equation}
	\sum_{i=1}^{m}\nabla\mathcal{\ell}_{i} ( \boldsymbol{z} )
	=\text{ } \sum_{i=1}^m   \underset{:= \nu_{i}}{2\underbrace{\frac{y_{i} - |\boldsymbol{a}_i^{\top}\boldsymbol{z} |^2 }{ \boldsymbol{a}_i^{\top}\boldsymbol{z}}}} \boldsymbol{a}_i,
	\label{eq:score}
\end{equation}
where $\nu_i$ represents the weight assigned to each $\boldsymbol{a}_i$.
This forms the descent direction of WF updates. 

Unfortunately, WF moving along the preceding direction 
might not come close to the truth unless $\boldsymbol{z}$ is already very close to $\boldsymbol{x}$. To see this, it is helpful to consider any fixed vector
$\boldsymbol{z}\in \mathbb{R}^n$ independent of the design vectors.
The typical size of
$\min_{1\leq i\leq m} |\boldsymbol{a}_i^{\top}\boldsymbol{z} |$ is
about on the order of $\frac{1}{m} \Vert \boldsymbol{z}\Vert $,
introducing some unreasonably large weights $\nu_{i}$, which can be as
large as
$m\Vert \boldsymbol{x} \Vert ^{2}/ \Vert \boldsymbol{z} \Vert$.
Consequently, the iterative updates based on (\ref{eq:WF}) often
overshoot, and this arises starting from the very initial stage\footnote{For
complex-valued data where
$\boldsymbol{a}_i \sim \mathcal{N} \left({\bf 0},
  \boldsymbol{I}\right)+j\mathcal{N}\left({\bf 0},
  \boldsymbol{I}\right)$,
WF converges empirically, as $\min_i |\boldsymbol{a}_i^* \boldsymbol{z}|$ is much larger than the real-valued case.}.

\begin{figure}
	\centering
	\includegraphics[width=0.48\textwidth]{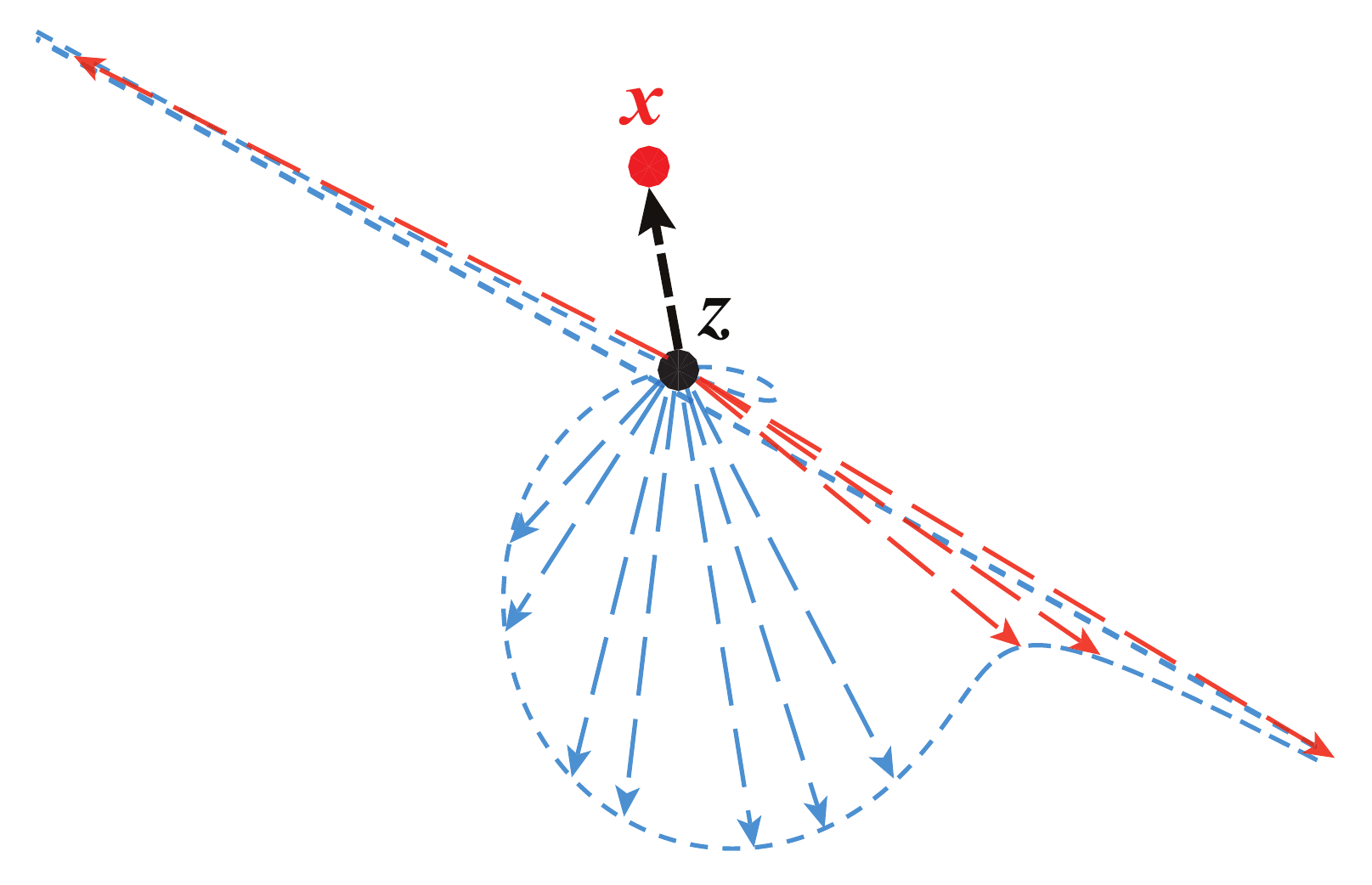}
	\caption{{The locus of $-\frac{1}{2}\nabla\ell_{i}\left(\boldsymbol{z}\right)
			       = \frac{\left|\boldsymbol{a}_{i}^{\top}\boldsymbol{z}\right|^{2}-\left|\boldsymbol{a}_{i}^{\top}\boldsymbol{x}\right|^{2}}{\boldsymbol{a}_{i}^{\top}\boldsymbol{z}}\boldsymbol{a}_{i}$
	when  $\boldsymbol{a}_{i}$ ranges over all unit vectors, 
 where $\boldsymbol{x}=\left(2.7,8\right)$
	and $\boldsymbol{z}=\left(3,6\right)$. 
For each direction $\boldsymbol{a}_i$, $-\frac{1}{2}\nabla\ell_{i}\left(\boldsymbol{z}\right)$ is aligned with $\boldsymbol{a}_i$, and its length represents the weight assigned to this component. 
        In particular, the red arrows depict a few directions that behave like outliers, whereas the blue arrows depict several directions whose resulting gradients take typical sizes.}}
	\label{fig:grad-distribution}
\end{figure}

\begin{figure}[h]
	\centering
	\includegraphics[width=0.4\textwidth]{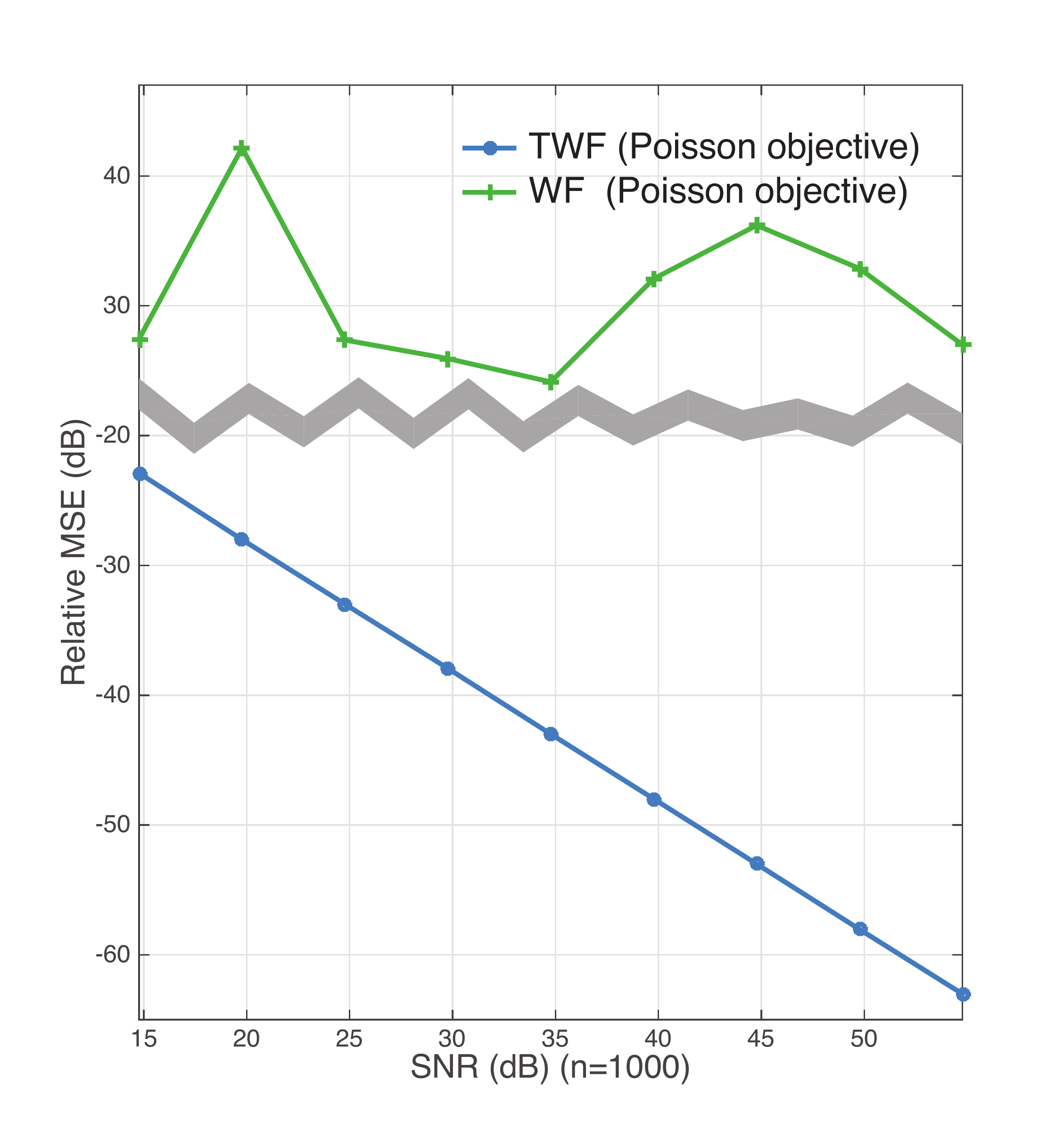} 
	\caption{ Relative MSE vs. SNR in dB. The curves
          are shown for WF and TWF, both employing the Poisson
          log-likelihood. Here, $\boldsymbol{a}_i \sim
          \mathcal{N}({\bf 0},\boldsymbol{I})$, $n=1000$, $m=8n$, and
          each point is averaged over 100 Monte Carlo trials.}
	\label{fig:RealPoisson}
\end{figure}

Fig.~\ref{fig:grad-distribution} illustrates this phenomenon by
showing the locus of
$-\nabla\mathcal{\ell}_{i}\left(\boldsymbol{z}\right)$ when
$\boldsymbol{a}_i$ has unit norm and ranges over all possible directions.  
 Examination of the figure seems to suggest
that most of the gradient components
$\nabla\ell_{i}\left(\boldsymbol{z}\right)$ are more or less pointing
towards the truth $\boldsymbol{x}$ and forming
reasonable search directions. But there exist a few outlier components that are excessively large, which lead to unstable search directions. 
Notably, an underlying
premise for a nonconvex procedure to succeed is to
ensure all iterates reside within a {\em basin of attraction}, that is, a neighborhood surrounding $\boldsymbol{x}$ within
which $\boldsymbol{x}$ is the unique stationary point of the objective. When a gradient is not well-controlled, the
iterative procedure might overshoot and end up leaving this basin of attraction. 
  This intuition is
  corroborated by numerical experiments under {\em real-valued}
  data. As illustrated in Fig.~\ref{fig:RealPoisson}, the solutions
  returned by the WF (designed for a real-valued Poisson
  log-likelihood and $m=8n$) are very far from the ground truth. 

Hence, to remedy the aforementioned
stability issue, it would be natural to separate the small fraction of
abnormal gradient components by regularizing the weights $\nu_{i}$,
possibly via data-dependent trimming rules. This gives rise to the update
rule of TWF:
\begin{eqnarray}
	\boldsymbol{z}^{(t+1)} & = & \boldsymbol{z}^{(t)}+ \frac{\mu_{t}}{m}\nabla\ell_{\mathrm{tr}}(\boldsymbol{z}^{(t)}), \quad\forall t\in\mathbb{N},		\label{eq:TWF-update}
\end{eqnarray}
where $\nabla\ell_{\mathrm{tr}}\left(\cdot\right)$ denotes the 
  regularized gradient given by\footnote{In the complex-valued case,
  the trimming rule is enforced upon the Wirtinger derivative, which
  reads $\nabla\ell_{\mathrm{tr}}\left(\boldsymbol{z}\right) :=
  \sum_{i=1}^{m} 2\frac{y_{i} - |\boldsymbol{z}^{*}\boldsymbol{a}_i
    |^{2}}{\boldsymbol{z}^{*}\boldsymbol{a}_i } \boldsymbol{a}_{i}{\bf
    1}_{\mathcal{E}_{1}^{i}\left(\boldsymbol{z}\right)\cap\mathcal{E}_{2}^{i}\left(\boldsymbol{z}\right)}.
  $ }
\begin{equation}
	\nabla\ell_{\mathrm{tr}}\left(\boldsymbol{z}\right)
	:= \sum_{i=1}^{m} 2\frac{y_{i} - |\boldsymbol{a}_i^{\top}\boldsymbol{z} |^{2}}{\boldsymbol{a}_i^{\top}\boldsymbol{z} }
	  \boldsymbol{a}_{i}{\bf 1}_{\mathcal{E}_{1}^{i}\left(\boldsymbol{z}\right)\cap\mathcal{E}_{2}^{i}\left(\boldsymbol{z}\right)}.
	\label{eq:score-truncate}
\end{equation}
 for some trimming criteria specified by $\mathcal{E}_{1}^{i}\left(\cdot\right)$
and $\mathcal{E}_{2}^{i}\left(\cdot\right)$. In our algorithm, we
take $\mathcal{E}_{1}^{i}\left(\boldsymbol{z}\right)$ and $\mathcal{E}_{2}^{i}\left(\boldsymbol{z}\right)$
to be two collections of events given by
\begin{eqnarray}
	\mathcal{E}_{1}^{i} (\boldsymbol{z}) & := & \bigg\{ \alpha_{z}^{\text{lb}}\leq\frac{\left|\boldsymbol{a}_i^{\top}\boldsymbol{z} \right|}{\Vert \boldsymbol{z} \Vert }\leq\alpha_{z}^{\text{ub}}\bigg\} ,
	\label{eq:defn-E1}\\
	\mathcal{E}_{2}^{i} (\boldsymbol{z}) & := & \bigg\{ |y_{i}-|\boldsymbol{a}_i^{\top}\boldsymbol{z} |^{2} |\leq\frac{\alpha_{h}}{m}\left\Vert \boldsymbol{y}-\mathcal{A}\left(\boldsymbol{z}\boldsymbol{z}^{\top}\right) \right\Vert _1 \frac{\left|\boldsymbol{a}_i^{\top}\boldsymbol{z}\right|} {\Vert \boldsymbol{z}\Vert } \bigg\} ,
	\label{eq:defn-E2}
\end{eqnarray}
%
where $\alpha_{z}^{\text{lb}}$, $\alpha_{z}^{\text{ub}}$, $\alpha_z$ are predetermined thresholds.
To keep notation light, we shall use $\mathcal{E}_{1}^{i}$ and $\mathcal{E}_{2}^{i}$
rather than $\mathcal{E}_{1}^{i}\left(\boldsymbol{z}\right)$ and
$\mathcal{E}_{2}^{i}\left(\boldsymbol{z}\right)$ whenever it is clear
from context. 

We emphasize that the above trimming procedure simply throws away those components whose
weights $\nu_{i}$'s fall outside some confidence range, so as to
remove the influence of outlier components. To achieve this, we
regularize both the numerator and denominator of $\nu_{i}$ by
enforcing separate trimming rules. Recognize that for any fixed
$\boldsymbol{z}$, the denominator obeys
\[	
	\mathbb{E}\left[\left|\boldsymbol{a}_i^{\top}\boldsymbol{z} \right|\right] = \sqrt{2/\pi} \Vert \boldsymbol{z} \Vert ,
\]
leading up to the rule (\ref{eq:defn-E1}). Regarding the numerator, by the law of large numbers
one would expect
\[
	\mathbb{E}\left[\left|y_{i}-|\boldsymbol{a}_i^{\top}\boldsymbol{z} |^{2}\right|\right]
	\approx \frac{1}{m}\left\Vert \boldsymbol{y}-\mathcal{A}\left(\boldsymbol{z}\boldsymbol{z}^{\top}\right)\right\Vert _{1},
\]
and hence it is natural to regularize the numerator by ensuring
\[
	\left|y_{i}-|\boldsymbol{a}_i^{\top}\boldsymbol{z} |^{2}\right|
	\lesssim \frac{1}{m}\left\Vert \boldsymbol{y}-\mathcal{A}\left(\boldsymbol{z}\boldsymbol{z}^{\top}\right)\right\Vert _{1}.
\]
As a remark, we include an extra term ${\left|\boldsymbol{a}_i^{\top}\boldsymbol{z} \right|}/{\left\Vert \boldsymbol{z}\right\Vert }$
in (\ref{eq:defn-E2}) to sharpen the theory, but all our results
continue to hold (up to some modification of constants) if we drop
this term in (\ref{eq:defn-E2}). Detailed procedures are summarized in Algorithm \ref{alg:TWF} \footnote{Careful readers might note that we include some extra factor $\frac{\sqrt{n}}{\|\boldsymbol{a}_i\|}$ (which  is approximately 1 in the Gaussian model) in Algorithm \ref{alg:TWF}. This occurs since we present Algorithm \ref{alg:TWF} in a more general fashion that applies beyond the model $\boldsymbol{a}_i \sim \mathcal{N}({\bf 0}, \boldsymbol{I})$, but all results / proofs continue to hold in the presence of this extra term. }.

The proposed paradigm could be counter-intuitive at first glance, since one might expect the larger terms
to be better aligned with the desired search direction. The issue, however, is that the large terms
are extremely volatile and could have too high of a leverage on the descent directions.  In contrast,
TWF discards these high-leverage data, which slightly increases the bias but
remarkably reduces the variance of the descent direction. We expect such gradient regularization
and variance reduction schemes to be beneficial for solving a broad family of nonconvex problems.

\newsavebox\Yx 
\begin{lrbox}{\Yx}   
\begin{minipage}{\textwidth}    
\begin{equation}
	\boldsymbol{Y} = \frac{1}{m}\sum_{i=1}^{m}y_{i}\boldsymbol{a}_{i}\boldsymbol{a}_{i}^{*} 
	{\bf 1}_{\{ |y_{i}|\leq\alpha_{y}^{2}\lambda_{0}^{2}\}}.
	\label{eq:TruncatedDual}
\end{equation}
\end{minipage} 
\end{lrbox}

\newsavebox\TWFupdate 
\begin{lrbox}{\TWFupdate}   
\begin{minipage}{\textwidth}
\begin{eqnarray}
	\boldsymbol{z}^{(t+1)} & = & \boldsymbol{z}^{(t)} +
		\frac{2\mu_{t}}{m} \sum_{i=1}^{m} \frac{y_{i}-\left| \boldsymbol{a}_{i}^* \boldsymbol{z}^{(t)}\right|^{2}}{\boldsymbol{z}^{(t)*}\boldsymbol{a}_{i}}\boldsymbol{a}_{i}
		{\bf 1}_{\mathcal{E}_{1}^{i}\cap\mathcal{E}_{2}^{i}},
\end{eqnarray}
\end{minipage} 
\end{lrbox}

\newsavebox\Truncate 
\begin{lrbox}{\Truncate}   
\begin{minipage}{\textwidth}
\begin{equation}
	\mathcal{E}_{1}^{i} := \left\{ \alpha_z^{\text{lb}} \leq 
		\frac{\sqrt{n}}{ \Vert \boldsymbol{a}_{i}\Vert }\frac{ |\boldsymbol{a}_{i}^* \boldsymbol{z}^{(t)} |}{\Vert \boldsymbol{z}^{(t)} \Vert } 
		\leq \alpha_{z}^{\text{ub}}\right\} ,
	\quad\quad 
	\mathcal{E}_{2}^{i} := \left\{ |y_{i}-|\boldsymbol{a}_{i}^* \boldsymbol{z}^{(t)}|^2 |
				\leq \alpha_h K_t \frac{\sqrt{n}}{ \Vert \boldsymbol{a}_{i} \Vert }\frac{ |\boldsymbol{a}_{i}^* \boldsymbol{z}^{(t)} |}{ \Vert \boldsymbol{z}^{(t)} \Vert }\right\} ,
\end{equation}
\end{minipage} 
\end{lrbox}

\begin{algorithm}[t]	
	\caption{Truncated Wirtinger Flow. \label{alg:TWF}}
	\begin{tabular}{>{\raggedright}p{1\textwidth}}
		\textbf{Input}: Measurements $\left\{ y_{i}\mid1\leq i\leq m\right\} $
			and sampling vectors $\left\{ \boldsymbol{a}_{i}\mid1\leq i\leq m\right\} $; 
			trimming thresholds $\alpha_z^{\mathrm{lb}}$,  $\alpha_z^{\mathrm{ub}}$, $\alpha_h$, and $\alpha_y$ 
			(see default values in Table \ref{table:pars}). 
		\vspace{0.7em}\tabularnewline
		\textbf{Initialize} $\boldsymbol{z}^{(0)}$ to be $\sqrt{ \frac{mn}{ \sum_{i=1}^{m}\left\Vert \boldsymbol{a}_{i}\right\Vert ^{2}} }  \lambda_{0}\tilde{\boldsymbol{z}}$,
			where $\lambda_{0}=\sqrt{\frac{1}{m}  \sum_{i=1}^{m}y_{i}}$
			and $\tilde{\boldsymbol{z}}$ is the leading eigenvector of \usebox{\Yx} \vspace{0.3em}\tabularnewline
		\textbf{Loop: for $t=0:T$ do} \usebox{\TWFupdate} \tabularnewline
			$\quad$ where 
			\usebox{\Truncate} \tabularnewline
			\[
			    \text{and}\quad K_{t}:= \frac{1}{m} \sum_{l=1}^{m}
					      \big| y_{l}-|\boldsymbol{a}_{l}^{*}\boldsymbol{z}^{(t)}|^{2} \big|.
			\]
		\textbf{Output }$\boldsymbol{z}_{T}$.\tabularnewline
	\end{tabular}
\end{algorithm}

\subsection{Truncated spectral initialization}

In order for the gradient stage to converge rapidly, we
need to seed it with a suitable initialization. One natural alternative
is the spectral method adopted in \cite{netrapalli2013phase,candes2014wirtinger},
which amounts to computing the leading eigenvector of 
$\widetilde{\boldsymbol{Y}}:=\frac{1}{m}\sum_{i=1}^{m}y_{i}\boldsymbol{a}_{i}\boldsymbol{a}_{i}^{\top}$.
This arises from the observation that when $\boldsymbol{a}_{i}\sim\mathcal{N}\left({\bf 0},\boldsymbol{I}\right)$
and $\left\Vert \boldsymbol{x}\right\Vert =1$, 
\[
	\mathbb{E}[\widetilde{\boldsymbol{Y}}] = \boldsymbol{I} + 2\boldsymbol{x}\boldsymbol{x}^{\top},
\]
whose leading eigenvector is exactly $\boldsymbol{x}$ with an eigenvalue of 3. 

Unfortunately, this spectral technique converges to a good initial
point only when $m\gtrsim n\log n$, due to the fact that $(
\boldsymbol{a}_i ^{\top}\boldsymbol{x} )^2 \boldsymbol{a}_i
\boldsymbol{a}_i^{\top} $ is heavy-tailed, a random quantity which
does not have a moment generating function.  To be more precise,
consider the noiseless case $y_i = |\boldsymbol{a}_i ^ {\top}
\boldsymbol{x}|^2$ and recall that $\max_{i} y_{i}\approx 2\log
m$. Letting $k=\arg\max_{i}y_{i}$, one can calculate
\[
	\left(  \frac{\boldsymbol{a}_k} {\Vert \boldsymbol{a}_k \Vert }\right)^{\top} \widetilde{\boldsymbol{Y}} 
		\frac{\boldsymbol{a}_k} {\Vert \boldsymbol{a}_k \Vert }
	\geq \left(\frac{\boldsymbol{a}_{k}}{\Vert \boldsymbol{a}_{k} \Vert }\right)^{\top} 
		\left( \frac{1}{m}\boldsymbol{a}_{k}\boldsymbol{a}_{k}^{\top}\right)
		\left(\boldsymbol{a}_k^{\top} \boldsymbol{x}\right)^{2}\left(\frac{\boldsymbol{a}_k}{\Vert \boldsymbol{a}_k \Vert }\right)
	\approx \frac{ 2n \log m}{m},
\]
which is much larger than
$\boldsymbol{x}^{\top}\widetilde{\boldsymbol{Y}}\boldsymbol{x}=3$
unless $m/n$ is very large. This tells us that in the regime where
$m\asymp n$, there exists some unit vector $\boldsymbol{a}_k /
\|\boldsymbol{a}_k\|$ that is closer to the leading eigenvector of
$\widetilde{\boldsymbol{Y}}$ than $\boldsymbol{x}$. This phenomenon happens because the summands of $\widetilde{\boldsymbol{Y}}$ have huge tails so that even one large term could end up dominating the empirical sum,  
thus preventing the spectral method from returning a meaningful initial
guess.

To address this issue, we propose a more robust version of spectral method, which discards those observations
$y_{i}$ that are several times larger than the mean during spectral
initialization.  Specifically, the initial estimate is obtained by
computing the leading eigenvector $\tilde{\boldsymbol{z}}$ of the truncated sum
\begin{equation}
	\boldsymbol{Y}:=\frac{1}{m}\sum_{i=1}^{m}y_{i}\boldsymbol{a}_{i}\boldsymbol{a}_{i}^{\top}
		{\bf 1}_{\left\{ |y_{i}| \leq \alpha_{y}^2 \left(\frac{1}{m}\sum_{l=1}^{m}y_{l}\right)\right\} }
\end{equation}
for some predetermined threshold $\alpha_y$, and then rescaling
$\tilde{\boldsymbol{z}}$ so as to have roughly the same norm as
$\boldsymbol{x}$ (which is estimated to be $\frac{1}{m}\sum_{l=1}^m
y_l$); see Algorithm \ref{alg:TWF} for the detailed procedure.

Notably, the aforementioned drawback of the spectral method is not merely a theoretical
concern but rather a substantial practical issue. We have seen this in
Fig.~\ref{fig:real-image} (main quad example) showing the enormous
advantage of truncated spectral initialization.  This is also further illustrated
in Fig.~\ref{fig:spectral-method}, which compares the empirical
efficiency of both methods with $\alpha_{y}=3$ set to be the
truncation threshold. For both Gaussian designs and CDP models, the
empirical loss incurred by the original spectral method increases as
$n$ grows, which is in stark constrast to the truncated spectral
method that achieves almost identical accuracy over the same range of
$n$.

\begin{figure}[h]
\centering
\begin{tabular}{cc}
	\includegraphics[width=0.47\textwidth]{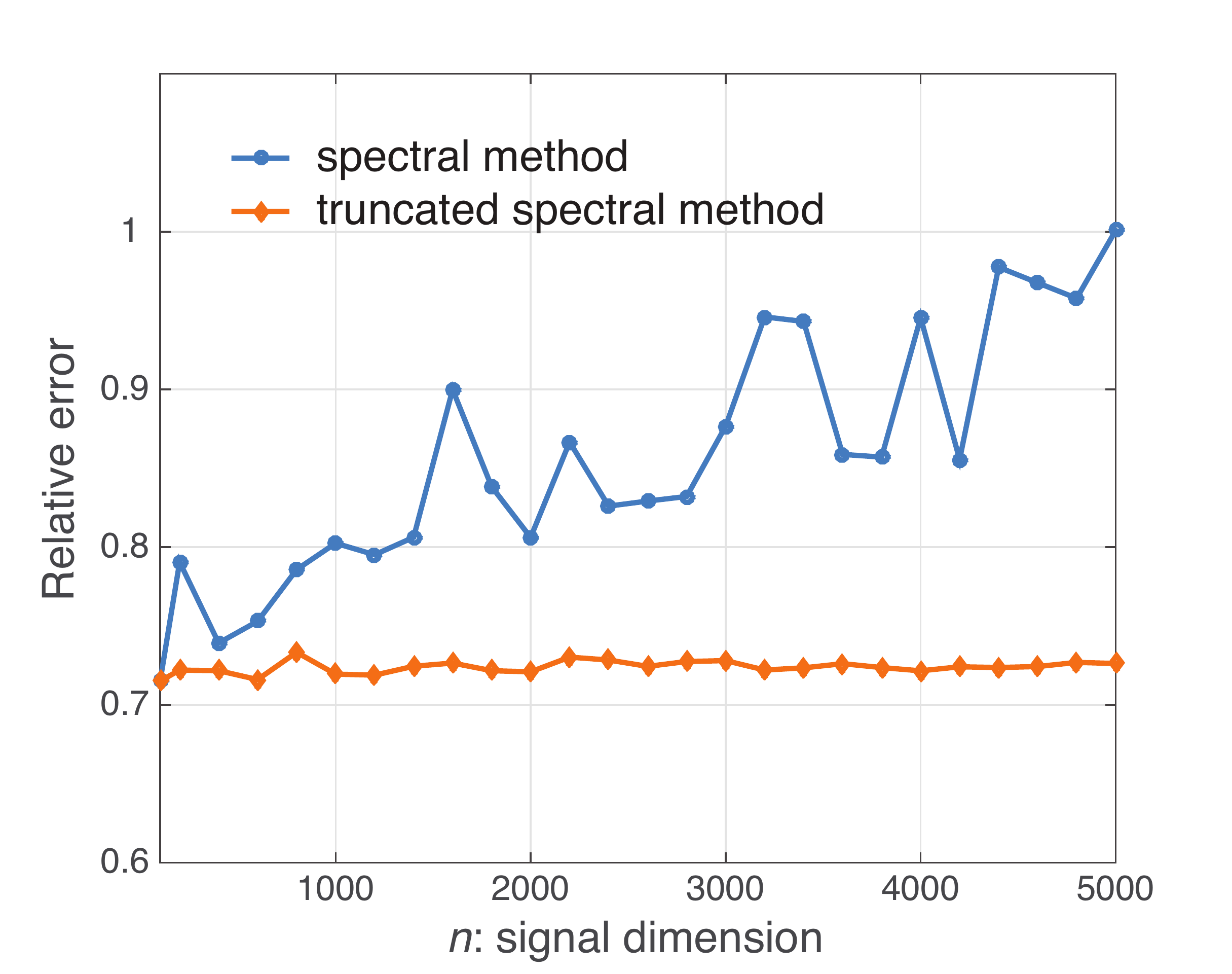} 
	& \includegraphics[width=0.47\textwidth]{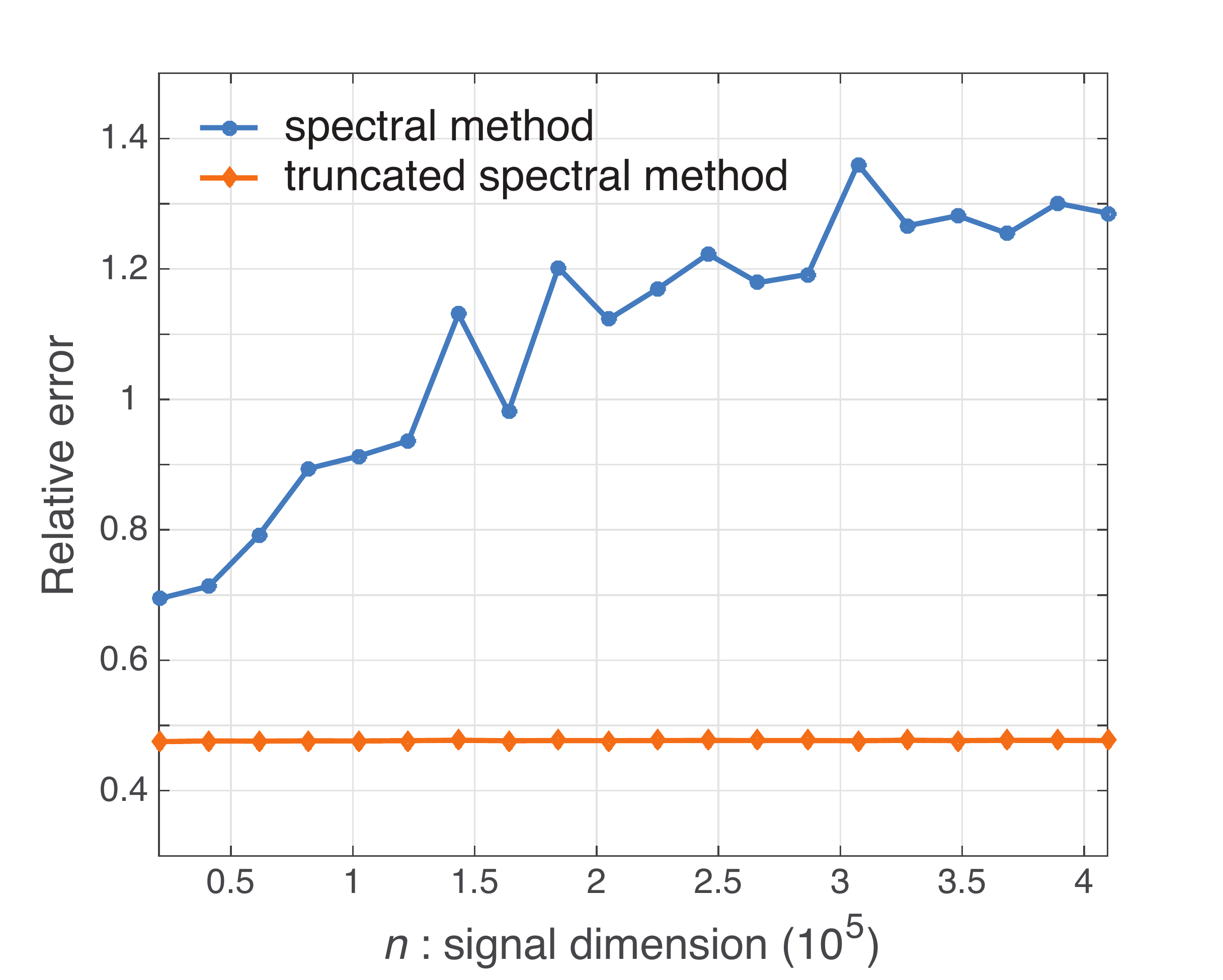}\tabularnewline
	(a) & (b)\tabularnewline
\end{tabular}
\caption{The empirical relative error for both the spectral and the
  truncated spectral methods. The results are averaged over 50 Monte
  Carlo runs, and are shown for: (a) 1-D Gaussian measurement where
  $\boldsymbol{a}_i \sim \mathcal{N}({\bf 0},\boldsymbol{I})$ and
  $m=6n$; (b) 2-D CDP model (\ref{eq:CDP}) where the diagonal entries of
  $\boldsymbol{D}^{(l)}$ are uniformly drawn from
  $\left\{ 1,-1,j,-j\right\} $, $n=n_1 \times n_2$ with $n_1=300$ and
  $n_2$ ranging from 64 to 1280, and $m=12n$.}
\label{fig:spectral-method}
\end{figure}

\subsection{Choice of algorithmic parameters} \label{sec:pars}

\begin{table}[t]
\begin{tabular}{>{\centering}p{1\textwidth}}
\tabularnewline
\end{tabular}

\vspace{-0.5em}\caption{Range of algorithmic parameters \label{table:pars}}

  \begin{tabular}{>{\raggedright}p{1\textwidth}}
    \hline 
    \vspace{0.1em}
    (a) {\bf When a fixed step size $\mu_t\equiv\mu$ is employed: } $(\alpha_z^{\text{lb}}, \alpha_z^{\text{ub}}, \alpha_h, \alpha_y)$ obeys
    \tabularnewline
    \begin{equation}
	\begin{cases}
	\zeta_{1}\text{ } := \max \left\{ 
	  \mathbb{E}\Big[ \xi^{2}{\bf 1}_{\left\{ |\xi| \leq \sqrt{1.01}\alpha_{z}^{\mathrm{lb}} \text{ or } |\xi| \geq \sqrt{0.99}\alpha_{z}^{\mathrm{ub}}\right\} } \Big],  
	  \mathbb{P} \left( |\xi| \leq \sqrt{1.01}\alpha_{z}^{\mathrm{lb}} \text{ or } |\xi|\geq \sqrt{0.99}\alpha_{z}^{\mathrm{ub}} \right) 
	\right\} \vspace{0.3em}\\
	\zeta_{2}\text{ } := \text{ } \mathbb{E}\left[ \xi^{2}{\bf 1}_{ \{ |\xi| > 0.473 \alpha_{h} \} } \right],\vspace{0.3em} \\
	2(\zeta_{1} + \zeta_{2} ) + \sqrt{ 8/(9\pi)}\alpha_{h}^{-1} < 1.99, \vspace{0.3em}\\
	\alpha_y \geq 3,
	\end{cases}
	\label{eq:Condition-tau_z}
    \end{equation}
    where $\xi\sim\mathcal{N}(0,1)$.  By default, $\alpha_z^{\mathrm{lb}}=0.3$, $\alpha_z^{\mathrm{ub}} = \alpha_h = 5$, and $\alpha_y=3$.
    \vspace{0.1em}
    \tabularnewline
    \hline\vspace{0.1em}
    (b) {\bf When $\mu_t$ is chosen by a backtracking line search: } $(\alpha_z^{\text{lb}}, \alpha_z^{\text{ub}}, \alpha_h, \alpha_y, \alpha_p)$ obeys
    \begin{equation}
	0<\alpha_{z}^{\mathrm{lb}} \leq 0.1, \quad \alpha_{z}^{\mathrm{ub}} \geq 5,\quad \alpha_{h}\geq 6, \quad \alpha_y \geq 3, \quad\text{and}\quad \alpha_p \geq 5.
	\label{eq:Cleaner-pars}
    \end{equation}
    By default, $\alpha_z^{\mathrm{lb}}=0.1$, $\alpha_z^{\mathrm{ub}} = 5$,  $\alpha_h = 6$, $\alpha_y=3$, and $\alpha_p = 5$.
    \vspace{0.2em}
    \tabularnewline
    \hline
  \end{tabular}
  
\end{table}

One implementation detail to specify is the step size $\mu_{t}$
at each iteration $t$. There are two alternatives that work well
in both theory and practice: 

\begin{enumerate}
\item \textbf{Fixed step size}. Take $\mu_{t}\equiv\mu$ ($\forall
  t\in\mathbb{N}$) for some constant $\mu>0$. As long as $\mu$ is not
  too large, our main results state that this strategy always
  works---although the convergence rate depends on $\mu$. Under appropriate conditions,
  our theorems hold for any constant $0 < \mu < 0.28$.
\item \textbf{Backtracking line search with truncated objective}. This strategy
performs a line search along the descent direction 
\[
	\boldsymbol{p}_{t}:=\frac{1}{m} \nabla\ell_{\mathrm{tr}} (\boldsymbol{z}_{t})
\]
and determines an appropriate step size that guarantees a sufficient
improvement. In contrast to the conventional search strategy that
determines the sufficient progress with respect to the true objective
function, we propose to evaluate instead a regularized version of the
objective function. Specifically, put
\begin{equation}
	\widehat{\ell} (\boldsymbol{z}) := \sum_{i\in\widehat{\mathcal{T}} (\boldsymbol{z})}
	\left\{ y_{i}\log (|\boldsymbol{a}_{i}^{\top}\boldsymbol{z}|^2 ) - |\boldsymbol{a}_{i}^{\top}\boldsymbol{z} |^2 \right\} ,
	\label{eq:ell_hat}
\end{equation}
where
\[
	\widehat{\mathcal{T}} (\boldsymbol{z}) := \left\{ i\mid\left|\boldsymbol{a}_{i}^{\top}\boldsymbol{z}\right|\geq\alpha_{z}^{\mathrm{lb}}\Vert \boldsymbol{z} \Vert \text{ and }\left|\boldsymbol{a}_{i}^{\top}\boldsymbol{p}\right|\leq\alpha_{p} \Vert \boldsymbol{p} \Vert \right\} .
\]
Then the backtracking line search proceeds as
\begin{enumerate}
	\item Start with $\tau=1$;
	\item Repeat $\tau\leftarrow\beta\tau$ until 
	\begin{equation}
		\frac{1}{m} \widehat{\ell}\big(\boldsymbol{z}^{(t)} + \tau\boldsymbol{p}^{(t)}\big)
		\geq \frac{1}{m} \widehat{\ell} \big( \boldsymbol{z}^{(t)} \big)
		+ \frac{1}{2}\tau \big\Vert \boldsymbol{p}^{(t)} \big\Vert ^2,
		\label{eq:backtrack}
	\end{equation}
	where $\beta\in(0,1)$ is some pre-determined constant;
	\item Set $\mu_{t}=\tau$.
\end{enumerate}
By definition (\ref{eq:ell_hat}), evaluating
$\widehat{\ell}(\boldsymbol{z}^{(t)}+\tau\boldsymbol{p}^{(t)})$ mainly
consists in calculating the matrix-vector product
$\boldsymbol{A}(\boldsymbol{z}^{(t)}+\tau\boldsymbol{p}^{(t)})$. In
total, we are going to evaluate
$\widehat{\ell}(\boldsymbol{z}^{(t)}+\tau\boldsymbol{p}^{(t)})$ for
$O \big(\log ({1}/{\beta}) \big)$ different $\tau$'s, and hence
the total cost amounts to computing
$\boldsymbol{A}\boldsymbol{z}^{(t)}$,
$\boldsymbol{A}\boldsymbol{p}^{(t)}$ as well as
$O (m\log({1}/{\beta}))$ additional flops.  Note that the
matrix-vector products $\boldsymbol{A}\boldsymbol{z}^{(t)}$ and
$\boldsymbol{A}\boldsymbol{p}^{(t)}$ need to be computed even when one
adopts a pre-determined step size. Hence, the extra cost incurred by a
backtracking line search, which is $O (m\log({1}/{\beta}))$
flops, is negligible compared to that of computing the gradient even
once.
\end{enumerate}

Another set of important algorithmic parameters to determine is the
trimming thresholds $\alpha_{h}$, $\alpha_{z}^{\mathrm{lb}}$, $\alpha_{z}^{\mathrm{ub}}$, $\alpha_y$, and $\alpha_p$ (for a backtracking line search only). The present paper isolates the set of
$(\alpha_{h},\alpha_{z}^{\mathrm{lb}},\alpha_{z}^{\mathrm{ub}}, \alpha_y)$
obeying (\ref{eq:Condition-tau_z}) as given in Table \ref{table:pars} when a fixed step size is employed.  More concretely, this range
subsumes as special cases all parameters obeying the following constraints:
    \begin{equation}
	0<\alpha_{z}^{\mathrm{lb}}\leq0.5,\quad\alpha_{z}^{\mathrm{ub}}\geq5,\quad\alpha_{h}\geq5, \quad\text{and}\quad \alpha_y \geq 3.
    \end{equation}
When a backtracking line search is adopted,  an extra parameter $\alpha_p$ is needed, which we take to be $\alpha_p \geq 5$.  In all theory presented herein, we assume that the parameters fall within the range singled out in  Table \ref{table:pars}.

\section{Why TWF works?}
\label{sec:Why-it-works}

Before proceeding, it is best to develop an intuitive understanding of
the TWF iterations. We start with a notation representing the
(unrecoverable) global phase \cite{candes2014wirtinger} for real-valued data
%
%
\begin{equation}
	\begin{array}{l}
	\phi\left(\boldsymbol{z}\right):=
	\begin{cases}
		0,\quad & \text{if }\left\Vert \boldsymbol{z}-\boldsymbol{x}\right\Vert \leq\left\Vert \boldsymbol{z}+\boldsymbol{x}\right\Vert ,\\
		\pi, & \text{else}.
	\end{cases}
	\end{array}
	\label{eq:phase-defn}
\end{equation}
It is self-evident that
%
\[
	(-\boldsymbol{z}) + \frac{\mu}{m}\nabla_{\mathrm{tr}}\ell\big(  -\boldsymbol{z} \big)
	= - \left\{ \boldsymbol{z} + \frac{\mu}{m}\nabla_{\mathrm{tr}}\ell(\boldsymbol{z})\right\} ,
\]
and hence (cf. Definition (\ref{eq:defn-dist})) 
\[
	\mathrm{dist}\left( (-\boldsymbol{z}) + \frac{\mu}{m}\nabla_{\mathrm{tr}}\ell( -\boldsymbol{z} ),\text{}\boldsymbol{x}\right)
	= \mathrm{dist}\left(\boldsymbol{z}+\frac{\mu}{m}\nabla_{\mathrm{tr}}\ell\left(\boldsymbol{z}\right),\text{}\boldsymbol{x}\right)
\]
despite the global phase uncertainty. For simplicity of presentation, we shall drop the phase term by letting
$\boldsymbol{z}$ be $e^{-j\phi\left(\boldsymbol{z}\right)}\boldsymbol{z}$
and setting $\boldsymbol{h}=\boldsymbol{z}-\boldsymbol{x}$, whenever it
is clear from context.


The first object to consider is the descent direction. To this end, we
find it convenient to work with a fixed $\boldsymbol{z}$ independent
of the design vectors $\boldsymbol{a}_{i}$, which is of course
heuristic but helpful in developing some intuition. Rewrite
\begin{eqnarray}
	\nabla\ell_i ( \boldsymbol{z} )  
	& = &  2 \frac{( \boldsymbol{a}_i ^{\top}\boldsymbol{x} )^2 - ( \boldsymbol{a}_i ^{\top}\boldsymbol{z} )^{2}}{\boldsymbol{a}_i^{\top} \boldsymbol{z}}  \boldsymbol{a}_i  
 	\text{ } \overset{(\text{i})}{=} \text{ } -2  \frac{ (\boldsymbol{a}_{i}^{\top}\boldsymbol{h} ) ( 2 \boldsymbol{a}_{i}^{\top} \boldsymbol{z} - \boldsymbol{a}_{i}^{\top}\boldsymbol{h} ) } {\boldsymbol{a}_i^{\top} \boldsymbol{z}}\boldsymbol{a}_i  
	\nonumber\\ 
&	= & - 4 (\boldsymbol{a}_{i}^{\top}\boldsymbol{h} )\boldsymbol{a}_{i}  +
		\underset{ := \boldsymbol{r}_i } {\underbrace{ 2 \frac{(\boldsymbol{a}_{i}^{\top}\boldsymbol{h} )^2 }{\boldsymbol{a}_i ^{\top}\boldsymbol{z}}\boldsymbol{a}_i  }}, 
	\label{eq:mean-ell}
\end{eqnarray}
where (i) follows from the identity $a^{2}-b^{2}=(a+b)(a-b)$. The
first component of (\ref{eq:mean-ell}), which on average gives
$-4\boldsymbol{h}$, makes a good search direction when averaged over
all the observations $i = 1, \ldots, m$.  The issue is that the other
term $\boldsymbol{r}_i$---which is in general non-integrable---could
be devastating. The reason is that
$\boldsymbol{a}_i^{\top}\boldsymbol{z}$ could be arbitrarily small, thus 
resulting in an unbounded $\boldsymbol{r}_i$. As a consequence, a
non-negligible portion of the $\boldsymbol{r}_i$'s may exert a very
strong influence on the descent direction in an undesired manner.

Such an issue can be prevented if one can detect and separate those
gradient components bearing abnormal $\boldsymbol{r}_i$'s. Since we
cannot observe the individual components of the decomposition
(\ref{eq:mean-ell}), we cannot reject indices with large values of
$\boldsymbol{r}_i$ directly. Instead, we examine each gradient
component as a whole and discard it if its size is not absolutely
controlled. Fortunately, such a strategy is sufficient to ensure that
most of the contribution from
the regularized gradient comes from the first
component of (\ref{eq:mean-ell}), namely,
$-4(\boldsymbol{a}_i^{\top}\boldsymbol{h}) \boldsymbol{a}_i$.  As will
be made precise in Proposition \ref{prop-regularity-noiseless} and
Lemma \ref{Lemma:norm-score}, the regularized gradient obeys
\begin{align}
	-\Big\langle \frac{1}{m}\nabla\ell_{\mathrm{tr}}(\boldsymbol{z}), \boldsymbol{h}  \Big\rangle  
	\text{ } \geq \text{ } (4-\epsilon) \| \boldsymbol{h} \|^2 &-  O\bigg(\frac{ \| \boldsymbol{h} \|^3}{ \| \boldsymbol{z} \|}\bigg)  \label{eq:mean-RC}\\
	\text{and}\quad  \Big\| \frac{1}{m} \nabla\ell_{\mathrm{tr}}(\boldsymbol{z}) \Big\| &\lesssim \| \boldsymbol{h} \|. \quad	
	\label{eq:norm-grad-tr}
\end{align}
Here, one has $(4-\epsilon)\|\boldsymbol{h}\|^2$ in (\ref{eq:mean-RC})
instead of $4\|\boldsymbol{h}\|^2$ to account for the bias introduced
by adaptive trimming, where $\epsilon$ is small as long as we only throw away
a small fraction of data. Looking at (\ref{eq:mean-RC}) and
(\ref{eq:norm-grad-tr}) we see that the search direction is
sufficiently aligned with the deviation
$-\boldsymbol{h}=\boldsymbol{x}-\boldsymbol{z}$ of the current
iterate; i.e. they form a reasonably good angle that is bounded away
from $90^{\circ}$. Consequently, $\boldsymbol{z}$ is expected to be
dragged towards $\boldsymbol{x}$ provided that the step size is
appropriately chosen.

The observations (\ref{eq:mean-RC}) and (\ref{eq:norm-grad-tr})  are reminiscent of a (local) regularity condition given in \cite{candes2014wirtinger}, 
which is a fundamental criterion that dictates
rapid convergence of iterative procedures (including WF and other
gradient descent schemes).  
When specialized to TWF,
we say that $-\frac{1}{m}\nabla\ell_{\mathrm{tr}}\left(\cdot\right)$
satisfies the \emph{regularity condition}, denoted by
$\mathsf{RC}\left(\mu,\lambda,\epsilon\right)$, if
\begin{eqnarray}
	\Big\langle \boldsymbol{h},-\frac{1}{m}\nabla\ell_{\mathrm{tr}} ( \boldsymbol{z} ) \Big\rangle  & \geq & \frac{\mu}{2}\Big\Vert \frac{1}{m}\nabla\ell_{\mathrm{tr}}\left(\boldsymbol{z}\right) \Big\Vert ^2 + \frac{\lambda}{2}\left\Vert \boldsymbol{h}\right\Vert ^{2}
	\label{eq:Regularity-Truncate-Original}
\end{eqnarray}
holds for all $\boldsymbol{z}$ obeying $\left\Vert \boldsymbol{z}-\boldsymbol{x}\right\Vert \leq\epsilon \Vert \boldsymbol{x} \Vert$,
where $0<\epsilon<1$ is some constant. Such an $\epsilon$-ball around
$\boldsymbol{x}$ forms a basin of attraction.
Formally, under $\mathsf{RC}\left(\mu,\lambda,\epsilon\right)$, a
little algebra gives
\begin{eqnarray}
	\mathrm{dist}^{2}\left(\boldsymbol{z}+\frac{\mu}{m}\nabla\ell_{\mathrm{tr}}\left(\boldsymbol{z}\right),\boldsymbol{x}\right) & \leq & \text{ }\left\Vert \boldsymbol{z}+\frac{\mu}{m}\nabla\ell_{\mathrm{tr}}\left(\boldsymbol{z}\right)-\boldsymbol{x}\right\Vert ^{2}\nonumber \\
 	& = & \left\Vert \boldsymbol{h}\right\Vert ^{2}+\left\Vert \frac{\mu}{m}\nabla\ell_{\mathrm{tr}}\left(\boldsymbol{z}\right)\right\Vert ^{2}+2\mu\left\langle \boldsymbol{h},\frac{1}{m}\nabla\ell_{\mathrm{tr}}\left(\boldsymbol{z}\right)\right\rangle \nonumber \\
 	& \leq & \left\Vert \boldsymbol{h}\right\Vert ^{2}+\left\Vert \frac{\mu}{m}\nabla\ell_{\mathrm{tr}}\left(\boldsymbol{z}\right)\right\Vert ^{2}-\mu^{2}\left\Vert \frac{1}{m}\nabla\ell_{\mathrm{tr}}\left(\boldsymbol{z}\right)\right\Vert ^{2}-\mu\lambda\left\Vert \boldsymbol{h}\right\Vert ^{2}\nonumber \\
 	& = & \left(1-\mu\lambda\right)\mathrm{dist}^{2}\left(\boldsymbol{z},\boldsymbol{x}\right)\label{eq:convergence-rate}
\end{eqnarray}
for any $\boldsymbol{z}$ with $\left\Vert
  \boldsymbol{z}-\boldsymbol{x}\right\Vert \leq\epsilon$.  In words,
the TWF update rule is locally contractive around the planted solution,
provided that $\mathsf{RC}\left(\mu,\lambda,\epsilon\right)$ holds for
some nonzero $\mu$ and $\lambda$.
Apparently, Conditions (\ref{eq:mean-RC}) and  (\ref{eq:norm-grad-tr}) already imply the validity of $\mathsf{RC}$ for some constants
$\mu, \lambda \asymp 1$ when $\|\boldsymbol{h}\| / \| \boldsymbol{z} \|$ is reasonably small,   which in turn allows us to take a constant step size $\mu$ and enables a constant contraction rate $1-\mu\lambda$.

\begin{figure}
	\centering
	\includegraphics[width=0.4\textwidth]{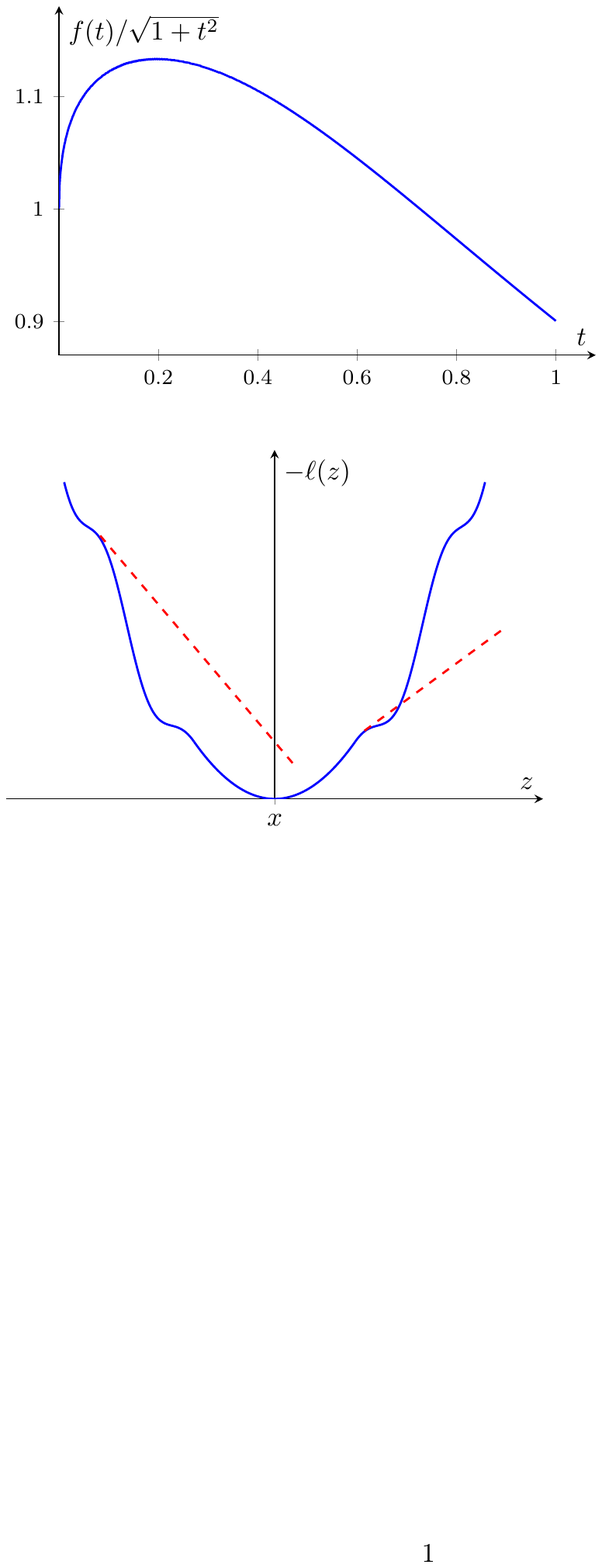}
	\caption{ A function
          $-\ell(z)$ satisfying $\mathsf{RC}$: $-\ell(z)=z^2$ for any
          $z \in [-6,6]$, and
          $-\ell(z) =
          z^2 + 1.5|z|\left(\cos( |z|-6 )-1\right)$
          otherwise. }
	\label{fig:regularity}
\end{figure}

Finally, caution must be exercised when connecting $\mathsf{RC}$ with strong
convexity, since the former does not necessarily guarantee the latter
within the basin of attraction.  As an illustration,
Fig.~\ref{fig:regularity} plots the graph of a non-convex function
obeying $\mathsf{RC}$.  The distinction stems from the fact that
$\mathsf{RC}$ is stated only for those pairs $\boldsymbol{z}$ and
$\boldsymbol{h}=\boldsymbol{z}-\boldsymbol{x}$ with $\boldsymbol{x}$
being a fixed component, rather than simultaneously accommodating all
possible $\boldsymbol{z}$ and
$\boldsymbol{h}=\boldsymbol{z}-\tilde{\boldsymbol{z}}$ with
$\tilde{\boldsymbol{z}}$ being an arbitrary vector.  In contrast,
$\mathsf{RC}$ says that the only stationary point of the truncated objective in a neighborhood of $\boldsymbol{x}$ is $\boldsymbol{x}$,
which often suffices
for a gradient-descent type scheme to succeed.  

\section{Numerical experiments\label{sec:Numerical-experiments}}

In this section, we report additional numerical results to verify the
practical applicability of TWF. In all numerical experiments conducted
in the current paper, we set
\begin{equation}
	\alpha_{z}^{\mathrm{lb}}=0.3,\quad\alpha_{z}^{\mathrm{ub}}=5, \quad\alpha_{h}=5, \quad\text{and}\quad \alpha_y = 3.
\end{equation}
This is a concrete combination of parameters
satisfying our condition (\ref{eq:Condition-tau_z}). Unless otherwise
noted, we employ 50 power iterations for initialization,
adopt a fixed step size $\mu_{t}\equiv0.2$ when updating TWF iterates,
and set the maximum number of iterations to be $T=1000$ for the iterative
refinement stage. 

The first series of experiments concerns exact recovery from
noise-free data. Set $n = 1000$ and  generate a real-valued
signal $\boldsymbol{x}$ at random. Then for $m$ varying between $2n$
and $6n$, generate $m$ design vectors $\boldsymbol{a}_{i}$
independently drawn from $\mathcal{N}\left({\bf
    0},\boldsymbol{I}\right)$.  An experiment is claimed to succeed if
the returned estimate $\hat{\boldsymbol{x}}$ satisfies
$\mathrm{dist}\left(\hat{\boldsymbol{x}},\boldsymbol{x}\right)/\left\Vert
  \boldsymbol{x}\right\Vert \leq10^{-5}$.
Fig.~\ref{Fig:PhaseTransition} illustrates the empirical success rate
of TWF (over 100 Monte Carlo trials for each $m$) revealing that exact
recovery is practially guaranteed from fewer than 1000 iterations when
the number of quadratic constraints is about 5 times the ambient
dimension.

\begin{figure}[t]
	\centering
	\includegraphics[width=0.5\textwidth]{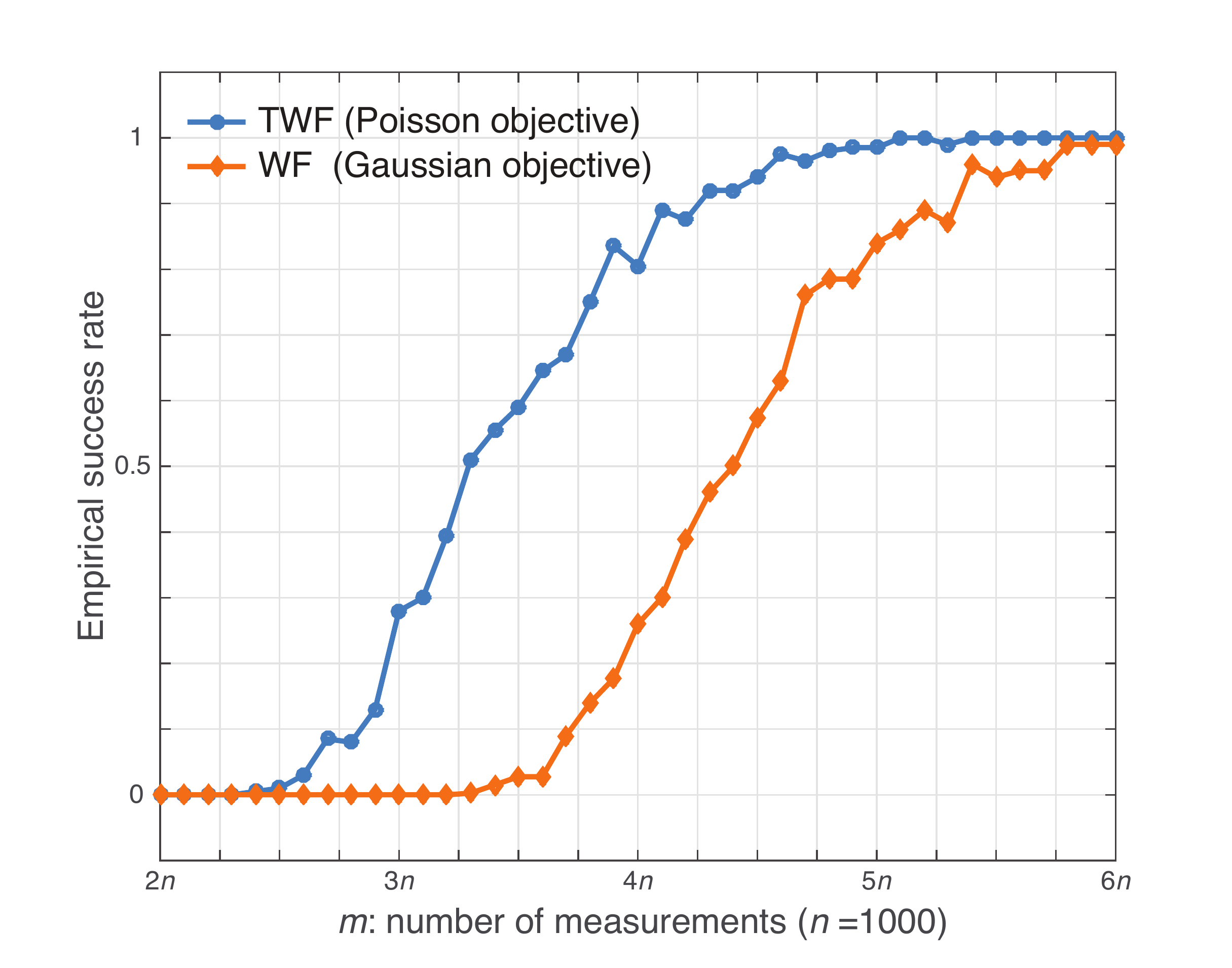}
	\caption{Empirical success rate under real-valued Gaussian sampling $\boldsymbol{a}_{i}\sim\mathcal{N}\left({\bf 0},\boldsymbol{I}_{n}\right)$.}
	\label{Fig:PhaseTransition}
\end{figure}

\begin{figure}[h]
	\centering
	\begin{tabular}{cc}
		  \includegraphics[width=0.5\textwidth]{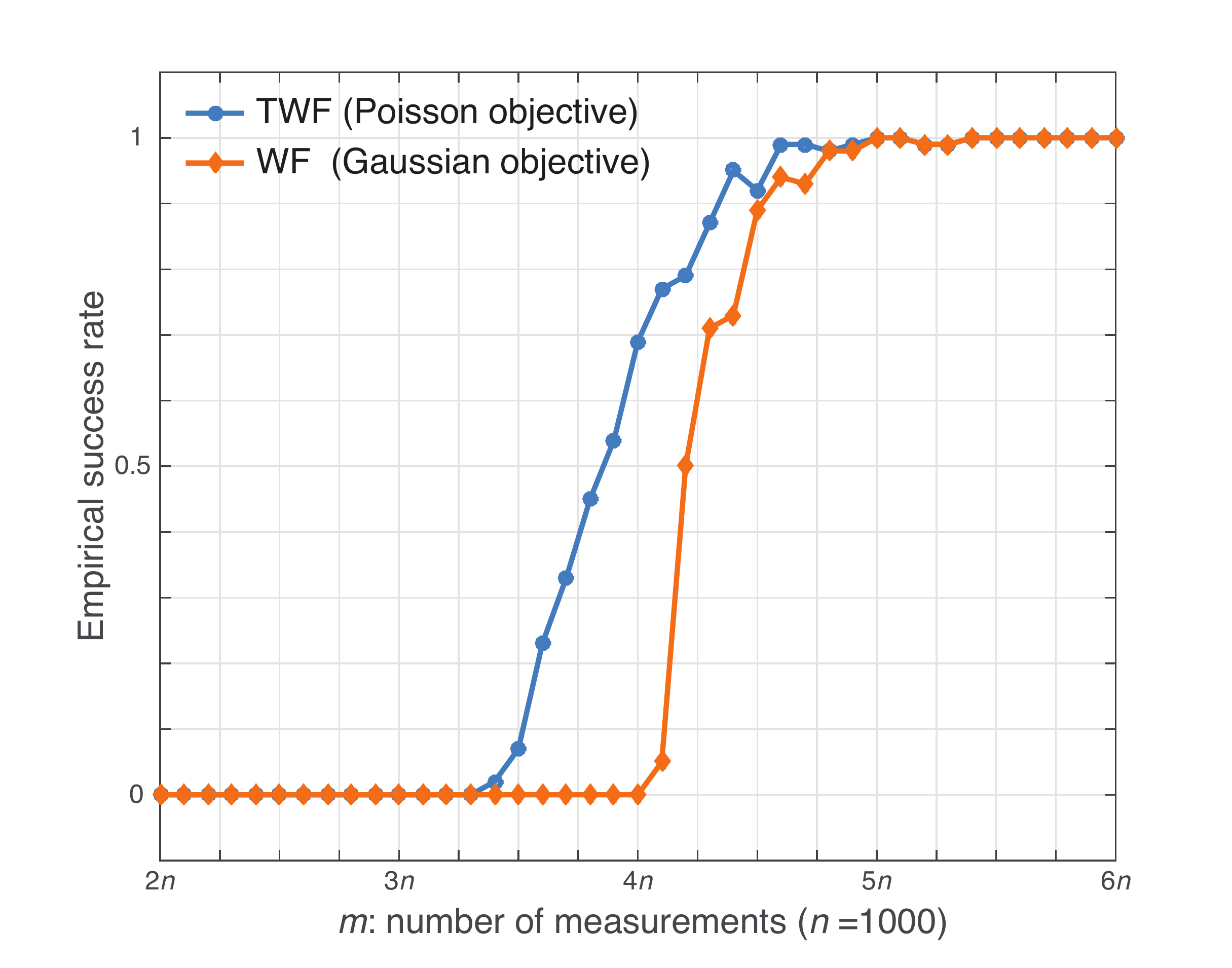} 
		& \includegraphics[width=0.5\textwidth]{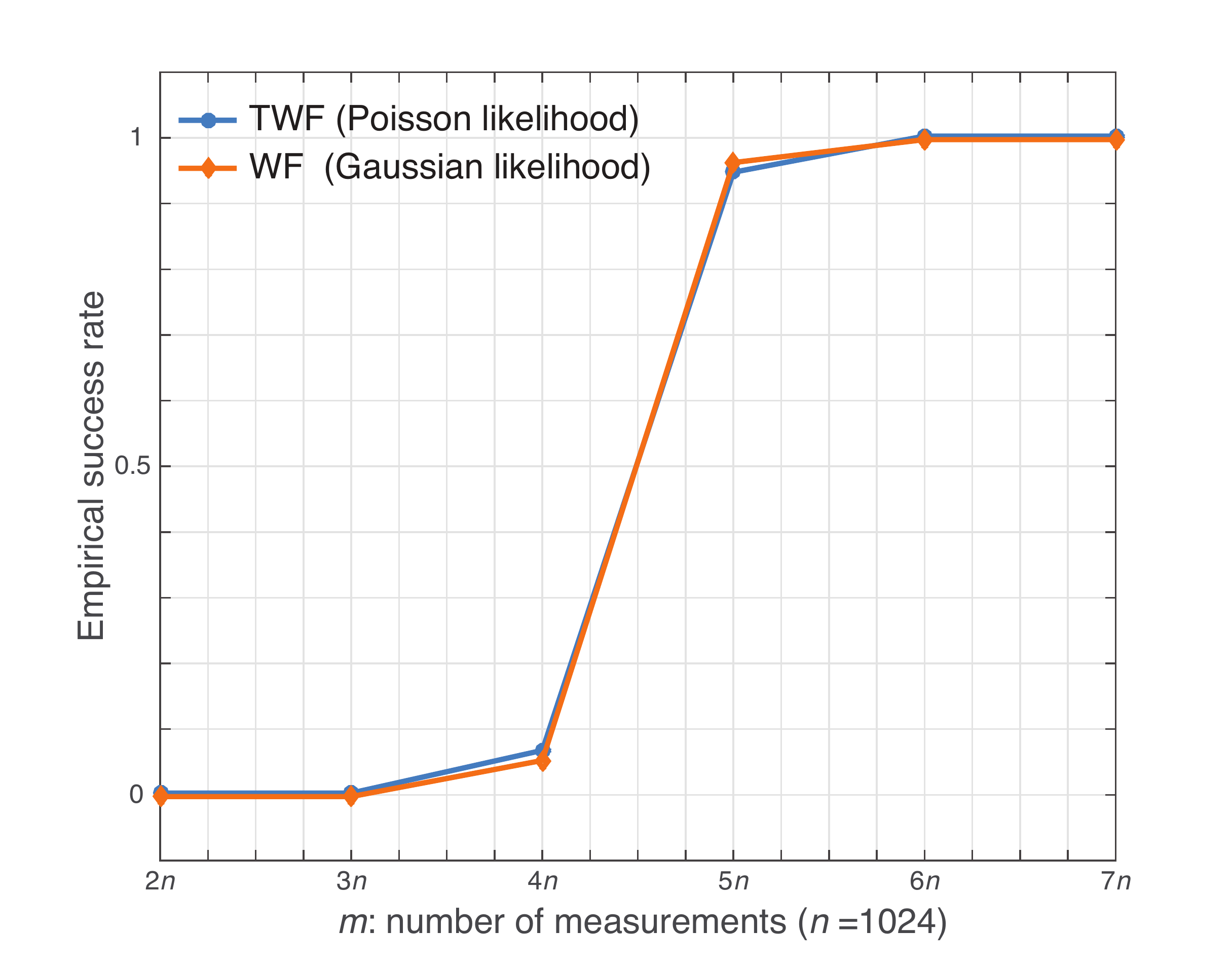}  \tabularnewline
		(a) & (b)  \tabularnewline
	\end{tabular}
	\caption{Empirical success rate for exact recovery using TWF. The results are shown for (a) complex-valued Gaussian sampling
	$\boldsymbol{a}_i \sim\mathcal{N}({\bf 0},\frac{1}{2}\boldsymbol{I}_n ) + j\mathcal{N} ({\bf 0},\frac{1}{2}\boldsymbol{I}_n )$, and (b) CDP with masks uniformly drawn from $\{ 1,-1,j,-j\} $.}
	\label{Fig:PhaseTransition2}
\end{figure}

To see how special the real-valued Gaussian designs are to our
theoretical finding, we perform experiments on two other types of
measurement models. In the first, TWF is applied to complex-valued
data by generating $\boldsymbol{a}_{i}\sim\mathcal{N}\left({\bf
    0},\frac{1}{2}\boldsymbol{I}\right)+j\mathcal{N}\left({\bf
    0},\frac{1}{2}\boldsymbol{I}\right)$.  The other is the model of coded
diffraction patterns described in (\ref{eq:CDP}).
%
%
Fig.~\ref{Fig:PhaseTransition2} depicts the average success rate
for both types of measurements over 100 Monte Carlo trials, indicating
that $m>4.5n$ and $m\geq6n$ are often sufficient under complex-valued Gaussian
and CDP models, respectively.

For the sake of comparison, we also report the empirical performance of WF in all the above settings, where the step size is set to be the default choice of \cite{candes2014wirtinger}, that is, $\mu_t = \min \{ 1 - e^{-t/330}, 0.2\}$.  As can be seen, the empirical success rates of TWF outperform WF when $T=1000$ under Gaussian models,  suggesting that TWF either converges faster or exhibits better phase transition behavior.

Another series of experiments has been carried out to demonstrate
the stability of TWF when the number $m$ of quadratic equations varies.
We consider the case where $n=1000$, and vary the SNR (cf. (\ref{eq:SNR}))
from 15 dB to 55dB. The design vectors are real-valued independent
Gaussian $\boldsymbol{a}_{i}\sim\mathcal{N}\left({\bf 0},\boldsymbol{I}\right)$,
while the measurements $y_{i}$ are generated according to the Poisson
noise model (\ref{eq:Poisson}). Fig.~\ref{Fig:MSE} shows the relative
mean square error---in the dB scale---as a function of SNR, when
averaged over 100 independent runs. For all choices of $m$, the numerical
experiments demonstrate that the relative MSE scales inversely proportional
to
%
$\text{SNR}$, which matches our stability guarantees in Theorem
\ref{theorem-Truncated-WF-noisy} (since we observe that
  on the dB scale, the slope is about -1 as predicted by the
  theory (\ref{eq:noisy-converge-SNR})).

\begin{figure}[h]
	\centering
	\includegraphics[width=0.48\textwidth]{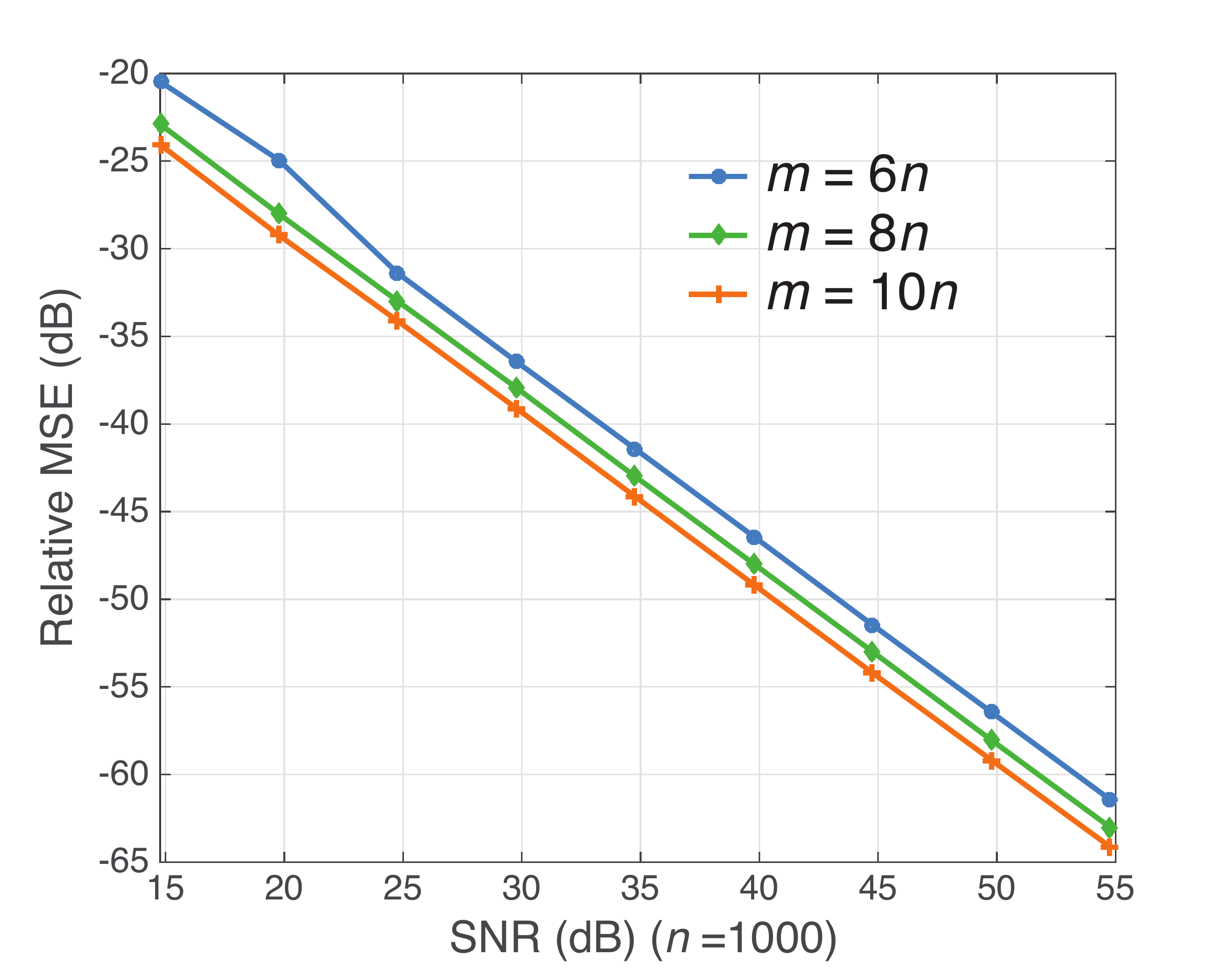}
	\caption{\label{Fig:MSE}Relative MSE vs.~SNR when the
          $y_{i}$'s follow the Poisson model.}
\end{figure}


\section{Exact recovery from noiseless data
\label{sec:Proof-of-Theorem-TruncatedWF}}

This section proves the theoretical guarantees of TWF in the absence
of noise (i.e.~Theorem \ref{theorem-Truncated-WF}). We separate the
noiseless case mainly out of pedagogical reasons, as most of the steps
carry over to the noisy case with slight modification.


	%

The analysis for the truncated spectral method relies on the celebrated Davis-Kahan $\sin\Theta$ theorem \cite{davis1970rotation},  which we defer  to Appendix \ref{proof-truncated-spectral}. 
%
In short, for any fixed $\delta>0$ and $\boldsymbol{x}\in\mathbb{R}^{n}$,
the initial point $\boldsymbol{z}^{(0)}$ returned by the truncated spectral method
obeys
\[
	\mathrm{dist}(\boldsymbol{z}^{(0)},\boldsymbol{x}) \leq \delta\Vert \boldsymbol{x}\Vert 
\]
with high probability, provided that $m/n$ exceeds some 
numerical constant. With this in place, it suffices to demonstrate
that the TWF update rule is locally contractive, as stated
in the following proposition. 

\begin{proposition}[{\bf Local error contraction}]
\label{prop-Contraction}
Consider the noiseless case (\ref{eq:Noiseless}). Under the condition (\ref{eq:Condition-tau_z}),
there exist some universal constants
$0 < \rho_0 < 1$  and $c_{0}, c_1, c_{2}>0$
such that with probability exceeding $1 - c_{1}\exp\left(-c_{2}m\right)$,
\begin{eqnarray}
	\mathrm{dist}^2\left( \boldsymbol{z}+\frac{\mu}{m}\nabla\ell_{\mathrm{tr}}\left(\boldsymbol{z}\right), \boldsymbol{x} \right) 
	& \leq & \left(1-\rho_0\right)\mathrm{dist}^2 \left( \boldsymbol{z},\boldsymbol{x} \right)
\end{eqnarray}
holds simultaneously for all $\boldsymbol{x},\boldsymbol{z}\in\mathbb{R}^{n}$
obeying
\begin{eqnarray}
	\frac{\mathrm{dist}\left(\boldsymbol{z},\boldsymbol{x}\right)}{\left\Vert \boldsymbol{z}\right\Vert } 
	& \leq & \min\left\{ 	\frac{1}{11}, \text{ }
			     	\frac{\alpha_{z}^{\mathrm{lb}}}{3\alpha_{h}}, \text{ }
				\frac{\alpha_{z}^{\mathrm{lb}}}{6}, \text{ }
				\frac{5.7\left(\alpha_{z}^{\mathrm{lb}}\right)^{2}}{2\alpha_{z}^{\mathrm{ub}}+\alpha_{z}^{\mathrm{lb}}}\right\} ,	\label{eq:Condition-Contraction}
\end{eqnarray}
provided that $m \geq c_{0}n$ and that $\mu$ is some constant obeying 
\[ 
  0< \mu \leq \mu_0 := \frac{0.994-\zeta_{1}-\zeta_{2}-\sqrt{2/(9\pi)}{\alpha_{h}^{-1}}}{2\left(1.02 + 0.665/\alpha_{h} \right)}.
\]
\end{proposition}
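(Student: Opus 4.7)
The plan is to verify the regularity condition $\mathsf{RC}(\mu, \lambda, \epsilon)$ defined in (\ref{eq:Regularity-Truncate-Original}); once that is in place, the one-step expansion already laid out in (\ref{eq:convergence-rate}) gives $\mathrm{dist}^2(\boldsymbol{z} + \tfrac{\mu}{m}\nabla\ell_{\mathrm{tr}}(\boldsymbol{z}), \boldsymbol{x}) \leq (1 - \mu\lambda)\,\mathrm{dist}^2(\boldsymbol{z}, \boldsymbol{x})$, so I can simply take $\rho_0 := \mu\lambda$. Thus the whole task reduces to proving inequalities (\ref{eq:mean-RC}) and (\ref{eq:norm-grad-tr}) uniformly over the region (\ref{eq:Condition-Contraction}).

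For the lower bound (\ref{eq:mean-RC}), I absorb the global phase so that $\boldsymbol{h} = \boldsymbol{z} - \boldsymbol{x}$ and apply the decomposition (\ref{eq:mean-ell}). Taking the inner product with $\boldsymbol{h}$ produces the principal term $\tfrac{4}{m}\sum_{i \in \mathcal{T}(\boldsymbol{z})} (\boldsymbol{a}_i^\top \boldsymbol{h})^2$ plus the residual $\tfrac{1}{m}\sum_{i \in \mathcal{T}(\boldsymbol{z})} \langle \boldsymbol{r}_i, \boldsymbol{h}\rangle$. In the Gaussian model, conditioning on $\mathcal{E}_1^i \cap \mathcal{E}_2^i$ removes a distribution tail of controlled measure; the conditional expectation of the principal term equals $4(1 - \zeta_1)\|\boldsymbol{h}\|^2$ modulo a negligible component along $\boldsymbol{z}$, where $\zeta_1, \zeta_2$ from (\ref{eq:Condition-tau_z}) record precisely the worst-case second-moment loss from the two trimming events. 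Sub-exponential concentration of Gaussian quadratic forms then places the empirical sum within $o(\|\boldsymbol{h}\|^2)$ of its mean. The residual piece is controlled using $\mathcal{E}_1^i$, which forces $|\boldsymbol{a}_i^\top \boldsymbol{z}| \geq \alpha_z^{\mathrm{lb}}\|\boldsymbol{z}\|$; combined with a third-moment bound on $\boldsymbol{a}_i^\top \boldsymbol{h}$, this yields a contribution of order $\|\boldsymbol{h}\|^3/(\alpha_z^{\mathrm{lb}}\|\boldsymbol{z}\|)$, which under (\ref{eq:Condition-Contraction}) is a small constant fraction of the principal $4\|\boldsymbol{h}\|^2$.

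For the upper bound (\ref{eq:norm-grad-tr}), I rewrite $y_i - (\boldsymbol{a}_i^\top \boldsymbol{z})^2 = -(\boldsymbol{a}_i^\top \boldsymbol{h})(2\boldsymbol{a}_i^\top \boldsymbol{z} - \boldsymbol{a}_i^\top \boldsymbol{h})$ so that each summand of $\nabla \ell_{\mathrm{tr}}(\boldsymbol{z})$ is bounded in magnitude by $2|\boldsymbol{a}_i^\top \boldsymbol{h}| \bigl(2 + |\boldsymbol{a}_i^\top \boldsymbol{h}|/|\boldsymbol{a}_i^\top \boldsymbol{z}|\bigr) \|\boldsymbol{a}_i\| \mathbf{1}_{\mathcal{E}_1^i \cap \mathcal{E}_2^i}$. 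Event $\mathcal{E}_2^i$ contributes an additional $\alpha_h$-proportional bound on the numerator. Bounding $\|\frac{1}{m}\nabla\ell_{\mathrm{tr}}(\boldsymbol{z})\|^2$ via a variational representation and the near-isometry $\frac{1}{m}\sum (\boldsymbol{a}_i^\top \boldsymbol{u})^2 \approx \|\boldsymbol{u}\|^2$ yields the desired $O(\|\boldsymbol{h}\|^2)$ estimate. The explicit form $\mu_0 = (0.994-\zeta_1-\zeta_2-\sqrt{2/(9\pi)}\alpha_h^{-1})/(2(1.02 + 0.665/\alpha_h))$ emerges from balancing: the numerator is the net $\|\boldsymbol{h}\|^2$-coefficient surviving the lower bound after subtracting concentration slack, trimming bias, and the $\alpha_h$-correction from the truncated tail, while the denominator is the squared-norm multiplier in the upper bound.

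The main technical obstacle is turning these pointwise estimates into statements that hold simultaneously for all $\boldsymbol{z}$ in the basin, because the indicators $\mathbf{1}_{\mathcal{E}_1^i(\boldsymbol{z}) \cap \mathcal{E}_2^i(\boldsymbol{z})}$ depend discontinuously on $\boldsymbol{z}$ and a naive $\epsilon$-net argument fails. I would handle this with a sandwich: for each net point, replace the data-dependent trimming events with nearby deterministic-threshold events $\mathcal{E}_1^i(\boldsymbol{z}_{\mathrm{net}}; \alpha_z^{\mathrm{lb}}(1{\pm}\tau), \alpha_z^{\mathrm{ub}}(1{\pm}\tau))$ and similarly for $\mathcal{E}_2^i$, arguing that any $\boldsymbol{z}$ within $\tau$ of a net point has its true truncated sum lower-bounded (upper-bounded) by the loosened (tightened) deterministic version. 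Bernstein-type concentration applied to each of these deterministic-threshold random sums, together with a union bound over an $\exp(O(n))$ net, gives uniform control at the required probability $1 - c_1 e^{-c_2 m}$ provided $m \geq c_0 n$ with $c_0$ large enough to absorb the net entropy.
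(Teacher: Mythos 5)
Your reduction to the regularity condition and the one‑step contraction identity (\ref{eq:convergence-rate}) is exactly the paper's route, and your description of how $\zeta_1,\zeta_2$ track the trimming bias in the principal term matches Lemma \ref{Lemma:I1_I2}. Your sandwich‑with‑shifted‑thresholds device for the uniform $\epsilon$‑net is also in the same spirit as the paper's Lipschitz surrogates $\chi_z,\chi_h$. The gap is in how you control the residual term after decomposing (\ref{eq:mean-ell}).

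You assert that because $\mathcal{E}_1^i$ forces $|\boldsymbol{a}_i^\top\boldsymbol{z}|\ge \alpha_z^{\mathrm{lb}}\|\boldsymbol{z}\|$, a third‑moment bound on $\boldsymbol{a}_i^\top\boldsymbol{h}$ gives the residual order $\|\boldsymbol{h}\|^3/(\alpha_z^{\mathrm{lb}}\|\boldsymbol{z}\|)$. This is where the argument fails: the untrimmed average $\frac{1}{m}\sum_i |\boldsymbol{a}_i^\top\boldsymbol{h}|^3$ is not what appears. On the surviving indices (those in $\mathcal{E}_1^i\cap\mathcal{E}_2^i$), $\boldsymbol{a}_i^\top\boldsymbol{h}$ need not be $O(\|\boldsymbol{h}\|)$. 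After replacing the data‑dependent threshold in $\mathcal{E}_2^i$ with the deterministic $\|\boldsymbol{h}\|\|\boldsymbol{z}\|$ scale (which itself already requires the lower‑RIP estimate of Lemma \ref{lemma:RIP-Candes}, a step your sandwich cannot bypass), the surviving set is characterized by the \emph{quadratic} constraint $\big||\boldsymbol{a}_i^\top\boldsymbol{x}|^2-|\boldsymbol{a}_i^\top\boldsymbol{z}|^2\big|\lesssim \alpha_h\|\boldsymbol{h}\||\boldsymbol{a}_i^\top\boldsymbol{z}|$. This quadratic admits two branches of solutions for $\boldsymbol{a}_i^\top\boldsymbol{h}$: a small window around $0$ and a window around $2\boldsymbol{a}_i^\top\boldsymbol{z}$. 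On the second branch, $|\boldsymbol{a}_i^\top\boldsymbol{h}|\approx 2|\boldsymbol{a}_i^\top\boldsymbol{z}|$ can be as large as $2\alpha_z^{\mathrm{ub}}\|\boldsymbol{z}\|$, so each such summand of $|\boldsymbol{a}_i^\top\boldsymbol{h}|^3/|\boldsymbol{a}_i^\top\boldsymbol{z}|$ is of order $\|\boldsymbol{z}\|^2$, not $\|\boldsymbol{h}\|^3/\|\boldsymbol{z}\|$; your estimate would then be off by a factor of $(\|\boldsymbol{z}\|/\|\boldsymbol{h}\|)^3$, which is huge in the regime $\|\boldsymbol{h}\|/\|\boldsymbol{z}\|\le 1/11$.

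The missing idea — and it is essential, not a refinement — is to decouple the quadratic event into the two linear windows $\mathcal{D}_\gamma^{i,1}$ (near $0$) and $\mathcal{D}_\gamma^{i,2}$ (near $2\boldsymbol{a}_i^\top\boldsymbol{z}$), as in Lemma \ref{Lemma:superset-D}, and to handle the $\mathcal{D}_\gamma^{i,2}$ branch not by a moment bound but by a \emph{cardinality} bound: the number of indices with $|\boldsymbol{a}_i^\top\boldsymbol{h}|\gtrsim \alpha_z^{\mathrm{lb}}\|\boldsymbol{z}\|$ is a Gaussian tail event of measure $\exp(-\Omega((\|\boldsymbol{z}\|/\|\boldsymbol{h}\|)^2))$, and Lemma \ref{Lemma:I3} upgrades that to a uniform count. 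Only multiplying a per‑term bound of order $\|\boldsymbol{z}\|^2$ by this rare‑event count does one recover a contribution $O(\|\boldsymbol{h}\|^2)$. Relatedly, on the benign branch $\mathcal{D}_\gamma^{i,1}$, the truncation $|\boldsymbol{a}_i^\top\boldsymbol{h}|\le\gamma\|\boldsymbol{h}\|$ is also what makes the third‑moment sum concentrate uniformly (Lemma \ref{lemma:I2}); without it, $|\boldsymbol{a}_i^\top\boldsymbol{h}|^3$ is only sub‑Weibull and your Bernstein/union‑bound step would not close at probability $1-e^{-\Omega(m)}$. Without these two lemmas your proof sketch does not produce the claimed residual estimate.
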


Proposition \ref{prop-Contraction} reveals the monotonicity
of the estimation error: once entering a neighborhood around $\boldsymbol{x}$
of a reasonably small size, the iterative updates will remain within
this neighborhood all the time and be attracted towards $\boldsymbol{x}$
at a geometric rate. 

%
%

As shown in Section \ref{sec:Why-it-works},
under the hypothesis $\mathsf{RC}\left(\mu,\lambda,\epsilon\right)$
one can conclude
\begin{align}
	\mathrm{dist}^{2}\left(\boldsymbol{z} + \frac{\mu}{m} \nabla\ell_{\text{tr}}(\boldsymbol{z}), \boldsymbol{x}\right) 
	 \leq  (1-\mu\lambda) \mathrm{dist}^{2} (\boldsymbol{z}, \boldsymbol{x}), 
	\quad \forall (\boldsymbol{z}, \boldsymbol{x}) \text{ with }\mathrm{dist}(\boldsymbol{z},\boldsymbol{x}) \leq \epsilon.
\end{align}
Thus, everything now boils down to showing
$\mathsf{RC}\left(\mu,\lambda,\epsilon\right)$ for some constants
$\mu,\lambda,\epsilon>0$. This occupies the rest of this section.

\subsection{Preliminary facts about $\left\{ \mathcal{E}_{1}^{i}\right\} $ and
$\left\{ \mathcal{E}_{2}^{i}\right\} $}
\label{fact:event}

Before proceeding, we gather a few properties of the events
$\mathcal{E}_{1}^{i}$ and $\mathcal{E}_{2}^{i}$, which will prove
crucial in establishing
$\mathsf{RC}\left(\mu,\lambda,\epsilon\right)$. To begin with, recall
that the truncation level given in $\mathcal{E}_{2}^{i}$ depends on
$\frac{1}{m}\left\Vert
  \mathcal{A}\left(\boldsymbol{x}\boldsymbol{x}^{\top}-\boldsymbol{z}\boldsymbol{z}^{\top}\right)\right\Vert
_{1}$.  Instead of working with this random variable directly, we use
deterministic quantities that are more amenable to analysis.
Specifically, we claim that $\frac{1}{m}\left\Vert
  \mathcal{A}\left(\boldsymbol{x}\boldsymbol{x}^{\top}-\boldsymbol{z}\boldsymbol{z}^{\top}\right)\right\Vert
_{1}$ offers a uniform and orderwise tight estimate on $\left\Vert
  \boldsymbol{h}\right\Vert \left\Vert \boldsymbol{z}\right\Vert $,
which can be seen from the following two facts.

\begin{lemma}
\label{lemma:RIP-Candes}
Fix $\zeta\in(0,1)$. If $m>c_{0}n\zeta^{-2}\log\frac{1}{\zeta}$, then
with probability at least $1-C\exp (-c_{1}\zeta^{2}m )$,
\begin{equation}
	0.9\left(1-\zeta\right)\left\Vert \boldsymbol{M}\right\Vert _{\mathrm{F}}  
	\leq \frac{1}{m}\left\Vert \mathcal{A}\left(\boldsymbol{M}\right)\right\Vert _{1}  
	\leq \left(1 + \zeta\right)\sqrt{2}\left\Vert \boldsymbol{M}\right\Vert _{\mathrm{F}}
	\label{eq:RIP-rank2}
\end{equation}
holds for all symmetric rank-2 matrices $\boldsymbol{M}\in\mathbb{R}^{n\times n}$.
Here, $c_{0},c_{1},C>0$ are some universal constants.\end{lemma}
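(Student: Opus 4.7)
\textbf{Proof plan for Lemma \ref{lemma:RIP-Candes}.} The strategy is the standard three-step template for proving a restricted isometry property in $\ell_1/\ell_2$: (i) compute the mean of $\frac{1}{m}\|\mathcal{A}(\boldsymbol{M})\|_1$ pointwise, identifying the worst-case extremal constants; (ii) establish sub-exponential concentration around this mean for each fixed $\boldsymbol{M}$; and (iii) promote the pointwise bound to a uniform bound over all rank-$2$ symmetric $\boldsymbol{M}$ via an $\varepsilon$-net argument, exploiting the fact that this set has ``effective dimension'' $O(n)$ rather than $O(n^2)$.

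By homogeneity, it suffices to work with $\|\boldsymbol{M}\|_{\mathrm{F}}=1$. Write the spectral decomposition $\boldsymbol{M}=\lambda_{1}\boldsymbol{u}_{1}\boldsymbol{u}_{1}^{\top}+\lambda_{2}\boldsymbol{u}_{2}\boldsymbol{u}_{2}^{\top}$ with $\{\boldsymbol{u}_{1},\boldsymbol{u}_{2}\}$ orthonormal and $\lambda_{1}^{2}+\lambda_{2}^{2}=1$. Rotational invariance of the Gaussian law then gives
\[
  \boldsymbol{a}_i^\top \boldsymbol{M} \boldsymbol{a}_i \;=\; \lambda_1 \xi_i^2 + \lambda_2 \eta_i^2, \qquad \xi_i,\eta_i \stackrel{\mathrm{iid}}{\sim}\mathcal{N}(0,1).
\]
The upper bound on the mean is immediate from the triangle inequality: $\mathbb{E}\,|\lambda_1\xi^2+\lambda_2\eta^2| \le |\lambda_1|+|\lambda_2|\le \sqrt 2$. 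For the lower bound, I would parameterize $\lambda_1=\cos\theta,\lambda_2=\sin\theta$ and show by direct calculation that the worst case occurs at $\theta=-\pi/4$ (one positive, one negative eigenvalue of equal magnitude), where the identity $\xi^2-\eta^2=(\xi-\eta)(\xi+\eta)$ with independent Gaussian factors yields $\mathbb{E}\,|\xi^2-\eta^2|=4/\pi$, hence
\[
  \min_{\lambda_1^2+\lambda_2^2=1}\mathbb{E}\,|\lambda_1\xi^2+\lambda_2\eta^2|\;=\;\tfrac{2\sqrt{2}}{\pi}\;\approx\;0.9003,
\]
which is where the numerical constant $0.9$ in the statement comes from.

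For the pointwise concentration, note that $|\lambda_1\xi_i^2+\lambda_2\eta_i^2|$ is a sub-exponential random variable whose Orlicz-$\psi_1$ norm is bounded by a universal constant (since $|\lambda_j|\le 1$). Bernstein's inequality therefore yields, for each fixed $\boldsymbol{M}$ with $\|\boldsymbol{M}\|_{\mathrm{F}}=1$,
\[
  \Pr\!\left(\Big|\tfrac{1}{m}\|\mathcal{A}(\boldsymbol{M})\|_1-\mu(\boldsymbol{M})\Big|\ge \tfrac{\zeta}{2}\right) \;\le\; 2\exp(-c\zeta^2 m),
\]
where $\mu(\boldsymbol{M}):=\mathbb{E}\,|\boldsymbol{a}_i^\top\boldsymbol{M}\boldsymbol{a}_i|\in[2\sqrt{2}/\pi,\,\sqrt{2}]$.

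To upgrade to a uniform statement, I would build a $\delta$-net $\mathcal{N}_\delta$ (in Frobenius norm) for $\mathcal{S}:=\{\boldsymbol{M}:\boldsymbol{M}=\boldsymbol{M}^\top,\;\mathrm{rank}(\boldsymbol{M})\le 2,\;\|\boldsymbol{M}\|_{\mathrm{F}}=1\}$. Since $\mathcal{S}$ is parameterized by a pair of orthonormal vectors in the Stiefel manifold $V_2(\mathbb{R}^n)$ together with a point on the unit circle of eigenvalues, it has effective dimension $\le 2n$, so one can take $|\mathcal{N}_\delta|\le (C/\delta)^{2n}$. A union bound over $\mathcal{N}_\delta$ combined with the Bernstein estimate above then gives, for $\delta\asymp\zeta$ and $m\gtrsim n\zeta^{-2}\log(1/\zeta)$, the claim for all $\boldsymbol{M}\in\mathcal{N}_\delta$ with probability at least $1-C\exp(-c_1\zeta^2 m)$. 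To pass from $\mathcal{N}_\delta$ to all of $\mathcal{S}$, for any $\boldsymbol{M}\in\mathcal{S}$ pick $\boldsymbol{M}'\in\mathcal{N}_\delta$ with $\|\boldsymbol{M}-\boldsymbol{M}'\|_{\mathrm{F}}\le\delta$ and bound
\[
  \Big|\tfrac{1}{m}\|\mathcal{A}(\boldsymbol{M})\|_1-\tfrac{1}{m}\|\mathcal{A}(\boldsymbol{M}')\|_1\Big|\;\le\;\tfrac{1}{m}\|\mathcal{A}(\boldsymbol{M}-\boldsymbol{M}')\|_1,
\]
where $\boldsymbol{M}-\boldsymbol{M}'$ is symmetric of rank at most $4$; this discretization error is controlled by running the same net argument (with a fixed constant in place of $\zeta$) on the enlarged class of rank-$\le 4$ unit-Frobenius symmetric matrices, yielding $\frac{1}{m}\|\mathcal{A}(\boldsymbol{M}-\boldsymbol{M}')\|_1\le C\|\boldsymbol{M}-\boldsymbol{M}'\|_{\mathrm{F}}\le C\delta$, which is absorbed into $\zeta$ by choosing $\delta$ sufficiently small.

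\textbf{Main obstacle.} The delicate step is the sharp lower constant $0.9$: one must verify that the extremum of the non-smooth functional $\theta\mapsto\mathbb{E}\,|\cos\theta\,\xi^2+\sin\theta\,\eta^2|$ really is attained at the mixed-sign symmetric configuration and survives the discretization loss. The covering step is technically routine but one must be careful that the ``Lipschitz slack'' between net points---itself a random quantity involving rank-$4$ matrices---is controlled without inflating the sample complexity past $n\zeta^{-2}\log(1/\zeta)$; this is why the net is built at scale $\delta\asymp\zeta$ rather than some finer resolution that would blow up the entropy.
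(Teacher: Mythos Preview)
Your proposal is correct and is essentially a self-contained expansion of what the paper does by citation. The paper's own proof is only a few lines: the upper bound and the concentration-plus-covering machinery are taken directly from Lemmas~3.1--3.2 of the PhaseLift paper of Cand\`es--Strohmer--Voroninski, and the only new ingredient is the verification that the closed-form mean $f(t)=\frac{2}{\pi}\{2\sqrt t+(1-t)(\pi/2-2\arctan\sqrt t)\}$ (for a rank-$2$ matrix with eigenvalues $1,-t$) satisfies $f(t)/\sqrt{1+t^2}\ge 0.9$ on $[0,1]$, which the paper checks by plotting the ratio.

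Your route differs only in how you extract the constant $0.9$: rather than computing $f(t)$ in closed form and inspecting its graph, you identify the extremal configuration directly and evaluate $\mathbb{E}\,|\xi^2-\eta^2|=4/\pi$ via the factorization $(\xi-\eta)(\xi+\eta)$. This is cleaner, and it correctly recovers $2\sqrt 2/\pi\approx 0.9003$ at $t=1$; note however that you still owe a one-line monotonicity or convexity check that $\theta=-\pi/4$ is truly the global minimizer, which the paper sidesteps by appealing to the plot of $f(t)/\sqrt{1+t^2}$.
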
 

\begin{proof}
Since \cite[Lemma 3.1]{candes2012phaselift} already
establishes the upper bound, it suffices to prove the lower tail bound.
Consider all symmetric rank-2 matrices $\boldsymbol{M}$ with eigenvalues
$1$ and $-t$ for some $-1\leq t\leq 1$. When $t\in[0,1]$, it has been shown in the proof of \cite[Lemma 3.2]{candes2012phaselift}
that with high probability,
\begin{equation}
	\frac{1}{m}\left\Vert \mathcal{A}\left(\boldsymbol{M}\right)\right\Vert _{1}
	\geq \left(1-\zeta\right)f\left(t\right),
\end{equation}
for all such rank-2 matrices $\boldsymbol{M}$, where 
$f\left(t\right):=\frac{2}{\pi}\left\{ 2\sqrt{t}+\left(1-t\right)\left(\pi/2-2\text{arc}\tan(\sqrt{t})\right)\right\}$.
The lower bound in this case can then be justified by recognizing
that $f\left(t\right)/\sqrt{1+t^{2}}\geq0.9$ for all $t\in[0,1]$,
as illustrated in Fig.~\ref{fig:ft_plot}. The case where 
$t\in[-1,0]$ is an immediate consequence from \cite[Lemma 3.1]{candes2012phaselift}.
%
%
\end{proof}

\begin{figure}
	\centering
	\includegraphics[scale=0.8]{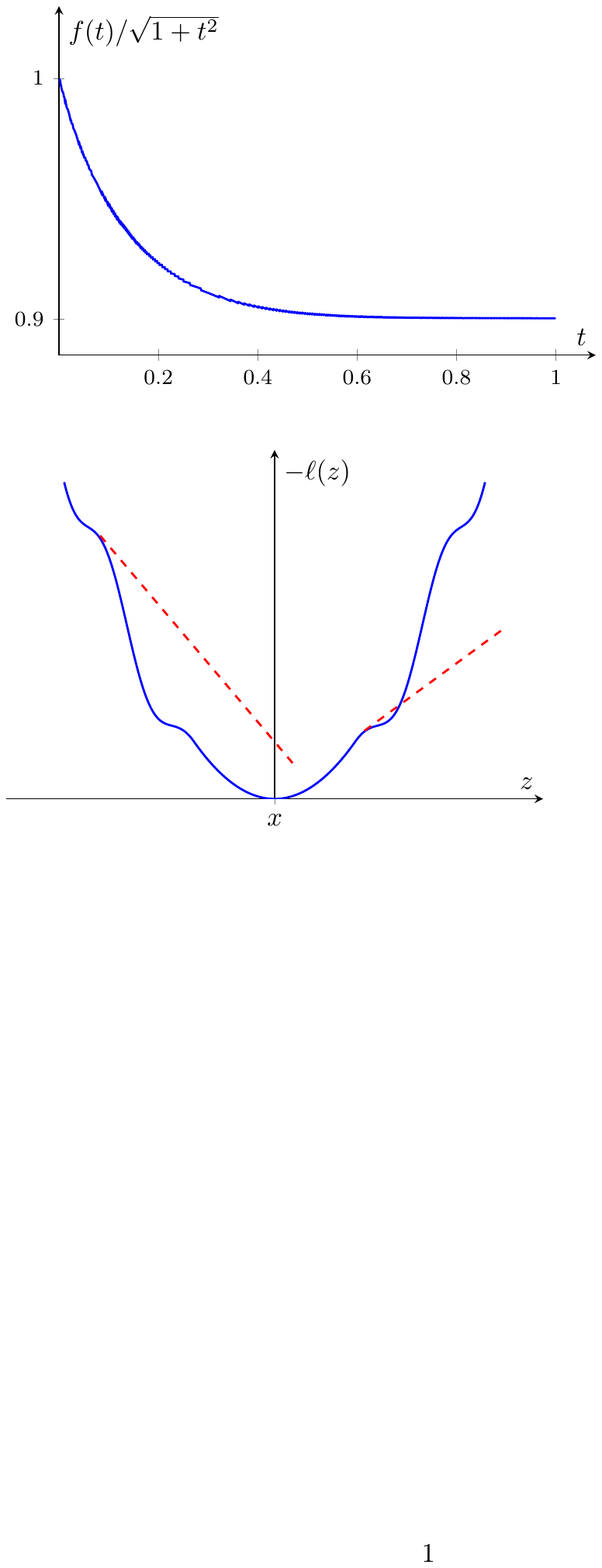}\caption{$\frac{f(t)}{\sqrt{1+t^{2}}}$ as a function of $t$.
	\label{fig:ft_plot}}
\end{figure}

\begin{lemma}\label{lemma: bound-h}
Consider any $\boldsymbol{x},\boldsymbol{z}\in\mathbb{R}^{n}$
obeying $\left\Vert \boldsymbol{z}-\boldsymbol{x}\right\Vert \leq\delta\left\Vert \boldsymbol{z}\right\Vert $
for some $\delta<\frac{1}{2}$. Then one has 
\begin{equation}
	\sqrt{2-4\delta}\left\Vert \boldsymbol{z}-\boldsymbol{x}\right\Vert \left\Vert \boldsymbol{z}\right\Vert
	\leq  \left\Vert \boldsymbol{x}\boldsymbol{x}^{\top} - \boldsymbol{z}\boldsymbol{z}^{\top}\right\Vert _{\mathrm{F}} 
	\leq  \left(2+\delta\right)\left\Vert \boldsymbol{z} - \boldsymbol{x}\right\Vert \left\Vert \boldsymbol{z}\right\Vert .
	\label{eq:bound-h}
\end{equation}
\end{lemma}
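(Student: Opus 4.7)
My plan is to prove both inequalities by exploiting the simple algebraic identity obtained after writing everything in terms of the displacement $\boldsymbol{h} := \boldsymbol{z} - \boldsymbol{x}$. Substituting $\boldsymbol{x} = \boldsymbol{z} - \boldsymbol{h}$ and expanding, I get
\[
  \boldsymbol{x}\boldsymbol{x}^{\top} - \boldsymbol{z}\boldsymbol{z}^{\top}
  = -\,\boldsymbol{z}\boldsymbol{h}^{\top} - \boldsymbol{h}\boldsymbol{z}^{\top} + \boldsymbol{h}\boldsymbol{h}^{\top}.
\]
This reduces the entire lemma to bounding the Frobenius norm of this explicit symmetric matrix under the assumption $\|\boldsymbol{h}\| \leq \delta \|\boldsymbol{z}\|$.

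For the \emph{upper bound}, I would apply the triangle inequality together with $\|\boldsymbol{z}\boldsymbol{h}^{\top}\|_{\mathrm F}=\|\boldsymbol{h}\boldsymbol{z}^{\top}\|_{\mathrm F}=\|\boldsymbol{z}\|\|\boldsymbol{h}\|$ and $\|\boldsymbol{h}\boldsymbol{h}^{\top}\|_{\mathrm F}=\|\boldsymbol{h}\|^{2}$. Invoking $\|\boldsymbol{h}\|\leq \delta\|\boldsymbol{z}\|$ on the quadratic term immediately yields $\|\boldsymbol{x}\boldsymbol{x}^{\top}-\boldsymbol{z}\boldsymbol{z}^{\top}\|_{\mathrm F}\leq (2+\delta)\|\boldsymbol{z}\|\|\boldsymbol{h}\|$.

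For the \emph{lower bound}, the first step is to compute $\|\boldsymbol{z}\boldsymbol{h}^{\top}+\boldsymbol{h}\boldsymbol{z}^{\top}\|_{\mathrm F}^{2}$ exactly by expanding the trace, which gives $2\|\boldsymbol{z}\|^{2}\|\boldsymbol{h}\|^{2}+2\langle\boldsymbol{z},\boldsymbol{h}\rangle^{2}\geq 2\|\boldsymbol{z}\|^{2}\|\boldsymbol{h}\|^{2}$. Consequently $\|\boldsymbol{z}\boldsymbol{h}^{\top}+\boldsymbol{h}\boldsymbol{z}^{\top}\|_{\mathrm F}\geq \sqrt{2}\,\|\boldsymbol{z}\|\|\boldsymbol{h}\|$. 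Then the reverse triangle inequality yields
\[
  \|\boldsymbol{x}\boldsymbol{x}^{\top}-\boldsymbol{z}\boldsymbol{z}^{\top}\|_{\mathrm F}
  \;\geq\; \sqrt{2}\,\|\boldsymbol{z}\|\|\boldsymbol{h}\| - \|\boldsymbol{h}\|^{2}
  \;\geq\; (\sqrt{2}-\delta)\,\|\boldsymbol{z}\|\|\boldsymbol{h}\|.
\]
Finally I would verify that $\sqrt{2}-\delta\geq \sqrt{2-4\delta}$ for all $\delta\in[0,1/2)$ by squaring both sides: the inequality reduces to $\delta\bigl(\delta + 4 - 2\sqrt{2}\bigr)\geq 0$, which is obvious.

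No step poses a genuine obstacle; the only mild subtlety is recognizing that one should first bound the purely rank-two part $\boldsymbol{z}\boldsymbol{h}^{\top}+\boldsymbol{h}\boldsymbol{z}^{\top}$ from below using the exact Frobenius identity (rather than a crude triangle bound, which would yield a meaningless lower estimate) and then absorb the smaller $\boldsymbol{h}\boldsymbol{h}^{\top}$ term via reverse triangle. The factor $\sqrt{2-4\delta}$ in the statement is a slightly weakened but cleaner consequence of the sharper $\sqrt{2}-\delta$ bound.
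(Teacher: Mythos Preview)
Your proof is correct and follows essentially the same approach as the paper: both start from the decomposition $\boldsymbol{x}\boldsymbol{x}^{\top}-\boldsymbol{z}\boldsymbol{z}^{\top}=-\boldsymbol{h}\boldsymbol{z}^{\top}-\boldsymbol{z}\boldsymbol{h}^{\top}+\boldsymbol{h}\boldsymbol{h}^{\top}$, compute $\|\boldsymbol{h}\boldsymbol{z}^{\top}+\boldsymbol{z}\boldsymbol{h}^{\top}\|_{\mathrm F}^{2}=2\|\boldsymbol{z}\|^{2}\|\boldsymbol{h}\|^{2}+2(\boldsymbol{h}^{\top}\boldsymbol{z})^{2}$, and bound the cross contribution of $\boldsymbol{h}\boldsymbol{h}^{\top}$ via Cauchy--Schwarz. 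The only cosmetic difference is that the paper expands $\|\cdot\|_{\mathrm F}^{2}$ fully and bounds the resulting scalar expression, whereas you apply the (reverse) triangle inequality at the norm level; your intermediate lower bound $(\sqrt{2}-\delta)\|\boldsymbol{z}\|\|\boldsymbol{h}\|$ is in fact slightly sharper than the paper's $\sqrt{2-4\delta}\|\boldsymbol{z}\|\|\boldsymbol{h}\|$, as you correctly verify.
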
 

\begin{proof}Take $\boldsymbol{h}=\boldsymbol{z}-\boldsymbol{x}$
and write 
%
\begin{eqnarray*}
	\left\Vert \boldsymbol{x}\boldsymbol{x}^{\top}-\boldsymbol{z}\boldsymbol{z}^{\top}\right\Vert _{\mathrm{F}}^{2} & = & \big\|-\boldsymbol{h}\boldsymbol{z}^{\top}-\boldsymbol{z}\boldsymbol{h}^{\top}+\boldsymbol{h}\boldsymbol{h}^{\top}\big\|_{\mathrm{F}}^{2}\\
 	& = & \big\|\boldsymbol{h}\boldsymbol{z}^{\top}+\boldsymbol{z}\boldsymbol{h}^{\top}\big\|_{\mathrm{F}}^{2}+\left\Vert \boldsymbol{h}\right\Vert ^{4}
	  - 2 \langle \boldsymbol{h}\boldsymbol{z}^{\top} + \boldsymbol{z}\boldsymbol{h}^{\top}, \boldsymbol{h}\boldsymbol{h}^{\top}\rangle \\
 	& = & 2\left\Vert \boldsymbol{z}\right\Vert ^{2}\left\Vert \boldsymbol{h}\right\Vert ^{2}+2|\boldsymbol{h}^{\top}\boldsymbol{z}|^{2} + \left\Vert \boldsymbol{h}\right\Vert ^{4} 
	  - 2 \Vert \boldsymbol{h}\Vert ^2 (\boldsymbol{h}^{\top}\boldsymbol{z} + \boldsymbol{z}^{\top}\boldsymbol{h} ).
\end{eqnarray*}
When $\Vert \boldsymbol{h} \Vert <\frac{1}{2} \Vert \boldsymbol{z}\Vert $,
the Cauchy-Schwarz inequality gives
\begin{equation}
	2\left\Vert \boldsymbol{z}\right\Vert ^{2}\left\Vert \boldsymbol{h}\right\Vert ^{2}-4\left\Vert \boldsymbol{z}\right\Vert \left\Vert \boldsymbol{h}\right\Vert ^{3}  
	\leq  \left\Vert \boldsymbol{x}\boldsymbol{x}^{\top}-\boldsymbol{z}\boldsymbol{z}^{\top} \right\Vert _{\mathrm{F}}^{2}  
	\leq  4 \left\Vert \boldsymbol{z}\right\Vert ^{2}\left\Vert \boldsymbol{h}\right\Vert ^{2} + 4\left\Vert \boldsymbol{h}\right\Vert ^{3}\left\Vert \boldsymbol{z} \right\Vert + \left\Vert \boldsymbol{h} \right\Vert ^{4},
\end{equation}
\begin{equation}
	\Rightarrow\quad
	\sqrt{\left(2\left\Vert \boldsymbol{z}\right\Vert - 4\left\Vert \boldsymbol{h}\right\Vert \right)\left\Vert \boldsymbol{z}\right\Vert }\cdot\left\Vert \boldsymbol{h}\right\Vert  
	\leq \left\Vert \boldsymbol{x}\boldsymbol{x}^{\top} - \boldsymbol{z}\boldsymbol{z}^{\top}\right\Vert _{\mathrm{F}} 
	\leq \left(2\left\Vert \boldsymbol{z}\right\Vert +\left\Vert \boldsymbol{h}\right\Vert \right)\cdot\left\Vert \boldsymbol{h}\right\Vert 
\end{equation}
as claimed.
\end{proof}

Taken together the above two facts demonstrate that with probability
$1-\exp\left(-\Omega\left(m\right)\right)$, 
\begin{equation}
	1.15\left\Vert \boldsymbol{z}-\boldsymbol{x}\right\Vert \left\Vert \boldsymbol{z}\right\Vert 
	\leq \frac{1}{m}\left\Vert \mathcal{A}\left(\boldsymbol{x}\boldsymbol{x}^{\top} - \boldsymbol{z}\boldsymbol{z}^{\top}\right)\right\Vert _{1}
	\leq 3\left\Vert \boldsymbol{z}-\boldsymbol{x}\right\Vert \left\Vert \boldsymbol{z}\right\Vert 
	\label{eq:A_1_bound}
\end{equation}
holds simultaneously for all $\boldsymbol{z}$ and $\boldsymbol{x}$
satisfying $\left\Vert \boldsymbol{h}\right\Vert \leq\frac{1}{11}\left\Vert \boldsymbol{z}\right\Vert $.
Conditional on (\ref{eq:A_1_bound}), the inclusion 
\begin{eqnarray}
	\mathcal{E}_{3}^{i} &  \subseteq  & \mathcal{E}_{2}^{i}\text{ } \subseteq \text{ }\mathcal{E}_{4}^{i} 
	\label{eq:Event2}
\end{eqnarray}
holds with respect to the following events
\begin{eqnarray}
	\mathcal{E}_{3}^{i}: & = & \left\{ \left||\boldsymbol{a}_{i}^{\top}\boldsymbol{x}|^{2}-|\boldsymbol{a}_{i}^{\top}\boldsymbol{z}|^{2}\right|
		\leq 1.15\alpha_{h}\left\Vert \boldsymbol{h}\right\Vert \cdot\left|\boldsymbol{a}_{i}^{\top}\boldsymbol{z}\right|\right\} ,\\
	\mathcal{E}_{4}^{i}: & = & \left\{ \left||\boldsymbol{a}_{i}^{\top}\boldsymbol{x}|^{2}-|\boldsymbol{a}_{i}^{\top}\boldsymbol{z}|^{2}\right|
		\leq 3 \alpha_{h}\left\Vert \boldsymbol{h}\right\Vert \cdot\left|\boldsymbol{a}_{i}^{\top}\boldsymbol{z}\right|\right\} .
\end{eqnarray}
The point of introducing these new events is that the
$\mathcal{E}_{3}^{i}$'s (resp.~$\mathcal{E}_{4}^{i}$'s) are
statistically independent for any fixed $\boldsymbol{x}$ and
$\boldsymbol{z}$ and are, therefore, easier to work with. 

Note that each $\mathcal{E}_{3}^{i}$ (resp.~$\mathcal{E}_{4}^{i}$) is
specified by a quadratic inequality. A closer inspection reveals that
in order to satisfy these quadratic inequalities, the quantity
$\boldsymbol{a}_{i}^{\top}\boldsymbol{h}$ must fall within two
intervals centered around $0$ and
$2\boldsymbol{a}_{i}^{\top}\boldsymbol{z}$, respectively. One can thus
facilitate analysis by decoupling each quadratic inequality of
interest into two simple linear inequalities, as stated in the following lemma.

\begin{lemma}\label{Lemma:superset-D}For any $\gamma>0$, define
\begin{eqnarray}
	\mathcal{D}_{\gamma}^{i} & := & \left\{ \left||\boldsymbol{a}_{i}^{\top}\boldsymbol{x}|^{2}-|\boldsymbol{a}_{i}^{\top}\boldsymbol{z}|^{2}\right|\leq\gamma\left\Vert \boldsymbol{h}\right\Vert \left|\boldsymbol{a}_{i}^{\top}\boldsymbol{z}\right|\right\} ,\\
	\mathcal{D}_{\gamma}^{i,1} & := & \left\{ \frac{|\boldsymbol{a}_{i}^{\top}\boldsymbol{h}|}{\|\boldsymbol{h}\|}\leq\gamma\right\} ,\label{eq:defn-D_gamma-1}\\
	\text{and} \quad \mathcal{D}_{\gamma}^{i,2} & := & \left\{ \left|\frac{\boldsymbol{a}_{i}^{\top}\boldsymbol{h}}{\left\Vert \boldsymbol{h}\right\Vert }-\frac{2\boldsymbol{a}_{i}^{\top}\boldsymbol{z}}{\left\Vert \boldsymbol{h}\right\Vert }\right|\leq\gamma\right\} .
	\label{eq:defn-D_gamma-2}
\end{eqnarray}
Thus, $\mathcal{D}_{\gamma}^{i,1}$ and $\mathcal{D}_{\gamma}^{i,2}$
represent the two intervals on $\boldsymbol{a}_i^{\top}\boldsymbol{h}$
centered around $0$ and $2\boldsymbol{a}_i^{\top}\boldsymbol{z}$. If
$\frac{\left\Vert \boldsymbol{h}\right\Vert }{\left\Vert
    \boldsymbol{z}\right\Vert
}\leq\frac{\alpha_z^{\mathrm{lb}}}{\gamma}$, then the following
inclusion holds
\begin{align}
	\left(\mathcal{D}_{\frac{\gamma}{1+\sqrt{2}}}^{i,1}\cap\mathcal{E}_{1}^{i}\right)\cup\left(\mathcal{D}_{\frac{\gamma}{1+\sqrt{2}}}^{i,2}\cap\mathcal{E}_{1}^{i}\right)
	\subseteq \text{ }\text{ }\mathcal{D}_{\gamma}^i \cap \mathcal{E}_{1}^{i}\text{ } 
	~ \subseteq ~ \left(\mathcal{D}_{\gamma}^{i,1}\cap\mathcal{E}_{1}^{i}\right)\cup\left(\mathcal{D}_{\gamma}^{i,2}\cap\mathcal{E}_{1}^{i}\right).
	\label{eq:superset-D}
\end{align}
\end{lemma}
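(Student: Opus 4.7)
The plan is to reduce the quadratic inequality defining $\mathcal{D}_{\gamma}^{i}$ to linear inequalities using the factorization
\[
  |\boldsymbol{a}_{i}^{\top}\boldsymbol{x}|^{2} - |\boldsymbol{a}_{i}^{\top}\boldsymbol{z}|^{2}
  = -(\boldsymbol{a}_{i}^{\top}\boldsymbol{h})\bigl(2\boldsymbol{a}_{i}^{\top}\boldsymbol{z} - \boldsymbol{a}_{i}^{\top}\boldsymbol{h}\bigr),
\]
which follows from $\boldsymbol{x}=\boldsymbol{z}-\boldsymbol{h}$ and the difference of squares. Introducing the normalized scalars $u := \boldsymbol{a}_{i}^{\top}\boldsymbol{h}/\|\boldsymbol{h}\|$ and $v := \boldsymbol{a}_{i}^{\top}\boldsymbol{z}/\|\boldsymbol{h}\|$, the event $\mathcal{D}_{\gamma}^{i}$ reads $|u|\,|2v-u|\leq \gamma|v|$, while $\mathcal{D}_{\gamma}^{i,1}$ and $\mathcal{D}_{\gamma}^{i,2}$ become the linear conditions $|u|\leq \gamma$ and $|u-2v|\leq \gamma$. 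The key preliminary observation is that on $\mathcal{E}_{1}^{i}$ together with the standing hypothesis $\|\boldsymbol{h}\|/\|\boldsymbol{z}\|\leq \alpha_{z}^{\mathrm{lb}}/\gamma$, one has
\[
  |v| \;=\; \frac{|\boldsymbol{a}_{i}^{\top}\boldsymbol{z}|}{\|\boldsymbol{h}\|} \;\geq\; \frac{\alpha_{z}^{\mathrm{lb}}\|\boldsymbol{z}\|}{\|\boldsymbol{h}\|} \;\geq\; \gamma.
\]

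For the upper inclusion, I would assume the event $\mathcal{D}_{\gamma}^{i}\cap\mathcal{E}_{1}^{i}$ and invoke the triangle inequality $|u|+|u-2v|\geq 2|v|$, which forces at least one of $|u|$ and $|u-2v|$ to be $\geq |v|\geq \gamma$. The product constraint $|u|\,|u-2v|\leq \gamma|v|$ then forces the \emph{other} factor to be at most $\gamma|v|/|v|=\gamma$, placing the point in either $\mathcal{D}_{\gamma}^{i,1}$ or $\mathcal{D}_{\gamma}^{i,2}$.

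For the lower inclusion, I would assume $|u|\leq \gamma/(1+\sqrt{2})$; the symmetric case $|u-2v|\leq \gamma/(1+\sqrt{2})$ follows from the invariance of the product $|u|\,|2v-u|$ under $u\leftrightarrow 2v-u$, so no separate argument is needed. Then
\[
  |u|\,|2v-u| \;\leq\; \frac{\gamma}{1+\sqrt{2}}\Bigl(2|v|+\frac{\gamma}{1+\sqrt{2}}\Bigr)
  \;=\; \frac{2\gamma|v|}{1+\sqrt{2}} + \frac{\gamma^{2}}{(1+\sqrt{2})^{2}}.
\]
Bounding $\gamma\leq |v|$ in the last term and using the identity $\tfrac{2}{1+\sqrt{2}}+\tfrac{1}{(1+\sqrt{2})^{2}}=\tfrac{2(1+\sqrt{2})+1}{(1+\sqrt{2})^{2}}=\tfrac{3+2\sqrt{2}}{3+2\sqrt{2}}=1$, the right-hand side is at most $\gamma|v|$, which places the point in $\mathcal{D}_{\gamma}^{i}$.

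There is no real obstacle here: once the factorization is written down and the bound $|v|\geq \gamma$ is extracted from $\mathcal{E}_{1}^{i}$, both inclusions reduce to one-line consequences of the triangle inequality. The only mildly delicate point is recognizing that the constant $1/(1+\sqrt{2})$ is precisely the one that makes the lower-inclusion algebra saturate; any smaller constant would also yield the inclusion but would not be tight.
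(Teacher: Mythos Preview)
Your proof is correct and, in fact, somewhat cleaner than the paper's. Both arguments rest on the same factorization $|\boldsymbol{a}_i^\top\boldsymbol{x}|^2-|\boldsymbol{a}_i^\top\boldsymbol{z}|^2=-(\boldsymbol{a}_i^\top\boldsymbol{h})(2\boldsymbol{a}_i^\top\boldsymbol{z}-\boldsymbol{a}_i^\top\boldsymbol{h})$ and the key bound $|\boldsymbol{a}_i^\top\boldsymbol{z}|\geq\gamma\|\boldsymbol{h}\|$ extracted from $\mathcal{E}_1^i$ together with the hypothesis on $\|\boldsymbol{h}\|/\|\boldsymbol{z}\|$. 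From there the routes diverge: the paper treats $\mathcal{D}_\gamma^i$ as a quadratic inequality in $\boldsymbol{a}_i^\top\boldsymbol{h}$, writes down the two root intervals explicitly, splits into cases according to the sign of $\boldsymbol{a}_i^\top\boldsymbol{z}$, and then bounds the interval endpoints above and below by rationalizing the radicals. You instead keep the product form $|u|\,|2v-u|\leq\gamma|v|$ and dispatch both inclusions with soft arguments: a triangle inequality (forcing one factor to be at least $|v|$, hence the other at most $\gamma$) for the outer inclusion, and the algebraic identity $(1+\sqrt{2})^2=3+2\sqrt{2}$ for the inner one. Your approach avoids the explicit root formulae and the sign case analysis, and it makes transparent why the constant $1/(1+\sqrt{2})$ is exactly right. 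The paper's approach has the minor advantage of exhibiting the solution set of $\mathcal{D}_\gamma^i$ explicitly, but for the lemma as stated your argument is shorter and equally rigorous.
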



\subsection{Proof of the regularity condition \label{sub:Expectation-and-Concentration}}

By definition, one step towards proving the regularity condition (\ref{eq:Regularity-Truncate-Original})
is to control the norm of the regularized gradient. In fact, a crude argument already reveals that 
$\| \frac{1}{m} \nabla\ell_{\mathrm{tr}}(\boldsymbol{z}) \| \lesssim \| \boldsymbol{h} \|$. 
To see this, introduce 
$\boldsymbol{v}=\left[v_{i}\right]_{1\leq i\leq m}$ with
$v_{i} := 2\frac{\left|\boldsymbol{a}_{i}^{\top}\boldsymbol{x}\right|^{2}-\left|\boldsymbol{a}_{i}^{\top}\boldsymbol{z}\right|^{2}}{\boldsymbol{a}_{i}^{\top}\boldsymbol{z}}
	{\bf 1}_{\mathcal{E}_{1}^i \cap \mathcal{E}_2^i }.$
It comes from the trimming rule  $\mathcal{E}_1^i$ as well as the inclusion property (\ref{eq:Event2}) that
\[
	\left|\boldsymbol{a}_{i}^{\top}\boldsymbol{z}\right|\gtrsim\left\Vert \boldsymbol{z}\right\Vert 
	\quad \text{and} \quad
	\left|y_{i}-\left|\boldsymbol{a}_{i}^{\top}\boldsymbol{z}\right|^{2}\right|
	\lesssim \frac{1}{m}\|\mathcal{A}(\boldsymbol{x}\boldsymbol{x}^{\top}-\boldsymbol{z}\boldsymbol{z}^{\top})\|_1
	\asymp \left\Vert \boldsymbol{h}\right\Vert \left\Vert \boldsymbol{z}\right\Vert ,
\]
implying $|v_i|\lesssim \| \boldsymbol{h} \|$ and hence $\Vert
\boldsymbol{v} \Vert \lesssim \sqrt{m} \Vert \boldsymbol{h} \Vert $.
The Marchenko\textendash{}Pastur law gives $\Vert \boldsymbol{A}\Vert
\lesssim \sqrt{m}$, whence
\begin{equation}
	\frac{1}{m} \Vert \nabla\ell_{\mathrm{tr}} (\boldsymbol{z}) \Vert 
	= \frac{1}{m} \Vert \boldsymbol{A}^{\top} \boldsymbol{v} \Vert
	\leq \frac{1}{m} \Vert \boldsymbol{A} \Vert \cdot \Vert \boldsymbol{v} \Vert 
	\lesssim  \Vert \boldsymbol{h}\Vert .
	\label{eq:crude-norm}
\end{equation}
A more refined estimate will be provided in Lemma \ref{Lemma:norm-score}. 

The above argument essentially tells us that to establish $\mathsf{RC}$,
it suffices to verify a uniform lower bound of the form
\begin{equation}
	-\Big\langle\boldsymbol{h},\text{ }\frac{1}{m}\nabla\ell_{\mathrm{tr}}\left(\boldsymbol{z}\right)\Big\rangle\text{ }
	\gtrsim\text{ }\left\Vert \boldsymbol{h}\right\Vert ^{2},
	\label{eq:Regularity-Simple}
\end{equation}
as formally derived in the following proposition.

\begin{proposition}
\label{prop-regularity-noiseless}
  Consider the noise-free measurements $y_{i}=|\boldsymbol{a}_{i}^{\top}\boldsymbol{x}|^{2}$
  and any fixed constant $\epsilon>0$. Under the condition (\ref{eq:Condition-tau_z}),
  if $m>c_{1}n$, then with probability exceeding $1-C\exp\left(-c_{0}m\right)$,
  \begin{equation}
	-\Big\langle\boldsymbol{h},\frac{1}{m}\nabla\ell_{\mathrm{tr}}\left(\boldsymbol{z}\right)\Big\rangle\hspace{0.3em}
	\geq \hspace{0.3em} 2\left\{ 1.99-2\left(\zeta_{1}+\zeta_{2}\right)-\sqrt{8/(9\pi)}\alpha_{h}^{-1} - \epsilon\right\} \left\Vert \boldsymbol{h}\right\Vert ^{2}
  \end{equation}
  holds uniformly over all $\boldsymbol{x}$, $\boldsymbol{z}\in \mathbb{R}^n$ obeying
  \begin{equation}
	\frac{\left\Vert \boldsymbol{h}\right\Vert }{\left\Vert \boldsymbol{z}\right\Vert }\hspace{0.3em}\leq\hspace{0.3em}
	\min\left\{ \frac{1}{11},\text{ }   \frac{\alpha_{z}^{\mathrm{lb}}}{3\alpha_{h}},\text{ }
		    \frac{\alpha_{z}^{\mathrm{lb}}}{6},\text{ }   
		    \frac{5.7\left(\alpha_{z}^{\mathrm{lb}}\right)^{2}}{2\alpha_{z}^{\mathrm{ub}}+\alpha_{z}^{\mathrm{lb}}}\right\} .
  \label{eq:Condition}
  \end{equation}
  Here, $c_{0},c_{1},C>0$ are some universal constants, and $\zeta_1$ and $\zeta_2$ are defined in (\ref{eq:Condition-tau_z}).
\end{proposition}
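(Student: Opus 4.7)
The approach is to expand the truncated inner product via the decomposition \eqref{eq:mean-ell}. Writing out $-\langle \boldsymbol{h}, \frac{1}{m}\nabla\ell_{\mathrm{tr}}(\boldsymbol{z})\rangle$ component by component and identifying the two pieces of $\nabla\ell_i$, one obtains
\begin{align*}
-\Big\langle \boldsymbol{h}, \frac{1}{m}\nabla\ell_{\mathrm{tr}}(\boldsymbol{z}) \Big\rangle ~=~ \underbrace{ \frac{4}{m}\sum_{i=1}^m (\boldsymbol{a}_i^\top \boldsymbol{h})^2 {\bf 1}_{\mathcal{E}_1^i \cap \mathcal{E}_2^i} }_{=:\,I_1} ~-~ \underbrace{ \frac{2}{m}\sum_{i=1}^m \frac{(\boldsymbol{a}_i^\top \boldsymbol{h})^3}{\boldsymbol{a}_i^\top \boldsymbol{z}} {\bf 1}_{\mathcal{E}_1^i \cap \mathcal{E}_2^i} }_{=:\,I_2}.
\end{align*}
The plan is to prove (i) $I_1 \geq 4\{1-(\zeta_1+\zeta_2)-\epsilon/2\}\|\boldsymbol{h}\|^2$ and (ii) $|I_2|\leq 2\{\sqrt{8/(9\pi)}\,\alpha_h^{-1}+\epsilon/2\}\|\boldsymbol{h}\|^2$. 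Combining these and using that $4-(\text{small})\geq 2\cdot 1.99$ leaves a comfortable margin that yields the claimed bound.

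For $I_1$, I will work on the high-probability event of Lemma~\ref{lemma:RIP-Candes} (combined with Lemma~\ref{lemma: bound-h}) so that \eqref{eq:A_1_bound} holds and hence $\mathcal{E}_3^i \subseteq \mathcal{E}_2^i$ by \eqref{eq:Event2}. Replacing $\mathcal{E}_2^i$ by the smaller $\mathcal{E}_3^i$ and using ${\bf 1}_{\mathcal{E}_1^i \cap \mathcal{E}_3^i}\geq 1 - {\bf 1}_{(\mathcal{E}_1^i)^c} - {\bf 1}_{(\mathcal{E}_3^i)^c}$ reduces the task to three pieces: the untruncated chi-square sum $(4/m)\sum_i (\boldsymbol{a}_i^\top\boldsymbol{h})^2$, which concentrates at $4\|\boldsymbol{h}\|^2$ by Bernstein's inequality; the $\mathcal{E}_1$-loss, which upon decomposing $\boldsymbol{h}$ along and orthogonal to $\boldsymbol{z}$ splits into a second-moment piece (only the parallel component contributes) and a probability piece (from the orthogonal component, by independence), each bounded by $\zeta_1$ (the $\max$ in the definition of $\zeta_1$ is precisely designed to cover both contributions); and the $\mathcal{E}_3$-loss, which via Lemma~\ref{Lemma:superset-D} with $\gamma=1.15\alpha_h$ is contained in $\{|\boldsymbol{a}_i^\top \boldsymbol{h}|/\|\boldsymbol{h}\|>1.15\alpha_h/(1+\sqrt{2})\}\supseteq\{|\xi|\leq 0.473\alpha_h\}^c$ and thus concentrates at $\zeta_2\|\boldsymbol{h}\|^2$. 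The mild widenings $\sqrt{1.01}\alpha_z^{\mathrm{lb}}$ and $\sqrt{0.99}\alpha_z^{\mathrm{ub}}$ inside $\zeta_1$ provide the buffer for statistical fluctuation.

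For $|I_2|$, on $\mathcal{E}_1^i$ one has $|\boldsymbol{a}_i^\top \boldsymbol{z}|\geq \alpha_z^{\mathrm{lb}}\|\boldsymbol{z}\|$, so discarding the $\mathcal{E}_2^i$ indicator outright yields
\begin{align*}
|I_2| ~\leq~ \frac{2}{m\,\alpha_z^{\mathrm{lb}}\|\boldsymbol{z}\|}\sum_{i=1}^m |\boldsymbol{a}_i^\top \boldsymbol{h}|^3.
\end{align*}
Bernstein's inequality applied to the sub-exponential summands $|\boldsymbol{a}_i^\top \boldsymbol{h}|^3$ gives concentration near $m\,\mathbb{E}|\xi|^3\,\|\boldsymbol{h}\|^3 = 2m\sqrt{2/\pi}\,\|\boldsymbol{h}\|^3$. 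Plugging in the standing hypothesis $\|\boldsymbol{h}\|/\|\boldsymbol{z}\|\leq \alpha_z^{\mathrm{lb}}/(3\alpha_h)$ from \eqref{eq:Condition} converts the cubic into a quadratic and produces exactly $|I_2|\leq (4/(3\alpha_h))\sqrt{2/\pi}\,\|\boldsymbol{h}\|^2 + \epsilon\|\boldsymbol{h}\|^2 = (2/\alpha_h)\sqrt{8/(9\pi)}\,\|\boldsymbol{h}\|^2 + \epsilon\|\boldsymbol{h}\|^2$.

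The hard part is the \emph{uniformity} over $(\boldsymbol{x},\boldsymbol{z})$: the arguments above are pointwise and exploit the independence of $\boldsymbol{a}_i$ from a fixed pair. To upgrade to a uniform statement I will use scale invariance (normalize $\|\boldsymbol{z}\|=1$), build an $\epsilon$-net on the compact admissible set carved out by \eqref{eq:Condition}, union bound the pointwise $\exp(-\Omega(m))$ estimates over the net (whose cardinality is $\exp(O(n))$, absorbed under $m\gtrsim n$), and then bridge net discretization by Lipschitz continuity of $(\boldsymbol{a}_i^\top\boldsymbol{h})^2$ and $|\boldsymbol{a}_i^\top\boldsymbol{h}|^3$ in $(\boldsymbol{z},\boldsymbol{h})$. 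The discontinuous trimming indicators are handled by absorbing the perturbation into the $\sqrt{1.01}/\sqrt{0.99}$ slack in $\zeta_1$ and the $1.15$ vs.\ $3$ slack between $\mathcal{E}_3^i$ and $\mathcal{E}_4^i$ (see \eqref{eq:A_1_bound}); these cushions convert small perturbations of $(\boldsymbol{x},\boldsymbol{z})$ into monotone shifts of the truncation events that only enlarge the lower bound or shrink the upper bound.
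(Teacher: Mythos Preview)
Your treatment of $I_1$ is essentially the paper's Lemma~\ref{Lemma:I1_I2}: the decomposition of $\boldsymbol{h}$ into parallel and orthogonal components relative to $\boldsymbol{z}$ and the use of the $\sqrt{1.01}/\sqrt{0.99}$ cushions are exactly what appears there, so that part is fine.

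The gap is in $I_2$. Two issues, the second fatal:
\begin{itemize}
\item The summand $|\boldsymbol{a}_i^\top\boldsymbol{h}|^3$ is \emph{not} sub-exponential: for $\xi\sim\mathcal{N}(0,1)$ one has $\mathbb{P}(|\xi|^3>t)\asymp\exp(-t^{2/3}/2)$, so Bernstein does not apply.
\item More importantly, even if you had pointwise concentration, the \emph{uniform} bound over $\boldsymbol{h}$ fails once you discard the truncation. Taking $\boldsymbol{h}=\boldsymbol{a}_k/\|\boldsymbol{a}_k\|$ for any fixed $k$ gives a single term $\frac{1}{m}|\boldsymbol{a}_k^\top\boldsymbol{h}|^3=\|\boldsymbol{a}_k\|^3/m\approx n^{3/2}/m$, which for $m\asymp n$ is of order $\sqrt{n}$. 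Hence $\sup_{\|\boldsymbol{h}\|=1}\frac{1}{m}\sum_i|\boldsymbol{a}_i^\top\boldsymbol{h}|^3\to\infty$, and no net argument can rescue this. The Lipschitz bridging you propose also breaks, since the Lipschitz constant of $\boldsymbol{h}\mapsto|\boldsymbol{a}_i^\top\boldsymbol{h}|^3$ scales with $|\boldsymbol{a}_i^\top\boldsymbol{h}|^2$, which is unbounded.
\end{itemize}

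The paper's remedy is precisely \emph{not} to drop $\mathcal{E}_2^i$ in $I_2$. Instead it uses Lemma~\ref{Lemma:superset-D} to obtain the outer inclusion $\mathcal{E}_2^i\cap\mathcal{E}_1^i\subseteq(\mathcal{D}_{\gamma_4}^{i,1}\cup\mathcal{D}_{\gamma_4}^{i,2})\cap\mathcal{E}_1^i$ with $\gamma_4=3\alpha_h$, and then splits $I_2$ accordingly. On $\mathcal{D}_{\gamma_4}^{i,1}$ the cube is bounded by $(3\alpha_h)^3\|\boldsymbol{h}\|^3$, so the summands are sub-Gaussian and Lemma~\ref{lemma:I2} (Hoeffding plus a net with a bounded-Lipschitz auxiliary function) gives the $\sqrt{8/\pi}\|\boldsymbol{h}\|^3/(\alpha_z^{\mathrm{lb}}\|\boldsymbol{z}\|)$ bound you want. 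On $\mathcal{D}_{\gamma_4}^{i,2}$ each summand can be as large as $O(\|\boldsymbol{z}\|^2)$, but the event $\{|\boldsymbol{a}_i^\top\boldsymbol{h}|\gtrsim\alpha_z^{\mathrm{lb}}\|\boldsymbol{z}\|\}$ is so rare (Lemma~\ref{Lemma:I3}) that the total contribution is at most $0.01\|\boldsymbol{h}\|^2$ under the hypothesis \eqref{eq:Condition}; this is the origin of the $1.99$ rather than $2$ in the final constant, and is also why the fourth entry $\frac{5.7(\alpha_z^{\mathrm{lb}})^2}{2\alpha_z^{\mathrm{ub}}+\alpha_z^{\mathrm{lb}}}$ appears in \eqref{eq:Condition} --- a condition your argument never uses, which is itself a signal that something is missing.
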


The basic starting point is the observation that
$(\boldsymbol{a}_{i}^{\top}\boldsymbol{z})-(\boldsymbol{a}_{i}^{\top}\boldsymbol{x})^{2}
= (\boldsymbol{a}_{i}^{\top}\boldsymbol{h}) ( 2\boldsymbol{a}_{i}^{\top}\boldsymbol{z}-\boldsymbol{a}_{i}^{\top}\boldsymbol{h} )$ and hence
\begin{eqnarray}
	-\frac{1}{2m}\nabla\ell_{\mathrm{tr}}\left(\boldsymbol{z}\right) 
	& = & \frac{1}{m} \sum_{i=1}^{m} \frac{ ( \boldsymbol{a}_{i}^{\top}\boldsymbol{z} )^{2} - (\boldsymbol{a}_{i}^{\top}\boldsymbol{x})^{2}}{\boldsymbol{a}_{i}^{\top}\boldsymbol{z}}\boldsymbol{a}_{i}{\bf 1}_{\mathcal{E}_{1}^{i}\cap\mathcal{E}_{2}^{i}}  \nonumber\\
 	& = & \frac{1}{m} \sum_{i=1}^{m} 2 (\boldsymbol{a}_{i}^{\top}\boldsymbol{h}) \boldsymbol{a}_{i}{\bf 1}_{\mathcal{E}_{1}^{i}\cap\mathcal{E}_{2}^{i}} 
		- \frac{1}{m} \sum_{i=1}^{m} \frac{(\boldsymbol{a}_{i}^{\top}\boldsymbol{h})^{2}}{\boldsymbol{a}_{i}^{\top}\boldsymbol{z}}\boldsymbol{a}_{i}{\bf 1}_{\mathcal{E}_{1}^{i}\cap\mathcal{E}_{2}^{i}}.
	\label{eq:simplify}
\end{eqnarray}
One would expect the contribution of the second term of (\ref{eq:simplify})
(which is a second-order quantity) to be small as $\left\Vert \boldsymbol{h}\right\Vert /\left\Vert \boldsymbol{z}\right\Vert $
decreases. 

To facilitate analysis, we rewrite (\ref{eq:simplify}) in terms of the
more convenient events $\mathcal{D}_{\gamma}^{i,1}$ and
$\mathcal{D}_{\gamma}^{i,2}$.  Specifically, the inclusion property
(\ref{eq:Event2}) together with Lemma \ref{Lemma:superset-D} reveals
that  
\begin{equation}
	\mathcal{D}_{\gamma_{3}}^{i,1}~ \cap \mathcal{E}_1^i~
	\subseteq \text{ }\mathcal{E}_{3}^{i}~ \cap \mathcal{E}_1^i \subseteq\text{ }\mathcal{E}_{2}^{i} ~ \cap \mathcal{E}_1^i~
	\subseteq \text{ }\mathcal{E}_{4}^{i}~ \cap \mathcal{E}_1^i~
	\subseteq\text{ } \left(\mathcal{D}_{\gamma_{4}}^{i,1}\cup\mathcal{D}_{\gamma_{4}}^{i,2}\right) ~ \cap \mathcal{E}_1^i,
	\label{eq:Event-Containment}
\end{equation}
where the parameters $\gamma_{3},\gamma_{4}$ are given by
\begin{equation}
	\gamma_{3}:=0.476\alpha_{h},\quad\text{and}\quad\gamma_{4}:=3\alpha_{h}.\label{eq:gamma-34}
\end{equation}
This taken collectively with the identity (\ref{eq:simplify}) leads to a lower
estimate
\begin{align}
  & -\Big\langle \frac{1}{2m}\nabla\ell_{\mathrm{tr}}
  (\boldsymbol{z}),\boldsymbol{h} \Big\rangle \geq \nonumber\\
  & \qquad \frac{2}{m}\sum_{i=1}^{m}\left(\boldsymbol{a}_{i}^{\top}\boldsymbol{h}\right)^{2}
  {\bf 1}_{\mathcal{E}_{1}^{i}\cap\mathcal{D}_{\gamma_3}^{i,1}} -
  \frac{1}{m}\sum_{i=1}^{m}\frac{\left|\boldsymbol{a}_{i}^{\top}\boldsymbol{h}\right|^3}{\left|\boldsymbol{a}_{i}^{\top}\boldsymbol{z}\right|}
  {\bf 1}_{\mathcal{E}_{1}^{i}\cap\mathcal{D}_{\gamma_4}^{i,1}} 
-  \frac{1}{m}\sum_{i=1}^{m}\frac{\left|\boldsymbol{a}_{i}^{\top}\boldsymbol{h}\right|^{3}}{\left|\boldsymbol{a}_{i}^{\top}\boldsymbol{z}\right|}
  {\bf 1}_{\mathcal{E}_{1}^{i}\cap\mathcal{D}_{\gamma_4}^{i,2}},
  \label{eq:I1_I2_I3}
\end{align}
leaving us with three quantities in the right-hand side to deal with.
We pause here to explain and compare the influences of these three
terms. 

To begin with, as long as the trimming step does not discard too
many data, the first term should be close to $\frac{2}{m}\sum_i
|\boldsymbol{a}_i^{\top} \boldsymbol{h}|^2$, which approximately gives
$2\|\boldsymbol{h}\|^2$ from the law of large numbers. This term turns
out to be dominant in the right-hand side of (\ref{eq:I1_I2_I3}) as
long as $\|\boldsymbol{h}\| / \|\boldsymbol{z}\|$ is reasonably
small. To see this, please recognize that the second term in the
right-hand side is $O(\|\boldsymbol{h}\|^3 / \|
\boldsymbol{z} \|)$, simply because both
$\boldsymbol{a}_{i}^{\top}\boldsymbol{h}$ and
$\boldsymbol{a}_{i}^{\top}\boldsymbol{z}$ are absolutely controlled on
$\mathcal{D}_{\gamma_4}^{i,1}\cap\mathcal{E}_{1}^{i}$.
However, $\mathcal{D}_{\gamma_4}^{i,2}$ does not share such a desired
feature.  By the very definition of $\mathcal{D}_{\gamma_4}^{i,2}$,
each nonzero summand of the last term of (\ref{eq:I1_I2_I3}) must obey
$\left|\boldsymbol{a}_{i}^{\top}\boldsymbol{h}\right|\approx2\left|\boldsymbol{a}_{i}^{\top}\boldsymbol{z}\right|$
and, therefore,
$\frac{\left|\boldsymbol{a}_{i}^{\top}\boldsymbol{h}\right|^{3}}{\left|\boldsymbol{a}_{i}^{\top}\boldsymbol{z}\right|}{\bf
  1}_{\mathcal{E}_{1}^{i}\cap\mathcal{D}_{\gamma_{4}}^{i,2}}$ is
roughly of the order of $\Vert \boldsymbol{z} \Vert ^2$; this could be
much larger than our target level $\left\Vert
  \boldsymbol{h}\right\Vert ^{2}$.  Fortunately,
$\mathcal{D}_{\gamma_4}^{i,2}$ is a rare event, thus precluding a
noticable influence upon the descent direction. All of this is made
rigorous in Lemma \ref{Lemma:I1_I2} (first term), Lemma \ref{lemma:I2}
(second term) and Lemma \ref{Lemma:I3} (third term) together with
subsequent analysis.



\begin{lemma}\label{Lemma:I1_I2}
Fix $\gamma > 0$, and let $\mathcal{E}_{1}^{i}$
and $\mathcal{D}_{\gamma}^{i,1}$ be defined in (\ref{eq:defn-E1})
and (\ref{eq:defn-D_gamma-1}), respectively. Set
\begin{align}
	\zeta_{1} &:= 1 - \min \left\{ \mathbb{E}\left[\xi^{2}{\bf 1}_{\left\{ \sqrt{1.01}\alpha_{z}^{\mathrm{lb}}\leq\left|\xi\right|\leq\sqrt{0.99}\alpha_{z}^{\mathrm{ub}}\right\} }\right], 
		\mathbb{E}\left[{\bf 1}_{\left\{ \sqrt{1.01}\alpha_{z}^{\mathrm{lb}}\leq\left|\xi\right|\leq\sqrt{0.99}\alpha_{z}^{\mathrm{ub}}\right\} }\right]
	\right\}  \label{eq:Assumption-zeta1}\\
	\quad\text{and}\quad
	\zeta_{2} &:= \mathbb{E}\left[\xi^{2}{\bf 1}_{\left\{ \left|\xi\right|>\sqrt{0.99}\gamma\right\} }\right],
	\label{eq:Assumption-zeta2}
\end{align}
where $\xi\sim\mathcal{N}\left(0,1\right)$. For any $\epsilon>0$, if
$m>c_{1}n\epsilon^{-2}\log \epsilon^{-1}$, then with probability at
least $1-C\exp (-c_{0}\epsilon^{2}m )$,
\begin{eqnarray}
	\frac{1}{m}\sum_{i=1}^{m}\left|\boldsymbol{a}_{i}^{\top}\boldsymbol{h}\right|^{2}
		{\bf 1}_{\mathcal{E}_{1}^{i}\cap\mathcal{D}_{\gamma}^{i,1}} 
	& \geq & \left(1-\zeta_{1}-\zeta_{2}-\epsilon\right)\left\Vert \boldsymbol{h}\right\Vert ^{2}
	\label{eq:I1}
\end{eqnarray}
holds for all non-zero vectors $\boldsymbol{h}, \boldsymbol{z}\in \mathbb{R}^n$.
Here, $c_{0},c_{1},C>0$ are some universal constants.
\end{lemma}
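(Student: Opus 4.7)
By the positive homogeneity of both sides we may assume $\|\boldsymbol{z}\|=\|\boldsymbol{h}\|=1$; let $\xi_{i,1}:=\boldsymbol{a}_i^{\top}\boldsymbol{z}$, $\xi_{i,2}:=\boldsymbol{a}_i^{\top}\boldsymbol{h}$, so that $(\xi_{i,1},\xi_{i,2})$ is a centered Gaussian vector with covariance $\rho=\boldsymbol{z}^{\top}\boldsymbol{h}\in[-1,1]$. The first step is to compute a lower bound on the expectation for fixed $(\boldsymbol{z},\boldsymbol{h})$. Writing $\mathbb{E}[\xi_{i,2}^2]=1$ and decomposing
\[
\mathbb{E}\bigl[\xi_{i,2}^2\mathbf{1}_{\mathcal{E}_1^i\cap\mathcal{D}_\gamma^{i,1}}\bigr]\;\geq\;1-\mathbb{E}\bigl[\xi_{i,2}^2\mathbf{1}_{(\mathcal{E}_1^i)^c}\bigr]-\mathbb{E}\bigl[\xi_{i,2}^2\mathbf{1}_{(\mathcal{D}_\gamma^{i,1})^c}\bigr],
\]
the second term only involves $\xi_{i,1}$ in its indicator. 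Using the representation $\xi_{i,2}=\rho\xi_{i,1}+\sqrt{1-\rho^2}\,\tilde\xi_i$ with $\tilde\xi_i\perp\xi_{i,1}$, I get, for any event $B$ depending solely on $\xi_{i,1}$, the identity $\mathbb{E}[\xi_{i,2}^2\mathbf{1}_B]=\rho^2\mathbb{E}[\xi^2\mathbf{1}_B]+(1-\rho^2)\mathbb{P}(B)\leq \max\{\mathbb{E}[\xi^2\mathbf{1}_B],\mathbb{P}(B)\}$. Taking $B$ to be the tight version of $(\mathcal{E}_1^i)^c$ matches the definition of $\zeta_1$; the third term is purely a function of $\xi_{i,2}\sim\mathcal{N}(0,1)$ and matches $\zeta_2$. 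Hence the pointwise expectation is at least $1-\zeta_1-\zeta_2$.

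Next I turn to concentration for fixed $(\boldsymbol{z},\boldsymbol{h})$. The summands $\xi_{i,2}^2\mathbf{1}_{\mathcal{E}_1^i\cap\mathcal{D}_\gamma^{i,1}}$ are uniformly bounded by $\gamma^2$, so Hoeffding's inequality yields
\[
\frac{1}{m}\sum_{i=1}^m \xi_{i,2}^2\mathbf{1}_{\mathcal{E}_1^i\cap\mathcal{D}_\gamma^{i,1}}\;\geq\;\mathbb{E}\bigl[\xi_{i,2}^2\mathbf{1}_{\mathcal{E}_1^i\cap\mathcal{D}_\gamma^{i,1}}\bigr]-\epsilon/2
\]
with failure probability at most $\exp(-c\epsilon^2 m/\gamma^4)$. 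Combining with the expectation bound gives the conclusion pointwise with the claimed probability.

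The bulk of the work is upgrading this to a uniform statement. I would take $\delta$-nets $\mathcal{N}_1,\mathcal{N}_2\subseteq\mathbb{S}^{n-1}$ of cardinality at most $(3/\delta)^n$ for $\boldsymbol{z}$ and $\boldsymbol{h}$, pick $\delta$ polynomially small in $\epsilon/\sqrt{n}$, and take a union bound over $\mathcal{N}_1\times\mathcal{N}_2$; provided $m\gtrsim n\epsilon^{-2}\log\epsilon^{-1}$, the pointwise bound survives on the whole net with probability $1-C\exp(-c_0\epsilon^2 m)$. For off-net $(\boldsymbol{z},\boldsymbol{h})$ close to the nearest $(\boldsymbol{z}_0,\boldsymbol{h}_0)\in\mathcal{N}_1\times\mathcal{N}_2$, I need to produce a \emph{lower} envelope of $\xi_{i,2}(\boldsymbol{h})^2\mathbf{1}_{\mathcal{E}_1^i(\boldsymbol{z})\cap\mathcal{D}_\gamma^{i,1}(\boldsymbol{h})}$ using the net values. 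This is where the factors $\sqrt{1.01}$ and $\sqrt{0.99}$ enter the definitions of $\zeta_1,\zeta_2$: on the high-probability event $\max_i\|\boldsymbol{a}_i\|\lesssim \sqrt{n}$ together with $\max_i|\boldsymbol{a}_i^{\top}\boldsymbol{u}|\lesssim\sqrt{\log m}$ uniformly over $\boldsymbol{u}\in\mathcal{N}_1\cup\mathcal{N}_2$, a perturbation of size $\delta$ changes each ratio $|\boldsymbol{a}_i^{\top}\boldsymbol{z}|/\|\boldsymbol{z}\|$ by a tiny multiplicative factor, so
\[
\mathbf{1}_{\sqrt{1.01}\alpha_z^{\mathrm{lb}}\leq|\boldsymbol{a}_i^{\top}\boldsymbol{z}_0|\leq\sqrt{0.99}\alpha_z^{\mathrm{ub}}}\mathbf{1}_{|\boldsymbol{a}_i^{\top}\boldsymbol{h}_0|\leq\sqrt{0.99}\gamma}\;\leq\;\mathbf{1}_{\mathcal{E}_1^i(\boldsymbol{z})}\mathbf{1}_{\mathcal{D}_\gamma^{i,1}(\boldsymbol{h})}.
\]
Combined with a lower bound $(\boldsymbol{a}_i^{\top}\boldsymbol{h})^2\geq (\boldsymbol{a}_i^{\top}\boldsymbol{h}_0)^2-2\delta|\boldsymbol{a}_i^{\top}\boldsymbol{h}_0|\,\|\boldsymbol{a}_i\|$ whose aggregate error is $O(\epsilon)\|\boldsymbol{h}\|^2$, this transfers the net-based lower bound to every $(\boldsymbol{z},\boldsymbol{h})$.

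The main obstacle is this last transfer step: the indicator functions are discontinuous, so naive union bounds fail, and the rescaling of the thresholds by $\sqrt{1.01}$ and $\sqrt{0.99}$ must be chosen in tight coordination with the net scale $\delta$, with the high-probability event controlling $\max_i\|\boldsymbol{a}_i\|$ and $\max_{\boldsymbol{u}\in\mathcal{N}}\max_i|\boldsymbol{a}_i^{\top}\boldsymbol{u}|$ providing the deterministic Lipschitz-type envelope that makes the indicator bracketing valid. Once this discretization is settled, plugging in the pointwise computations finishes the proof.
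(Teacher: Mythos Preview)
Your expectation bound and pointwise concentration are essentially the same as the paper's (your conditioning identity $\mathbb{E}[\xi_{i,2}^2\mathbf{1}_B]=\rho^2\mathbb{E}[\xi^2\mathbf{1}_B]+(1-\rho^2)\mathbb{P}(B)$ is just a slicker rewriting of the paper's orthogonal decomposition of $\boldsymbol{h}$ along $\boldsymbol{z}$). The gap is in the uniformization step.

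First, the event you invoke, $\max_i|\boldsymbol{a}_i^{\top}\boldsymbol{u}|\lesssim\sqrt{\log m}$ uniformly over $\boldsymbol{u}\in\mathcal{N}_1\cup\mathcal{N}_2$, is false: your nets have cardinality $\exp(\Theta(n\log(1/\delta)))$, so a union bound gives $\max_{i,\boldsymbol{u}}|\boldsymbol{a}_i^{\top}\boldsymbol{u}|\asymp\sqrt{n}$, not $\sqrt{\log m}$. Second, and more fundamentally, your indicator bracketing requires a \emph{per-$i$} bound $|\boldsymbol{a}_i^{\top}(\boldsymbol{z}-\boldsymbol{z}_0)|\le(\sqrt{1.01}-1)\alpha_z^{\mathrm{lb}}$; since the only deterministic control is $|\boldsymbol{a}_i^{\top}(\boldsymbol{z}-\boldsymbol{z}_0)|\le\|\boldsymbol{a}_i\|\,\delta\lesssim\sqrt{n}\,\delta$, you are forced to take $\delta\asymp 1/\sqrt{n}$. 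The resulting net has cardinality $n^{\Theta(n)}$, and beating it with the Hoeffding tail $\exp(-c\epsilon^2 m)$ needs $m\gtrsim n\log n/\epsilon^2$, not the claimed $m\gtrsim n\epsilon^{-2}\log\epsilon^{-1}$. The same $\sqrt{n}\,\delta$ loss appears in your aggregate error for $(\boldsymbol{a}_i^{\top}\boldsymbol{h})^2-(\boldsymbol{a}_i^{\top}\boldsymbol{h}_0)^2$.

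The paper sidesteps this by replacing the indicators with continuous minorants $\chi_z,\chi_h$ whose transition zones are \emph{linear in the square}: $\chi_z(\tau)=-100(\alpha_z^{\mathrm{ub}})^{-2}\tau^2+100$ on $[\sqrt{0.99}\alpha_z^{\mathrm{ub}},\alpha_z^{\mathrm{ub}}]$, etc. Consequently $f_1(\tau):=\tau\,\chi_h(\sqrt{\tau})$ and $f_2(\tau):=\chi_z(\sqrt{\tau})$ are bounded $O(1)$-Lipschitz functions of $\tau$, and the per-$i$ perturbation is dominated by $|(\boldsymbol{a}_i^{\top}\boldsymbol{h})^2-(\boldsymbol{a}_i^{\top}\boldsymbol{h}_0)^2|+|(\boldsymbol{a}_i^{\top}\boldsymbol{z})^2-(\boldsymbol{a}_i^{\top}\boldsymbol{z}_0)^2|$. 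Summing over $i$ and invoking the uniform $\ell_1$ RIP for rank-2 matrices (Lemma~\ref{lemma:RIP-Candes}) together with Lemma~\ref{lemma: bound-h} gives a perturbation of size $O(\|\boldsymbol{h}-\boldsymbol{h}_0\|+\|\boldsymbol{z}-\boldsymbol{z}_0\|)=O(\delta)$ \emph{on average}, with no $\sqrt{n}$ factor. This lets the net scale be $\delta\asymp\epsilon$, hence cardinality $(C/\epsilon)^{2n}$, which matches the stated sample complexity. The $\sqrt{1.01}$ and $\sqrt{0.99}$ factors in $\zeta_1,\zeta_2$ are exactly the width of these smoothing zones, not a bracketing slack.
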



We now move on to the second term in the right-hand side of (\ref{eq:I1_I2_I3}). For any fixed $\gamma>0$,
the definition of $\mathcal{E}_{1}^{i}$ gives rise to an upper estimate
\begin{align}
  \frac{1}{m}\sum_{i=1}^{m}\frac{\left|\boldsymbol{a}_{i}^{\top}\boldsymbol{h}\right|^{3}}{\left|\boldsymbol{a}_{i}^{\top}\boldsymbol{z}\right|}
  {\bf 1}_{\mathcal{E}_{1}^{i}\cap\mathcal{D}_{\gamma}^{i,1}} 
  \leq  \frac{1}{\alpha_{z}^{\mathrm{lb}}\left\Vert \boldsymbol{z}\right\Vert }\frac{1}{m}\sum_{i=1}^{m}\left|\boldsymbol{a}_{i}^{\top}\boldsymbol{h}\right|^{3}
  {\bf 1}_{\mathcal{D}_{\gamma}^{i,1}}\text{ }
  {\leq}\text{ }\frac{\left(1+\epsilon\right)\sqrt{8/\pi}\left\Vert \boldsymbol{h}\right\Vert ^{3}}{\alpha_{z}^{\mathrm{lb}}\left\Vert \boldsymbol{z}\right\Vert },
	\label{eq:I2}
\end{align}
where $\sqrt{8/\pi}\left\Vert \boldsymbol{h}\right\Vert ^{3}$ is
exactly the untruncated moment
$\mathbb{E}[|\boldsymbol{a}_{i}^{\top}\boldsymbol{h}|^{3}]$.  The
second inequality is a consequence of the lemma below, which arises by
observing that the summands
$|\boldsymbol{a}_{i}^{\top}\boldsymbol{h}|^3 {\bf
  1}_{\mathcal{D}_{\gamma}^{i,1}}$ are independent sub-Gaussian random
variables.

\begin{lemma}
\label{lemma:I2}
For any constant $\gamma>0$, if $m/n \geq c_{0} \cdot \epsilon^{-2}
\log \epsilon^{-1}$, then
\begin{equation}
	\frac{1}{m}\sum_{i=1}^{m}\left|\boldsymbol{a}_{i}^{\top}\boldsymbol{h}\right|^{3}{\bf 1}_{\mathcal{D}_{\gamma}^{i,1}}
	\leq \left(1+\epsilon\right)\sqrt{8/\pi}\left\Vert \boldsymbol{h}\right\Vert ^{3},\quad\forall\boldsymbol{h}\in\mathbb{R}^{n}
\end{equation}
with probability at least $1-C\exp (-c_{1}\epsilon^{2}m )$
for some universal constants $c_{0},c_{1},C>0$.
\end{lemma}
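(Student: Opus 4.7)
By homogeneity of both sides (every term scales like $\|\boldsymbol{h}\|^3$ since $\mathcal{D}_\gamma^{i,1}$ is itself scale-invariant in $\boldsymbol{h}$), I may assume $\|\boldsymbol{h}\|=1$ throughout and prove
\[
  \sup_{\boldsymbol{h}\in\mathbb{S}^{n-1}}\frac{1}{m}\sum_{i=1}^{m}|\boldsymbol{a}_i^{\top}\boldsymbol{h}|^{3}{\bf 1}_{\{|\boldsymbol{a}_i^{\top}\boldsymbol{h}|\leq\gamma\}}\leq (1+\epsilon)\sqrt{8/\pi}
\]
on an event of probability $\geq 1-C\exp(-c_1\epsilon^2 m)$. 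The proof has three ingredients: a pointwise concentration bound, a Lipschitz majorant that handles the indicator, and a standard covering argument on the sphere.

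\textbf{Pointwise step.} For fixed $\boldsymbol{h}\in\mathbb{S}^{n-1}$, the summands $X_i(\boldsymbol{h}):=|\boldsymbol{a}_i^{\top}\boldsymbol{h}|^{3}{\bf 1}_{\{|\boldsymbol{a}_i^{\top}\boldsymbol{h}|\leq\gamma\}}$ are i.i.d., deterministically bounded by $\gamma^{3}$, and satisfy
\[
  \mathbb{E}[X_i(\boldsymbol{h})]=\mathbb{E}\bigl[|\xi|^{3}{\bf 1}_{\{|\xi|\leq\gamma\}}\bigr]\leq \mathbb{E}|\xi|^{3}=\sqrt{8/\pi},\qquad \xi\sim\mathcal{N}(0,1).
\]
Hoeffding's inequality (Bernstein would work equally well) yields, for any $t>0$,
\[
  \mathbb{P}\Bigl(\tfrac{1}{m}\textstyle\sum_i X_i(\boldsymbol{h})\geq \sqrt{8/\pi}+t\Bigr)\leq \exp\!\bigl(-2mt^{2}/\gamma^{6}\bigr).
\]
Choosing $t=\tfrac{\epsilon}{2}\sqrt{8/\pi}$ gives a pointwise failure probability $\exp(-c\epsilon^{2}m)$, with $c$ depending only on $\gamma$.

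\textbf{Lipschitz majorant step.} The indicator is discontinuous in $\boldsymbol{h}$, so I dominate the summand by the continuous envelope $\widetilde{\phi}_i(\boldsymbol{h}):=\min\{|\boldsymbol{a}_i^{\top}\boldsymbol{h}|^{3},\gamma^{3}\}$, which obviously satisfies $|\boldsymbol{a}_i^{\top}\boldsymbol{h}|^{3}{\bf 1}_{\{|\boldsymbol{a}_i^{\top}\boldsymbol{h}|\leq\gamma\}}\leq \widetilde{\phi}_i(\boldsymbol{h})$ while preserving the same pointwise mean bound (since $\widetilde{\phi}_i(\boldsymbol{h})\leq |\boldsymbol{a}_i^{\top}\boldsymbol{h}|^{3}$ and the latter still has mean $\sqrt{8/\pi}$). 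The scalar map $u\mapsto\min(|u|^{3},\gamma^{3})$ has derivative bounded by $3\gamma^{2}$, so
\[
  |\widetilde{\phi}_i(\boldsymbol{h})-\widetilde{\phi}_i(\boldsymbol{h}')|\leq 3\gamma^{2}\|\boldsymbol{a}_i\|\,\|\boldsymbol{h}-\boldsymbol{h}'\|,
\]
and hence $\boldsymbol{h}\mapsto\frac{1}{m}\sum_i\widetilde{\phi}_i(\boldsymbol{h})$ is Lipschitz on $\mathbb{S}^{n-1}$ with constant at most $3\gamma^{2}\max_i\|\boldsymbol{a}_i\|$, which is $O(\gamma^{2}\sqrt{n})$ on the standard event $\{\max_i\|\boldsymbol{a}_i\|\leq C\sqrt{n}\}$ of probability $1-e^{-\Omega(n)}$.

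\textbf{Covering step.} Take an $\epsilon_0$-net $\mathcal{N}$ of $\mathbb{S}^{n-1}$ of cardinality $|\mathcal{N}|\leq (3/\epsilon_0)^{n}$ and choose $\epsilon_0\asymp \epsilon/(\gamma^{2}\sqrt{n})$ so that the Lipschitz slack is at most $\tfrac{\epsilon}{2}\sqrt{8/\pi}$. Applying the pointwise Hoeffding bound to $\frac{1}{m}\sum_i\widetilde{\phi}_i(\boldsymbol{h})$ at every net point and union-bounding gives the failure probability
\[
  (3/\epsilon_0)^{n}\exp(-c\epsilon^{2}m/\gamma^{6})\leq C\exp(-c_1\epsilon^{2}m)
\]
as soon as $m\geq c_0 n\epsilon^{-2}\log\epsilon^{-1}$; combining with the Lipschitz interpolation recovers the uniform bound with the desired constant $(1+\epsilon)\sqrt{8/\pi}$, and the domination $X_i\leq\widetilde{\phi}_i$ transfers it to the original sum.

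\textbf{Main obstacle.} The only subtle point is the non-smoothness of the indicator ${\bf 1}_{\{|\boldsymbol{a}_i^{\top}\boldsymbol{h}|\leq\gamma\}}$, which breaks a direct net argument. The sandwich by the continuous envelope $\widetilde{\phi}_i$ sidesteps this at the cost of an inconsequential weakening (same pointwise mean bound, bounded Lipschitz constant), and this is what makes the covering argument go through with the sharp constant $\sqrt{8/\pi}$ independent of $\gamma$.
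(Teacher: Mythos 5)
Your Lipschitz bound on the majorant averages $\boldsymbol{h}\mapsto\frac{1}{m}\sum_i\widetilde{\phi}_i(\boldsymbol{h})$ is obtained term by term, invoking $\max_i\|\boldsymbol{a}_i\|\lesssim\sqrt{n}$, which produces a constant of order $\gamma^{2}\sqrt{n}$. This forces you to take a mesh $\epsilon_0\asymp\epsilon/(\gamma^{2}\sqrt{n})$, so the net has $\log$-cardinality of order $n\log(\gamma^{2}\sqrt{n}/\epsilon)\gtrsim \tfrac{n}{2}\log n$. The pointwise tail you are union-bounding is $\exp(-c\epsilon^{2}m/\gamma^{6})$, and the lemma only assumes $m\geq c_0\,n\,\epsilon^{-2}\log\epsilon^{-1}$ with $\epsilon$ a fixed constant. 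That hypothesis gives $\epsilon^{2}m\gtrsim n\log\epsilon^{-1}$, which is $O(n)$ with a fixed prefactor — it does \emph{not} dominate the $\tfrac{n}{2}\log n$ entropy cost. Under the stated sample-complexity assumption $m\asymp n$, the union bound does not close; your argument would only work if $m\gtrsim n\log n$, which is exactly the suboptimal sample complexity the lemma is designed to beat.

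The fix, and what the paper actually does, is to bound the Lipschitz variation of the \emph{empirical average} rather than of each summand. Write the envelope as $\chi(\tau)$ applied to $\tau=|\boldsymbol{a}_i^{\top}\boldsymbol{h}|^{2}$, where $\chi$ is $O(\gamma)$-Lipschitz in $\tau$. Then
\[
  \frac{1}{m}\sum_{i=1}^{m}\bigl|\chi\bigl(|\boldsymbol{a}_i^{\top}\boldsymbol{h}|^{2}\bigr)-\chi\bigl(|\boldsymbol{a}_i^{\top}\boldsymbol{h}_0|^{2}\bigr)\bigr|
  \lesssim \frac{1}{m}\sum_{i=1}^{m}\bigl||\boldsymbol{a}_i^{\top}\boldsymbol{h}|^{2}-|\boldsymbol{a}_i^{\top}\boldsymbol{h}_0|^{2}\bigr|
  =\frac{1}{m}\bigl\|\mathcal{A}\bigl(\boldsymbol{h}\boldsymbol{h}^{\top}-\boldsymbol{h}_0\boldsymbol{h}_0^{\top}\bigr)\bigr\|_{1},
\]
and then one invokes the rank-2 RIP-type inequality (Lemma \ref{lemma:RIP-Candes}) together with Lemma \ref{lemma: bound-h} to get that this is $\asymp\|\boldsymbol{h}\boldsymbol{h}^{\top}-\boldsymbol{h}_0\boldsymbol{h}_0^{\top}\|_{\mathrm{F}}\asymp\|\boldsymbol{h}-\boldsymbol{h}_0\|$ uniformly over the net, on the same high-probability event. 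In other words, the \emph{average} is $O(1)$-Lipschitz on the sphere, not $O(\sqrt{n})$-Lipschitz, because the cancellations in the $\ell_1$ average of the linear map $\mathcal{A}$ wash out the $\sqrt{n}$ that a worst-case single-term bound incurs. With an $O(1)$ Lipschitz constant you only need a mesh $\epsilon_0\asymp\epsilon$, so the net has cardinality $(1+2/\epsilon)^{n}=\exp(n\log(1+2/\epsilon))$, and the union bound closes precisely under $m\geq c_0\,n\,\epsilon^{-2}\log\epsilon^{-1}$. The rest of your proof — homogeneity reduction, the continuous envelope dominating the indicator, bounded summands plus Hoeffding/Bernstein, and the observation that the envelope's mean is still $\leq\sqrt{8/\pi}$ — all matches the paper and is fine.
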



It remains to control the last term of (\ref{eq:I1_I2_I3}). As mentioned above,  the influence of this term is small since the set of $\boldsymbol{a}_{i}$'s satisfying
$\mathcal{D}_{\gamma}^{i,2}$ accounts for a small fraction of
measurements. Put formally, the number of equations satisfying
$\left|\boldsymbol{a}_{i}^{\top}\boldsymbol{h}\right|\geq\gamma\left\Vert
  \boldsymbol{h}\right\Vert $ decays rapidly for large $\gamma$ (at
least at a quadratic rate), as stated below.

\begin{lemma}\label{Lemma:I3}
For any $0 < \epsilon < 1$,
there exist some universal constants $c_{0},c_{1},C>0$ such that
\begin{equation}
	\frac{1}{m}\sum_{i=1}^{m}{\bf 1}_{\left\{ \left|\boldsymbol{a}_{i}^{\top}\boldsymbol{h}\right|
	\geq \text{ \ensuremath{\gamma}}\Vert \boldsymbol{h} \Vert \right\} }\leq\frac{1}{0.49\gamma}\exp\left(-0.485\gamma^{2}\right)
		+ \frac{\epsilon}{\gamma^{2}},
	\quad\forall\boldsymbol{h}\in\mathbb{R}^{n} \backslash \{{\bf 0}\}
	\text{ and }\gamma \geq 2
	\label{eq:I3-indicator}
\end{equation}
with probability at least
$1-C\exp\left(-c_{0}\epsilon^{2}m\right)$. This holds with the proviso
$m/n \geq c_{1} \cdot \epsilon^{-2} \log \epsilon^{-1}$.
\end{lemma}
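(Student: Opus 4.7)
The plan is to reduce the counting bound to a truncated second-moment estimate, which is well suited to concentration methods, and then establish the latter uniformly in $\boldsymbol{h}$ and $\gamma$. The motivation for the reduction is the elementary Markov-type identity $\mathbf{1}_{\{|a|\geq \gamma\|\boldsymbol{h}\|\}} \leq \tfrac{|a|^2}{\gamma^2\|\boldsymbol{h}\|^2}\mathbf{1}_{\{|a|\geq \gamma\|\boldsymbol{h}\|\}}$ applied to $a=\boldsymbol{a}_i^{\top}\boldsymbol{h}$, which yields
\begin{equation*}
\frac{1}{m}\sum_{i=1}^{m}{\bf 1}_{\{|\boldsymbol{a}_i^{\top}\boldsymbol{h}|\geq\gamma\|\boldsymbol{h}\|\}} ~\leq~ \frac{1}{\gamma^2\|\boldsymbol{h}\|^2}\cdot\frac{1}{m}\sum_{i=1}^{m}|\boldsymbol{a}_i^{\top}\boldsymbol{h}|^2\,{\bf 1}_{\{|\boldsymbol{a}_i^{\top}\boldsymbol{h}|\geq \gamma\|\boldsymbol{h}\|\}}.
\end{equation*}
By rotational invariance of the Gaussian, each expectation equals $\|\boldsymbol{h}\|^2\,\mathbb{E}[\xi^2{\bf 1}_{\{|\xi|\geq\gamma\}}]$ with $\xi\sim\mathcal{N}(0,1)$, and integration by parts gives $\mathbb{E}[\xi^2{\bf 1}_{\{|\xi|\geq \gamma\}}] = \tfrac{2\gamma}{\sqrt{2\pi}}e^{-\gamma^2/2}+2\,\mathbb{P}(\xi\geq \gamma) \leq \tfrac{2\gamma+2/\gamma}{\sqrt{2\pi}}e^{-\gamma^2/2}$. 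After dividing by $\gamma^2$ one checks numerically that, for $\gamma \geq 2$, $\tfrac{2+2/\gamma^2}{\gamma\sqrt{2\pi}}e^{-\gamma^2/2} \leq \tfrac{1}{0.49\gamma}e^{-0.485\gamma^2}$ — equivalently $\tfrac{2+2/\gamma^2}{\sqrt{2\pi}}\leq \tfrac{1}{0.49}e^{0.015\gamma^2}$, which holds with slack at $\gamma=2$ and improves as $\gamma$ grows. This pins down the leading term in the target bound.

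The bulk of the work is then the uniform concentration
\begin{equation*}
\sup_{\boldsymbol{u}\in S^{n-1},\,\gamma\geq 2}\Bigl[\tfrac{1}{m}\sum_{i=1}^{m}(\boldsymbol{a}_i^{\top}\boldsymbol{u})^2{\bf 1}_{\{|\boldsymbol{a}_i^{\top}\boldsymbol{u}|\geq\gamma\}} \,-\, \mathbb{E}[\xi^2{\bf 1}_{\{|\xi|\geq\gamma\}}]\Bigr] ~\leq~ \epsilon,
\end{equation*}
which after division by $\gamma^2$ contributes the additive slack $\epsilon/\gamma^2$. For each fixed $(\boldsymbol{u},\gamma)$, the summands are bounded sub-exponential random variables with variance proxy dominated by $\mathbb{E}[\xi^4{\bf 1}_{\{|\xi|\geq\gamma\}}]$, so Bernstein's inequality yields pointwise failure probability $\exp(-c\epsilon^2 m)$. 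A union bound over an $\eta$-net of $S^{n-1}$ of cardinality $(3/\eta)^n$ together with a polynomial grid of $\gamma$-values costs $n\log(1/\eta)+O(\log m)$ in the exponent; both terms are absorbed by the hypothesis $m/n \gtrsim \epsilon^{-2}\log\epsilon^{-1}$, once one truncates at some $\gamma_{\max}\asymp\sqrt{\log m}$ beyond which both sides of the claim are already vanishingly small. To pass from the net back to arbitrary $(\boldsymbol{u},\gamma)$ I would replace the discontinuous indicator by a Lipschitz envelope $\psi_{\gamma,\delta}$ satisfying ${\bf 1}_{\{|\cdot|\geq \gamma\}} \leq \psi_{\gamma,\delta}(\cdot) \leq {\bf 1}_{\{|\cdot|\geq \gamma-\delta\}}$, choosing $\delta$ small enough that $\mathbb{E}[\xi^2\psi_{\gamma,\delta}(\xi)] - \mathbb{E}[\xi^2{\bf 1}_{\{|\xi|\geq\gamma\}}] = o(\epsilon)$, and then invoking monotonicity in $\gamma$ to descend to the chosen grid.

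I expect the main obstacle to be precisely this uniform-in-$(\boldsymbol{u},\gamma)$ extension, since the raw indicator is discontinuous and naive Lipschitzification loses a factor proportional to $\max_i\|\boldsymbol{a}_i\|$ that must be absorbed into $\epsilon$. The smoothing scheme resolves this cleanly provided one verifies that $\gamma\mapsto\mathbb{E}[\xi^2{\bf 1}_{\{|\xi|\geq\gamma\}}]$ is uniformly continuous on the relevant range, which follows from its bounded derivative $-2\gamma^2\varphi(\gamma)$ where $\varphi$ is the Gaussian density. An attractive alternative is to recycle the machinery of Lemma~\ref{Lemma:I1_I2}: its proof already controls an analogous truncated second moment via sub-Gaussian tail estimates combined with a covering argument, and the same scheme transfers to the quantity considered here with only cosmetic modifications (removing the $\mathcal{E}_1^i$ band condition on $\boldsymbol{z}$).
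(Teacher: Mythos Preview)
Your Markov reduction to the truncated second moment $\xi^{2}{\bf 1}_{\{|\xi|\geq\gamma\}}$ is a legitimate alternative to the paper's route, and the computation of the leading term is correct. The paper instead works directly with a Lipschitz surrogate $\chi_{3}(\tau,\gamma)$ of the indicator (Lipschitz in $\tau=|\boldsymbol{a}_{i}^{\top}\boldsymbol{h}|^{2}$ with constant $O(\gamma^{-2})$), exploiting that a Bernoulli variable with success probability $p\sim\gamma^{-1}e^{-\gamma^{2}/2}$ has sub-exponential norm $O(1/\log(1/p))=O(\gamma^{-2})$; this delivers the $\epsilon/\gamma^{2}$ deviation directly from Bernstein, whereas you obtain it by dividing by $\gamma^{2}$ after the fact. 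The paper's choice keeps every summand bounded by $1$, which makes the covering step routine; your summand is unbounded above, so the net extension needs more care. It can still be made to work, but only if the smoothing width is taken as $\delta\asymp\gamma$ (so that $\tau\mapsto\tau\,\psi_{\gamma,\delta}(\sqrt{\tau})$ is $O(1)$-Lipschitz in $\tau$) and the resulting shift in the mean is absorbed into the leading term---not, as your phrasing ``$\delta$ small enough that $\ldots=o(\epsilon)$'' suggests, into the slack.

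There is also a concrete gap: truncating the $\gamma$-range at $\gamma_{\max}\asymp\sqrt{\log m}$ is valid only for a \emph{fixed} $\boldsymbol{h}$, not uniformly. For data-dependent directions---e.g.\ $\boldsymbol{h}=\boldsymbol{a}_{j}$---one has $|\boldsymbol{a}_{j}^{\top}\boldsymbol{h}|/\|\boldsymbol{h}\|=\|\boldsymbol{a}_{j}\|\asymp\sqrt{n}$, so the indicators do not vanish for $\gamma$ anywhere in $(\sqrt{\log m},\sqrt{n})$, and neither side of the inequality is ``vanishingly small'' there. The paper handles this by running the geometric $\gamma$-grid all the way to $\gamma=m$ (the extra $O(\log m)$ grid points add only $\log\log m$ to the union-bound exponent, which is harmless) and disposing of $\gamma>m$ via the crude estimate $\tfrac{1}{m}\sum_{i}{\bf 1}_{\{|\boldsymbol{a}_{i}^{\top}\boldsymbol{h}|\geq m\|\boldsymbol{h}\|\}}\leq\tfrac{1}{m^{2}}\cdot\tfrac{1}{m}\sum_{i}(\boldsymbol{a}_{i}^{\top}\boldsymbol{h})^{2}/\|\boldsymbol{h}\|^{2}<1/m$, which forces every indicator to be zero. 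Your argument needs the same extension. (Minor point: the summands $\xi^{2}{\bf 1}_{\{|\xi|\geq\gamma\}}$ are sub-exponential but not ``bounded''; Bernstein still applies, but the appeal to Lemma~\ref{Lemma:I1_I2} is not merely cosmetic, since there the truncation is from above and boundedness is what drives the covering step.)
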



To connect this lemma with the last term of (\ref{eq:I1_I2_I3}),
we recognize that when $\gamma\leq\frac{\alpha_{z}^{\mathrm{lb}} \Vert \boldsymbol{z} \Vert }{\Vert \boldsymbol{h} \Vert }$,
one has
\begin{eqnarray}
	{\bf 1}_{\mathcal{E}_{1}^{i}\cap\mathcal{D}_{\gamma}^{i,2}} 
	& \leq & 
	{\bf 1}_{\left\{ \left|\boldsymbol{a}_{i}^{\top}\boldsymbol{h}\right|\geq\alpha_{z}^{\mathrm{lb}} \Vert \boldsymbol{z} \Vert \right\} }.
	\label{eq:I3-card}
\end{eqnarray}
The constraint
$\left|\frac{\boldsymbol{a}_{i}^{\top}\boldsymbol{h}}{\left\Vert
      \boldsymbol{h}\right\Vert
  }-\frac{2\boldsymbol{a}_{i}^{\top}\boldsymbol{z}}{\left\Vert
      \boldsymbol{h}\right\Vert }\right|\leq\gamma$ of
$\mathcal{D}_{\gamma}^{i,2}$ necessarily requires
\begin{equation}
	\frac{\left|\boldsymbol{a}_{i}^{\top}\boldsymbol{h}\right|}{\Vert \boldsymbol{h} \Vert }
	\geq \frac{2\left|\boldsymbol{a}_{i}^{\top}\boldsymbol{z}\right|} { \Vert \boldsymbol{h} \Vert }-\gamma
	\geq \frac{2\alpha_{z}^{\mathrm{lb}} \Vert \boldsymbol{z} \Vert } { \Vert \boldsymbol{h} \Vert }-\gamma
	\geq \frac{\alpha_{z}^{\mathrm{lb}} \Vert \boldsymbol{z} \Vert } { \Vert \boldsymbol{h} \Vert },
\end{equation}
where the last inequality comes from our assumption on $\gamma$.
With Lemma \ref{Lemma:I3} in place, (\ref{eq:I3-card}) immediately gives
%
\begin{eqnarray}
	\sum_{i=1}^{m}{\bf 1}_{\mathcal{E}_{1}^{i}\cap\mathcal{D}_{\gamma}^{i,2}} 
	& \leq & \frac{\left\Vert \boldsymbol{h}\right\Vert }{0.49\alpha_{z}^{\mathrm{lb}}\left\Vert \boldsymbol{z}\right\Vert }\exp\left(-0.485\left(\frac{\alpha_{z}^{\mathrm{lb}}\left\Vert \boldsymbol{z}\right\Vert }{\left\Vert \boldsymbol{h}\right\Vert }\right)^{2}\right)
		+ \frac{\epsilon\left\Vert \boldsymbol{h}\right\Vert ^{2}}{\left(\alpha_{z}^{\mathrm{lb}}\right)^{2}\left\Vert \boldsymbol{z}\right\Vert ^{2}}\nonumber \\
 	& \leq & \frac{1}{9800}\left(\frac{\left\Vert \boldsymbol{h}\right\Vert }{\alpha_{z}^{\mathrm{lb}}\left\Vert \boldsymbol{z}\right\Vert }\right)^{4}
		+ \frac{\epsilon}{\left(\alpha_{z}^{\mathrm{lb}}\right)^{2}}\left(\frac{\left\Vert \boldsymbol{h}\right\Vert }{\left\Vert \boldsymbol{z}\right\Vert }\right)^{2}
	\label{eq:card-I3}
\end{eqnarray}
as long as $\frac{\left\Vert \boldsymbol{h}\right\Vert }{\left\Vert
    \boldsymbol{z}\right\Vert
}\leq\frac{\alpha_{z}^{\mathrm{lb}}}{6}$, where the last inequality
uses the majorization
$\frac{1}{20000x^{4}}\geq\frac{1}{x}\exp\left(-0.485x^{2}\right)$
holding for any $x\geq6$.

In addition, on $\mathcal{E}_{1}^{i}\cap\mathcal{D}_{\gamma}^{i,2}$,
the amplitude of each summand can be bounded in such a way that
\begin{eqnarray}
	\frac{\left|\boldsymbol{a}_{i}^{\top}\boldsymbol{h}\right|^{3}}{\left|\boldsymbol{a}_{i}^{\top}\boldsymbol{z}\right|} 
	& \leq & \frac{\left|2\boldsymbol{a}_{i}^{\top}\boldsymbol{z}\right|+\gamma\left\Vert \boldsymbol{h}\right\Vert }{\left|\boldsymbol{a}_{i}^{\top}\boldsymbol{z}\right|}\left(2\alpha_{z}^{\mathrm{ub}}\left\Vert \boldsymbol{z}\right\Vert +\gamma\left\Vert \boldsymbol{h}\right\Vert \right)^{2}
	\label{eq:ub1}\\
 	& \leq & \left(2+\frac{\gamma}{\alpha_{z}^{\mathrm{lb}}}\frac{\left\Vert \boldsymbol{h}\right\Vert }{\left\Vert \boldsymbol{z}\right\Vert }\right)\left(2\alpha_{z}^{\mathrm{ub}} + \gamma\frac{\left\Vert \boldsymbol{h}\right\Vert }{\left\Vert \boldsymbol{z}\right\Vert }\right)^{2}\left\Vert \boldsymbol{z}\right\Vert ^{2},
	\label{eq:ub2}
\end{eqnarray}
where both inequalities are immediate consequences from the definitions
of $\mathcal{D}_{\gamma}^{i,2}$ and $\mathcal{E}_{1}^i$ (see (\ref{eq:defn-D_gamma-2})
and (\ref{eq:defn-E1})). Taking this together with the cardinality
bound (\ref{eq:card-I3}) and picking $\epsilon$ appropriately, we
get
\begin{align}
	\frac{1}{m}\sum_{i=1}^{m}\frac{\left|\boldsymbol{a}_{i}^{\top}\boldsymbol{h}\right|^{3}}{\left|\boldsymbol{a}_{i}^{\top}\boldsymbol{z}\right|}{\bf 1}_{\mathcal{E}_{1}^{i}\cap\mathcal{D}_{\gamma}^{i,2}} 
	 \leq  
	\left\{ \underset{\vartheta_{1}}{\underbrace{\frac{\left(2+\frac{\gamma}{\alpha_{z}^{\mathrm{lb}}}\frac{\left\Vert \boldsymbol{h}\right\Vert }{\left\Vert \boldsymbol{z}\right\Vert }\right)\left(2\alpha_{z}^{\mathrm{ub}}+\gamma\frac{\left\Vert \boldsymbol{h}\right\Vert }{\left\Vert \boldsymbol{z}\right\Vert }\right)^{2}}{9800\left(\alpha_{z}^{\mathrm{lb}}\right)^{4}}}}\frac{\left\Vert \boldsymbol{h}\right\Vert ^{2}}{\left\Vert \boldsymbol{z}\right\Vert ^{2}}
		+ \epsilon\right\} \left\Vert \boldsymbol{h}\right\Vert ^{2}.
	\label{eq:ub3}
\end{align}
Furthermore, under the condition that
\[
	\gamma\leq\alpha_{z}^{\mathrm{lb}}\frac{\left\Vert \boldsymbol{z}\right\Vert }{\left\Vert \boldsymbol{h}\right\Vert }
	\quad\text{and}\quad
	\frac{\left\Vert \boldsymbol{h}\right\Vert }{\left\Vert \boldsymbol{z}\right\Vert }\leq\frac{\sqrt{98}\left(\alpha_{z}^{\mathrm{lb}}\right)^{2}}{\sqrt{3}\left(2\alpha_{z}^{\mathrm{ub}}+\alpha_{z}^{\mathrm{lb}}\right)},
\]
one can simplify (\ref{eq:ub3}) by observing that $\vartheta_{1}\leq\frac{1}{100}$, which results in
%
%
%
\begin{eqnarray}
	\frac{1}{m}\sum_{i=1}^{m}\frac{\left|\boldsymbol{a}_{i}^{\top}\boldsymbol{h}\right|^{3}}{\left|\boldsymbol{a}_{i}^{\top}\boldsymbol{z}\right|}{\bf 1}_{\mathcal{E}_{1}^{i}\cap\mathcal{D}_{\gamma}^{i,2}} 
	& \leq & \left(\frac{1}{100}+\epsilon\right)\left\Vert \boldsymbol{h}\right\Vert ^{2}.
	\label{eq:I3}
\end{eqnarray}

Putting all preceding results in this subsection together reveals
that with probability exceeding $1-\exp\left(-\Omega\left(m\right)\right)$,
\begin{eqnarray}
	-\left\langle \boldsymbol{h}, \frac{1}{2m}\nabla\ell_{\mathrm{tr}}\left(\boldsymbol{z}\right)\right\rangle  
	& \geq & \left\{ 1.99-2\left(\zeta_{1}+\zeta_{2}\right)-\sqrt{8/\pi}\frac{\left\Vert \boldsymbol{h}\right\Vert }{\alpha_{z}^{\mathrm{lb}}\left\Vert \boldsymbol{z}\right\Vert }-3\epsilon\right\} \left\Vert \boldsymbol{h}\right\Vert ^{2}\nonumber \\
 	& \geq & \left\{ 1.99-2\left(\zeta_{1}+\zeta_{2}\right)-\sqrt{8/\pi}(3\alpha_{h})^{-1}-3\epsilon\right\} \Vert \boldsymbol{h} \Vert ^{2}
	\label{eq:regularity}
\end{eqnarray}
holds simultaneously over all $\boldsymbol{x}$ and $\boldsymbol{z}$
satisfying
\begin{equation}
	\frac{\left\Vert \boldsymbol{h}\right\Vert }{\left\Vert \boldsymbol{z}\right\Vert }
	\leq \min\left\{ \frac{\alpha_{z}^{\mathrm{lb}}}{3\alpha_{h}},\text{ }\frac{\alpha_{z}^{\mathrm{lb}}}{6},\text{ }\frac{\sqrt{98/3}\left(\alpha_{z}^{\mathrm{lb}}\right)^{2}}{2\alpha_{z}^{\mathrm{ub}}+\alpha_{z}^{\mathrm{lb}}},\frac{1}{11}\right\}
\end{equation}
as claimed in Proposition \ref{prop-regularity-noiseless}.

To conclude this section, we provide a tighter estimate about the norm
of the regularized gradient.

\begin{lemma}\label{Lemma:norm-score} Fix $\delta>0$, and assume that $y_i = (\boldsymbol{a}_i^{\top}\boldsymbol{x})^2 $. 
Suppose that $m\geq c_{0}n$ for some large constant $c_{0}>0$. There exist some
universal constants $c,C>0$ such that with probability at least $1-C\exp\left(-cm\right)$,
\begin{equation}
	\frac{1}{m}\big\|\nabla\ell_{\mathrm{tr}}\left(\boldsymbol{z}\right)\big\|\hspace{0.3em}
	\leq\hspace{0.3em}\left(1+\delta\right)\cdot4\sqrt{1.02+ 0.665 / \alpha_{h}}\left\Vert \boldsymbol{h}\right\Vert 
	\label{eq:UB-norm-grad}
\end{equation}
holds simultaneously for all $\boldsymbol{x}$, $\boldsymbol{z}\in \mathbb{R}^n$ satisfying
%
$	\frac{\left\Vert \boldsymbol{h}\right\Vert }{\left\Vert \boldsymbol{z}\right\Vert }
	\leq \min\Big\{ \frac{\alpha_{z}^{\mathrm{lb}}}{3\alpha_{h}},
		\frac{\alpha_{z}^{\mathrm{lb}}}{6},
		\frac{\sqrt{98/3}\left(\alpha_{z}^{\mathrm{lb}}\right)^{2}}{2\alpha_{z}^{\mathrm{ub}}+\alpha_{z}^{\mathrm{lb}}},
		\frac{1}{11}\Big\} .
$
%
\end{lemma}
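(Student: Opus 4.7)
The plan is to invoke the operator norm inequality $\|\nabla\ell_{\mathrm{tr}}(\boldsymbol{z})\| = \|\boldsymbol{A}^{\top}\boldsymbol{v}\|\leq \|\boldsymbol{A}\|\cdot\|\boldsymbol{v}\|$, where $v_i := 2(y_i-|\boldsymbol{a}_i^{\top}\boldsymbol{z}|^2)/(\boldsymbol{a}_i^{\top}\boldsymbol{z})\cdot\mathbf{1}_{\mathcal{E}_1^i\cap\mathcal{E}_2^i}$. The Marchenko--Pastur law yields $\|\boldsymbol{A}\|/\sqrt{m}\leq (1+\delta/2)\sqrt{1.02}$ on an event of probability at least $1-C\exp(-cm)$ provided $m/n$ is large enough. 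Hence the claim reduces to an estimate of the form $\|\boldsymbol{v}\|^2/m \leq 16(1+\delta)\bigl(1+0.652/\alpha_h\bigr)\|\boldsymbol{h}\|^2$, uniformly over all admissible $(\boldsymbol{x},\boldsymbol{z})$, whose square root combined with the Marchenko--Pastur factor produces the desired $4(1+\delta)\sqrt{1.02+0.665/\alpha_h}\|\boldsymbol{h}\|$.

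To control $\|\boldsymbol{v}\|^2/m$, I would reuse the identity from (\ref{eq:simplify}), $y_i - (\boldsymbol{a}_i^{\top}\boldsymbol{z})^2 = -(\boldsymbol{a}_i^{\top}\boldsymbol{h})(2\boldsymbol{a}_i^{\top}\boldsymbol{z}-\boldsymbol{a}_i^{\top}\boldsymbol{h})$, to express $v_i = \bigl[-4(\boldsymbol{a}_i^{\top}\boldsymbol{h}) + 2(\boldsymbol{a}_i^{\top}\boldsymbol{h})^2/(\boldsymbol{a}_i^{\top}\boldsymbol{z})\bigr]\mathbf{1}_{\mathcal{E}_1^i\cap\mathcal{E}_2^i}$ and expand
\[
\tfrac{1}{m}\|\boldsymbol{v}\|^2 \;=\; \tfrac{16}{m}\sum_i (\boldsymbol{a}_i^{\top}\boldsymbol{h})^2 \mathbf{1}
\;-\; \tfrac{16}{m}\sum_i \tfrac{(\boldsymbol{a}_i^{\top}\boldsymbol{h})^3}{\boldsymbol{a}_i^{\top}\boldsymbol{z}} \mathbf{1}
\;+\; \tfrac{4}{m}\sum_i \tfrac{(\boldsymbol{a}_i^{\top}\boldsymbol{h})^4}{(\boldsymbol{a}_i^{\top}\boldsymbol{z})^2} \mathbf{1}.
\]
The first and dominant contribution is at most $16\cdot\tfrac{1}{m}\sum(\boldsymbol{a}_i^{\top}\boldsymbol{h})^2 \leq 16(1+\delta)\|\boldsymbol{h}\|^2$ by standard concentration of Gaussian quadratic forms. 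For the cubic and quartic terms I would recycle the analysis in Section \ref{sub:Expectation-and-Concentration}: via the inclusion $\mathcal{E}_2^i\subseteq\mathcal{E}_4^i$ of (\ref{eq:Event2}) and Lemma \ref{Lemma:superset-D}, split $\mathbf{1}_{\mathcal{E}_1^i\cap\mathcal{E}_2^i} \leq \mathbf{1}_{\mathcal{E}_1^i\cap\mathcal{D}_{\gamma_4}^{i,1}} + \mathbf{1}_{\mathcal{E}_1^i\cap\mathcal{D}_{\gamma_4}^{i,2}}$ with $\gamma_4 = 3\alpha_h$. On $\mathcal{D}_{\gamma_4}^{i,1}\cap\mathcal{E}_1^i$ one has $|\boldsymbol{a}_i^{\top}\boldsymbol{h}|\leq \gamma_4\|\boldsymbol{h}\|$ and $|\boldsymbol{a}_i^{\top}\boldsymbol{z}|\geq \alpha_z^{\mathrm{lb}}\|\boldsymbol{z}\|$; under the hypothesis $\|\boldsymbol{h}\|/\|\boldsymbol{z}\|\leq \alpha_z^{\mathrm{lb}}/(3\alpha_h)$ this forces $|\boldsymbol{a}_i^{\top}\boldsymbol{h}|/|\boldsymbol{a}_i^{\top}\boldsymbol{z}|\leq 1$, so Lemma \ref{lemma:I2} (applied just as in the derivation of (\ref{eq:I2})) collapses both the cubic and quartic contributions to $O(\|\boldsymbol{h}\|^2/\alpha_h)$. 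On $\mathcal{D}_{\gamma_4}^{i,2}\cap\mathcal{E}_1^i$ the cardinality bound (\ref{eq:card-I3}) from Lemma \ref{Lemma:I3}, combined with the pointwise bound (\ref{eq:ub2}), yields an even smaller contribution.

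The main obstacle I expect is pinning down the precise numerical constant $0.665$. A naive bound on the quartic remainder alone would scale as $1/\alpha_h^2$; the correct $1/\alpha_h$ scaling inside the square root emerges only after carefully tracking the cubic contribution (which is the one truly proportional to $1/\alpha_h$ once the hypothesis is applied) and showing that the cross term cannot inflate the leading coefficient of $16(\boldsymbol{a}_i^{\top}\boldsymbol{h})^2$. A clean way to organize this is via an AM--GM grouping of the cubic and quartic remainders analogous to the bookkeeping between (\ref{eq:I1_I2_I3}) and (\ref{eq:regularity}) in the $\mathsf{RC}$ proof. The only other technical point is uniformity over all $(\boldsymbol{x},\boldsymbol{z})$, which is handled by an $\varepsilon$-net argument on the unit sphere together with the same Gaussian concentration invoked above, and requires only $m \geq c_0 n$.
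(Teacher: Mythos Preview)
Your approach is essentially the paper's: bound $\|\boldsymbol{A}^{\top}\boldsymbol{v}\|\le\|\boldsymbol{A}\|\,\|\boldsymbol{v}\|$, control $\|\boldsymbol{A}\|/\sqrt{m}$ by standard random-matrix bounds, expand $v_i^2$ via the identity (\ref{eq:simplify}), and split the cubic/quartic remainders over $\mathcal{D}_{\gamma_4}^{i,1}$ and $\mathcal{D}_{\gamma_4}^{i,2}$ using Lemmas \ref{lemma:I2} and \ref{Lemma:I3}.

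Two small corrections in your bookkeeping. First, the factor $1.02$ does \emph{not} come from the Marchenko--Pastur bound; the paper simply uses $\|\boldsymbol{A}\|/\sqrt{m}\le 1+\delta$. The $1.02$ arises inside the $\|\boldsymbol{v}\|^2$ estimate: the $\mathcal{D}_{\gamma_4}^{i,2}$ piece, controlled exactly as in (\ref{eq:I3}), contributes an additive $\tfrac{7}{100}\|\boldsymbol{h}\|^2$ on top of the leading $4\|\boldsymbol{h}\|^2$, so $\tfrac{1}{4m}\|\boldsymbol{v}\|^2\le(1+\delta)\bigl(4+\tfrac{5\sqrt{8/\pi}}{3\alpha_h}+\tfrac{7}{100}\bigr)\|\boldsymbol{h}\|^2$, and $(4.07)/4\approx 1.02$ while $5\sqrt{8/\pi}/12\approx 0.665$. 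Second, rather than an AM--GM regrouping, the paper first passes to $|v_i|\le 2\bigl(2|\boldsymbol{a}_i^{\top}\boldsymbol{h}|+|\boldsymbol{a}_i^{\top}\boldsymbol{h}|^2/|\boldsymbol{a}_i^{\top}\boldsymbol{z}|\bigr)$ and then factors the remainder as $\bigl(4+\tfrac{|\boldsymbol{a}_i^{\top}\boldsymbol{h}|}{|\boldsymbol{a}_i^{\top}\boldsymbol{z}|}\bigr)\tfrac{|\boldsymbol{a}_i^{\top}\boldsymbol{h}|^3}{|\boldsymbol{a}_i^{\top}\boldsymbol{z}|}$; since the ratio is $\le 1$ on $\mathcal{E}_1^i\cap\mathcal{D}_{\gamma_4}^{i,1}$ and $\le 3$ on $\mathcal{E}_1^i\cap\mathcal{D}_{\gamma_4}^{i,2}$, this reduces everything to the cubic bounds (\ref{eq:I2}) and (\ref{eq:I3}) with coefficients $5$ and $7$, which is what produces the clean constant.
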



Lemma \ref{Lemma:norm-score} complements the preceding arguments by allowing us to identify
a concrete plausible range for the step size. Specifically, putting Lemma \ref{Lemma:norm-score} and Proposition
\ref{prop-regularity-noiseless} together suggests that 
\begin{equation}
	-\Big\langle\boldsymbol{h},\frac{1}{m}\nabla\ell_{\mathrm{tr}}\left(\boldsymbol{z}\right)\Big\rangle\hspace{0.3em}
	\geq \hspace{0.3em}\frac{2\left\{ 1.99 - 2\left(\zeta_{1} + \zeta_{2}\right) - \sqrt{8/(9\pi)}\alpha_{h}^{-1} - \epsilon\right\} }{\left( 1 + \delta\right)^{2} \cdot 16\left( 1.02 + 0.665 / \alpha_{h} \right)}\left\Vert \frac{1}{m}\nabla\ell_{\mathrm{tr}}\left(\boldsymbol{z}\right)\right\Vert ^{2}.
\end{equation}
Taking $\epsilon$ and $\delta$ to be sufficiently small we arrive
at a feasible range (cf. Definition (\ref{eq:Regularity-Truncate-Original}))
\begin{equation}
	\mu \leq \frac{ 0.994 - \zeta_{1} - \zeta_{2} - \sqrt{ 2 / (9\pi) }\alpha_{h}^{-1}} {2 \left(1.02 + 0.665/\alpha_{h} \right)} := \mu_0.
	\label{eq:feasible-step-size}
\end{equation}
%
%
%
This establishes Proposition \ref{prop-Contraction} and in turn Theorem
\ref{theorem-Truncated-WF} when $\mu_{t}$ is taken to be a fixed
constant. 

To justify the contraction under backtracking line search, it suffices
to prove that the resulting step size falls within this range
(\ref{eq:feasible-step-size}), which we defer to Appendix
\ref{sec:Backtracking-line-search}.

\section{Stability \label{sec:Proof-of-Theorem-Noisy}}

This section goes in the direction of establishing stability
guarantees of TWF. We concentrate on the iterative gradient stage, and
defer the analysis for the initialization stage to Appendix
\ref{proof-truncated-spectral}.

Before continuing, we collect two bounds that we shall use several times.
The first is the observation that
\begin{align}
	\frac{1}{m}\|\boldsymbol{y}-\mathcal{A}(\boldsymbol{z}\boldsymbol{z}^{\top})\|_1 
	&\leq \frac{1}{m}\|\mathcal{A}(\boldsymbol{x}\boldsymbol{x}^{\top}-\boldsymbol{z}\boldsymbol{z}^{\top})\|_1 + \frac{1}{m}\|\boldsymbol{\eta}\|_1	 \nonumber\\
	&\lesssim \|\boldsymbol{h}\| \|\boldsymbol{z} \| + \frac{1}{m}\|\boldsymbol{\eta}\|_1
	\lesssim  \|\boldsymbol{h}\| \| \boldsymbol{z} \| + \frac{1}{\sqrt{m}} \| \boldsymbol{\eta} \|, 
\end{align}
where the last inequality follows from Cauchy-Schwarz.  Setting 
\[
v_i := 2\frac{y_i - | \boldsymbol{a}_i^{\top} \boldsymbol{z} |^2
}{\boldsymbol{a}_i^{\top} \boldsymbol{z}} {\bf 1}_{\mathcal{E}_1^i
  \cap \mathcal{E}_2^i}
\]
as usual, this inequality together with the trimming rules
$\mathcal{E}_1^i$ and $\mathcal{E}_2^i$ gives
\begin{equation}
\begin{array}{c}
	|v_i|
	\lesssim \|\boldsymbol{h}\|  + \frac{\Vert \boldsymbol{\eta} \Vert }{  \sqrt{m} \| \boldsymbol{z} \| }\\
	\Longrightarrow \quad \left\| \frac{1}{m} \nabla\ell_{\mathrm{tr}}(\boldsymbol{z}) \right\| \text{ }=\text{ } \frac{1}{m} \| \boldsymbol{A}^{\top} \boldsymbol{v} \| 
	~ \leq~ \left\| \frac{1}{\sqrt{m}}\boldsymbol{A} \right\|  \frac{ 1 }{ \sqrt{m} } \|\boldsymbol{v}\|
	~ \overset{\text{(i)}}{\lesssim}~ \frac{1}{\sqrt{m}} \| \boldsymbol{v} \| 
	~\lesssim~  \| \boldsymbol{h} \| + \frac{\Vert \boldsymbol{\eta} \Vert }{  \sqrt{m} \| \boldsymbol{z} \| }, 
\end{array}
	\label{eq:norm_noisy}
\end{equation}
where (i) arises from \cite[Corollary 5.35]{Vershynin2012}.


As discussed in Section \ref{sec:Why-it-works}, the estimation error is contractive if 
$-\frac{1}{m} \nabla \ell_{\mathrm{tr}}\left(\boldsymbol{z}\right)$ 
satisfies the regularity condition. With (\ref{eq:norm_noisy}) in place, $\mathsf{RC}$ reduces to
\begin{eqnarray}
	-\frac{1}{m}\left\langle \nabla \ell_{\mathrm{tr}}\left(\boldsymbol{z}\right),\boldsymbol{h}\right\rangle  
	& \gtrsim & ~ \| \boldsymbol{h} \|^2.
	\label{eq:RC-stability}
\end{eqnarray}
Unfortunately, (\ref{eq:RC-stability}) does not hold for all
$\boldsymbol{z}$ within the neighborhood of $\boldsymbol{x}$ due to
the existence of noise.  Instead we establish the following: 
\begin{itemize}
	\item
 	The condition (\ref{eq:RC-stability}) holds for all $\boldsymbol{h}$ obeying
	\begin{equation}
		c_{3}\frac{\left\Vert
    	\boldsymbol{\eta}\right\Vert /\sqrt{m}}{\left\Vert
    	\boldsymbol{z}\right\Vert } \leq \Vert \boldsymbol{h}\Vert
	\leq c_4\Vert \boldsymbol{x} \Vert 
		\label{eq:case1}
	\end{equation}
	for some constants $c_3, c_4>0$ (we shall call it {\em Regime
          1}); this will be proved later. In this regime, the reasoning
        in Section \ref{sec:Why-it-works} gives %
	\begin{equation}
		\mathrm{dist} \Big(\boldsymbol{z}+ \frac{\mu}{m}  \nabla \ell_{\mathrm{tr}}(\boldsymbol{z}), ~\boldsymbol{x} \Big)  \leq  (1 - \rho) \mathrm{dist}(\boldsymbol{z}, \boldsymbol{x} )
		\label{eq:case1error}
	\end{equation}
	for some appropriate constants $\mu, \rho>0$ and, hence, error
        contraction occurs as in the noiseless setting. 
      \item However, once the iterate enters {\em Regime 2} where
	\begin{equation}
		\left\Vert \boldsymbol{h}\right\Vert \leq \frac{c_{3}\left\Vert
    		\boldsymbol{\eta}\right\Vert }{ \sqrt{m} \left\Vert
    		\boldsymbol{z}\right\Vert }, 
		\label{eq:case2}
	\end{equation}
	the estimation error might no longer be contractive.
        Fortunately, in this regime each move by
        $\frac{\mu}{m}\nabla {\ell}_{\mathrm{tr}}
        \left(\boldsymbol{z}\right)$
        is of size at most
        $O(\frac{\Vert \boldsymbol{\eta} \Vert}{ \sqrt{m} \|
          \boldsymbol{z} \| })$,
        compare (\ref{eq:norm_noisy}).  As a result, at each iteration
        the estimation error cannot increase by more than a numerical
        constant times
        $\frac{\Vert\boldsymbol{\eta} \Vert }{ \sqrt{m} \|
          \boldsymbol{z} \|}$
        before possibly jumping out (of this regime). Therefore,
	\begin{equation}
	\mathrm{dist} \big(\boldsymbol{z}+ \frac{\mu}{m}  \nabla \ell_{\mathrm{tr}}(\boldsymbol{z}), ~\boldsymbol{x} \big)  
	 \leq  c_5\frac{\|\boldsymbol{\eta}\|}{\sqrt{m}\|\boldsymbol{x}\|}  
	\label{eq:case2error}
	\end{equation}
	for some constant $c_5>0$.  Moreover, as long as
        $\|\boldsymbol{\eta}\|_{\infty} / \|\boldsymbol{x}\|^2$ is
        sufficiently small, one can guarantee that
        $c_5\frac{\|\boldsymbol{\eta}\|}{\sqrt{m}\|\boldsymbol{x}\|}
        \leq c_5 \frac{\| \boldsymbol{\eta}
          \|_{\infty}}{\|\boldsymbol{x}\|} \leq
        c_4\|\boldsymbol{x}\|$.
        In other words, if the iterate jumps out of Regime 2, it will
        still fall within Regime 1. 
\end{itemize}
To summarize, suppose the initial guess $\boldsymbol{z}^{(0)}$ obeys
$\mathrm{dist}(\boldsymbol{z}^{(0)}, \boldsymbol{x}) \leq c_4 \|
\boldsymbol{x} \|$.
Then the estimation error will shrink at a geometric rate $1-\rho$
before it enters Regime 2. Afterwards, $\boldsymbol{z}^{(t)}$ will
either stay within Regime 2 or jump back and forth between Regimes 1
and 2. Because of the bounds (\ref{eq:case2error}) and
(\ref{eq:case1error}), the estimation errors will never exceed the
order of $\frac{\|\boldsymbol{\eta}\|}{\sqrt{m}\|\boldsymbol{x}\|}$
from then on.
Putting these together establishes
(\ref{eq:noisy-converge}), namely, the first part of the theorem.

Below we justify the condition (\ref{eq:RC-stability}) for Regime 1, for which we  
   start by gathering additional properties of the trimming
rules. By Cauchy-Schwarz,  $\frac{1}{m}\left\Vert
  \boldsymbol{\eta}\right\Vert _{1}\leq\frac{1}{\sqrt{m}}\left\Vert
  \boldsymbol{\eta}\right\Vert \leq\frac{1}{c_{3}}\left\Vert
  \boldsymbol{h}\right\Vert \left\Vert \boldsymbol{z}\right\Vert $.
When $c_{3}$ is sufficiently large, applying Lemmas
\ref{lemma:RIP-Candes} and \ref{lemma: bound-h} gives
\begin{align}
\begin{array}{ll}
	\frac{1}{m}\sum\nolimits_{l=1}^{m}\left|y_{l}-\left|\boldsymbol{a}_{l}^{\top}\boldsymbol{z}\right|^{2}\right| 
		 \leq  \frac{1}{m}\left\Vert \mathcal{A}\left(\boldsymbol{x}\boldsymbol{x}^{\top}-\boldsymbol{z}\boldsymbol{z}^{\top}\right)\right\Vert _{1}+\frac{1}{m}\left\Vert \boldsymbol{\eta}\right\Vert _{1}
		\leq 2.98 \Vert \boldsymbol{h} \Vert \Vert \boldsymbol{z} \Vert ; \\
	\frac{1}{m}\sum\nolimits_{l=1}^{m}\left|y_{l}-\left|\boldsymbol{a}_{l}^{\top}\boldsymbol{z}\right|^{2}\right| 
		 \geq  \frac{1}{m}\left\Vert \mathcal{A}\left(\boldsymbol{x}\boldsymbol{x}^{\top}-\boldsymbol{z}\boldsymbol{z}^{\top}\right)\right\Vert _{1}-\frac{1}{m}\left\Vert \boldsymbol{\eta}\right\Vert _{1}
		\geq 1.151 \Vert \boldsymbol{h} \Vert \Vert \boldsymbol{z} \Vert .
	\end{array}
	\label{eq:A_UB}
\end{align}
From now on, we shall denote
$
	\tilde{\mathcal{E}}_{2}^{i}:=\Big\{\left||\boldsymbol{a}_{i}^{\top}\boldsymbol{x}|^{2}-|\boldsymbol{a}_{i}^{\top}\boldsymbol{z}|^{2}\right|\leq\frac{\alpha_{h}}{m}\left\Vert \boldsymbol{y}-\mathcal{A}\left(\boldsymbol{z}\boldsymbol{z}^{\top}\right)\right\Vert _{1}\frac{|\boldsymbol{a}_{i}^{\top}\boldsymbol{z}|}{\left\Vert \boldsymbol{z}\right\Vert } \Big\}
$
to differentiate from $\mathcal{E}_{2}^{i}$. 
For any small constant $\epsilon>0$, we introduce the index set $\mathcal{G}:=\left\{ i: |\eta_{i}|\leq C_{\epsilon}\left\Vert \boldsymbol{\eta}\right\Vert /\sqrt{m}\right\} $
that satisfies $\left|\mathcal{G}\right| = (1-\epsilon)m$.  Note that $C_{\epsilon}$ must be bounded as $n$ scales, since  
\begin{equation}
	\| \boldsymbol{\eta} \|^2 ~\geq~ \sum\nolimits_{i\notin \mathcal{G}} \eta_i^2 ~\geq~ (m- |\mathcal{G}|) \cdot  C_{\epsilon}^2 \|\boldsymbol{\eta}\|^2/m  ~\geq~ \epsilon C_{\epsilon}^2 \| \boldsymbol{\eta} \|^2
	\quad \Rightarrow \quad
	C_{\epsilon} \leq 1 / \sqrt{\epsilon}.
\end{equation}

We are now ready to analyze the regularized gradient, which we separate into several components as follows
%
%
\begin{align}
	&\nabla_{\mathrm{tr}}\mathcal{\ell}\left(\boldsymbol{z}\right) 
	 =  \text{ }
	\underset{:=\nabla_{\mathrm{tr}}^{\mathrm{clean}}\mathcal{\ell}\left(\boldsymbol{z}\right)}{\underbrace{2\sum_{i\in\mathcal{G}}\frac{\left|\boldsymbol{a}_{i}^{\top}\boldsymbol{x}\right|^{2}-\left|\boldsymbol{a}_{i}^{\top}\boldsymbol{z}\right|^{2}}{\boldsymbol{a}_{i}^{\top}\boldsymbol{z}}\boldsymbol{a}_{i}{\bf 1}_{\mathcal{E}_{1}^{i}\cap\mathcal{E}_{2}^{i}}+2\sum_{i\notin{\mathcal{G}}}\frac{\left|\boldsymbol{a}_{i}^{\top}\boldsymbol{x}\right|^{2}-\left|\boldsymbol{a}_{i}^{\top}\boldsymbol{z}\right|^{2}}{\boldsymbol{a}_{i}^{\top}\boldsymbol{z}}\boldsymbol{a}_{i}{\bf 1}_{\mathcal{E}_{1}^{i}\cap\tilde{\mathcal{E}}_{2}^{i}}}} \nonumber\\
 	& \quad 
	+~ \underset{:=\nabla_{\mathrm{tr}}^{\mathrm{noise}}\mathcal{\ell}\left(\boldsymbol{z}\right)}{\underbrace{2\sum_{i\in\mathcal{G}}\frac{\eta_{i}}{\boldsymbol{a}_{i}^{\top}\boldsymbol{z}}\boldsymbol{a}_{i}{\bf 1}_{\mathcal{E}_{1}^{i}\cap\mathcal{E}_{2}^{i}}}}
	+\text{ }\underset{:=\nabla_{\mathrm{tr}}^{\mathrm{extra}}\mathcal{\ell}\left(\boldsymbol{z}\right)}{\underbrace{2\sum_{i\notin{\mathcal{G}}}\left(\frac{y_{i}-\left|\boldsymbol{a}_{i}^{\top}\boldsymbol{z}\right|^{2}}{\boldsymbol{a}_{i}^{\top}\boldsymbol{z}}{\bf 1}_{\mathcal{E}_{1}^{i}\cap\mathcal{E}_{2}^{i}}-\frac{\left|\boldsymbol{a}_{i}^{\top}\boldsymbol{x}\right|^{2}-\left|\boldsymbol{a}_{i}^{\top}\boldsymbol{z}\right|^{2}}{\boldsymbol{a}_{i}^{\top}\boldsymbol{z}}{\bf 1}_{\mathcal{E}_{1}^{i}\cap\tilde{\mathcal{E}}_{2}^{i}}\right)\boldsymbol{a}_{i}}}.
\end{align}
\begin{itemize}
\item
For each index $i\in\mathcal{G}$,  the inclusion property (\ref{eq:Event2}) 
(i.e. $\mathcal{E}_{3}^{i}  \subseteq  \mathcal{E}_{2}^{i} \subseteq \mathcal{E}_{4}^{i}$) holds. To see this, observe that 
\[
	\big|y_i - |\boldsymbol{a}_i^{\top}\boldsymbol{z}|^2 \big| \in \Big[ \big||\boldsymbol{a}_i^{\top}\boldsymbol{x}|^2 - |\boldsymbol{a}_i^{\top}\boldsymbol{z}|^2 \big| \pm |\eta_i| \Big]. 
\]
Since $|\eta_i| \leq C_{\epsilon}\|\boldsymbol{\eta}\|/\sqrt{m} \ll \|\boldsymbol{h}\|\|\boldsymbol{z}\|$ when $c_3$ is sufficiently large,  one can derive the inclusion (\ref{eq:Event2})  immediately from (\ref{eq:A_UB}). As a result, all the proof arguments for Proposition
\ref{prop-regularity-noiseless} carry over to
$\nabla_{\mathrm{tr}}^{\mathrm{clean}}\mathcal{\ell}\left(\boldsymbol{z}\right)$,
suggesting that
\begin{equation}
	-\Big\langle\boldsymbol{h}, \frac{1}{m}\nabla_{\mathrm{tr}}^{\mathrm{clean}}\ell\left(\boldsymbol{z}\right)\Big\rangle\hspace{0.3em}
	\geq \hspace{0.3em}2\left\{ 1.99 - 2\left(\zeta_{1}+\zeta_{2}\right) - \sqrt{8/(9\pi)}\alpha_{h}^{-1} - \epsilon\right\} 
		\Vert \boldsymbol{h} \Vert ^{2}.
	\label{eq:regularity-noisy}
\end{equation}

\item
Next, letting
$w_{i}=\frac{2\eta_{i}}{\boldsymbol{a}_{i}^{\top}\boldsymbol{z}}{\bf
  1}_{\mathcal{E}_{1}^{i}\cap\mathcal{E}_{2}^{i}} {\bf 1}_{\{i\in\mathcal{G}\}}$, we see that for
any constant $\delta>0$, the noise component obeys
\begin{align}
	\left\Vert \frac{1}{m}\nabla_{\mathrm{tr}}^{\mathrm{noise}}\ell (\boldsymbol{z}) \right\Vert  
	 &~=~  \left\Vert \frac{1}{m}\boldsymbol{A}^{\top} \boldsymbol{w}\right\Vert 
		\text{ }\leq\text{ } \left\Vert \frac{1}{\sqrt{m}}\boldsymbol{A} \right\Vert \left\Vert \frac{1}{\sqrt{m}}\boldsymbol{w}\right\Vert \nonumber\\
		&~\overset{\text{(ii)}}{\leq}~  \frac{1+\delta}{\sqrt{m}} \Vert \boldsymbol{w} \Vert 
 	 \text{ }\leq\text{ }  (1 + \delta) \frac{ 2\Vert \boldsymbol{\eta} \Vert /\sqrt{m}}{ \alpha_z^{\mathrm{lb}}\Vert \boldsymbol{z} \Vert },
	\label{eq:noise-grad-norm}
\end{align}
provided that $m/n$ is sufficiently large. Here, (ii) arises from \cite[Corollary 5.35]{Vershynin2012}, and the
last inequality is a consequence of the upper estimate
\begin{equation}
	\Vert \boldsymbol{w} \Vert ^{2}\text{ } 
		\leq 4\sum_{i=1}^{m} \frac{ \left|\eta_{i}\right|^{2} }{ (\boldsymbol{a}_{i}^{\top}\boldsymbol{z})^{2} }{\bf 1}_{\mathcal{E}_{1}^{i}\cap\mathcal{E}_{2}^{i}}
	\text{ }\leq\text{ } 4 \sum_{i=1}^{m}\frac{|\eta_{i} |^{2}}{(\alpha_{z}^{\mathrm{lb}}\left\Vert \boldsymbol{z}\right\Vert )^{2}}
	\text{ }=\text{ } \frac{ 4\left\Vert \boldsymbol{\eta}\right\Vert ^{2}}{ (\alpha_{z}^{\mathrm{lb}} \Vert \boldsymbol{z} \Vert )^{2}}.
\end{equation}
In turn, this immediately gives
\begin{align}
	\left|\Big\langle\boldsymbol{h},\frac{1}{m} \nabla_{\mathrm{tr}}^{\mathrm{noise}}\ell \left(\boldsymbol{z}\right)\Big\rangle\right| 
	 ~\leq~ \left\Vert \boldsymbol{h}\right\Vert \left\Vert \frac{1}{m} \nabla_{\mathrm{tr}}^{\mathrm{noise}} \ell \left(\boldsymbol{z}\right)\right\Vert 
	\text{ }\leq\text{ } \frac{ 2\left(1 + \delta\right) }{ \alpha_{z}^{\mathrm{lb}} } \frac{ \left\Vert \boldsymbol{\eta}\right\Vert }{ \sqrt{m} \Vert \boldsymbol{z} \Vert } \Vert \boldsymbol{h} \Vert.
\end{align}

\item 
We now turn to the last term $\nabla_{\mathrm{tr}}^{\mathrm{extra}}\mathcal{\ell}\left(\boldsymbol{z}\right)$. 
According to the definition of $\mathcal{E}_{2}^{i}$ and
$\tilde{\mathcal{E}}_{2}^{i}$ as well as the property (\ref{eq:A_UB}),
the weight
$q_{i}:=2\Big(\frac{y_{i}-\left|\boldsymbol{a}_{i}^{\top}\boldsymbol{z}\right|^{2}}{\boldsymbol{a}_{i}^{\top}\boldsymbol{z}}{\bf 1}_{\mathcal{E}_{1}^{i}\cap\mathcal{E}_{2}^{i}}-\frac{\left|\boldsymbol{a}_{i}^{\top}\boldsymbol{x}\right|^{2}-\left|\boldsymbol{a}_{i}^{\top}\boldsymbol{z}\right|^{2}}{\boldsymbol{a}_{i}^{\top}\boldsymbol{z}}{\bf 1}_{\mathcal{E}_{1}^{i}\cap\tilde{\mathcal{E}}_{2}^{i}}\Big){\bf 1}_{\{i\notin\mathcal{G}\}}$
is bounded in magnitude  by $6\|\boldsymbol{h}\|$.  This gives
\[
	\left\Vert \boldsymbol{q}\right\Vert \leq\sqrt{m-\left|{\mathcal{G}}\right|}\cdot6\|\boldsymbol{h}\|\leq6 \sqrt{\epsilon m}\|\boldsymbol{h}\|,
\]
%
and hence
\begin{equation}
	\Big|\Big\langle\frac{1}{m}\nabla_{\mathrm{tr}}^{\mathrm{extra}}\mathcal{\ell}\left(\boldsymbol{z}\right),\boldsymbol{h}\Big\rangle\Big|\leq\|\boldsymbol{h}\|\cdot\big\|\frac{1}{m}\nabla_{\mathrm{tr}}^{\mathrm{extra}}\mathcal{\ell}\left(\boldsymbol{z}\right)\big\|
	= \frac{1}{m}\|\boldsymbol{h}\|\cdot\big\|\boldsymbol{A}^{\top}\boldsymbol{q}\big\|
	\leq 6\left(1+\delta\right)\sqrt{\epsilon}\|\boldsymbol{h}\|^{2}.
\end{equation}
\end{itemize}
Taking the above bounds together yields
\begin{align*}
	&-\frac{1}{m}\left\langle \nabla \ell_{\mathrm{tr}} \left(\boldsymbol{z}\right),\boldsymbol{h}\right\rangle  
	~ \geq ~ \text{ }2 \left\{ 1.99 - 2\left(\zeta_{1}+\zeta_{2}\right) - \sqrt{\frac{8}{9\pi}}\frac{1}{\alpha_{h}} - 6(1+\delta)\sqrt{\epsilon} - \epsilon\right\} 
		\left\Vert \boldsymbol{h} \right\Vert ^{2} \\
	&\qquad\qquad\qquad\qquad - \frac{2\left(1+\delta\right)}{\alpha_{z}^{\mathrm{lb}}}\frac{\left\Vert \boldsymbol{\eta}\right\Vert }{\sqrt{m}\left\Vert \boldsymbol{z}\right\Vert } \Vert \boldsymbol{h} \Vert .
\end{align*}
Since $\Vert \boldsymbol{h} \Vert \geq c_{3}\frac{\Vert \boldsymbol{\eta} \Vert }{\sqrt{m} \Vert \boldsymbol{z} \Vert }$
for some large constant $c_3 > 0$, setting $\epsilon$ to be small one obtains 
\begin{eqnarray}
	-\frac{1}{m}\left\langle \nabla \ell_{\mathrm{tr}}\left(\boldsymbol{z}\right),\boldsymbol{h}\right\rangle  
	& \geq & \text{ } 
	2\left\{ 1.95-2\left(\zeta_{1}+\zeta_{2}\right)-\sqrt{8/(9\pi)}\alpha_{h}^{-1}\right\} 
	\left\Vert \boldsymbol{h}\right\Vert ^{2}
\end{eqnarray}
%
for all $\boldsymbol{h}$ obeying 
\[
	\frac{c_{3} \Vert \boldsymbol{\eta} \Vert /\sqrt{m}}{\Vert \boldsymbol{z} \Vert }\text{ }\leq\text{ } \Vert \boldsymbol{h} \Vert 
	\text{ }\leq\text{ }   \min\left\{ \frac{1}{11},\frac{\alpha_{z}^{\mathrm{lb}}}{3\alpha_{h}},\text{ }\frac{\alpha_{z}^{\mathrm{lb}}}{6},\text{ }\frac{\sqrt{98/3}\left(\alpha_{z}^{\mathrm{lb}}\right)^{2}}{2\alpha_{z}^{\mathrm{ub}}
		+ \alpha_{z}^{\mathrm{lb}}}\right\} \Vert \boldsymbol{z} \Vert, 
\]
which finishes the proof of Theorem \ref{theorem-Truncated-WF-noisy} for general $\boldsymbol{\eta}$.


Up until now, we have established the theorem for general
$\boldsymbol{\eta}$, and it remains to specialize it to the Poisson
model. Standard concentration results, which we omit, give
\begin{equation}
	\frac{1}{m}\left\Vert \boldsymbol{\eta}\right\Vert ^{2}\approx\frac{1}{m}\sum_{i=1}^{m}\mathbb{E}\left[\eta_{i}^{2}\right]=\frac{1}{m}\sum_{i=1}^{m}\left(\boldsymbol{a}_{i}^{\top}\boldsymbol{x}\right)^{2}\approx\|\boldsymbol{x}\|^{2}
	\label{eq:eta_2}
\end{equation}
with high probability. Substitution into (\ref{eq:noisy-converge}) completes the proof.

\section{Minimax lower bound \label{sec:proof-of-Minimax-Lower-Boundsi}}

The goal of this section is to establish the minimax lower bound given
in Theorem \ref{theorem-converse}. For notational simplicity, we
denote by $\mathbb{P}\left(\boldsymbol{y}\mid\boldsymbol{w}\right)$
the likelihood of $y_{i} \stackrel{\text{ind.}}{\sim}
\mathsf{Poisson}(|\boldsymbol{a}_{i}^{\top}\boldsymbol{w}|^{2})$, $1
\leq i \leq m$ conditional on $\{\boldsymbol{a}_i\}$. For any two
probability measures $P$ and $Q$, we denote by $\mathsf{KL}\left(P\|Q\right)$ the Kullback\textendash{}Leibler (KL) 
divergence between them:
\begin{equation}
	\mathsf{KL}\left(P\|Q\right):={\displaystyle \int}\log\left(\frac{\mathrm{d}P}{\mathrm{d}Q}\right)\mathrm{d}P,
	\label{eq:KL-chi2-defn}
\end{equation}

The basic idea is to adopt the general reduction scheme discussed
in \cite[Section 2.2]{tsybakov2009introduction}, which amounts to
finding a finite collection of hypotheses that are minimally separated.
Below we gather one result useful for constructing and analyzing
such hypotheses. 

\begin{lemma}\label{lemma:MinimaxConstruction}
  Suppose that
  $\boldsymbol{a}_{i}\sim\mathcal{N}\left({\bf
      0},\boldsymbol{I}_{n}\right)$,
  $n$ is sufficiently large, and $m=\kappa n$ for some sufficiently
  large constant $\kappa > 0$.  Consider any
    $\boldsymbol{x}\in\mathbb{R}^{n} \backslash\{{\bf 0}\}$. On an
    event $\mathcal{B}$ of probability approaching one, there
    exists a collection $\mathcal{M}$ of $M=\exp\left(n/30\right)$
    distinct vectors obeying the following properties:
  \begin{itemize}
  \item[(i)] $\boldsymbol{x}\in\mathcal{M}$; 
  \item[(ii)] for all
    $\boldsymbol{w}^{(l)},\boldsymbol{w}^{(j)}\in\mathcal{M}$,
\begin{equation}
	1/\sqrt{8} - (2n)^{-1/2} \leq \big\|\boldsymbol{w}^{(l)}-\boldsymbol{w}^{(j)}\big\|
	\leq  3/2 + n^{-1/2}  ;
	\label{eq:sep-from-x}
\end{equation}
\item[(iii)] for all $\boldsymbol{w}\in\mathcal{M}$,
\begin{equation}
	\frac{ | \boldsymbol{a}_{i}^{\top}\left(\boldsymbol{w}-\boldsymbol{x} \right) |^{2}}{ | \boldsymbol{a}_{i}^{\top}\boldsymbol{x} |^2}
	\leq  \frac{\Vert \boldsymbol{w} - \boldsymbol{x}\Vert ^{2}}  {\|\boldsymbol{x}\|^2 } \{ 2 + 17\log^3 m \},
	\quad1\leq i\leq m.
	\label{eq:ratio-bound}
\end{equation}
\end{itemize}
\end{lemma}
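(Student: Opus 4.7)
The plan is to construct $\mathcal{M}$ greedily by setting $\boldsymbol{w}^{(1)}:=\boldsymbol{x}$ (which immediately secures (i)) and then, for $j=2,\ldots,M$, appending $\boldsymbol{w}^{(j)}=\boldsymbol{x}+\tfrac{3}{4}\hat{\boldsymbol{v}}^{(j)}$, where each $\hat{\boldsymbol{v}}^{(j)}$ is a unit vector drawn from a carefully chosen subspace $V\subseteq\mathbb{R}^{n}$. Fixing $\|\hat{\boldsymbol{v}}^{(j)}\|=1$ immediately yields $\|\boldsymbol{w}^{(l)}-\boldsymbol{w}^{(j)}\|\leq 3/2$ for every pair and $\|\boldsymbol{w}^{(j)}-\boldsymbol{x}\|=3/4$ for $j\geq 2$, delivering the upper bound in (\ref{eq:sep-from-x}); the matching lower bound will be enforced through a spherical packing of the $\hat{\boldsymbol{v}}^{(j)}$'s with pairwise gap $\sqrt{2}/3$ (so that $(3/4)\cdot\sqrt{2}/3=1/\sqrt{8}$).

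To deal with (\ref{eq:ratio-bound}), let $\xi_{i}:=\boldsymbol{a}_{i}^{\top}\boldsymbol{x}/\|\boldsymbol{x}\|\sim\mathcal{N}(0,1)$ and introduce the ``bad'' set $\mathcal{B}_{\mathrm{bad}}:=\{i:|\xi_{i}|<\tau\}$ for a constant $\tau>0$ depending only on $\kappa$. Gaussian anti-concentration $\mathbb{P}(|\xi_{i}|<\tau)\leq\tau\sqrt{2/\pi}$ combined with Bernstein's inequality makes $|\mathcal{B}_{\mathrm{bad}}|\leq n/4$ on an event of probability approaching one, and a $\chi^{2}$ tail bound provides $\|\boldsymbol{a}_{i}\|^{2}\leq 2n$ for every $i$; I bundle both into the good event $\mathcal{B}$. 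Setting $V:=\operatorname{span}\{\boldsymbol{a}_{i}:i\in\mathcal{B}_{\mathrm{bad}}\}^{\perp}$, one has $d:=\dim V\geq 3n/4$, and for every $i\in\mathcal{B}_{\mathrm{bad}}$ and every $\hat{\boldsymbol{v}}\in V$ the inner product $\boldsymbol{a}_{i}^{\top}\hat{\boldsymbol{v}}$ vanishes, so the left-hand side of (\ref{eq:ratio-bound}) is zero and (iii) is automatic for the bad indices.

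For good indices $i\notin\mathcal{B}_{\mathrm{bad}}$, substituting $\boldsymbol{u}=\tfrac{3}{4}\hat{\boldsymbol{v}}^{(j)}$ reduces (\ref{eq:ratio-bound}) to $|\boldsymbol{a}_{i}^{\top}\hat{\boldsymbol{v}}^{(j)}|^{2}\leq(2+17\log^{3}m)|\xi_{i}|^{2}$, which since $|\xi_{i}|\geq\tau$ follows from the uniform bound $|\boldsymbol{a}_{i}^{\top}\hat{\boldsymbol{v}}^{(j)}|^{2}\leq 17\tau^{2}\log^{3}m$. Sub-Gaussian concentration for linear forms on $V\cap S^{n-1}$ yields, for fixed $\boldsymbol{a}_{i}$ and uniform $\hat{\boldsymbol{v}}$,
\[
    \mathbb{P}_{\hat{\boldsymbol{v}}}\!\bigl(|\boldsymbol{a}_{i}^{\top}\hat{\boldsymbol{v}}|^{2}>17\tau^{2}\log^{3}m\bigr)\;\leq\;2\exp\!\bigl(-c\,d\tau^{2}\log^{3}m/\|\boldsymbol{a}_{i}\|^{2}\bigr)\;\leq\;2e^{-c'\log^{3}m},
\]
so the union over all $\leq m$ good $i$'s contributes at most $2me^{-c'\log^{3}m}=o(1)$ to the forbidden mass. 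The packing constraint separately forbids a spherical cap around each previously chosen $\hat{\boldsymbol{v}}^{(l)}$ of measure $\leq e^{-c''d}$, where $c''$ is fixed by the gap $\sqrt{2}/3$; summed over $l<j\leq M=e^{n/30}$ this contributes at most $e^{n/30-c''d}=e^{-\Theta(n)}$ once $c''\cdot(3/4)>1/30$, which the packing geometry easily delivers. Hence at every step the combined forbidden mass is strictly less than $1$, and a valid $\hat{\boldsymbol{v}}^{(j)}$ can always be selected; the greedy procedure runs to completion.

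The main obstacle is the joint calibration of $\tau$: it must be small enough that $|\mathcal{B}_{\mathrm{bad}}|\leq n/4$ (so $d=\Omega(n)$ and a spherical packing of $e^{n/30}$ points is feasible), and simultaneously bounded below by a constant depending only on $\kappa$ so that $\tau^{2}\log^{3}m$ outgrows the typical magnitude $|\boldsymbol{a}_{i}^{\top}\hat{\boldsymbol{v}}|^{2}\asymp 1$ by the claimed poly-logarithmic factor. A single fixed $\tau=\tau(\kappa)$ accommodates both requirements; properties (i)--(iii) with the stated constants then drop out, with the distances in (\ref{eq:sep-from-x}) holding even with room to spare for the $(2n)^{-1/2}$ and $n^{-1/2}$ slacks.
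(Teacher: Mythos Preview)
Your proposal is correct and takes a genuinely different route from the paper's proof. The paper proceeds by random packing: it generates $M_1=\exp(n/20)$ i.i.d.\ Gaussian perturbations $\boldsymbol{w}^{(l)}=\boldsymbol{x}+\tfrac{1}{\sqrt{2n}}\boldsymbol{z}^{(l)}$, splits the indices $i$ into a small ``bad'' group (those with $|\boldsymbol{a}_i^\top\boldsymbol{x}|\lesssim\|\boldsymbol{x}\|/\log m$, at most $k=m/(4\log m)$ of them) and the rest, and then filters the $\boldsymbol{w}^{(l)}$'s in two passes---first by a very stringent requirement $|\boldsymbol{a}_{i,\perp}^\top\boldsymbol{r}_\perp^{(l)}|\lesssim 1/m$ on the bad indices, then by a looser $\sqrt{\log n}$ requirement on the good ones. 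The delicate step is a counting argument (their claim involving $\widetilde M_1$) showing that, despite each bad constraint having only $O(1/m)$ survival probability, exponentially many $\boldsymbol{w}^{(l)}$'s survive because $M_1/m^k$ is still exponential in $n$.

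Your construction sidesteps this entirely: by restricting the perturbations to $V=\operatorname{span}\{\boldsymbol{a}_i:i\in\mathcal{B}_{\mathrm{bad}}\}^\perp$, the bad-index constraints become vacuous, and only the easy good-index concentration remains. This is cleaner and avoids the two-stage filtering and the associated Chernoff/counting machinery. The price you pay is that your $\mathcal{M}$ depends on the $\boldsymbol{a}_i$'s more explicitly (through the subspace $V$), whereas the paper's candidates are generated obliviously and then selected; but since the lemma only asserts existence of $\mathcal{M}$ on a high-probability event over $\{\boldsymbol{a}_i\}$, this is immaterial. Both approaches yield the same properties (i)--(iii), and either plugs equally well into the downstream Fano argument.
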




In words, Lemma \ref{lemma:MinimaxConstruction} constructs a set
$\mathcal{M}$ of exponentially many vectors/hypotheses scattered
around $\boldsymbol{x}$ and yet well separated. From (ii) we see that
each pair of hypotheses in $\mathcal{M}$ is separated by a distance
roughly on the order of $1$, and all hypotheses reside within a
spherical ball centered at $\boldsymbol{x}$ of radius $3/2+o(1)
$. When $\|\boldsymbol{x}\| \geq \log^{1.5}m$, every hypothesis $\boldsymbol{w} \in \mathcal{M}$
satisfies $\|\boldsymbol{w}\| \approx \|\boldsymbol{x}\|\gg 1$.  In
addition, (iii) says that the quantities ${ |
  \boldsymbol{a}_{i}^{\top}\left(\boldsymbol{w}-\boldsymbol{x} \right)
  |}/{ | \boldsymbol{a}_{i}^{\top}\boldsymbol{x} |}$ are all very well
controlled (modulo some logarithmic factor).  In particular, when
 $\|\boldsymbol{x}\| \geq \log^{1.5}m$, one must
have
\begin{equation}
  \frac{ | \boldsymbol{a}_{i}^{\top}\left(\boldsymbol{w}-\boldsymbol{x} \right) |^{2}}{ | \boldsymbol{a}_{i}^{\top}\boldsymbol{x} |^2}
  \lesssim  \frac{\Vert \boldsymbol{w} - \boldsymbol{x}\Vert ^{2}}  {\|\boldsymbol{x}\|^2 }  \log^3 m 
  \lesssim \frac{ 1 }  { \log^{3}m }  \log^3 m  \lesssim 1. 
	\label{eq:ratio-bound-large-x}
\end{equation}
 In the Poisson
model, such a quantity turns out to be crucial in controlling the
information divergence between two hypotheses, as demonstrated in the
following lemma.

\begin{lemma}
\label{lemma-KL-UB}
Fix a family of design vectors $\{\boldsymbol{a}_i\}$. Then for any
$\boldsymbol{w}$ and $\boldsymbol{r}\in\mathbb{R}^{n}$,
\begin{align}
	\mathsf{KL}\big( ~\mathbb{P}\left(\boldsymbol{y} \mid \boldsymbol{w}+\boldsymbol{r}\right) \hspace{0.3em}\|\hspace{0.3em}
			  \mathbb{P}\left(\boldsymbol{y} \mid \boldsymbol{w}\right) \big) 
	\text{ } \leq \text{ } 
	\sum\nolimits_{i=1}^{m} | \boldsymbol{a}_{i}^{\top}\boldsymbol{r} |^{2} 
		\bigg(8+\frac{ 2 |\boldsymbol{a}_{i}^{\top}\boldsymbol{r} |^{2}} { |\boldsymbol{a}_{i}^{\top}\boldsymbol{w} |^{2}}\bigg).
	\label{eq:KL-UB-Poisson}
\end{align}
\end{lemma}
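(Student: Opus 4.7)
The plan is to reduce the bound to a one-dimensional calculus inequality and then tidy up with AM--GM. Since the $y_i$ are independent under either hypothesis, the KL divergence factorizes as
\[
\mathsf{KL}\bigl(\mathbb{P}(\boldsymbol{y}\mid \boldsymbol{w}+\boldsymbol{r})\,\|\,\mathbb{P}(\boldsymbol{y}\mid \boldsymbol{w})\bigr)
=\sum_{i=1}^m \mathsf{KL}\bigl(\mathrm{Poisson}(\lambda_{1,i})\,\|\,\mathrm{Poisson}(\lambda_{2,i})\bigr),
\]
with $\lambda_{1,i}=|\boldsymbol{a}_i^\top(\boldsymbol{w}+\boldsymbol{r})|^2$ and $\lambda_{2,i}=|\boldsymbol{a}_i^\top\boldsymbol{w}|^2$. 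Using the standard formula $\mathsf{KL}(\mathrm{Poi}(\lambda_1)\|\mathrm{Poi}(\lambda_2))=\lambda_1\log(\lambda_1/\lambda_2)-\lambda_1+\lambda_2$, and setting $\delta_i=(\lambda_{1,i}-\lambda_{2,i})/\lambda_{2,i}$, each summand equals $\lambda_{2,i}\bigl[(1+\delta_i)\log(1+\delta_i)-\delta_i\bigr]$. (The case $\lambda_{2,i}=0$ is handled separately: if $\boldsymbol{a}_i^\top\boldsymbol{w}=0$ but $\boldsymbol{a}_i^\top\boldsymbol{r}\neq 0$, the KL summand is $+\infty$ and so is $2|\boldsymbol{a}_i^\top\boldsymbol{r}|^4/|\boldsymbol{a}_i^\top\boldsymbol{w}|^2$, making the bound vacuous; if both vanish, the summand is $0$.)

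The key analytic step is the pointwise inequality
\[
(1+\delta)\log(1+\delta)-\delta \;\le\; \delta^2,\qquad \delta>-1.
\]
I would prove this by defining $g(\delta):=\delta^2-(1+\delta)\log(1+\delta)+\delta$, checking $g(0)=g(-1)=0$, and verifying $g\ge 0$ via the derivatives $g'(\delta)=2\delta-\log(1+\delta)$ and $g''(\delta)=2-1/(1+\delta)$, which is nonnegative on $[-1/2,\infty)$, so that $g$ is convex there with a minimum at $\delta=0$; on $[-1,-1/2]$ a direct monotonicity/boundary argument (using $g'(\delta)\to+\infty$ as $\delta\downarrow-1$ and $g(-1/2)>0$) finishes the verification. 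This is the one nontrivial step and is the main obstacle, though it is really just a careful calculus check.

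Given the inequality, each summand is bounded by $\lambda_{2,i}\delta_i^2=(\lambda_{1,i}-\lambda_{2,i})^2/\lambda_{2,i}$. Writing $u_i=\boldsymbol{a}_i^\top\boldsymbol{w}$ and $v_i=\boldsymbol{a}_i^\top\boldsymbol{r}$, we have $\lambda_{1,i}-\lambda_{2,i}=2u_iv_i+v_i^2$, so
\[
\frac{(\lambda_{1,i}-\lambda_{2,i})^2}{\lambda_{2,i}}
=\frac{(2u_iv_i+v_i^2)^2}{u_i^2}
=4v_i^2+\frac{4v_i^3}{u_i}+\frac{v_i^4}{u_i^2}.
\]
The cross term is then absorbed by the two squares using AM--GM in the form $4|v_i^3/u_i|=2\cdot 2|v_i|\cdot |v_i^2/u_i|\le 4v_i^2+v_i^4/u_i^2$, which yields
\[
\frac{(\lambda_{1,i}-\lambda_{2,i})^2}{\lambda_{2,i}} \;\le\; 8\,v_i^2+\frac{2\,v_i^4}{u_i^2}
=|\boldsymbol{a}_i^\top\boldsymbol{r}|^2\!\left(8+\frac{2|\boldsymbol{a}_i^\top\boldsymbol{r}|^2}{|\boldsymbol{a}_i^\top\boldsymbol{w}|^2}\right).
\]
Summing over $i$ recovers \eqref{eq:KL-UB-Poisson}. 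The only slightly subtle points are the one-variable inequality above and making sure $\delta_i>-1$ (which holds whenever $\lambda_{2,i}>0$, since $\lambda_{1,i}\ge 0$ gives $\delta_i\ge -1$, with equality handled by the limit $x\log x\to 0$).
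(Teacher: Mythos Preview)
Your proof is correct. The route differs from the paper's in one place: the paper does not work directly with the Poisson KL formula. Instead it passes through the $\chi^2$-divergence, using the general bound $\mathsf{KL}(P\|Q)\le\log\bigl(1+\chi^2(P\|Q)\bigr)$ together with the closed-form Poisson identity $\chi^2(\mathrm{Poi}(\lambda_1)\|\mathrm{Poi}(\lambda_0))=\exp\bigl((\lambda_1-\lambda_0)^2/\lambda_0\bigr)-1$ and the tensorization $1+\chi^2=\prod_i(1+\chi^2_i)$. Taking logs yields the same pivotal inequality you obtain, $\mathsf{KL}\le\sum_i(\lambda_{1,i}-\lambda_{2,i})^2/\lambda_{2,i}$, after which the algebra (expand $(2u_iv_i+v_i^2)^2/u_i^2$ and absorb the cross term by AM--GM) is essentially identical in both arguments.

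Your direct route is arguably more elementary: it avoids introducing a second divergence and reduces everything to the scalar inequality $(1+\delta)\log(1+\delta)-\delta\le\delta^2$. The paper's $\chi^2$ detour is a standard information-theoretic device that transfers more readily to other noise models, but here it buys nothing extra. One small remark on your calculus sketch: on $(-1,-1/2]$ the cleanest way to close the argument is to note that $g$ is \emph{concave} there (since $g''<0$) with $g(-1)=0$ and $g(-1/2)>0$, so $g$ lies above the nonnegative chord; this is tighter than the monotonicity reasoning you outlined.
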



Lemma \ref{lemma-KL-UB} and (\ref{eq:ratio-bound-large-x}) taken
collectively suggest that on the event $\mathcal{B}\cap\mathcal{C}$ ($\mathcal{B}$ is in Lemma
\ref{lemma:MinimaxConstruction} and $\mathcal{C}:=\{\|\boldsymbol{A}\|\leq \sqrt{2m}\}$), the conditional KL divergence (we
condition on the $\boldsymbol{a}_i$'s) obeys 
%
\begin{align}
	\mathsf{KL}\big(~\mathbb{P}\left(\boldsymbol{y}\mid\boldsymbol{w}\right)\hspace{0.3em}\|\hspace{0.3em}\mathbb{P}\left(\boldsymbol{y}\mid\boldsymbol{x}\right)\big) 
	& \leq c_3 \sum\nolimits_{i=1}^{m} \left|\boldsymbol{a}_{i}^{\top}\left(\boldsymbol{w}-\boldsymbol{x}\right)\right|^{2}
	\leq 2c_3 m \left\Vert \boldsymbol{w}-\boldsymbol{x}\right\Vert ^{2},
	\quad \forall \boldsymbol{w}\in\mathcal{M}; 
	\label{eq:KL-ub}
\end{align}
here, the inequality holds for some constant $c_3 > 0$ provided that
$\left\Vert \boldsymbol{x}\right\Vert \geq\log^{1.5}m$, and the last
inequality is a result of $\mathcal{C}$ (which occurs with high
probability).  We now use hypotheses as in Lemma \ref{lemma:MinimaxConstruction} but rescaled in such a way that 
\begin{equation}
  \|\boldsymbol{w}-\boldsymbol{x}\| \asymp \delta, \quad\text{and}\quad \|\boldsymbol{w}- \tilde{\boldsymbol{w}}\|\asymp \delta, \quad \forall \boldsymbol{w}, \tilde{\boldsymbol{w}}\in \mathcal{M} \text{ with }\boldsymbol{w}\neq \tilde{\boldsymbol{w}}.
\end{equation}
for some $ 0< \delta < 1$. This is achieved via the substitution
$\boldsymbol{w} \longleftarrow \boldsymbol{x} + \delta (\boldsymbol{w}
- \boldsymbol{x})$;
with a slight abuse of notation, $\mathcal{M}$ denotes the new set.

The hardness of a minimax estimation problem is known to be dictated
by information divergence inequalities such as \eqref{eq:KL-ub}.
Indeed, suppose that
\begin{equation}
	\frac{1}{M-1}\sum\nolimits_{\boldsymbol{w}\in\mathcal{M}\backslash\left\{ \boldsymbol{x}\right\} }
		\mathsf{KL}\big( ~\mathbb{P}\left(\boldsymbol{y}\mid\boldsymbol{w}\right)\hspace{0.3em}\|\hspace{0.3em}\mathbb{P}\left(\boldsymbol{y}\mid\boldsymbol{x}\right) \big)
	\leq \frac{1}{10}\log\left(M-1\right)
	\label{eq:KL-condition}
\end{equation}
holds, then the Fano-type minimax lower bound \cite[Theorem 2.7]{tsybakov2009introduction} asserts that
\begin{equation}
  \inf_{\hat{\boldsymbol{x}}}\sup_{\boldsymbol{x}\in\mathcal{M}}\mathbb{E}\left[\Vert \hat{\boldsymbol{x}}-\boldsymbol{x} \Vert ~\big|~ \{\boldsymbol{a}_i\} \right]
  \gtrsim \min_{\boldsymbol{w},\tilde{\boldsymbol{w}}\in\mathcal{M}, \boldsymbol{w}\neq \tilde{\boldsymbol{w}}} \Vert \boldsymbol{w} - \tilde{\boldsymbol{w}} \Vert .
	\label{eq:minimax_Fano_lb}
\end{equation}
Since $M = \exp( n/30)$, (\ref{eq:KL-condition}) would follow from 
\begin{equation}
  2c_3  \Vert \boldsymbol{w}-\boldsymbol{x} \Vert ^{2}\leq n/(300 m). 
  \quad \boldsymbol{w}\in\mathcal{M}. 
	\label{eq:w-x-dist}
\end{equation}
Hence, we just need to select $\delta$ to be a small multiple of $\sqrt{n/m}$. This in turn gives
\begin{equation}
\inf_{\hat{\boldsymbol{x}}}\sup_{\boldsymbol{x}\in\mathcal{M}}\mathbb{E}\left[\Vert
  \hat{\boldsymbol{x}}-\boldsymbol{x} \Vert ~\big|~ \{\boldsymbol{a}_i\}
  \right] \gtrsim
\min_{\boldsymbol{w},\tilde{\boldsymbol{w}}\in\mathcal{M}, \boldsymbol{w}\neq \tilde{\boldsymbol{w}}} \Vert
\boldsymbol{w} - \tilde{\boldsymbol{w}} \Vert \gtrsim \sqrt{n/m}. 
\end{equation}

Finally, it remains to connect $\Vert
\hat{\boldsymbol{x}}-\boldsymbol{x} \Vert$ with
$\mathrm{dist}\left(\hat{\boldsymbol{x}},\boldsymbol{x}\right)$. Since
all the $\boldsymbol{w} \in \mathcal{M}$ are clustered around
$\boldsymbol{x}$ and are at a mutual distance about $\delta$ that is
much smaller than $\|\boldsymbol{x}\|$, we can see that for any
reasonable estimator, $\mathrm{dist}(\hat{\boldsymbol{x}},
\boldsymbol{x}) = \|\hat{\boldsymbol{x}} - \boldsymbol{x}\|$. This
finishes the proof. 



\section{Discussion}
\label{sec:Discussion}

To keep our treatment concise, this paper does not strive to explore
all possible generalizations of the theory. There are nevertheless a
few extensions worth pointing out.

\begin{itemize}
\item {\bf More general objective functions}. For concreteness, we
  restrict our analysis to the Poisson log-likelihood function, but
  the analysis framework we laid out easily carries over to a broad
  class of (nonconvex) objective functions. For instance, all results
  continue to hold if we replace the Poisson log-likelihood by the
  Gaussian log-likelihood; that is, the polynomial function
  $-\sum_{i=1}^{m}(y_{i}-|\boldsymbol{a}_{i}^{\top}\boldsymbol{z}|^{2}
  )^{2}$ studied in \cite{candes2014wirtinger}. A general guideline is
  to first check whether the expected regularity condition
\begin{eqnarray*}
	\begin{array}{l}
	\mathbb{E}\big[-\big\langle \frac{1}{m}\nabla\ell_{\mathrm{tr}}\left(\boldsymbol{z}\right),\boldsymbol{h} \big\rangle \big]
	\text{ } \gtrsim \text{ } \left\Vert \boldsymbol{h}\right\Vert ^{2}
	\end{array}
\end{eqnarray*}
holds for any fixed $\boldsymbol{z}$ within a neighborhood around
$\boldsymbol{x}$. If so, then often times $\mathrm{RC}$ holds
uniformly within this neighborhood due to sharp concentration of
measure ensured by the regularization procedure.

	
\item {\bf Sub-Gaussian measurements}. The theory extends  to the situation where the $\boldsymbol{a}_i$'s are
  i.i.d.~sub-Gaussian random vectors, although the truncation
  threshold might need to be tweaked based on the sub-Gaussian norm of
  $\boldsymbol{a}_i$. A more challenging scenario, however, is the
  case where the $\boldsymbol{a}_i$'s are generated according to the
  CDP model, since there is much less randomness to exploit in the
  mathematical analysis. We leave this to future research. 
\end{itemize}

Having demonstrated the power of TWF in recovering a rank-one matrix
$\boldsymbol{x}\boldsymbol{x}^*$ from quadratic equations, we remark
on the potential of TWF towards recovering low-rank matrices from
rank-one measurements. Imagine that we wish to estimate a rank-$r$
matrix $\boldsymbol{X}\succeq {\bf 0}$ and that all we know about
$\boldsymbol{X}$ is
\[ 
	y_{i}=\boldsymbol{a}_{i}^{\top}\boldsymbol{X}\boldsymbol{a}_{i},\quad1\leq i\leq m.	
\]
It is known that this problem can be efficiently solved by using more
computational-intensive semidefinite programs \cite{chen2013exact,
  cai2015rop}. 
With the hope of
developing a linear-time algorithm, one might consider a modified TWF
scheme, which would maintain a rank-$r$ matrix variable and operate as
follows: perform truncated spectral initialization, and then successively update
the current guess via a regularized gradient descent rule applied to a
presumed log-likelihood function. 

Moving away from i.i.d.~sub-Gaussian measurements, there is a
proliferation of problems that involve completion of a low-rank matrix
$\boldsymbol{X}$ from partial entries, where the rank is known {\em a
  priori}.  It is self-evident that such entry-wise observations can
also be cast as rank-one measurements of $\boldsymbol{X}$. Therefore,
the preceding modified TWF may add to recent literature in applying
non-convex schemes for low-rank matrix completion \cite{KesMonSew2010,
  jain2013low, hardt2013provable, sun2014guaranteed}, robust
PCA \cite{netrapalli2014non}, or even a broader family of latent-variable models (e.g.~dictionary learning \cite{sun2015complete,sun2015nonconvex}, sparse coding \cite{arora2015simple}, and mixture problems \cite{yi2013alternating,balakrishnan2014statistical}).  A concrete application of this flavor
is a simple form of the fundamental alignment/matching problem
\cite{bandeira2014multireference,huang2013consistent,chen2014matching}. Imagine a
collection of $n$ instances, each representing an image of the same
physical object but with different shift $r_i\in \{0,\cdots, M-1\}$.
The goal is to align all these instances from observations on the
relative shift between pairs of them. Denoting by $\boldsymbol{X}_{i}$
the cyclic shift by an amount $r_i$ of $\boldsymbol{I}_M$, one sees
that the collection matrix $\boldsymbol{X}:=[\boldsymbol{X}_{i}^\top
\boldsymbol{X}_{j}]_{1\leq i,j\leq k}$ is a rank-$M$ matrix, and the
relative shift observations can be treated as rank-one measurements of
$\boldsymbol{X}$.  Running TWF over this problem instance might result
in a statistically and computationally efficient solution. This would
be of great practical interest.

\appendix

\section{Proofs for Section \ref{sec:Proof-of-Theorem-TruncatedWF}}

\subsection{Proof of Lemma \ref{Lemma:superset-D}\label{proof-of-Lemma:superset-D}}

%
%

First, we make the observation that 
$(\boldsymbol{a}_{i}^{\top}\boldsymbol{z} )^2 - ( \boldsymbol{a}_{i}^{\top}\boldsymbol{x} )^{2}
= \left(2\boldsymbol{a}_{i}^{\top}\boldsymbol{z}-\boldsymbol{a}_{i}^{\top}\boldsymbol{h}\right)\boldsymbol{a}_{i}^{\top}\boldsymbol{h}$
is a quadratic function in $\boldsymbol{a}_{i}^{\top}\boldsymbol{h}$.
If we assume $\gamma \leq\frac{ \alpha_{z}^{\text{lb}} \Vert \boldsymbol{z}\Vert }{\Vert \boldsymbol{h} \Vert }$,
then on the event $\mathcal{E}_{1}^{i}$ one has 
\begin{equation}
	(\boldsymbol{a}_{i}^{\top}\boldsymbol{z})^2
	\text{ }\geq\text{ } \alpha_{z}^{\text{lb}} \| \boldsymbol{z} \| \cdot | \boldsymbol{a}_{i}^{\top}\boldsymbol{z} | 
	\text{ }\geq\text{ } \gamma \left\Vert \boldsymbol{h} \right\Vert   \left|\boldsymbol{a}_{i}^{\top}\boldsymbol{z}\right|.
	\label{eq:hypothesis-beta}
\end{equation}
Solving the quadratic inequality that specifies $\mathcal{D}_{\gamma}^{i}$ gives
\begin{eqnarray*}
	\boldsymbol{a}_{i}^{\top}\boldsymbol{h} 
	& \in & 
	\left[\boldsymbol{a}_{i}^{\top}\boldsymbol{z}-\sqrt{\left(\boldsymbol{a}_{i}^{\top}\boldsymbol{z}\right)^{2} + \gamma\left\Vert \boldsymbol{h}\right\Vert \left|\boldsymbol{a}_{i}^{\top}\boldsymbol{z}\right|},
	\text{ }\text{ }
	\boldsymbol{a}_{i}^{\top}\boldsymbol{z}-\sqrt{\left(\boldsymbol{a}_{i}^{\top}\boldsymbol{z}\right)^{2} - \gamma\left\Vert \boldsymbol{h}\right\Vert \left|\boldsymbol{a}_{i}^{\top}\boldsymbol{z}\right|}\right],  \\
	\text{or}\quad
	\boldsymbol{a}_{i}^{\top}\boldsymbol{h} 
	& \in & \left[\boldsymbol{a}_{i}^{\top}\boldsymbol{z}+\sqrt{\left(\boldsymbol{a}_{i}^{\top}\boldsymbol{z}\right)^{2} - \gamma\left\Vert \boldsymbol{h}\right\Vert \left|\boldsymbol{a}_{i}^{\top}\boldsymbol{z}\right|},
	\text{ }\text{ }
	\boldsymbol{a}_{i}^{\top}\boldsymbol{z}+\sqrt{\left(\boldsymbol{a}_{i}^{\top}\boldsymbol{z}\right)^{2} + \gamma\left\Vert \boldsymbol{h}\right\Vert \left|\boldsymbol{a}_{i}^{\top}\boldsymbol{z}\right|}\right],
\end{eqnarray*}
which we will simplify in the sequel. 

Suppose for the moment that $\boldsymbol{a}_{i}^{\top}\boldsymbol{z}\geq0$,
then the preceding two intervals are respectively equivalent to
\begin{eqnarray*}
	\boldsymbol{a}_{i}^{\top}\boldsymbol{h} 
	& \in & \underset{:= I_1}{\underbrace{  \left[ \frac{ -\text{ } \gamma\left\Vert \boldsymbol{h}\right\Vert \left|\boldsymbol{a}_{i}^{\top}\boldsymbol{z}\right|}{\boldsymbol{a}_{i}^{\top}\boldsymbol{z} + \sqrt{\left(\boldsymbol{a}_{i}^{\top}\boldsymbol{z}\right)^{2}+\gamma\left\Vert \boldsymbol{h}\right\Vert \left|\boldsymbol{a}_{i}^{\top}\boldsymbol{z}\right|}},
	\text{ }\frac{\gamma \left\Vert \boldsymbol{h}\right\Vert \left|\boldsymbol{a}_{i}^{\top}\boldsymbol{z}\right|}{\boldsymbol{a}_{i}^{\top}\boldsymbol{z}+\sqrt{\left(\boldsymbol{a}_{i}^{\top}\boldsymbol{z}\right)^{2} 
		- \gamma \left\Vert \boldsymbol{h}\right\Vert \left|\boldsymbol{a}_{i}^{\top}\boldsymbol{z}\right|}}\right] }} ;  \\
	\boldsymbol{a}_{i}^{\top}\boldsymbol{h}-2\boldsymbol{a}_{i}^{\top}\boldsymbol{z} 
	& \in & 
	\underset{:= I_2}{\underbrace{ \left[ \frac{ -\text{ } \gamma \left\Vert \boldsymbol{h}\right\Vert \left|\boldsymbol{a}_{i}^{\top}\boldsymbol{z}\right|}{\boldsymbol{a}_{i}^{\top}\boldsymbol{z}+\sqrt{\left(\boldsymbol{a}_{i}^{\top}\boldsymbol{z}\right)^{2} 
		- \gamma\left\Vert \boldsymbol{h}\right\Vert \left|\boldsymbol{a}_{i}^{\top}\boldsymbol{z}\right|}},
	\text{ }\text{ }\frac{\gamma \left\Vert \boldsymbol{h}\right\Vert \left|\boldsymbol{a}_{i}^{\top}\boldsymbol{z}\right|}{\boldsymbol{a}_{i}^{\top}\boldsymbol{z}+\sqrt{\left(\boldsymbol{a}_{i}^{\top}\boldsymbol{z}\right)^{2} 
		+ \gamma\left\Vert \boldsymbol{h}\right\Vert \left|\boldsymbol{a}_{i}^{\top}\boldsymbol{z}\right|}}\right] }}.
\end{eqnarray*}
Assuming (\ref{eq:hypothesis-beta}) and making use of the observations
\begin{eqnarray*}
	\frac{\gamma \left\Vert \boldsymbol{h}\right\Vert \left|\boldsymbol{a}_{i}^{\top}\boldsymbol{z}\right|}{\boldsymbol{a}_{i}^{\top}\boldsymbol{z}+\sqrt{\left(\boldsymbol{a}_{i}^{\top}\boldsymbol{z}\right)^{2} - \gamma \left\Vert \boldsymbol{h}\right\Vert \left|\boldsymbol{a}_{i}^{\top}\boldsymbol{z}\right|}} 
	& \leq & \frac{\gamma \left\Vert \boldsymbol{h}\right\Vert \left|\boldsymbol{a}_{i}^{\top}\boldsymbol{z}\right|}{\boldsymbol{a}_{i}^{\top}\boldsymbol{z}}
	= \gamma \left\Vert \boldsymbol{h}\right\Vert \\
	\text{and} \quad\frac{\gamma \left\Vert \boldsymbol{h}\right\Vert \left|\boldsymbol{a}_{i}^{\top}\boldsymbol{z}\right|}{\boldsymbol{a}_{i}^{\top}\boldsymbol{z}+\sqrt{\left(\boldsymbol{a}_{i}^{\top}\boldsymbol{z}\right)^{2} 
		+ \gamma\left\Vert \boldsymbol{h}\right\Vert \left|\boldsymbol{a}_{i}^{\top}\boldsymbol{z}\right|}} 
	& \geq & \frac{\gamma\left\Vert \boldsymbol{h}\right\Vert \left|\boldsymbol{a}_{i}^{\top}\boldsymbol{z}\right|}{\left(1+\sqrt{2}\right)\left|\boldsymbol{a}_{i}^{\top}\boldsymbol{z}\right|}
	= \frac{\gamma}{1+\sqrt{2}}\left\Vert \boldsymbol{h}\right\Vert ,
\end{eqnarray*}
we obtain the inner and outer bounds
\begin{eqnarray*}
	\left[\pm \big( 1+\sqrt{2}\big)^{-1} \gamma \left\Vert \boldsymbol{h}\right\Vert \right] 
	& \subseteq & I_{1},I_{2}\text{ }
	\subseteq \big[\pm\gamma \left\Vert \boldsymbol{h} \right\Vert \big].
\end{eqnarray*}
Setting $\gamma_{1}:=\frac{\gamma}{1+\sqrt{2}}$ gives
\begin{eqnarray*}
	\left(\mathcal{D}_{\gamma_{1}}^{i,1}\cap\mathcal{E}_{i,1}\right)\cup\left(\mathcal{D}_{\gamma_{1}}^{i,2}\cap\mathcal{E}_{i,1}\right)\text{ }\subseteq\text{ }\mathcal{D}_{\gamma}\cap\mathcal{E}_{i,1} & \subseteq & \left(\mathcal{D}_{\gamma}^{i,1}\cap\mathcal{E}_{i,1}\right)\cup\left(\mathcal{D}_{\gamma}^{i,2}\cap\mathcal{E}_{i,1}\right).
\end{eqnarray*}
%

Proceeding with the same argument, we can derive exactly the same
inner and outer bounds in the regime where $\boldsymbol{a}_{i}^{\top}\boldsymbol{z}<0$,
concluding the proof.%

\subsection{Proof of Lemma \ref{Lemma:I1_I2}\label{sub:Proof-of-Lemma-I1_I2}}

By homogeneity, it suffices to establish the claim for the case where both $\boldsymbol{h}$ and
$\boldsymbol{z}$ are \emph{unit vectors}. 

Suppose for the moment that $\boldsymbol{h}$ and $\boldsymbol{z}$ are
\emph{statistically independent} from $\{\boldsymbol{a}_{i}\}$.  We
introduce two auxiliary Lipschitz functions approximating indicator
functions:
\begin{align}
	&\chi_{z}\left(\tau\right) :=  \begin{cases}
		1,\quad & \text{if }\left|\tau\right|\in\left[\sqrt{1.01}\alpha_{z}^{\text{lb}},\sqrt{0.99}\alpha_{z}^{\text{ub}}\right];\\
		-100 \left(\alpha_{z}^{\text{ub}}\right)^{-2}\tau^{2}+100,\quad & \text{if }\left|\tau\right|\in\left[\sqrt{0.99}\alpha_{z}^{\text{ub}},\alpha_{z}^{\text{ub}}\right];\\
		100 \left(\alpha_{z}^{\text{lb}}\right)^{-2}\tau^{2}-100,\quad & \text{if }\left|\tau\right|\in\left[\alpha_{z}^{\text{lb}},\sqrt{1.01}\alpha_{z}^{\text{lb}}\right];\\
		0,\quad & \text{else}.
	\end{cases} \\
	&\chi_{h}\left(\tau\right) :=  \begin{cases}
		1, \quad & \text{if }\left|\tau\right|\in\left[0,\sqrt{0.99}\gamma\right];\\
		-\frac{100}{\gamma^{2}}\tau^{2}+100,\quad & \text{if }\left|\tau\right|\in\left[\sqrt{0.99}\gamma,\gamma\right];\\
		0, \quad & \text{else}.
	\end{cases}
\end{align}
Since $\boldsymbol{h}$ and $\boldsymbol{z}$ are assumed to be unit
vectors, these two functions obey 
\begin{equation}
	0\leq\chi_{z}\left(\boldsymbol{a}_{i}^{\top}\boldsymbol{z}\right)\leq{\bf 1}_{\mathcal{E}_{1}^{i}},\quad\text{and}\quad0\leq\chi_{h}\left(\boldsymbol{a}_{i}^{\top}\boldsymbol{h}\right)\leq{\bf 1}_{\mathcal{D}_{\gamma}^{i,1}}
\end{equation}
and thus, 
\begin{eqnarray}
	\frac{1}{m}\sum_{i=1}^{m}\left(\boldsymbol{a}_{i}^{\top}\boldsymbol{h}\right)^{2}{\bf 1}_{\mathcal{E}_{1}^{i}\cap\mathcal{D}_{\gamma}^{i,1}} 
	& \geq & \frac{1}{m}\sum_{i=1}^{m}  (\boldsymbol{a}_{i}^{\top}\boldsymbol{h})^{2}\chi_{z} (\boldsymbol{a}_{i}^{\top}\boldsymbol{z})  \chi_{h} (\boldsymbol{a}_{i}^{\top}\boldsymbol{h} ).
	\label{eq:LB-chi}
\end{eqnarray}
We proceed to lower bound
$\frac{1}{m}\sum_{i=1}^{m}\left(\boldsymbol{a}_{i}^{\top}\boldsymbol{h}\right)^{2}\chi_{z}\left(\boldsymbol{a}_{i}^{\top}\boldsymbol{z}\right)\chi_{h}\left(\boldsymbol{a}_{i}^{\top}\boldsymbol{h}\right)$.

Firstly, to compute the mean of 
$(\boldsymbol{a}_{i}^{\top}\boldsymbol{h})^{2}\chi_{z} (\boldsymbol{a}_{i}^{\top}\boldsymbol{z} )\chi_{h} (\boldsymbol{a}_{i}^{\top}\boldsymbol{h} )$,
we introduce an auxiliary orthonormal matrix 
\begin{equation}
	\boldsymbol{U}_{\boldsymbol{z}}=\left[\begin{array}{c}
\boldsymbol{z}^{\top}/\left\Vert \boldsymbol{z}\right\Vert \\
	\vdots
	\end{array}\right]
\end{equation}
whose first row is along the direction of $\boldsymbol{z}$, and set
\begin{equation}
	\tilde{\boldsymbol{h}} := \boldsymbol{U}_{\boldsymbol{z}}\boldsymbol{h},
	\quad\text{and}\quad
	\tilde{\boldsymbol{a}}_{i} := \boldsymbol{U}_{\boldsymbol{z}}\boldsymbol{a}_{i}.	
	\label{eq:defn_a_tilde-1}
\end{equation}
Also, denote by $\tilde{a}_{i,1}$ (resp. $\tilde{h}_{1}$) the first
entry of $\tilde{\boldsymbol{a}}_{i}$ (resp. $\tilde{\boldsymbol{h}}$),
and $\boldsymbol{\tilde{a}}_{i,\backslash1}$ (resp. $\tilde{\boldsymbol{h}}_{\backslash1}$)
the remaining entries of $\tilde{\boldsymbol{a}}_{i}$ (resp. $\tilde{\boldsymbol{h}}$),
and let $\xi\sim\mathcal{N}\left(0,1\right)$. We have 
\begin{align}
	 &  \mathbb{E}\Big[ \left(\boldsymbol{a}_{i}^{\top}\boldsymbol{h}\right)^{2}\chi_{z}\left(\boldsymbol{a}_{i}^{\top}\boldsymbol{z}\right)\chi_{h}\left(\boldsymbol{a}_{i}^{\top}\boldsymbol{h}\right)\Big] \nonumber\\
	 &   \geq\text{ } \mathbb{E}\big[ (\boldsymbol{a}_{i}^{\top}\boldsymbol{h} )^{2}\chi_{z}\left(\boldsymbol{a}_{i}^{\top}\boldsymbol{z}\right) \big] 
		- \mathbb{E}\left[\left(\boldsymbol{a}_{i}^{\top}\boldsymbol{h}\right)^{2}\left(1-\chi_{h}\left(\boldsymbol{a}_{i}^{\top}\boldsymbol{h}\right)\right)\right]\text{ } \nonumber\\
	 &   \geq\text{ }\mathbb{E}\left[ \big(\tilde{a}_{i,1}\tilde{h}_{1} \big)^{2}\chi_{z}\left(\boldsymbol{a}_{i}^{\top}\boldsymbol{z}\right)\right] 
	   + \mathbb{E}\left[ \big(\tilde{\boldsymbol{a}}_{i,\backslash1}^{\top}\tilde{\boldsymbol{h}}_{\backslash1} \big)^{2}\right] \mathbb{E}\left[ \chi_z \left(\boldsymbol{a}_i^{\top} \boldsymbol{z} \right) \right]
	   - \left\Vert \boldsymbol{h}\right\Vert^2 \mathbb{E}\left[\xi^2 {\bf 1}_{\left\{ \left|\xi\right|>\sqrt{0.99}\gamma\right\} }\right] \nonumber\\
	 &    \geq \text{ } | \tilde{h}_{1} |^{2} (1-\zeta_{1})+ \Vert \tilde{\boldsymbol{h}}_{\backslash1}\Vert ^2  (1-\zeta_{1})
	   - \zeta_2 \Vert \boldsymbol{h} \Vert ^2
		\label{eq:second_event-1}\\
	 &   \geq\text{ }\left(1-\zeta_1 - \zeta_2 \right)\left\Vert \boldsymbol{h}\right\Vert ^2, \nonumber
\end{align}
where the identity (\ref{eq:second_event-1}) arises from 
(\ref{eq:Assumption-zeta1}) and (\ref{eq:Assumption-zeta2}). Since 
$\left(\boldsymbol{a}_{i}^{\top}\boldsymbol{h}\right)^{2}\chi_{z}\left(\boldsymbol{a}_{i}^{\top}\boldsymbol{z}\right)\chi_{h}\left(\boldsymbol{a}_{i}^{\top}\boldsymbol{h}\right)$
is bounded in magnitude by 
$\gamma^{2}\left\Vert \boldsymbol{h}\right\Vert ^{2}$,
it is a sub-Gaussian random variable with sub-Gaussian norm $O (\gamma^{2}\left\Vert \boldsymbol{h}\right\Vert ^2 )$.
Apply the Hoeffding-type inequality \cite[Proposition 5.10]{Vershynin2012}
to deduce that for any $\epsilon>0$,
\begin{align}
	\frac{1}{m}\sum_{i=1}^{m}\left(\boldsymbol{a}_{i}^{\top}\boldsymbol{h}\right)^{2}\chi_{z}\left(\boldsymbol{a}_{i}^{\top}\boldsymbol{z}\right)\chi_{h}\left(\boldsymbol{a}_{i}^{\top}\boldsymbol{h}\right) 
	& \geq  \mathbb{E}\left[\left(\boldsymbol{a}_{i}^{\top}\boldsymbol{h}\right)^{2}\chi_{z}\left(\boldsymbol{a}_{i}^{\top}\boldsymbol{z}\right)\chi_{h}\left(\boldsymbol{a}_{i}^{\top}\boldsymbol{h}\right)\right]-\epsilon\left\Vert \boldsymbol{h}\right\Vert ^{2}\\
 	& \geq  \left(1-\zeta_{1}-\zeta_{2}-\epsilon\right)\left\Vert \boldsymbol{h}\right\Vert ^{2}\label{eq:LB-h}
\end{align}
with probability at least $1-\exp (-\Omega (\epsilon^{2}m ) )$. 

The next step is to obtain uniform control over all \emph{unit vectors},
for which we adopt a basic version of an $\epsilon$-net argument.
Specifically, we construct an $\epsilon$-net $\mathcal{N}_{\epsilon}$
with cardinality 
$\left|\mathcal{N}_{\epsilon}\right|\leq\left(1+ 2/\epsilon \right)^{2n}$ (cf. \cite{Vershynin2012})
 such that for any $\left(\boldsymbol{h},\boldsymbol{z}\right)$
with $\left\Vert \boldsymbol{h}\right\Vert =\left\Vert \boldsymbol{z}\right\Vert =1$,
there exists a pair $\boldsymbol{h}_{0},\boldsymbol{z}_{0}\in\mathcal{N}_{\epsilon}$
satisfying $\left\Vert \boldsymbol{h}-\boldsymbol{h}_{0}\right\Vert \leq\epsilon$
and $\left\Vert \boldsymbol{z}-\boldsymbol{z}_{0}\right\Vert \leq\epsilon$.
Now that we have discretized the unit spheres using a finite set, taking
the union bound gives 
\begin{equation}
	\frac{1}{m}\sum_{i=1}^{m}\left(\boldsymbol{a}_{i}^{\top}\boldsymbol{h}_{0}\right)^{2}\chi_{z}\left(\boldsymbol{a}_{i}^{\top}\boldsymbol{z}_{0}\right)\chi_{h}\left(\boldsymbol{a}_{i}^{\top}\boldsymbol{h}_{0}\right)
	\geq \left(1-\zeta_{1}-\zeta_{2}-\epsilon\right)\left\Vert \boldsymbol{h}_{0}\right\Vert ^{2}, \quad
	\forall\boldsymbol{h}_{0},\boldsymbol{z}_{0}\in\mathcal{N}_{\epsilon}
	\label{eq:Property-Epsilon-Net}
\end{equation}
 with probability at least $1 - (1+ 2/\epsilon )^{2n}\exp (-\Omega( \epsilon^{2}m ))$.

Define $f_1(\cdot)$ and $f_2(\cdot)$ such that $f_1(\tau) := \tau\chi_{h}(\sqrt{\tau})$ and $f_2(\tau) := \chi_{z}(\sqrt{\tau})$, which are both bounded functions with Lipschitz constant $O(1)$.
%
%
%
This guarantees that for each \emph{unit} vector pair $\boldsymbol{h}$
and $\boldsymbol{z}$, 
\begin{align*}
 	& \left|\left(\boldsymbol{a}_{i}^{\top}\boldsymbol{h}\right)^{2}\chi_{z}\left(\boldsymbol{a}_{i}^{\top}\boldsymbol{z}\right)\chi_{h}\left(\boldsymbol{a}_{i}^{\top}\boldsymbol{h}\right)-\left(\boldsymbol{a}_{i}^{\top}\boldsymbol{h}_{0}\right)^{2}\chi_{z}\left(\boldsymbol{a}_{i}^{\top}\boldsymbol{z}_{0}\right)\chi_{h}\left(\boldsymbol{a}_{i}^{\top}\boldsymbol{h}_{0}\right)\right|  \\
 	& \quad\leq\text{ }  |\chi_{h}\left(\boldsymbol{a}_{i}^{\top}\boldsymbol{z}\right)| \cdot |\left(\boldsymbol{a}_{i}^{\top}\boldsymbol{h}\right)^{2}\chi_{h}\left(\boldsymbol{a}_{i}^{\top}\boldsymbol{h}\right)
		- \left(\boldsymbol{a}_{i}^{\top}\boldsymbol{h}_{0}\right)^{2}\chi_{h}\left(\boldsymbol{a}_{i}^{\top}\boldsymbol{h}_{0}\right) |   \\
	& \qquad \qquad \qquad + |(\boldsymbol{a}_{i}^{\top}\boldsymbol{h}_{0})^{2}\chi_{h}\left(\boldsymbol{a}_{i}^{\top}\boldsymbol{h}_{0}\right)| \cdot |\chi_{h}\left(\boldsymbol{a}_{i}^{\top}\boldsymbol{z}\right)-\chi_{h}\left(\boldsymbol{a}_{i}^{\top}\boldsymbol{z}_{0}\right) |  \\
	& \quad\leq\text{ }  |\chi_{h}\left(\boldsymbol{a}_{i}^{\top}\boldsymbol{z}\right)| \cdot \big| f_1\big( |\boldsymbol{a}_i^{\top}\boldsymbol{h}|^2\big) - f_1 \big( |\boldsymbol{a}_i^{\top}\boldsymbol{h}_0|^2\big) \big|  \\
	& \qquad \qquad \qquad + \big|(\boldsymbol{a}_{i}^{\top}\boldsymbol{h}_{0})^{2}\chi_{h}\left(\boldsymbol{a}_{i}^{\top}\boldsymbol{h}_{0} \right) \big| \cdot \big|f_2\left(|\boldsymbol{a}_{i}^{\top}\boldsymbol{z}|^2\right)- f_2\left( |\boldsymbol{a}_{i}^{\top}\boldsymbol{z}_{0}|^2 \right) \big|  \\
 	& \quad\lesssim \text{ } \big| (\boldsymbol{a}_{i}^{\top}\boldsymbol{h})^{2} - (\boldsymbol{a}_{i}^{\top}\boldsymbol{h}_{0} )^{2}\big| + 
	\big (\boldsymbol{a}_{i}^{\top}\boldsymbol{z} )^{2} - (\boldsymbol{a}_{i}^{\top}\boldsymbol{z}_{0})^{2} \big|.
\end{align*}
Consequently,  there exists some universal constant $c_{3}>0$ such that 
\begin{align*}
	& \Big|\frac{1}{m}\sum_{i=1}^{m}\left(\boldsymbol{a}_{i}^{\top}\boldsymbol{h}\right)^{2}\chi_{z}\left(\boldsymbol{a}_{i}^{\top}\boldsymbol{z}\right)\chi_{h}\left(\boldsymbol{a}_{i}^{\top}\boldsymbol{h}\right)-\frac{1}{m}\sum_{i=1}^{m}\left(\boldsymbol{a}_{i}^{\top}\boldsymbol{h}_{0}\right)^{2}\chi_{z}\left(\boldsymbol{a}_{i}^{\top}\boldsymbol{z}_{0}\right)\chi_{h}\left(\boldsymbol{a}_{i}^{\top}\boldsymbol{h}_{0}\right) \Big|\\
	& \quad\lesssim \text{ }\frac{1}{m}\left\Vert \mathcal{A}\big(\boldsymbol{h}\boldsymbol{h}^{\top}-\boldsymbol{h}_{0}\boldsymbol{h}_{0}^{\top}\big)\right\Vert _{1} +
	\frac{1}{m}\left\Vert \mathcal{A}\big(\boldsymbol{z}\boldsymbol{z}^{\top}-\boldsymbol{z}_{0}\boldsymbol{z}_{0}^{\top}\big)\right\Vert _{1}\\
	& \quad\overset{\text{(i)}}{\leq}\text{ }c_{3}\left\{ \big\|\boldsymbol{h}\boldsymbol{h}^{\top}-\boldsymbol{h}_{0}\boldsymbol{h}_{0}^{\top}\big\|_{\mathrm{F}}+\big\|\boldsymbol{z}\boldsymbol{z}^{\top}-\boldsymbol{z}_{0}\boldsymbol{z}_{0}^{\top}\big\|_{\mathrm{F}}\right\} \\
	& \quad\overset{\text{(ii)}}{\leq}\text{ }2.5c_{3} \left\{ \big\|\boldsymbol{h}-\boldsymbol{h}_{0}\big\|\cdot\big\|\boldsymbol{h}\big\|+\big\|\boldsymbol{z}-\boldsymbol{z}_{0}\big\|\cdot\big\|\boldsymbol{z}\big\|\right\} \leq 5 c_{3}\epsilon,
\end{align*}
where (i) results from Lemma \ref{lemma:RIP-Candes}, and (ii) arises
from Lemma \ref{lemma: bound-h} whenever $\epsilon < 1/2$. 

With the assertion (\ref{eq:Property-Epsilon-Net}) in place, we see
that with high probability, 
\begin{eqnarray*}
	\frac{1}{m}\sum_{i=1}^{m}\left(\boldsymbol{a}_{i}^{\top}\boldsymbol{h}\right)^{2}\chi_{z}\left(\boldsymbol{a}_{i}^{\top}\boldsymbol{z}\right)\chi_{h}\left(\boldsymbol{a}_{i}^{\top}\boldsymbol{h}\right) & \geq & \left(1-\zeta_{1}-\zeta_{2}-\left(5c_{3}+1\right)\epsilon\right)\left\Vert \boldsymbol{h}\right\Vert ^{2}
\end{eqnarray*}
for all unit vectors $\boldsymbol{h}$ and $\boldsymbol{z}$.
Since $\epsilon$ can be arbitrary, 
putting this and (\ref{eq:LB-chi}) together completes the proof.

\subsection{Proof of Lemma \ref{lemma:I2}\label{sub:Proof-of-Lemma-I2}}

The proof makes use of standard concentration of measure and covering
arguments, and it suffices to restrict our attention to \emph{unit
vectors} $\boldsymbol{h}$. We find it convenient to work with an
auxiliary function 
\[
	\chi_2\left(\tau\right) = \begin{cases}
		\left|\tau\right|^{\frac{3}{2}},\quad & 
			\text{if } |\tau| \leq \gamma^{2}, \\
		-\gamma\left(\left|\tau\right|-\gamma^{2}\right)+\gamma^{3},\quad & 
			\text{if }\gamma^{2} < |\tau|\leq2\gamma^{2}, \\
		0, & \text{else}.
	\end{cases}
\]
Apparently, $\chi_2\left(\tau\right)$
is a Lipschitz function of $\tau$ with Lipschitz norm $O \left(\gamma\right)$.
Recalling the definition of $\mathcal{D}_{\gamma}^{i,1}$, we see
that each summand is bounded above by
\[
	|\boldsymbol{a}_{i}^{\top}\boldsymbol{h} |^3 \text{ } {\bf 1}_{\mathcal{D}_{\gamma}^{i,1}}
	\leq \chi_2 \big( |\boldsymbol{a}_{i}^{\top}\boldsymbol{h}|^{2} \big).
\]
For each fixed $\boldsymbol{h}$ and $\epsilon>0$, applying the Bernstein
inequality \cite[Proposition 5.16]{Vershynin2012} gives
\begin{eqnarray*}
	\frac{1}{m}\sum_{i=1}^{m}\left|\boldsymbol{a}_{i}^{\top}\boldsymbol{h}\right|^{3}{\bf 1}_{\mathcal{D}_{\gamma}^{i,1}} 
	& \leq & \frac{1}{m}\sum_{i=1}^{m}\chi_2\left(\left|\boldsymbol{a}_{i}^{\top}\boldsymbol{h}\right|^{2}\right)
	\text{ }\leq\text{ }\mathbb{E}\left[\chi_2\left(\left|\boldsymbol{a}_{i}^{\top}\boldsymbol{h}\right|^{2}\right)\right]+\epsilon\\
 	& \leq & \mathbb{E}\big[\left|\boldsymbol{a}_{i}^{\top}\boldsymbol{h}\right|^{3} \big] + \epsilon
	\text{ }=\text{ } \sqrt{ 8 / \pi} + \epsilon
\end{eqnarray*}
with probability exceeding $1-\exp\left(-\Omega\left(\epsilon^{2}m\right)\right)$. 

From \cite[Lemma 5.2]{Vershynin2012}, there exists an $\epsilon$-net
$\mathcal{N}_{\epsilon}$ of the unit sphere with cardinality $\left|\mathcal{N}_{\epsilon}\right|\leq\left(1+\frac{2}{\epsilon}\right)^{n}$.
For each $\boldsymbol{h}$, suppose that $\left\Vert \boldsymbol{h}_{0}-\boldsymbol{h}\right\Vert \leq\epsilon$
for some $\boldsymbol{h}_{0}\in\mathcal{N}_{\epsilon}$. The Lipschitz
property of $\chi_2$ implies 
%
\begin{align*}
	 \frac{1}{m}\sum_{i=1}^{m}\left\{ \chi_2 \left(\left|\boldsymbol{a}_{i}^{\top}\boldsymbol{h}\right|^{2}\right)-\chi_2 \left(\left|\boldsymbol{a}_{i}^{\top}\boldsymbol{h}_{0}\right|^{2}\right)\right\}   
	& \text{ }\lesssim \text{ } \frac{1}{m}\sum_{i=1}^{m}\left|\left|\boldsymbol{a}_{i}^{\top}\boldsymbol{h}\right|^{2}-\left|\boldsymbol{a}_{i}^{\top}\boldsymbol{h}_{0}\right|^{2}\right| \\
 	& \text{ }\overset{(\text{i})}{ \asymp }\text{ }  \left\Vert \boldsymbol{h}-\boldsymbol{h}_{0}\right\Vert \left\Vert \boldsymbol{h}\right\Vert 
 	\text{ }\asymp\text{ } \epsilon,
\end{align*}
where (i) arises by combining Lemmas \ref{lemma:RIP-Candes} and
\ref{lemma: bound-h}. This demonstrates that with high probability,
\[
	\frac{1}{m}\sum_{i=1}^{m}\left|\boldsymbol{a}_{i}^{\top}\boldsymbol{h}\right|^{3}{\bf 1}_{\mathcal{D}_{\gamma}^{i,1}}
	\leq \frac{1}{m}\sum_{i=1}^{m} \chi_2 \left( |\boldsymbol{a}_{i}^{\top}\boldsymbol{h}|^2\right)
	\leq \sqrt{ 8 / \pi} + O\left(\epsilon\right)
\]
for all unit vectors $\boldsymbol{h}$, as claimed.

\subsection{Proof of Lemma \ref{Lemma:I3}\label{sub:Proof-of-Lemma-I3}}

Without loss of generality, the proof focuses on the case where $\left\Vert \boldsymbol{h}\right\Vert =1$.
Fix an arbitrary small constant $\delta>0$. One can eliminate the difficulty
of handling the discontinuous indicator functions by working with
the following auxiliary function 
\begin{equation}
	\chi_{3}\left(\tau,\gamma\right):=\begin{cases}
		1,\quad & \text{if }\sqrt{\tau}\geq\psi_{\mathrm{lb}}\left(\gamma\right);\\
		\frac{100\tau}{\psi_{\mathrm{lb}}^{2}\left(\gamma\right)}-99,\quad & \text{if }\sqrt{\tau}\in\left[\sqrt{0.99}\psi_{\mathrm{lb}}\left(\gamma\right),\text{ }\psi_{\mathrm{lb}}\left(\gamma\right)\right];\\
		0, & \text{else}.
\end{cases}
	\label{eq:chi3}
\end{equation}
Here, $\psi_{\mathrm{lb}}\left(\cdot\right)$ is a piecewise constant function defined as
\[
	\psi_{\mathrm{lb}}\left(\gamma\right):=\left(1+\delta\right)^{\left\lfloor \frac{\log\gamma}{ \log( 1 + \delta ) } \right\rfloor },
\]
which clearly satisfy $\frac{\gamma}{ 1 + \delta }\leq\psi_{\mathrm{lb}}\left(\gamma\right) \leq \gamma$.
Such a function is useful for our purpose since for any $0<\delta\leq0.005$, 
\begin{equation}
	{\bf 1}_{\left\{ \left|\boldsymbol{a}_i^{\top}\boldsymbol{h}\right|\geq\gamma\right\} }
	\leq \chi_3 \big( \left|\boldsymbol{a}_i^{\top}\boldsymbol{h}\right|^2, \text{ \ensuremath{\gamma}} \big)
	\leq {\bf 1}_{\left\{ \left|\boldsymbol{a}_i^{\top}\boldsymbol{h}\right| \geq \sqrt{0.99}\psi_{\mathrm{lb}}\left(\gamma\right)\right\} }\leq{\bf 1}_{\left\{ \left| \boldsymbol{a}_i^{\top}\boldsymbol{h} \right|\geq 0.99\gamma \right\} }.
	\label{eq:chi3-chi4}
\end{equation}
%

For any fixed unit vector $\boldsymbol{h}$, the above argument leads to an upper
tail estimate: for any $0<t\leq1$,
\begin{eqnarray}
	\mathbb{P}\left\{ \chi_3 \big(\left|\boldsymbol{a}_{i}^{\top}\boldsymbol{h}\right|^{2},\text{ \ensuremath{\gamma}} \big) \geq t\right\} \text{ } 
	& \leq & \mathbb{P}\left\{ {\bf 1}_{\left\{ \left|\boldsymbol{a}_{i}^{\top}\boldsymbol{h}\right| \geq 0.99\gamma\right\} }\geq t\right\} 
	\text{ }  = \text{ } \mathbb{P}\left\{ {\bf 1}_{\left\{ \left|\boldsymbol{a}_{i}^{\top}\boldsymbol{h}\right| \geq 0.99\gamma\right\} } = 1 \right\} \nonumber \\
\text{ } 
	& = & 2{\displaystyle \int}_{0.99\gamma}^{\infty}\phi\left(x\right)\mathrm{d}x\text{ }
	\leq \frac{2}{0.99\gamma} \phi \left(0.99\gamma\right),
	\label{eq:Mill}
\end{eqnarray}
where $\phi(x)$ is the density of a standard normal, and (\ref{eq:Mill}) follows from the tail bound $\int_{x}^{\infty} \phi(x)\mathrm{d}x\leq\frac{1}{x}\phi\left(x\right)$
for all $x>0$. 
This implies that when $\gamma \geq 2$, both 
$\chi_3 \big( | \boldsymbol{a}_i^{\top}\boldsymbol{h}|^2, \gamma \big)$
and ${\bf 1}_{\left\{ | \boldsymbol{a}_i^{\top}\boldsymbol{h}|\geq 0.99\gamma\right\} }$
are sub-exponential with sub-exponential norm  
$O (\gamma^{-2})$ (cf. \cite[Definition 5.13]{Vershynin2012}).
We apply the Bernstein-type inequality for the sum of sub-exponential random
variables \cite[Corollary 5.17]{Vershynin2012}, which indicates that
for any fixed $\boldsymbol{h}$ and $\gamma$ as well as any sufficiently
small $\epsilon\in(0,1)$, 
\begin{eqnarray*}
	\frac{1}{m}\sum_{i=1}^{m} \chi_3 \big( \left|\boldsymbol{a}_{i}^{\top}\boldsymbol{h}\right|^{2},\text{ \ensuremath{\gamma}} \big) 
	& \leq & \frac{1}{m}\sum_{i=1}^{m}{\bf 1}_{\left\{ \left|\boldsymbol{a}_{i}^{\top}\boldsymbol{h}\right|\geq0.99\gamma\right\} }\text{ }\leq\mathbb{E}\left[{\bf 1}_{\left\{ \left|\boldsymbol{a}_{i}^{\top}\boldsymbol{h}\right|\geq0.99\gamma\right\} }\right]
	+ \epsilon \frac{1}{ \gamma^2 } \\
 	& \leq & \frac{2}{0.99\gamma}\exp\left(-0.49\gamma^{2}\right) + \epsilon \frac{1}{ \gamma^2 }
\end{eqnarray*}
holds with probability exceeding $1- \exp\left(-\Omega(\epsilon^{2}m)\right)$.

We now proceed to obtain uniform control over all $\boldsymbol{h}$ and
$2\leq \gamma\leq2^{n}$.  To begin with, we consider all
$2\leq \gamma \leq m$ and construct an $\epsilon$-net
$\mathcal{N}_{\epsilon}$ over the unit sphere such that: (i)
$\left|\mathcal{N}_{\epsilon}\right|\leq\left(1+\frac{2}{\epsilon}\right)^{n}$;
(ii) for any $\boldsymbol{h}$ with
$\left\Vert \boldsymbol{h}\right\Vert =1$, there exists a unit vector
$\boldsymbol{h}_{0}\in\mathcal{N}_{\epsilon}$ obeying
$\left\Vert \boldsymbol{h}-\boldsymbol{h}_{0}\right\Vert
\leq\epsilon$.
Taking the union bound gives the following: with probability at least
$1 - \frac{\log
  m}{\log\left(1+\delta\right)}\left(1+\frac{2}{\epsilon}\right)^n
\exp( -\Omega( \epsilon^{2}m ) )$,
\begin{eqnarray*}
	\frac{1}{m}\sum_{i=1}^{m} \chi_3 \big(\left|\boldsymbol{a}_{i}^{\top}\boldsymbol{h}_{0}\right|^{2},\text{ \ensuremath{\gamma}}_{0} \big) 
	& \leq & \text{ } (0.495\gamma_{0})^{-1} \exp \left(-0.49\gamma_{0}^{2}\right)+\epsilon\gamma_{0}^{-2}
\end{eqnarray*}
holds simultaneously for all $\boldsymbol{h}_{0}\in\mathcal{N}_{\epsilon}$
and $\gamma_{0}\in\left\{ \left(1+\delta\right)^{k}\mid1\leq k\leq\frac{\log m}{\log\left(1+\delta\right)}\right\} $. 

Note that $\chi_{3}\left(\tau,\gamma_{0}\right)$ is a Lipschitz function in $\tau$
with the Lipschitz constant bounded above by $\frac{100}{\psi_{\mathrm{lb}}^{2}\left(\gamma_{0}\right)}$.
With this in mind, for any $\left(\boldsymbol{h},\gamma\right)$ with
$\left\Vert \boldsymbol{h}\right\Vert =1$ and $\gamma_{0}:=\left(1+\delta\right)^{k}\leq\gamma<\left(1+\delta\right)^{k+1}$,
one has
\begin{eqnarray*}
	\left|\chi_{3}\big( \left|\boldsymbol{a}_{i}^{\top}\boldsymbol{h}_{0}\right|^{2},\text{ \ensuremath{\gamma}}_{0} \big)
		- \chi_{3} \big( \left|\boldsymbol{a}_{i}^{\top}\boldsymbol{h}\right|^{2},\text{ \ensuremath{\gamma}} \big) \right| 
	& = & \left|\chi_{3} \big( \left|\boldsymbol{a}_{i}^{\top}\boldsymbol{h}_{0}\right|^{2},\text{ \ensuremath{\gamma}}_{0} \big)
		- \chi_{3} \big( \left|\boldsymbol{a}_{i}^{\top}\boldsymbol{h}\right|^{2},\text{ \ensuremath{\gamma}}_{0} \big) \right|\\
 	& \leq & \text{ }\frac{100}{\psi_{\mathrm{lb}}^{2}\left(\gamma_{0}\right)}\left|\left|\boldsymbol{a}_{i}^{\top}\boldsymbol{h}\right|^{2}-\left|\boldsymbol{a}_{i}^{\top}\boldsymbol{h}_{0}\right|^2 \right|.
\end{eqnarray*}
It then follows from Lemmas \ref{lemma:RIP-Candes}-\ref{lemma: bound-h} that 
\begin{align*}
	& \frac{1}{m}\left|\sum_{i=1}^{m}\chi_{3}\left(\left|\boldsymbol{a}_{i}^{\top}\boldsymbol{h}_{0}\right|^{2},\text{ \ensuremath{\gamma}}_{0}\right)-\sum_{i=1}^{m}\chi_{3}\left(\left|\boldsymbol{a}_{i}^{\top}\boldsymbol{h}\right|^{2},\text{ \ensuremath{\gamma}}\right)\right| 
	 \text{ }\leq\text{ }  \frac{100}{\psi_{\mathrm{lb}}^{2}\left(\gamma_{0}\right)}\frac{1}{m}\left\Vert \mathcal{A}\left(\boldsymbol{h}\boldsymbol{h}^{\top}-\boldsymbol{h}_{0}\boldsymbol{h}_{0}^{\top}\right)\right\Vert _{1}\\
 	& \quad \leq  \frac{250\left(1+\delta\right)^{2}}{\gamma^{2}} \Vert \boldsymbol{h}-\boldsymbol{h}_{0}\Vert \Vert \boldsymbol{h} \Vert 
	\text{ }\leq\text{ } \frac{250 (1+\delta)^2 \epsilon}{\gamma^2}.
\end{align*}
Putting the above results together gives that for all $2\leq\gamma\leq\left(1+\delta\right)^{\frac{\log m}{\log\left(1+\delta\right)}}=m$,
\begin{eqnarray*}
	\frac{1}{m}\sum_{i=1}^{m}\chi_{3}\left(\left|\boldsymbol{a}_{i}^{\top}\boldsymbol{h}\right|^{2},\text{ \ensuremath{\gamma}}\right) & \leq & \frac{1}{m}\sum_{i=1}^{m}\chi_{3}\left(\left|\boldsymbol{a}_{i}^{\top}\boldsymbol{h}_{0}\right|^{2},\text{ \ensuremath{\gamma}}_{0}\right)
		+ \frac{250\left(1+\delta\right)^{2}}{\gamma^{2}}\epsilon\\
 	& \leq & \text{ }\frac{1}{0.495\gamma_{0}}\exp\left(-0.49\gamma_{0}^{2}\right)+251\left(1+\delta\right)^{2}\frac{\epsilon}{\gamma^{2}}\\
 	& \leq & \frac{1}{0.49\gamma}\exp\left(-0.485\gamma^{2}\right)+251\left(1+\delta\right)^{2}\frac{\epsilon}{\gamma^{2}}
\end{eqnarray*}
with probability exceeding $1-\frac{\log m}{\log\left(1+\delta\right)}\left(1+\frac{2}{\epsilon}\right)^{n}\exp\left(-c\epsilon^{2}m\right)$.
This establishes (\ref{eq:I3-indicator}) for all $2\leq\gamma\leq m$. 

It remains to deal with the case where $\gamma>m$. To this end, we
rely on the following observation:
\[
	\frac{1}{m}\sum_{i=1}^{m}{\bf 1}_{\left\{ \left|\boldsymbol{a}_{i}^{\top}\boldsymbol{h}\right|\geq m\right\} }
	\text{ }\leq\text{ } \frac{1}{m}\sum_{i=1}^{m}\frac{\left|\boldsymbol{a}_{i}^{\top}\boldsymbol{h}\right|^{2}}{m^{2}}
	\text{ }\overset{(\text{i})}{\leq}\text{ } \frac{1+\delta}{m^{2}}\left\Vert \boldsymbol{h}\right\Vert ^{2}
	\text{ }\ll\text{ } \frac{1}{m},\quad\forall\boldsymbol{h}\text{ with }\|\boldsymbol{h}\|=1,
\]
where (i) comes from \cite[Lemmas 3.1]{candes2012phaselift}.  This
basically tells us that with high probability, none of the indicator
variables can be equal to 1.  Consequently,
$\frac{1}{m}\sum_{i=1}^{m}{\bf 1}_{\left\{
    \left|\boldsymbol{a}_{i}^{\top}\boldsymbol{h}\right|\geq m\right\}
}=0$, which proves the claim.

\subsection{Proof of Lemma \ref{Lemma:norm-score}\label{sec:Proof-of-Lemma-norm-score}}

Fix $\delta>0$. Recalling the notation 
%
$	v_{i}:=2\Big\{ 2\boldsymbol{a}_{i}^{\top}\boldsymbol{h}-\frac{\left|\boldsymbol{a}_{i}^{\top}\boldsymbol{h}\right|^{2}}{\boldsymbol{a}_{i}^{\top}\boldsymbol{z}} \Big\} {\bf 1}_{\mathcal{E}_{1}^{i}\cap\mathcal{E}_{2}^{i}},
$
%
we see from the expansion (\ref{eq:simplify}) that 
\begin{equation}
	\Big\Vert \frac{1}{m}\nabla_{\mathrm{tr}}\ell (\boldsymbol{z}) \Big\Vert 
	= \Big\Vert \frac{1}{m}\boldsymbol{A}^{\top} \boldsymbol{v}\Big\Vert \leq \frac{1}{m} \Vert \boldsymbol{A}\Vert \cdot \Vert \boldsymbol{v}\Vert 
	\leq \left(1+\delta\right)\frac{\left\Vert \boldsymbol{v}\right\Vert }{\sqrt{m}}
	\label{eq:grad-ub}
\end{equation}
as soon as $m\geq c_{1}n$ for some sufficiently large $c_{1}>0$.
Here, the norm estimate $\left\Vert \boldsymbol{A}\right\Vert \leq\sqrt{m}\left(1+\delta\right)$
arises from standard random matrix results \cite[Corollary 5.35]{Vershynin2012}. 

Everything then comes down to controlling $\Vert \boldsymbol{v}\Vert $.
To this end, making use of the inclusion (\ref{eq:Event-Containment})
yields
\begin{eqnarray*}
	\frac{1}{4m}\left\Vert \boldsymbol{v}\right\Vert ^{2} 
	& = & \frac{1}{m}\sum_{i=1}^{m} \bigg(2\boldsymbol{a}_{i}^{\top}\boldsymbol{h}-\frac{ |\boldsymbol{a}_{i}^{\top}\boldsymbol{h} |^{2}}{\boldsymbol{a}_{i}^{\top}\boldsymbol{z}} \bigg)^2 {\bf 1}_{\mathcal{E}_{1}^{i}\cap\mathcal{E}_{2}^{i}} \\
	& \leq & \frac{1}{m}\sum_{i=1}^{m} \bigg( 2\left|\boldsymbol{a}_{i}^{\top}\boldsymbol{h}\right|+\frac{ |\boldsymbol{a}_{i}^{\top}\boldsymbol{h} |^{2}}{ |\boldsymbol{a}_{i}^{\top}\boldsymbol{z} |} \bigg)^2  
		{\bf 1}_{\mathcal{E}_{1}^{i}\cap\left(\mathcal{D}_{\gamma_{4}}^{i,1}\cup\mathcal{D}_{\gamma_{4}}^{i,2}\right)}\\
 	& \leq & \frac{1}{m}\sum_{i=1}^{m} \left\{ 4 (\boldsymbol{a}_{i}^{\top}\boldsymbol{h} )^{2}
		+ \left(\frac{4 |\boldsymbol{a}_{i}^{\top}\boldsymbol{h}|^{3}}{ |\boldsymbol{a}_{i}^{\top}\boldsymbol{z}|}+\frac{|\boldsymbol{a}_{i}^{\top}\boldsymbol{h}|^{4}}{|\boldsymbol{a}_{i}^{\top}\boldsymbol{z}|^{2}}\right)
	  {\bf 1}_{\mathcal{E}_{1}^{i}\cap\left(\mathcal{D}_{\gamma_{4}}^{i,1}\cup\mathcal{D}_{\gamma_{4}}^{i,2}\right)}   \right\} \\
 	& = & \frac{1}{m}\sum_{i=1}^{m}\left\{ 4\left(\boldsymbol{a}_{i}^{\top}\boldsymbol{h}\right)^{2}
		+ \left(4+\frac{ |\boldsymbol{a}_{i}^{\top}\boldsymbol{h} |}{ |\boldsymbol{a}_{i}^{\top}\boldsymbol{z}|}\right) \frac{|\boldsymbol{a}_{i}^{\top}\boldsymbol{h}|^{3}}{|\boldsymbol{a}_{i}^{\top}\boldsymbol{z}|}\left({\bf 1}_{\mathcal{E}_{1}^{i}\cap\mathcal{D}_{\gamma_{4}}^{i,1}}+{\bf 1}_{\mathcal{E}_{1}^{i}\cap\mathcal{D}_{\gamma_{4}}^{i,2}}\right)\right\} .
\end{eqnarray*}
The first term is controlled by \cite[Lemma 3.1]{candes2012phaselift}
in such a way that with probability $1-\exp(-\Omega(m))$,
\[
	\frac{1}{m}\sum_{i=1}^{m}4\left(\boldsymbol{a}_{i}^{\top}\boldsymbol{h}\right)^{2}
	\leq 4\left(1+\delta\right)\left\Vert \boldsymbol{h}\right\Vert ^{2}.
\]
Turning to the remaining terms,
we see from the definition of $\mathcal{D}_{\gamma}^{i,1}$ and $\mathcal{D}_{\gamma}^{i,2}$
that 
\begin{eqnarray*}
	\frac{\left|\boldsymbol{a}_{i}^{\top}\boldsymbol{h}\right|}{\left|\boldsymbol{a}_{i}^{\top}\boldsymbol{z}\right|} 
	& \leq & \begin{cases}
\frac{\gamma\left\Vert \boldsymbol{h}\right\Vert }{\alpha_{z}^{\mathrm{lb}}\left\Vert \boldsymbol{z}\right\Vert },\quad & \text{on }\mathcal{E}_{1}^{i}\cap\mathcal{D}_{\gamma}^{i,1}\\
	2+\frac{\gamma\left\Vert \boldsymbol{h}\right\Vert }{\alpha_{z}^{\mathrm{lb}}\left\Vert \boldsymbol{z}\right\Vert },\quad & \text{on }\mathcal{E}_{1}^{i}\cap\mathcal{D}_{\gamma}^{i,2}
	\end{cases}\text{ }
	\leq\text{ }\begin{cases}
		1,\quad & \text{on }\mathcal{E}_{1}^{i}\cap\mathcal{D}_{\gamma}^{i,1}\\
		3,\quad & \text{on }\mathcal{E}_{1}^{i}\cap\mathcal{D}_{\gamma}^{i,2}
	\end{cases}
\end{eqnarray*}
as long as $\gamma\leq\frac{\alpha_{z}^{\mathrm{lb}}\left\Vert \boldsymbol{z}\right\Vert }{\left\Vert \boldsymbol{h}\right\Vert }$.
Consequently, one can bound
\begin{eqnarray*}
 	&  & \frac{1}{m}\sum_{i=1}^{m}\left(4+\frac{|\boldsymbol{a}_{i}^{\top}\boldsymbol{h}|}{|\boldsymbol{a}_{i}^{\top}\boldsymbol{z}|}\right)\frac{ |\boldsymbol{a}_{i}^{\top}\boldsymbol{h} |^{3}}{ |\boldsymbol{a}_{i}^{\top}\boldsymbol{z} |}\left({\bf 1}_{\mathcal{E}_{1}^{i}\cap\mathcal{D}_{\gamma}^{i,1}}+{\bf 1}_{\mathcal{E}_{1}^{i}\cap\mathcal{D}_{\gamma}^{i,2}}\right)\text{ }\\
 	&  & \quad\leq\text{ }\frac{5}{m}\sum_{i=1}^{m}\frac{ |\boldsymbol{a}_{i}^{\top}\boldsymbol{h} |^{3}}{ |\boldsymbol{a}_{i}^{\top}\boldsymbol{z}|}{\bf 1}_{\mathcal{E}_{1}^{i}\cap\mathcal{D}_{\gamma}^{i,1}}+\frac{7}{m}\sum_{i=1}^{m}\frac{ |\boldsymbol{a}_{i}^{\top}\boldsymbol{h}|^{3}}{ |\boldsymbol{a}_{i}^{\top}\boldsymbol{z} |}{\bf 1}_{\mathcal{E}_{1}^{i}\cap\mathcal{D}_{\gamma}^{i,2}}\\
 	&  & \quad\leq\text{ }\frac{5\left(1+\delta\right)\sqrt{8/\pi} \Vert \boldsymbol{h} \Vert ^{3}}{\alpha_{z}^{\mathrm{lb}}\left\Vert \boldsymbol{z}\right\Vert }+\frac{7}{100}\left(1+\delta\right)\left\Vert \boldsymbol{h}\right\Vert ^{2},
\end{eqnarray*}
where the last inequality follows from (\ref{eq:I2}) and (\ref{eq:I3}). 

Recall that $\gamma_{4}=3\alpha_{h}$. Taken together all these bounds
lead to the upper bound
\begin{eqnarray*}
	\frac{1}{4m} \Vert \boldsymbol{v} \Vert ^2 & \leq & \left(1+\delta\right)\left\{ 4+\frac{5\sqrt{8/\pi}\left\Vert \boldsymbol{h}\right\Vert }{\alpha_{z}^{\mathrm{lb}}\left\Vert \boldsymbol{z}\right\Vert }+\frac{7}{100}\right\} 
	\left\Vert \boldsymbol{h}\right\Vert ^{2} \\
	& \leq & \left(1+\delta\right)\left\{ 4+\frac{5\sqrt{8/\pi}}{3\alpha_{h}}+\frac{7}{100}\right\} \left\Vert \boldsymbol{h}\right\Vert ^{2}
\end{eqnarray*}
whenever
%
$	\frac{\left\Vert \boldsymbol{h}\right\Vert }{\left\Vert \boldsymbol{z}\right\Vert }
	\leq \min\left\{ \frac{\alpha_{z}^{\mathrm{lb}}}{3\alpha_{h}},
		 \frac{\alpha_{z}^{\mathrm{lb}}}{6},
		 \frac{\sqrt{98/3}\left(\alpha_{z}^{\mathrm{lb}}\right)^{2}}{2\alpha_{z}^{\mathrm{ub}}+\alpha_{z}^{\mathrm{lb}}},
		 \frac{1}{11}\right\} .
$
%
Substituting this into (\ref{eq:grad-ub}) completes the proof.

\section{Proofs for Section \ref{sec:proof-of-Minimax-Lower-Boundsi}}

\subsection{Proof of Lemma \ref{lemma:MinimaxConstruction}\label{sub:Proof-of-Lemma-Minimax-Construction}}

Firstly, we collect a few results on the magnitudes of $\boldsymbol{a}_{i}^{\top}\boldsymbol{x}$
($1\leq i\leq m$) that will be useful in constructing the hypotheses.
Observe that 
for any given $\boldsymbol{x}$ and any sufficiently large $m$, 
\begin{align*}
	& \mathbb{P}\left\{ \min_{1\leq i\leq m}\left|\boldsymbol{a}_{i}^{\top}\boldsymbol{x}\right|\geq\frac{1}{ m\log m}\left\Vert \boldsymbol{x}\right\Vert \right\}  
	 =  \left(\mathbb{P}\left\{ |\boldsymbol{a}_{i}^{\top}\boldsymbol{x}| \geq \frac{1}{m\log m} \Vert \boldsymbol{x} \Vert 
	\right\} \right)^{m} \\
	& \quad \geq\left(1-\frac{2}{\sqrt{2\pi}}\frac{1}{m\log m}\right)^{m}
	~\geq~ 1- o(1).
\end{align*}
Besides, since $\mathbb{E}\left[{\bf 1}_{\left\{ \left|\boldsymbol{a}_{i}^{\top}\boldsymbol{x}\right|\leq\frac{\left\Vert \boldsymbol{x}\right\Vert }{5\log m}\right\} }\right]\leq\frac{1}{\sqrt{2\pi}}\frac{2}{5\log m}\leq\frac{1}{5\log m}$,
applying Hoeffding's inequality yields 
\begin{eqnarray*}
 	&  & \mathbb{P}\bigg\{ \sum\nolimits_{i=1}^{m}{\bf 1}_{\left\{ \left|\boldsymbol{a}_{i}^{\top}\boldsymbol{x}\right|\leq\frac{\left\Vert \boldsymbol{x}\right\Vert }{5\log m}\right\} }> \frac{m} {4\log m} \bigg\} \\
 	&  & \quad = \mathbb{P}\bigg\{ \frac{1}{m}\sum\nolimits_{i=1}^{m}\left({\bf 1}_{\left\{ \left|\boldsymbol{a}_{i}^{\top}\boldsymbol{x}\right|\leq\frac{\left\Vert \boldsymbol{x}\right\Vert }{5\log m}\right\} }
		- \mathbb{E}\left[{\bf 1}_{\left\{ \left|\boldsymbol{a}_{i}^{\top}\boldsymbol{x}\right|\leq\frac{\left\Vert \boldsymbol{x}\right\Vert }{5\log m}\right\} }\right]\right)>{\frac{1}{20\log m}}\bigg\} \\
	& & \quad \leq \exp\left( - \Omega\Big( \frac{m} { \log^2 m }  \Big) \right).
\end{eqnarray*}
To summarize,  with probability $1  - o(1)$, one has
\begin{eqnarray}
	\min\nolimits_{1\leq i\leq m}\left|\boldsymbol{a}_{i}^{\top}\boldsymbol{x}\right| & \geq & \frac{1} {m\log m} \Vert \boldsymbol{x} \Vert; 
	\label{eq:LB-Gaussian}\\
	\sum\nolimits_{i=1}^{m}{\bf 1}_{\left\{ \left|\boldsymbol{a}_{i}^{\top}\boldsymbol{x}\right|
		\leq \frac{\left\Vert \boldsymbol{x}\right\Vert }{\log m}\right\} } & \leq & \frac{m}{4\log m} := k.
	\label{eq:UB-nontrivial-Gaussian}
\end{eqnarray}

In the sequel, we will first produce a set $\mathcal{M}_{1}$ of exponentially many
vectors surrounding $\boldsymbol{x}$ in such a way that every pair
is separated by about the same distance, and then verify that a non-trivial
fraction of $\mathcal{M}_{1}$ obeys (\ref{eq:ratio-bound}). 
Without loss of generality,
we assume that $\boldsymbol{x}$ takes the form $\boldsymbol{x}=\left[b,0,\cdots,0\right]^{\top}$
for some $b>0$.

The construction of $\mathcal{M}_{1}$ follows a standard random packing
argument. Let $\boldsymbol{w}=\left[w_{1},\cdots,w_{n}\right]^{\top}$ be a random vector
with 
\[
	w_{i}=x_{i}+\frac{1}{\sqrt{2n}}z_{i},\quad1\leq i\leq n,
\]
where $z_{i} \overset{\text{ind.}}{\sim} \mathcal{N}\left(0,1\right)$.
The collection $\mathcal{M}_{1}$ is then obtained by generating $M_{1}=\exp\left(\frac{n}{20}\right)$
independent copies $\boldsymbol{w}^{(l)}$ ($1\leq l<M_{1}$) of $\boldsymbol{w}$.
For any $\boldsymbol{w}^{(l)},\boldsymbol{w}^{(j)}\in\mathcal{M}_{1}$,
the concentration inequality \cite[Corollary 5.35]{Vershynin2012}
gives
\begin{align}
	\begin{array}{ll}
	\mathbb{P}\left\{ 0.5\sqrt{n}-1\leq\sqrt{n}\big\Vert \boldsymbol{w}^{(l)}-\boldsymbol{w}^{(j)}\big\Vert \leq1.5\sqrt{n}+1\right\} 
	\geq 1 - 2\exp\left(-n/8\right);	\nonumber\\
	\text{ }
	\mathbb{P}\left\{ 0.5\sqrt{n}-1\leq\sqrt{2n}\big\Vert \boldsymbol{w}^{(l)}-\boldsymbol{x}\big\Vert \leq1.5\sqrt{n}+1\right\} 
	\text{ }\geq 1 - 2\exp\left(-n/8\right). 	\nonumber
	\end{array}
\end{align}
Taking the union bound over all ${M_{1} \choose 2}$ pairs we obtain
\begin{eqnarray}
	\begin{array}{ll}
	\qquad 0.5-n^{-1/2} \leq  \text{ }\left\Vert \boldsymbol{w}^{(l)}-\boldsymbol{w}^{(j)}\right\Vert &\leq 1.5+n^{-1/2},\quad\forall l\neq j
	\label{eq:w-w-sep}	\\
	1/{\sqrt{8}}- (2n)^{-1/2} \leq \text{ }\text{ } \left\Vert \boldsymbol{w}^{(l)}-\boldsymbol{x}\right\Vert &\leq \sqrt{9/8}+ (2n)^{-1/2}, \quad1\leq l\leq M_{1}
	\end{array}
	\label{eq:w-x-sep}
\end{eqnarray}
with probability exceeding $1-2M_{1}^{2}\exp\left(-\frac{n}{8}\right)\geq1-2\exp\left(-\frac{n}{40}\right)$.

The next step is to show that many vectors in $\mathcal{M}_{1}$ satisfy
(\ref{eq:ratio-bound}). For any given $\boldsymbol{w}$ with $\boldsymbol{r}:=\boldsymbol{w}-\boldsymbol{x}$,
by letting $\boldsymbol{a}_{i,\perp}:=\left[a_{i,2},\cdots,a_{i,n}\right]^{\top}$,
$r_{\|}:=r_{1}$, and $\boldsymbol{r}_{\perp}:=\left[r_{2},\cdots,r_{n}\right]^{\top}$,
we derive
\begin{align}
	\frac{ |\boldsymbol{a}_{i}^{\top}\boldsymbol{r} |^{2}}{ |\boldsymbol{a}_{i}^{\top}\boldsymbol{x} |^{2}}  
	\leq  \frac{2|a_{i,1}r_{\|}|^{2}+2|\boldsymbol{a}_{i,\perp}^{\top}\boldsymbol{r}_{\perp}|^{2}}{\left|a_{i,1}\right|^{2}\left\Vert \boldsymbol{x}\right\Vert ^{2}}
	\leq \frac{2|r_{\|}|^{2}}{\left\Vert \boldsymbol{x}\right\Vert ^{2}}+\frac{2|\boldsymbol{a}_{i,\perp}^{\top}\boldsymbol{r}_{\perp}|^{2}}{\left|a_{i,1}\right|^{2}\left\Vert \boldsymbol{x}\right\Vert ^{2}}
	\leq \frac{2\|\boldsymbol{r}\|^{2}}{\left\Vert \boldsymbol{x}\right\Vert ^{2}}+\frac{2|\boldsymbol{a}_{i,\perp}^{\top}\boldsymbol{r}_{\perp}|^{2}}{\left|a_{i,1}\right|^{2}\left\Vert \boldsymbol{x}\right\Vert ^{2}}.
	\label{eq:key-quantity}
\end{align}
It then boils down to developing an upper bound on
$\frac{\left|\boldsymbol{a}_{i,\perp}^{\top}\boldsymbol{r}_{\perp}\right|^{2}}{\left|a_{i,1}\right|^{2}}$.
This ratio is convenient to work with since the numerator and
denominator are stochastically independent. To simplify presentation,
we reorder $\{\boldsymbol{a}_{i}\}$ in a way that
\[
	(m\log m)^{-1}\left\Vert \boldsymbol{x}\right\Vert \leq \left|\boldsymbol{a}_{1}^{\top}\boldsymbol{x}\right|\leq\left|\boldsymbol{a}_{2}^{\top}\boldsymbol{x}\right|\leq\cdots\leq\left|\boldsymbol{a}_{m}^{\top}\boldsymbol{x}\right|;
\]
this will not affect our subsequent analysis concerning $\boldsymbol{a}_{i,\perp}^{\top}\boldsymbol{r}_{\perp}$
since it is independent of $\boldsymbol{a}_{i}^{\top}\boldsymbol{x}$.

To proceed, we let $\boldsymbol{r}_{\perp}^{(l)}$ consist of all
but the first entry of $\boldsymbol{w}^{(l)}-\boldsymbol{x}$, and
introduce the indicator variables
\begin{equation}
	\xi_{i}^{l}:=\begin{cases}
	{\bf 1}_{\left\{ \left|\boldsymbol{a}_{i,\perp}^{\top}\boldsymbol{r}_{\perp}^{(l)}\right|\leq\frac{1}{m}\sqrt{\frac{n-1}{2n}}\right\} },\quad & 1\leq i\leq k,\\
	{\bf 1}_{\big\{ \left|\boldsymbol{a}_{i,\perp}^{\top}\boldsymbol{r}_{\perp}^{(l)}\right|\leq\sqrt{\frac{2\left(n-1\right)\log n}{n}}\big\} }, & i>k,
\end{cases}
	\label{eq:Defn-xi}
\end{equation}
where $k=\frac{m}{4\log m}$ as before. In words, we divide
$\boldsymbol{a}_{i,\perp}^{\top}\boldsymbol{r}_{\perp}^{(l)}$, $1\leq
i\leq m$ into two groups, with the first group enforcing far more
stringent control than the second group.  These indicator variables are useful since any $\boldsymbol{w}^{(l)}$ obeying
$\prod_{i=1}^m\xi_i^l=1$ will satisfy (\ref{eq:ratio-bound}) when $n$ is sufficiently large.   To see this, note that for the first group of indices,  $\xi_i^l=1$  requires
\begin{equation}
	\left|\boldsymbol{a}_{i,\perp}^{\top}\boldsymbol{r}_{\perp}^{(l)}\right|\leq\frac{1}{m}\sqrt{\frac{n-1}{2n}}\leq\frac{2}{m}\frac{\sqrt{n-1}}{\sqrt{n}-2}\big\Vert \boldsymbol{r}^{(l)}\big\Vert \leq\frac{3}{m}\big\Vert \boldsymbol{r}^{(l)}\big\Vert ,
	\quad1\leq i\leq k,
	\label{eq:r_perp_UB_small_k}
\end{equation}
where the second inequality follows from (\ref{eq:w-x-sep}). This  taken collectively with (\ref{eq:LB-Gaussian}) and (\ref{eq:key-quantity}) yields
\begin{eqnarray*}
	\frac{\left|\boldsymbol{a}_{i}^{\top}\boldsymbol{r}^{(l)}\right|^{2}}{\left|\boldsymbol{a}_{i}^{\top}\boldsymbol{x}\right|^{2}} 
	& \leq & \frac{2 \|\boldsymbol{r}^{(l)}\| ^{2}}{\left\Vert \boldsymbol{x}\right\Vert ^{2}} + \frac{\frac{9}{m^{2}}\left\Vert \boldsymbol{r}^{(l)}\right\Vert ^{2}}{\frac{1}{m^2\log^2m}\Vert \boldsymbol{x} \Vert ^2}
	\leq \frac{ (2+ 9 \log^2m)\left\Vert \boldsymbol{r}^{(l)}\right\Vert ^{2} }{ \Vert \boldsymbol{x} \Vert ^2}  ,\quad1\leq i\leq k.
\end{eqnarray*}
Regarding the second group of indices, $\xi_i^l=1$ gives
\begin{equation}
	\left|\boldsymbol{a}_{i,\perp}^{\top}\boldsymbol{r}_{\perp}^{(l)}\right|\leq\sqrt{\frac{2\left(n-1\right)\log n}{n}}
	\leq \sqrt{17\log n}\big\Vert \boldsymbol{r}^{(l)}\big\Vert ,\quad i=k+1,\cdots,m,
	\label{eq:r_perp_UB_large_k}
\end{equation}
where the last inequality again follows from (\ref{eq:w-x-sep}).
Plugging (\ref{eq:r_perp_UB_large_k}) and (\ref{eq:UB-nontrivial-Gaussian})
into (\ref{eq:key-quantity})  gives 
\begin{eqnarray*}
	\frac{\left|\boldsymbol{a}_{i}^{\top}\boldsymbol{r}^{(l)}\right|^{2}}{\left|\boldsymbol{a}_{i}^{\top}\boldsymbol{x}\right|^{2}} & \leq & \frac{2 \| \boldsymbol{r}^{(l)} \| ^{2}}{\left\Vert \boldsymbol{x}\right\Vert ^{2}} 
	+ \frac{17\left\Vert \boldsymbol{r}^{(l)}\right\Vert ^{2}\log n}{\left\Vert \boldsymbol{x}\right\Vert ^{2}/\log^{2}m}
	\leq \frac{(2+17\log^3m)\left\Vert \boldsymbol{r}^{(l)}\right\Vert ^{2}}{\left\Vert \boldsymbol{x}\right\Vert ^{2}} ,\quad i\geq k+1.
\end{eqnarray*}
Consequently, (\ref{eq:ratio-bound}) is satified for all  $1\leq i\leq m$.  It then suffices to guarantee the existence of  exponentially many
vectors obeying $\prod_{i=1}^m\xi_i^l=1$.

Note that the first group of indicator variables are quite stringent, namely, for each $i$ only a fraction $O(1/m)$ of the equations could satisfy $\xi_i^l=1$.  Fortunately,   $M_1$ is exponentially large, and hence even ${M_1}/{m^k}$ is exponentially large. 
Put formally, we claim that the
first group satisfies
\begin{eqnarray}
	\sum_{l=1}^{M_{1}}\prod_{i=1}^{k}\xi_{i}^{l} & \geq & \frac{1}{2}\frac{M_{1}}{\left(2\pi\right)^{k/2}(1+4\sqrt{k/n})^{k/2}}\left(\frac{1}{\sqrt{2\pi}m}\right)^{k}  := \widetilde{M}_1
	\label{eq:claim-count}
\end{eqnarray}
with probability exceeding $1-\exp\left(-\Omega\left(k\right)\right)-\exp (- \widetilde{M}_1 / 4 )$.
With this claim in place (which will be proved later), one has 
\[
	\sum_{l=1}^{M_{1}}\prod_{i=1}^{k}\xi_{i}^{l}\geq\frac{1}{2}M_{1}\frac{1}{\left(e^{2}m\right)^{k}}
	= \frac{1}{2}\exp\left(\left(\frac{1}{20}-\frac{k\left(2+\log m\right)}{n}\right)n\right)
	\geq \frac{1}{2}\exp\left(\frac{1}{25}n\right)
\]
when $n$ and $m/n$ are sufficiently large. 
In light of this, we will let $\mathcal{M}_{2}$ be a collection comprising
all $\boldsymbol{w}^{(l)}$ obeying $\prod_{i=1}^{k}\xi_{i}^{l}=1$,
which has size $M_{2}\geq\frac{1}{2}\exp\left(\frac{1}{25}n\right)$
based on the preceding argument.  For notational simplicity, it will be
assumed that the vectors in $\mathcal{M}_{2}$ are exactly $\boldsymbol{w}^{(j)}$
($1\leq j\leq M_{2}$).

We now move on to the second group by examining how many vectors
$\boldsymbol{w}^{(j)}$ in $\mathcal{M}_{2}$ further satisfy
$\prod_{i=k+1}^{m}\xi_{i}^{j}=1$. Notably, the above construction of
$\mathcal{M}_2$ relies only on $\{\boldsymbol{a}_i\}_{1 \le i \le k}$
and is independent of the remaining vectors
$\{\boldsymbol{a}_i\}_{i> k}$. In what follows the argument proceeds
conditional on $\mathcal{M}_2$ and
$\{\boldsymbol{a}_i\}_{1 \le i \le k}$. 
Applying the union bound
gives 
\begin{align*}
 	& \mathbb{E}\left[\sum\nolimits_{j=1}^{M_{2}}\left(1-\prod\nolimits_{i=k+1}^{m}\xi_{i}^{j}\right)\right]
	= \sum\nolimits_{j=1}^{M_{2}}\mathbb{P}\bigg\{ \exists i\text{ }(k<i\leq m):\text{ }\left|\boldsymbol{a}_{i,\perp}^{\top}\boldsymbol{r}_{\perp}^{(l)}\right|>\sqrt{\frac{2\left(n-1\right)\log n}{n}} \bigg\} \\
 	& \quad\leq \text{ }\sum_{j=1}^{M_{2}}\sum_{i=k+1}^{m}\mathbb{P}\left\{ \left|\boldsymbol{a}_{i,\perp}^{\top}\boldsymbol{r}_{\perp}^{(l)}\right|>\sqrt{\frac{2\left(n-1\right)\log n}{n}}\right\} \text{}\leq\text{ }M_{2}m\frac{1}{n^{2}}.
\end{align*}
This combined with Markov's inequality gives
\[
	\sum\nolimits_{j=1}^{M_{2}}\left(1-\prod\nolimits_{i=k+1}^{m}\xi_{i}^{j}\right)\leq\frac{m\log m}{ n^{2}}\cdot M_{2}
\]
with probability $1- o(1)$. Putting the above inequalities
together suggests that with probability $1-o(1)$,
there exist at least
\[
	\left(1-\frac{m\log m}{n^{2}}\right)M_{2}\geq\frac{1}{2}\left(1-\frac{m\log m}{n^{2}}\right)\exp\left(\frac{1}{25}n\right)
	\geq \exp\left(\frac{n}{30}\right)
\]
vectors in $\mathcal{M}_{2}$ satisfying $\prod_{l=k+1}^m\xi_i^l=1$. 
We then choose $\mathcal{M}$ to be the set consisting of all these vectors, which forms a valid collection satisfying
the properties of Lemma \ref{lemma:MinimaxConstruction}.


Finally, the only remaining step is to establish the claim (\ref{eq:claim-count}).
To start with, consider an $n\times k$ matrix $\boldsymbol{B}:=\left[\boldsymbol{b}_{1},\cdots,\boldsymbol{b}_{k}\right]$
of i.i.d.~standard normal entries, and let $\boldsymbol{u}\sim\mathcal{N}\left({\bf 0},\frac{1}{n}\boldsymbol{I}_{n}\right)$.
Conditional on the $\{\boldsymbol{b}_{i}$'s, 
\[
	\boldsymbol{b}_{\boldsymbol{u}}=\left[\begin{array}{c}
		b_{1,\boldsymbol{u}}\\
		\vdots\\
		b_{k,\boldsymbol{u}}
		\end{array}\right]
	:= \left[ \begin{array}{c}
		\boldsymbol{b}_{1}^{\top}\boldsymbol{u}\\
		\vdots\\
		\boldsymbol{b}_{k}^{\top}\boldsymbol{u}
		\end{array}\right]
	\sim \mathcal{N}\left({\bf 0},\text{ }\frac{1}{n}\boldsymbol{B}^{\top}\boldsymbol{B}\right).
\]
For sufficiently large $m$, one has $k=\frac{m}{4\log m}\leq\frac{1}{4}n$.
Using \cite[Corollary 5.35]{Vershynin2012} we get 
\begin{equation}
	\Big\Vert \frac{1}{n}\boldsymbol{B}^{\top}\boldsymbol{B}-\boldsymbol{I} \Big\Vert \leq 4\sqrt{k/n}
	\label{eq:B-I-norm}
\end{equation}
with probability $1- \exp\left(-\Omega(k)\right)$. Thus, for any
constant $0<\epsilon<\frac{1}{2}$, conditional on
$\{\boldsymbol{b}_{i}\}$ and (\ref{eq:B-I-norm}) we obtain
\begin{align}
 	& \mathbb{P}\left\{ \bigcap_{i=1}^{k}\left\{ |\boldsymbol{b}_{i}^{\top}\boldsymbol{u}| \leq\frac{1}{m}\right\} \right\} 	\text{} \\
	& \quad \geq\text{ } \left(2\pi\right)^{-\frac{k}{2}}\text{det}^{-\frac{1}{2}}\Big(\frac{1}{n}\boldsymbol{B}^{\top}\boldsymbol{B}\Big){\displaystyle \int}_{\boldsymbol{b}_{\boldsymbol{u}} \in \Upsilon}
		\exp\Big( - \frac{1}{2}\boldsymbol{b}_{\boldsymbol{u}}^{\top} \Big(\frac{1}{n}\boldsymbol{B}^{\top}\boldsymbol{B} \Big)^{-1}\boldsymbol{b}_{\boldsymbol{u}} \Big)  \mathrm{d}\boldsymbol{b}_{\boldsymbol{u}} \nonumber \\
 	& \quad\geq\text{ }\left(2\pi\right)^{-\frac{k}{2}}\left(1+4\sqrt{k/n}\right)^{-\frac{k}{2}}{\displaystyle \int}_{\boldsymbol{b}_{\boldsymbol{u}}\in\Upsilon}
		\exp\Big(-\frac{1}{2}\left(1-4\sqrt{k/n}\right)^{-1}\sum_{i=1}^{k}b_{i,u}^{2} \Big)\mathrm{d}\boldsymbol{b}_{\boldsymbol{u}}
	\label{eq:LB-E}\\
	& \quad\geq\text{ }\left(2\pi\right)^{-\frac{k}{2}}\big(1+4\sqrt{k/n}\big)^{-\frac{k}{2}} \big( \sqrt{2\pi}m \big)^{-k},
	\label{eq:LB-E2}
\end{align}
where $\Upsilon:=\{ \tilde{\boldsymbol{b}}\mid|\tilde{b}_{i}|\leq m^{-1},1\leq i\leq k \}$ and (\ref{eq:LB-E}) is a direct consequence from (\ref{eq:B-I-norm}).

When it comes to our quantity of interest, the above lower bound
(\ref{eq:LB-E2}) indicates that on an event (defined via $\{\boldsymbol{a}_i\}$) of
probability approaching 1, we have 
\begin{eqnarray}
	\mathbb{E}\Big[\sum\nolimits_{l=1}^{M_{1}}\prod\nolimits_{i=1}^{k}\xi_{i}^{l} \Big] & \geq & \text{ }M_{1}\left(2\pi\right)^{-\frac{k}{2}}\left(1+4\sqrt{k/n}\right)^{-\frac{k}{2}}
	\big( \sqrt{2\pi}m \big)^{-k}.
\end{eqnarray}
Since conditional on $\{\boldsymbol{a}_i\}$,
$\prod\nolimits_{i=1}^{k}\xi_{i}^{l}$ are independent across $l$,
applying the Chernoff-type bound \cite[Theorem
4.5]{mitzenmacher2005probability} gives 
\[
	\sum\nolimits_{l=1}^{M_{1}}\prod\nolimits_{i=1}^{k}\xi_{i}^{l}\geq\frac{M_{1}}{2}\left(2\pi\right)^{-\frac{k}{2}}\left(1+4\sqrt{k/n}\right)^{-\frac{k}{2}}
	\big( \sqrt{2\pi}m \big)^{-k}
\]
with probability exceeding $1-\exp\Big( -\frac{1}{8}\frac{M_{1}}{\left(2\pi\right)^{k/2} (1+4\sqrt{k/n} )^{k/2}}\left(\frac{1}{\sqrt{2\pi}m}\right)^{k} \Big)$. This 
concludes the proof. 

\subsection{Proof of Lemma \ref{lemma-KL-UB}\label{sub:Proof-of-Lemma-KL-UB}}


Before proceeding, we introduce the $\chi^{2}$-divergence between  two
probability measures $P$ and $Q$ as
\begin{equation}
	\chi^{2}\left(P\|Q\right):={\displaystyle \int}\bigg(\frac{\mathrm{d}P}{\mathrm{d}Q}\bigg)^{2}\mathrm{d}Q-1.
\end{equation}
It is well known (e.g. \cite[Lemma 2.7]{tsybakov2009introduction}) that
\begin{equation}
	\mathsf{KL}\left(P\|Q\right) \leq \log (1+ \chi^{2}\left(P\|Q\right)), \label{eq:KL-chi-UB}
\end{equation}
and hence it suffices to develop an upper bound on the $\chi^2$ divergence.

Under independence, for any $\boldsymbol{w}_{0},\boldsymbol{w}_{1}\in\mathbb{R}^{n}$,
the decoupling identity of the $\chi^{2}$ divergence \cite[Page 96]{tsybakov2009introduction} gives
\begin{eqnarray}
	\chi^{2}\left(\mathbb{P}\left(\boldsymbol{y}\mid\boldsymbol{w}_{1}\right)\hspace{0.3em}\|\hspace{0.3em}\mathbb{P}\left(\boldsymbol{y}\mid\boldsymbol{w}_{0}\right)\right) 
	& = & \prod\nolimits_{i=1}^{m}\left(1+\chi^{2}\left(\mathbb{P}\left(y_{i}\mid\boldsymbol{w}_{1}\right)\hspace{0.3em}\|\hspace{0.3em}\mathbb{P}\left(y_{i}\mid\boldsymbol{w}_{0}\right)\right)\right)-1\nonumber \\
 	& = & \exp \left(\sum\nolimits_{i=1}^{m}\frac{\left(|\boldsymbol{a}_{i}^{\top}\boldsymbol{w}_{1}|^{2} - |\boldsymbol{a}_{i}^{\top}\boldsymbol{w}_{0} |^{2}\right)^{2}}{ |\boldsymbol{a}_{i}^{\top}\boldsymbol{w}_{0} |^{2}}\right) - 1.
	\label{eq:second}
\end{eqnarray}
The preceding identity (\ref{eq:second}) arises from the following computation:
by definition of $\chi^2(\cdot \| \cdot)$,
\begin{align*}
	 & \chi^{2}\left(\mathsf{Poisson}\left(\lambda_{1}\right)\hspace{0.3em}\|\hspace{0.3em}\mathsf{Poisson}\left(\lambda_{0}\right)\right)
		= \bigg\{ \sum\nolimits_{k=0}^{\infty}\frac{\left(\lambda_{1}^{k}\exp\left(-\lambda_{1}\right)\right)^{2}}{\lambda_{0}^{k}\exp\left(-\lambda_{0}\right)k!} \bigg\} -1\\
	 & =\exp\Big(\lambda_{0}-2\lambda_{1}+\frac{\lambda_{1}^{2}}{\lambda_{0}}\Big)\Big\{ \sum\nolimits_{k=0}^{\infty}\frac{\left( \lambda_{1}^{2} /\lambda_{0} \right)^{k}}{k!}\exp\Big(-\frac{\lambda_{1}^{2}}{\lambda_{0}}\Big)\Big\} -1
	= \exp\Big(\frac{\left(\lambda_{1}-\lambda_{0}\right)^{2}}{\lambda_{0}}\Big)-1.
\end{align*}
Set $\boldsymbol{r}:=\boldsymbol{w}_{1}-\boldsymbol{w}_{0}$. To summarize,  
\begin{align}
  \mathsf{KL}\left(\mathbb{P}\left(\boldsymbol{y}\mid\boldsymbol{w}_{1}\right)\hspace{0.3em}\|\hspace{0.3em}\mathbb{P}\left(\boldsymbol{y}\mid\boldsymbol{w}_{0}\right)\right) & \le \text{ } 
  \sum_{i=1}^{m}\frac{\left(|\boldsymbol{a}_{i}^{\top}\boldsymbol{w}_{1}|^2
      - |\boldsymbol{a}_{i}^{\top}\boldsymbol{w}_{0} |^2 \right)^{2}}{
    |\boldsymbol{a}_{i}^{\top}\boldsymbol{w}_{0} |^2} \text{ } \\ & 
  \leq \text{ } \sum_{i=1}^{m}\frac{ \left|\boldsymbol{a}_{i}^{\top}\boldsymbol{r}\right|^{2} \left(2\left|\boldsymbol{a}_{i}^{\top}\boldsymbol{w}_{0}\right|+\left|\boldsymbol{a}_{i}^{\top}\boldsymbol{r}\right|\right)^{2}}{|\boldsymbol{a}_{i}^{\top}\boldsymbol{w}_{0} |^2 }\nonumber \\
  &  = \text{ } \sum_{i=1}^{m}
  |\boldsymbol{a}_{i}^{\top}\boldsymbol{r} |^2 \left(\frac{8
      |\boldsymbol{a}_{i}^{\top}\boldsymbol{w}_{0} |^2 + 2
      |\boldsymbol{a}_{i}^{\top}\boldsymbol{r} |^2 }{
      |\boldsymbol{a}_{i}^{\top}\boldsymbol{w}_{0} |^2 }\right).
	\label{eq:KL-UB}
\end{align}

\section{Initialization via truncated spectral Method\label{proof-truncated-spectral}}

This section demonstrates that the truncated spectral
method works when $m\asymp n$, as stated in the
proposition below.

\begin{proposition}
  Fix $\delta>0$
  and $\boldsymbol{x}\in\mathbb{R}^{n}$. Consider the model where $y_i= |\boldsymbol{a}_i^{\top}\boldsymbol{x}|^2 +\eta_i$ and $\boldsymbol{a}_i \overset{\mathrm{ind.}}{\sim} \mathcal{N}({\bf 0}, \boldsymbol{I})$. 
Suppose that 
\begin{equation}
  \label{eq:eta_cond}
  |\eta_i| \leq \varepsilon \max\{ \| \boldsymbol{x} \|^2,~ |\boldsymbol{a}_i^{\top}\boldsymbol{x}|^2 \}, \quad 1\leq i\leq m
\end{equation}
for some sufficiently small constant $\varepsilon>0$. With probability
exceeding $1-\exp\left(-\Omega\left(m\right)\right)$, the solution $\boldsymbol{z}^{(0)}$ returned by the truncated
spectral method obeys
\begin{equation}
	\mathrm{dist} (\boldsymbol{z}^{(0)},\boldsymbol{x}) \leq \delta \| \boldsymbol{x} \|,  
	\label{eq:quality-SM}
\end{equation}
provided that $m>c_{0}n$ for some constant $c_{0}>0$.
\end{proposition}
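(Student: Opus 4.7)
The plan is to apply the Davis--Kahan $\sin\Theta$ theorem to the truncated matrix $\boldsymbol{Y}$ and its expectation. The argument proceeds in three stages: concentration of the scale factor $\lambda_0$, computation of $\mathbb{E}[\boldsymbol{Y}]$ together with its spectral gap, and finally operator-norm concentration of $\boldsymbol{Y}$ around $\mathbb{E}[\boldsymbol{Y}]$.

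First I would control $\lambda_0^2 = \frac{1}{m}\sum_i y_i$. Since $|\boldsymbol{a}_i^{\top}\boldsymbol{x}|^2$ has mean $\|\boldsymbol{x}\|^2$ and is sub-exponential, Bernstein's inequality yields $|\lambda_0^2 - \|\boldsymbol{x}\|^2| \leq \delta' \|\boldsymbol{x}\|^2$ with probability $1 - e^{-\Omega(m)}$; the noise contribution is absorbed by (\ref{eq:eta_cond}) since $|\frac{1}{m}\sum \eta_i| \lesssim \varepsilon\|\boldsymbol{x}\|^2$. Conditional on this event, the random threshold $\alpha_y^2 \lambda_0^2$ is sandwiched between two deterministic thresholds $\alpha_y^2 \|\boldsymbol{x}\|^2 (1 \pm \delta')$, which allows the subsequent analysis to be carried out with a fixed truncation level at the price of absorbing the slack into $\epsilon$.

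Next, with a deterministic truncation level in place and noise temporarily set to zero, rotational symmetry of the Gaussian design reduces the expectation to a two-parameter form
\[
\mathbb{E}[\boldsymbol{Y}] = \beta_1 \|\boldsymbol{x}\|^2 \boldsymbol{I} + \beta_2 \boldsymbol{x}\boldsymbol{x}^{\top},
\]
where $\beta_1, \beta_2 > 0$ are explicit Gaussian moments that depend on $\alpha_y$; for $\alpha_y \geq 3$, $\beta_2$ is bounded below by an absolute positive constant. Hence $\boldsymbol{x}/\|\boldsymbol{x}\|$ is the leading eigenvector of $\mathbb{E}[\boldsymbol{Y}]$ with spectral gap $\beta_2 \|\boldsymbol{x}\|^2$. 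I would then show that replacing the ideal threshold by the perturbed one and adding back the noise shifts $\mathbb{E}[\boldsymbol{Y}]$ in operator norm by at most $O((\delta' + \varepsilon)\|\boldsymbol{x}\|^2)$, which is negligible compared to the gap provided $\delta'$ and $\varepsilon$ are small.

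The central step---and the main obstacle---is to prove the uniform concentration
\[
\bigl\| \boldsymbol{Y} - \mathbb{E}[\boldsymbol{Y}] \bigr\| \leq \epsilon \|\boldsymbol{x}\|^2
\]
at the linear sample complexity $m \asymp n$. Without truncation this estimate fails due to the heavy tail of $y_i \boldsymbol{a}_i \boldsymbol{a}_i^{\top}$, as explained in Section \ref{sec:Algorithm} (and this is precisely the reason that the vanilla spectral method needs $m \gtrsim n \log n$). The trimming bounds the summand magnitudes: on the truncation event, $y_i \lesssim \|\boldsymbol{x}\|^2$ and $\|\boldsymbol{a}_i\|^2 \lesssim n$, making each $\boldsymbol{u}^{\top}(y_i \boldsymbol{a}_i \boldsymbol{a}_i^{\top})\boldsymbol{u}$ sub-exponential with manageable parameters for any fixed unit vector $\boldsymbol{u}$. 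A Bernstein-type inequality combined with a standard $\epsilon$-net covering of the unit sphere (of cardinality $(1 + 2/\epsilon)^n$) then yields the operator norm bound with probability $1 - e^{-\Omega(m)}$ provided $m \geq c_0 n$. To handle the data dependence of the random threshold, I would apply the concentration to the two deterministic upper/lower thresholds and sandwich $\boldsymbol{Y}$ between the two resulting matrices.

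Putting the pieces together, Davis--Kahan gives $\mathrm{dist}(\tilde{\boldsymbol{z}}, \boldsymbol{x}/\|\boldsymbol{x}\|) \lesssim \epsilon / \beta_2$ for the leading eigenvector $\tilde{\boldsymbol{z}}$ of $\boldsymbol{Y}$. Choosing the constant $c_0$ large enough renders the right-hand side arbitrarily small. Combined with the rescaling $\boldsymbol{z}^{(0)} = \sqrt{mn / \sum_i \|\boldsymbol{a}_i\|^2}\,\lambda_0 \tilde{\boldsymbol{z}}$---whose scaling factor is $\|\boldsymbol{x}\|(1 + o(1))$ by the first step and standard $\chi^2$ concentration---this establishes (\ref{eq:quality-SM}).
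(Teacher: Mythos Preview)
Your proposal is correct and matches the paper's proof: concentrate $\lambda_0$, replace the data-dependent threshold by deterministic sandwiching thresholds, establish operator-norm concentration of the resulting truncated matrix, and conclude via Davis--Kahan plus rescaling. The one packaging difference is that where you run Bernstein plus an $\epsilon$-net by hand, the paper observes that $\boldsymbol{b}_i := \boldsymbol{a}_i(\boldsymbol{a}_i^\top\boldsymbol{x})\,\mathbf{1}_{\{|\boldsymbol{a}_i^\top\boldsymbol{x}|\le\gamma\}}$ is a sub-Gaussian vector (the scalar factor being bounded after truncation) and directly invokes the standard covariance-concentration bound for matrices with i.i.d.\ sub-Gaussian rows---which is the same argument one level of abstraction up.
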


\begin{proof}
By homogeneity, it suffices to consider the case where $\Vert \boldsymbol{x} \Vert =1$.
Recall from \cite[Lemma 3.1]{candes2012phaselift} that 
$\frac{1}{m}\sum_{i=1}^{m} |\boldsymbol{a}_i^{\top}\boldsymbol{x}|^2 \in [1\pm\varepsilon]\Vert \boldsymbol{x}\Vert ^2$ holds with probability $1-\exp(-\Omega(m))$. 
Under the hypothesis (\ref{eq:eta_cond}), 
one has 
\[
  \frac{1}{m}\left\Vert \boldsymbol{\eta}\right\Vert _{1}\leq   \frac{1}{m} \sum\nolimits_{i=1}^m \varepsilon (  \|\boldsymbol{x}\|^{2} + |\boldsymbol{a}_i^{\top}\boldsymbol{x}|^2 ) \leq \varepsilon \|\boldsymbol{x}\|^2 + \varepsilon(1+\epsilon) \|\boldsymbol{x}\|^2 \leq 3\varepsilon \|\boldsymbol{x}\|^2,
\]
which yields
\begin{align}
	\lambda_0^2 := \frac{1}{m} \sum\nolimits_{l=1}^{m} y_{l}  \text{ }=\text{ }  
	\frac{1}{m} \sum\nolimits_{l=1}^{m} |\boldsymbol{a}_{l}^{\top}\boldsymbol{x}|^{2} + \frac{1}{m}\sum_{l=1}^{m}{\eta}_l
	\text{ }\in\text{ }  [1\pm 4\varepsilon] \Vert \boldsymbol{x} \Vert ^2
\label{eq:lambda0}
\end{align}
with probability $1-\exp(-\Omega(m))$. 

Consequently, when $|\eta_i| \leq \varepsilon \|\boldsymbol{x}\|^2$, one has
%
\begin{align}
\begin{array}{ll}
	{\bf 1}_{\left\{ |(\boldsymbol{a}_{i}^{\top}\boldsymbol{x})^{2}+\eta_{i}|\leq\alpha_{y}^{2}(\frac{1}{m}\sum_{l}y_{l}) \right\} }
	\leq {\bf 1}_{\left\{ |\boldsymbol{a}_{i}^{\top}\boldsymbol{x}|^{2} \leq\alpha_{y}^{2}(\frac{1}{m}\sum_{l}y_{l}) + |\eta_i|\right\} }
	\leq {\bf 1}_{\left\{ |\boldsymbol{a}_{i}^{\top}\boldsymbol{x}|^{2}\leq(1+4\varepsilon)\alpha_{y}^{2}+\varepsilon\right\}; } \\
	{\bf 1}_{\left\{ |(\boldsymbol{a}_{i}^{\top}\boldsymbol{x})^{2}+\eta_{i}|\leq\alpha_{y}^{2}(\frac{1}{m}\sum_{l}y_{l}) \right\} }
	\geq {\bf 1}_{\left\{ |\boldsymbol{a}_{i}^{\top}\boldsymbol{x}|^{2} \leq\alpha_{y}^{2}(\frac{1}{m}\sum_{l}y_{l}) - |\eta_i|\right\} }
	\geq {\bf 1}_{\left\{ |\boldsymbol{a}_{i}^{\top}\boldsymbol{x}|^{2}\leq(1-4\varepsilon)\alpha_{y}^{2} - \varepsilon\right\}. }\end{array}
\end{align}
Besides, in the case where $|\eta_i| \leq \varepsilon |\boldsymbol{a}_i^{\top}\boldsymbol{x}|^2$,  
\begin{align}
\begin{array}{ll}
	{\bf 1}_{\left\{ |(\boldsymbol{a}_{i}^{\top}\boldsymbol{x})^{2}+\eta_{i}|\leq\alpha_{y}^{2}(\frac{1}{m}\sum_{l}y_{l}) \right\} }
	\leq {\bf 1}_{\left\{ (1-\varepsilon) |\boldsymbol{a}_{i}^{\top}\boldsymbol{x}|^{2} \leq\alpha_{y}^{2}(\frac{1}{m}\sum_{l}y_{l}) \right\} }
	\leq {\bf 1}_{\left\{ |\boldsymbol{a}_{i}^{\top}\boldsymbol{x}|^{2}\leq \frac{1+4\varepsilon}{1- \varepsilon}\alpha_{y}^{2} \right\} }; \\
	{\bf 1}_{\left\{ |(\boldsymbol{a}_{i}^{\top}\boldsymbol{x})^{2}+\eta_{i}|\leq\alpha_{y}^{2}(\frac{1}{m}\sum_{l}y_{l}) \right\} }
	\geq {\bf 1}_{\left\{ (1+\varepsilon)|\boldsymbol{a}_{i}^{\top}\boldsymbol{x}|^{2} \leq\alpha_{y}^{2}(\frac{1}{m}\sum_{l}y_{l}) \right\} }
	\geq {\bf 1}_{\left\{ |\boldsymbol{a}_{i}^{\top}\boldsymbol{x}|^{2}\leq \frac{1-4\varepsilon}{1+ \varepsilon}\alpha_{y}^{2} \right\} }.\end{array}
\end{align}
Taken collectively, these inequalities imply that
\begin{equation}
	\underset{:=\boldsymbol{Y}_{2}}
		{\underbrace{\frac{1}{m}\sum_{i=1}^{m}\boldsymbol{a}_{i}\boldsymbol{a}_{i}^{\top}\left(\boldsymbol{a}_{i}^{\top}\boldsymbol{x}\right)^{2}{\bf 1}_{\left\{ |\boldsymbol{a}_{i}^{\top}\boldsymbol{x}|\leq \gamma_2 \right\} }}}
	\preceq \text{ }\boldsymbol{Y}\text{ }
	\preceq \text{ } \underset{:=\boldsymbol{Y}_{1}}
		{\underbrace{\frac{1}{m}\sum_{i=1}^{m}\boldsymbol{a}_{i}\boldsymbol{a}_{i}^{\top}\left(\boldsymbol{a}_{i}^{\top}\boldsymbol{x}\right)^{2}{\bf 1}_{\left\{ |\boldsymbol{a}_{i}^{\top}\boldsymbol{x}|\leq \gamma_1 \right\} }}}.
	\label{eq:Y-sandwich}
\end{equation}
where $\gamma_1:= \max \{ \sqrt{(1+4\varepsilon)\alpha_{y}^{2}+\varepsilon}, \sqrt{ \frac{1+4\varepsilon}{1-\varepsilon} }\alpha_{y} \}$ and $\gamma_2:= \min \{ \sqrt{(1-4\varepsilon)\alpha_{y}^{2}-\varepsilon}, \sqrt{ \frac{1-4\varepsilon}{1+\varepsilon} }\alpha_{y} \}$.  
Letting $\xi\sim\mathcal{N}\left(0,1\right)$, one can compute 
\begin{equation}
	\,\mathbb{E}\left[\boldsymbol{Y}_{1}\right]=\beta_{1}\boldsymbol{x}\boldsymbol{x}^{\top} + \beta_{2}\boldsymbol{I},
	\quad\text{and}\quad
	\mathbb{E}\left[\boldsymbol{Y}_{2}\right]=\beta_{3}\boldsymbol{x}\boldsymbol{x}^{\top} + \beta_{4}\boldsymbol{I},
\end{equation}
where 
$\beta_1 := \mathbb{E}\big[\xi^{4}{\bf 1}_{\{|\xi|\leq \gamma_1 \}}\big]
	  - \mathbb{E}\big[\xi^{2}{\bf 1}_{\{|\xi|\leq \gamma_1 \}}\big]$,
$\beta_2 := \mathbb{E}\big[\xi^{2}{\bf 1}_{\{|\xi|\leq \gamma_1 \}}\big]$,
$\beta_3 := \mathbb{E}\big[\xi^{4}{\bf 1}_{\{|\xi|\leq \gamma_2 \}}\big]
	  - \mathbb{E}\big[\xi^{2}{\bf 1}_{\{|\xi|\leq \gamma_2 \}}\big]$
and 
$\beta_4 := \mathbb{E}\big[\xi^{2}{\bf 1}_{\{|\xi|\leq \gamma_2 \}}\big]$.
Recognizing that 
$\boldsymbol{a}_{i}\boldsymbol{a}_{i}^{\top}\left(\boldsymbol{a}_{i}^{\top}\boldsymbol{x}\right)^{2}{\bf 1}_{\left\{ |\boldsymbol{a}_{i}^{\top}\boldsymbol{x})|\leq c\right\} }$
can be rewritten as $\boldsymbol{b}_{i}\boldsymbol{b}_{i}^{\top}$
for some sub-Gaussian vector 
$\boldsymbol{b}_{i}:=\boldsymbol{a}_{i}\left(\boldsymbol{a}_{i}^{\top}\boldsymbol{x}\right){\bf 1}_{\left\{ |\boldsymbol{a}_{i}^{\top}\boldsymbol{x})|\leq c\right\} }$,
we apply standard results on random matrices with non-isotropic sub-Gaussian
rows \cite[Equation (5.26)]{Vershynin2012} to deduce 
\begin{equation}
	\left\Vert \boldsymbol{Y}_{1}-\mathbb{E}\left[\boldsymbol{Y}_{1}\right]\right\Vert \leq\delta,\quad\left\Vert \boldsymbol{Y}_{2}-\mathbb{E}\left[\boldsymbol{Y}_{2}\right]\right\Vert 
	\leq \delta
	\label{eq:deviation-spectral-2}
\end{equation}
with probability $1-\exp\left(-\Omega\left(m\right)\right)$, provided
that $m/n$ exceeds some large constant. Besides, when $\varepsilon$
is sufficiently small, one further has $\|\mathbb{E}\left[\boldsymbol{Y}_{1}\right]-\mathbb{E}\left[\boldsymbol{Y}_{2}\right]\|\leq\delta$.
These taken collectively with (\ref{eq:Y-sandwich}) give
\begin{equation}
	\|\boldsymbol{Y}-\beta_{1}\boldsymbol{x}\boldsymbol{x}^{\top}-\beta_2 \boldsymbol{I}\| \leq 3 \delta.
	\label{eq:dev-spectral}
\end{equation}

Fix some small $\tilde{\delta}>0$. With (\ref{eq:dev-spectral}) in place, applying the Davis-Kahan $\sin\Theta$ theorem \cite{davis1970rotation} and taking $\delta,\varepsilon$ to be
sufficiently small, we obtain
\[
	\mathrm{dist}(  \tilde{\boldsymbol{z}},\boldsymbol{x} ) \leq \tilde{\delta}, 
\]
where $\tilde{\boldsymbol{z}}$ is the leading eigenvector of  $\boldsymbol{Y}$. Since  $\boldsymbol{z}^{(0)} := \lambda_0\tilde{\boldsymbol{z}}$, one derives
\begin{align}
	\mathrm{dist}(  \boldsymbol{z}^{(0)},\boldsymbol{x} ) & \leq  \mathrm{dist}(  	\lambda_0\tilde{\boldsymbol{z}},  \tilde{\boldsymbol{z}} ) + \mathrm{dist}(  \tilde{\boldsymbol{z}},\boldsymbol{x} ) \nonumber\\ 
	& \leq   |\lambda_0 -1| + \tilde{\delta}  ~\leq~  \max\{ \sqrt{1+2\varepsilon}-1,~ 1 - \sqrt{1-2\varepsilon}\} + \tilde{\delta} 
\end{align}
as long as $m/n$ is sufficiently large, where the last inequality follows from (\ref{eq:lambda0}). Picking $\tilde{\delta}$ and $\varepsilon$ to be sufficiently small, we establish the claim. 

\end{proof}

We now justify that the Poisson model (\ref{eq:Poisson}) satisfies the condition (\ref{eq:eta_cond}). 
Suppose that $\mu_{i}=(\boldsymbol{a}_{i}^{\top}\boldsymbol{x})^{2}$
and hence $y_{i}\sim\mathsf{Poisson}(\mu_{i})$. It follows from the
Chernoff bound that for all $t\geq 0$, 
\[
	\mathbb{P}\left(y_{i}-\mu_{i}\geq\tau\right)\leq\frac{\mathbb{E}\left[e^{ty_{i}}\right]}{\exp\left(t(\mu_{i}+\tau)\right)}=\frac{\exp\left(\mu_{i}\left(e^{t}-1\right)\right)}{\exp\left(t(\mu_{i}+\tau)\right)}=\exp\left(\mu_{i}\left(e^{t}-t-1\right)-t\tau\right),
\]
Taking $\tau=2\tilde{\varepsilon}\mu_{i}$ and $t=\tilde{\varepsilon}$
for any $0\leq\tilde{\varepsilon}\leq1$ gives
\begin{eqnarray*}
	\mathbb{P}\left(y_{i}-\mu_{i}\geq2\tilde{\varepsilon}\mu_{i}\right) & \leq & \exp\left(\mu_{i}\left(e^{t}-t-1-2\tilde{\varepsilon}t\right)\right)\text{ }\overset{(\text{i})}{\leq}\text{ }\exp\left(\mu_{i}\left(t^{2}-2\tilde{\varepsilon}t\right)\right)=\exp\left(-\mu_{i}\tilde{\varepsilon}^{2}\right),
\end{eqnarray*}
where (i) follows since $e^{t}\leq1+t+t^{2}$ ($0\leq t\leq1$). In addition, for any $\tilde{\varepsilon}> 1$, taking $t=1$ we get
\begin{eqnarray*}
	\mathbb{P}\left(y_{i}-\mu_{i}\geq2\tilde{\varepsilon}\mu_{i}\right) & \leq & \exp\left(\mu_{i}\left(e^{t}-t-1-2\tilde{\varepsilon}t\right)\right)\text{ } \leq ~\exp\left(-0.5\mu_{i}\tilde{\varepsilon}\right). 
\end{eqnarray*}

Suppose that $\|\boldsymbol{x}\| \gtrsim\log m$. When $\mu_i \gtrsim \| \boldsymbol{x} \|^2 $, setting $\tilde{\varepsilon}=0.5 \varepsilon < 1$ yields
\begin{eqnarray*}
	\mathbb{P}\left(y_{i}-\mu_{i}\geq {\varepsilon}\mu_{i}\right) {\leq}~ \exp\left(- \mu_{i}{\varepsilon}^{2} / 4 \right) ~\leq~\exp\left(- \Omega( {\varepsilon}^{2} \| \boldsymbol{x} \|^2   ) \right)~ =~ \exp\left(- \Omega( {\varepsilon}^{2} \log^2 m  ) \right).
\end{eqnarray*}
When $\mu_i \lesssim \| \boldsymbol{x} \|^2 $,  letting
$\kappa_{i} = \mu_{i}/\|\boldsymbol{x}\|^{2}$ and setting $\tilde{\varepsilon}=\varepsilon/2\kappa_{i} \gtrsim \varepsilon$,
we obtain
\begin{align*}
	& \mathbb{P}\left(y_{i}-\mu_{i}\geq\varepsilon\|\boldsymbol{x}\|^{2}\right)
	 =  \mathbb{P}\left(y_{i}-\mu_{i}\geq2\tilde{\varepsilon}\mu_{i}\right) 
	\leq \exp\left(- \min\{ \tilde{\varepsilon}, 0.5\} \cdot \tilde{\varepsilon} \mu_i \right) \\
	& \quad = \exp\Big( - 0.5\min\{ \tilde{\varepsilon}, 0.5\} \varepsilon \| \boldsymbol{x} \|^2 \Big) 
	 =  \exp\big(-\Omega\left(\varepsilon^2 \log^2 m\right)\big).
\end{align*}
%
%
%
%
In view of the union bound, 
\begin{eqnarray}
	\mathbb{P} \left(\exists i: \eta_{i} \geq \varepsilon \max\{ \|\boldsymbol{x}\|^{2},~ |\boldsymbol{a}_i^{\top}\boldsymbol{x}|^2\} \right) ~ \leq ~ m\exp\big(-\Omega\left(\varepsilon^{2}\log^{2}m\right)\big) ~\rightarrow~ 0.
	\label{eq:eta_UB1}
\end{eqnarray}
%
%
%

Similarly, applying the same argument on $-y_{i}$ we get $\eta_{i}\geq-\varepsilon \max\{ \|\boldsymbol{x}\|^2, |\boldsymbol{a}_i^{\top}\boldsymbol{x}|^2 \}$
for all $i$, which together with (\ref{eq:eta_UB1}) establishes  the condition (\ref{eq:eta_cond}) with high probability. In conclusion, the claim (\ref{eq:quality-SM})
applies to the Poisson model.


\section{Local error contraction with backtracking line search\label{sec:Backtracking-line-search}}

In this section, we verify the effectiveness of a backtracking line
search strategy by showing local error contraction. To keep it
concise, we only sketch the proof for the noiseless case, but the
proof extends to the noisy case without much difficulty. Also we do
not strive to obtain an optimized constant.  For concreteness, we
prove the following proposition.
\begin{proposition}
  The claim in Proposition \ref{prop-Contraction}
  continues to hold if $\alpha_{h}\geq 6$, $\alpha_{z}^{\mathrm{ub}}\geq5$,
  $\alpha_{z}^{\mathrm{lb}}\leq0.1$, $\alpha_{p}\geq 5$, and
  \begin{equation}
	\Vert \boldsymbol{h}\Vert /\Vert \boldsymbol{z}\Vert \leq \epsilon_{\mathrm{tr}}
  \end{equation}
  for some constant $\epsilon_{\mathrm{tr}}>0$ independent of $n$ and $m$.
\end{proposition}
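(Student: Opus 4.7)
The plan is to show that with the backtracking rule in place, the step size $\mu_t$ produced at every iteration lies in an interval $[\mu_{\min},\mu_0]$ of universal constants, so that Proposition \ref{prop-Contraction} applies verbatim and gives geometric error contraction. All heavy lifting (the regularity condition \eqref{eq:Regularity-Truncate-Original} and the gradient-norm bound of Lemma \ref{Lemma:norm-score}) is already in hand; we only need to translate the Armijo-style condition \eqref{eq:backtrack} into a step size of the right order.

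Step one is to reduce everything to a one-dimensional smoothness estimate of $\widehat{\ell}$ along the ray $\tau\mapsto \boldsymbol{z}^{(t)}+\tau\boldsymbol{p}^{(t)}$. Because every index $i\in\widehat{\mathcal{T}}(\boldsymbol{z}^{(t)})$ satisfies $|\boldsymbol{a}_i^\top\boldsymbol{z}^{(t)}|\geq \alpha_z^{\mathrm{lb}}\|\boldsymbol{z}^{(t)}\|$ and $|\boldsymbol{a}_i^\top\boldsymbol{p}^{(t)}|\leq \alpha_p\|\boldsymbol{p}^{(t)}\|$, the quantity $\boldsymbol{a}_i^\top(\boldsymbol{z}^{(t)}+\tau\boldsymbol{p}^{(t)})$ stays bounded away from zero for $\tau$ in an interval $[0,\tau_\star]$ whose length is a universal constant (given $\alpha_z^{\mathrm{lb}}$ and $\alpha_p$). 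Differentiating twice in $\tau$ and using standard Marchenko–Pastur concentration for $\|\boldsymbol{A}\|$ restricted to $\widehat{\mathcal{T}}$, one obtains an upper bound $\bigl|\tfrac{d^2}{d\tau^2}\tfrac{1}{m}\widehat{\ell}(\boldsymbol{z}^{(t)}+\tau\boldsymbol{p}^{(t)})\bigr| \leq L\|\boldsymbol{p}^{(t)}\|^2$ for some universal $L$, uniformly over $\tau\in[0,\tau_\star]$. Meanwhile the directional derivative at $\tau=0$ equals $\langle\frac{1}{m}\nabla\widehat{\ell}(\boldsymbol{z}^{(t)}),\boldsymbol{p}^{(t)}\rangle$, and a computation analogous to Proposition~\ref{prop-regularity-noiseless} (now with $\alpha_p\geq 5$ playing the role of the $\mathcal{E}_2^i$-cutoff) shows that this inner product exceeds $\|\boldsymbol{p}^{(t)}\|^2$ once $\|\boldsymbol{h}\|/\|\boldsymbol{z}\|\leq \epsilon_{\mathrm{tr}}$. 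A Taylor expansion then shows that \eqref{eq:backtrack} holds for every $\tau \in[0,\min\{\tau_\star,1/L\}]$, so backtracking must terminate with $\tau=\mu_t\geq \beta\min\{\tau_\star,1/L\}=:\mu_{\min}$; trivially $\mu_t\leq 1$.

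Step two is to convert the guaranteed step size into error contraction. Here I would rerun the chain of inequalities in \eqref{eq:convergence-rate}, but replace the exact regularity condition by what the line search gives: since $\mu_t\geq \mu_{\min}$ and since Lemma~\ref{Lemma:norm-score} yields $\|\boldsymbol{p}^{(t)}\|\lesssim\|\boldsymbol{h}^{(t)}\|$, we still have $\|\mu_t\boldsymbol{p}^{(t)}\|\lesssim\|\boldsymbol{h}^{(t)}\|$. Combining this with $\langle\boldsymbol{h}^{(t)},-\boldsymbol{p}^{(t)}\rangle\gtrsim\|\boldsymbol{h}^{(t)}\|^2$ (from Proposition~\ref{prop-regularity-noiseless}; the constants required there are exactly what the hypotheses $\alpha_h\geq 6$, $\alpha_z^{\mathrm{ub}}\geq 5$, $\alpha_z^{\mathrm{lb}}\leq 0.1$ deliver via \eqref{eq:Condition-tau_z}) yields
\[
\mathrm{dist}^2\bigl(\boldsymbol{z}^{(t)}+\mu_t\boldsymbol{p}^{(t)},\boldsymbol{x}\bigr)
\leq (1-\rho)\,\mathrm{dist}^2\bigl(\boldsymbol{z}^{(t)},\boldsymbol{x}\bigr),
\]
for some universal $0<\rho<1$, provided $\|\boldsymbol{h}^{(t)}\|/\|\boldsymbol{z}^{(t)}\|\leq \epsilon_{\mathrm{tr}}$. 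Induction then maintains the hypothesis for all subsequent iterates.

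The main obstacle is the first step: the line search works with $\widehat{\ell}$ and the index set $\widehat{\mathcal{T}}(\boldsymbol{z})$, whereas all our noiseless analysis (Proposition~\ref{prop-regularity-noiseless}, Lemma~\ref{Lemma:norm-score}) was written for $\nabla\ell_{\mathrm{tr}}$ with the truncation sets $\mathcal{E}_1^i\cap\mathcal{E}_2^i$. One must check that these two truncation rules, though formally different, produce index sets that overlap on all but a negligibly small (sub-constant) fraction of coordinates once $\|\boldsymbol{h}\|/\|\boldsymbol{z}\|$ is sufficiently small; only then do the quantitative inequalities of Section~\ref{sub:Expectation-and-Concentration} port over. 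Fortunately, the second cutoff $|\boldsymbol{a}_i^\top\boldsymbol{p}|\leq\alpha_p\|\boldsymbol{p}\|$ is even milder than $\mathcal{E}_2^i$ (which involves $\frac{1}{m}\|\boldsymbol{y}-\mathcal{A}(\boldsymbol{z}\boldsymbol{z}^\top)\|_1$), so an $\epsilon$-net argument identical to the one in the proof of Lemma~\ref{Lemma:I3} handles the discrepancy uniformly in $(\boldsymbol{z},\boldsymbol{x})$ within the target neighborhood, at the cost of absorbing an $o(1)$ loss into the universal constant $\rho$.
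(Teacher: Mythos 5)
Your proposal has a genuine gap that makes the argument break down: you never establish an upper bound on the accepted step size $\mu_t$, and that upper bound is the entire point of Appendix~\ref{sec:Backtracking-line-search}. Proposition~\ref{prop-Contraction} only gives contraction for $\mu \leq \mu_0$, and under the parameter constraints $\alpha_h\geq 6$, $\alpha_z^{\mathrm{ub}}\geq5$, $\alpha_z^{\mathrm{lb}}\leq 0.1$ one has $\mu_0\approx 0.384$. Your Step~1 correctly argues (via Taylor and an $L$-smoothness bound) that the Armijo condition \eqref{eq:backtrack} holds for all sufficiently small $\tau$, giving the termination lower bound $\mu_t\geq\mu_{\min}$, but then you dismiss the upper bound with ``trivially $\mu_t\leq 1$.'' That bound is useless: backtracking starts at $\tau=1$ and accepts the first $\tau=\beta^k$ satisfying the criterion, so if Armijo happened to hold at $\tau=1$ the accepted step would be $1$, which is far above $0.384$. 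Your Step~2 then quietly needs the upper bound: from Lemma~\ref{Lemma:norm-score} one gets $\|\boldsymbol{p}\|\leq c_2\|\boldsymbol{h}\|$ with $c_2\approx 4\sqrt{1.02+0.665/\alpha_h}\approx 4.25$, and the regularity lower bound gives $\langle\boldsymbol{h},-\boldsymbol{p}\rangle\geq c_1\|\boldsymbol{h}\|^2$ with $c_1\approx 3.4$, so
\[
\|\boldsymbol{h}+\mu\boldsymbol{p}\|^2 \leq \bigl(1-2\mu c_1+\mu^2 c_2^2\bigr)\|\boldsymbol{h}\|^2,
\]
which is a contraction only when $\mu< 2c_1/c_2^2\approx 0.38$. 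At $\mu=1$ the right-hand side is about $12\,\|\boldsymbol{h}\|^2$, i.e.\ an expansion, so ``rerunning \eqref{eq:convergence-rate}'' without a quantitative cap on $\mu_t$ does not close the argument.

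The paper's approach is exactly designed to supply the missing piece: it derives a precise scalar Taylor-type majorization of $\ell_i(\boldsymbol{z}+\tau\boldsymbol{p})-\ell_i(\boldsymbol{z})$ in \eqref{eq:scalar-ub}, decomposes the per-sample increment into the five terms $I_{1,i},\dots,I_{5,i}$, and bounds their truncated averages to obtain the concave-quadratic upper envelope \eqref{eq:majorization} for $\frac{1}{m}\sum_i\{\ell_i(\boldsymbol{z}+\tau\boldsymbol{p})-\ell_i(\boldsymbol{z})\}{\bf 1}_{\mathcal{E}_3^i}$. Comparing that envelope with the Armijo threshold $\frac12\tau\|\boldsymbol{p}\|^2$ shows the criterion \emph{must fail} once $\tau>\tau_{\mathrm{ub}}=\frac{0.6+\delta+\tilde\epsilon}{1.95(1-\tilde\zeta_1-\tilde\zeta_2)}\leq 0.384$, so the accepted step automatically satisfies $\mu_t\leq\mu_0$; the lower bound is handled by a symmetric minorization. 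A generic $L$-smoothness/Taylor argument of the kind you sketch would only give the lower bound on $\mu_t$; to get the needed upper bound you would have to show the Armijo sufficient-increase test is \emph{violated} for larger $\tau$, which requires a two-sided (in particular, a \emph{strict upper}) control of the objective increment along the ray with explicit constants. If you want to repair your argument along your own lines, you would need a matching \emph{lower} bound on the magnitude of the second directional derivative (or, equivalently, a strict upper bound on the increment $\widehat\ell(\boldsymbol{z}+\tau\boldsymbol{p})-\widehat\ell(\boldsymbol{z})$ as a function of $\tau$) sharp enough to rule out acceptance of any $\tau>\mu_0$; the paper's per-term inequality \eqref{eq:scalar-ub} is precisely the mechanism for that.
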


Note that if $\alpha_{h}\geq 6$,
$\alpha_{z}^{\mathrm{ub}}\geq 5$ and $\alpha_{z}^{\mathrm{lb}}\leq 0.1$,
then the boundary step size $\mu_{0}$ given in Proposition \ref{prop-Contraction}
satisfies 
\[
	\frac{0.994-\zeta_{1}-\zeta_{2}-\sqrt{2/(9\pi)}\alpha_{h}^{-1}}{2\left(1.02+0.665\alpha_{h}^{-1}\right)}\geq 0.384.
\]
Thus, it suffices to show that the step size obtained by a backtracking
line search lies within (0,0.384). For notational convenience, we will
set
\[
	\boldsymbol{p} := m^{-1}\nabla\ell_{\mathrm{tr}}\left(\boldsymbol{z}\right) \quad\text{and}\quad
	\mathcal{E}_{3}^{i} := \left\{ \left|\boldsymbol{a}_{i}^{\top}\boldsymbol{z}\right|\geq\alpha_{z}^{\mathrm{lb}}\left\Vert \boldsymbol{z}\right\Vert \text{ and }\left|\boldsymbol{a}_{i}^{\top}\boldsymbol{p}\right|\leq\alpha_{p}\left\Vert \boldsymbol{p}\right\Vert \right\} 
\]
throughout the rest of the proof. We also impose the assumption
\begin{equation}
\left\Vert \boldsymbol{p}\right\Vert /\left\Vert \boldsymbol{z}\right\Vert \leq\epsilon\label{eq:p-z-ratio}
\end{equation}
for some sufficiently small constant $\epsilon>0$, so that $\left|\boldsymbol{a}_{i}^{\top}\boldsymbol{p}\right|/\left|\boldsymbol{a}_{i}^{\top}\boldsymbol{z}\right|$
is small for all non-truncated terms. It is self-evident from (\ref{eq:regularity})
that in the regime under study, one has
\begin{equation}
	\left\Vert \boldsymbol{p}\right\Vert \geq 2\left\{ 1.99-2\left(\zeta_{1}+\zeta_{2}\right)-\sqrt{8/\pi}(3\alpha_{h})^{-1}-o\left(1\right)\right\} \left\Vert \boldsymbol{h}\right\Vert 
	\geq 3.64 \left\Vert \boldsymbol{h}\right\Vert .
	\label{eq:p-h-ratio}
\end{equation}

To start with, consider three scalars $h$, $b$, and $\delta$. Setting
$b_{\delta}:=\frac{\left(b+\delta\right)^{2}-b^{2}}{b^{2}}$, we get
\begin{align}
	 & \left(b+h\right)^{2}\log\frac{\left(b+\delta\right)^{2}}{b^{2}}-\left(b+\delta\right)^{2}+b^{2}=(b+h)^{2}\log\left(1+b_{\delta}\right)-b^{2}b_{\delta}\nonumber \\
 	& \quad\overset{(\text{i})}{\leq}\left(b+h\right)^{2}\left\{ b_{\delta}-0.4875b_{\delta}^{2}\right\} -b^{2}b_{\delta}\text{ }=\text{ }(\left(b+h\right)^{2}-b^{2})b_{\delta}-0.4875\left(b+h\right)^{2}b_{\delta}^{2}\nonumber \\
 	& \quad = h\delta\left(2+h/b\right)\left(2+\delta/b\right)-0.4875\left(1+h/b\right)^{2}\left|\delta\left(2+\delta/b\right)\right|^{2}\nonumber \\
 	& \quad=4h\delta+\frac{2h^{2}\delta}{b}+\frac{2h\delta^{2}}{b}+\frac{h^{2}\delta^{2}}{b^{2}}-0.4875\delta^{2}\left(1+\frac{h}{b}\right)^{2}\left(2+\frac{\delta}{b}\right)^{2},
	\label{eq:scalar-ub}
\end{align}
where (i) follows from the inequality $\log\left(1+x\right)\leq x-0.4875x^{2}$
for sufficiently small $x$. To further simplify the bound, observe that
\[
	\delta^{2}\left(1+\frac{h}{b}\right)^{2}\left(2+\frac{\delta}{b}\right)^{2}\geq4\delta^{2}\left(1+\frac{h}{b}\right)^{2}+\delta^{2}\left(1+\frac{h}{b}\right)^{2}\frac{4\delta}{b} 
\]
\[ 
	\text{and}\quad
	\frac{2h\delta^{2}}{b}+\frac{h^{2}\delta^{2}}{b^{2}}=\left(\left(1+\frac{h}{b}\right)^{2}-1\right)\delta^{2}.
\]
Plugging these two identities into (\ref{eq:scalar-ub}) yields
\begin{eqnarray*}
	(\ref{eq:scalar-ub}) & \leq & 4h\delta+\frac{2h^{2}\delta}{b}-\left(0.95\left(1+\frac{h}{b}\right)^{2}+1\right)\delta^{2}-0.4875\delta^{2}\left(1+\frac{h}{b}\right)^{2}\frac{4\delta}{b}\\
 	& \leq & 4h\delta-1.95\delta^{2}+\frac{2h^{2}\left|\delta\right|}{\left|b\right|} + \frac{1.9 |h|}{\left|b\right|}\delta^{2}+\frac{1.95\left|\delta^{3}\right|}{\left|b\right|}\left(1+\frac{h}{b}\right)^{2}.
\end{eqnarray*}
Replacing respectively $b$, $\delta$, and $h$ with $\boldsymbol{a}_{i}^{\top}\boldsymbol{z}$,
$\tau\boldsymbol{a}_{i}^{\top}\boldsymbol{p}$, and $-\boldsymbol{a}_{i}^{\top}\boldsymbol{h}$,
one sees that the log-likelihood $\ell_{i}\left(\boldsymbol{z}\right)=y_{i}\log(|\boldsymbol{a}_{i}^{\top}\boldsymbol{z}|^{2})-|\boldsymbol{a}_{i}^{\top}\boldsymbol{z}|^{2}$
obeys
\begin{align*}
 	& \ell_{i}  \left(\boldsymbol{z}+\tau\boldsymbol{p}\right)-\ell_{i}\left(\boldsymbol{z}\right)=y_{i}\log\frac{\left|\boldsymbol{a}_{i}^{\top}\left(\boldsymbol{z}+\tau\boldsymbol{p}\right)\right|^{2}}{\left|\boldsymbol{a}_{i}^{\top}\boldsymbol{z}\right|^{2}}-\left|\boldsymbol{a}_{i}^{\top}\left(\boldsymbol{z}+\tau\boldsymbol{p}\right)\right|^{2}+\left|\boldsymbol{a}_{i}^{\top}\boldsymbol{z}\right|^{2}\\
 	& \quad\leq  \underset{:=I_{1,i}}{\underbrace{-4\tau\left(\boldsymbol{a}_{i}^{\top}\boldsymbol{h}\right)\left(\boldsymbol{a}_{i}^{\top}\boldsymbol{p}\right)}}  -  \underset{:=I_{2,i}}{\underbrace{1.95\tau^{2}\left(\boldsymbol{a}_{i}^{\top}\boldsymbol{p}\right)^{2}}}+\underset{:=I_{3,i}}{\underbrace{\frac{2\tau\left(\boldsymbol{a}_{i}^{\top}\boldsymbol{h}\right)^{2}\left|\boldsymbol{a}_{i}^{\top}\boldsymbol{p}\right|}{\left|\boldsymbol{a}_{i}^{\top}\boldsymbol{z}\right|}}}+\underset{:=I_{4,i}}{\underbrace{\frac{1.9\tau^{2}\left|\boldsymbol{a}_{i}^{\top}\boldsymbol{h}\right|}{\left|\boldsymbol{a}_{i}^{\top}\boldsymbol{z}\right|}\left(\boldsymbol{a}_{i}^{\top}\boldsymbol{p}\right)^{2}}}\\
 	& \quad\quad\quad\quad\quad + \underset{:=I_{5,i}}{\underbrace{\frac{1.95\tau^{3}\left|\boldsymbol{a}_{i}^{\top}\boldsymbol{p}\right|^{3}}{\left|\boldsymbol{a}_{i}^{\top}\boldsymbol{z}\right|}\left(1-\frac{\boldsymbol{a}_{i}^{\top}\boldsymbol{h}}{\boldsymbol{a}_{i}^{\top}\boldsymbol{z}}\right)^{2}}}.
\end{align*}

The next step is then to bound each of these terms separately. Most
of the following bounds are straightforward consequences from \cite[Lemma 3.1]{candes2012phaselift}
combined with the truncation rule. For the first term, applying the
AM-GM inequality we get
\begin{align*}
	\frac{1}{m}\sum_{i=1}^{m}I_{1,i}{\bf 1}_{\mathcal{E}_{3}^{i}} &\leq\frac{4\tau}{3.64m}\sum_{i=1}^{m}\left\{ \frac{3.64^2}{2}\left(\boldsymbol{a}_{i}^{\top}\boldsymbol{h}\right)^{2}+\frac{1}{2}\left(\boldsymbol{a}_{i}^{\top}\boldsymbol{p}\right)^{2}\right\} \\
	&\leq \frac{4\tau\left(1+\delta\right)}{3.64}\left\{ \frac{3.64^2}{2}\left\Vert \boldsymbol{h}\right\Vert ^{2}+\frac{1}{2}\left\Vert \boldsymbol{p}\right\Vert ^{2}\right\} .
\end{align*}
Secondly, it follows from Lemma \ref{Lemma:I1_I2} that
\begin{eqnarray*}
	\frac{1}{m}\sum_{i=1}^{m}I_{2,i}{\bf 1}_{\mathcal{E}_{3}^{i}} 
	& = & -1.95\tau^{2}\frac{1}{m}\sum_{i=1}^{m}\left(\boldsymbol{a}_{i}^{\top}\boldsymbol{p}\right)^{2}{\bf 1}_{\mathcal{E}_{3}^{i}}
	\leq -1.95\left(1-\tilde{\zeta}_{1}-\tilde{\zeta}_{2}\right)\tau^{2}\left\Vert \boldsymbol{p}\right\Vert ^{2},
\end{eqnarray*}
where $\tilde{\zeta}_{1}:= \max \{ \mathbb{E}\big[\xi^{2}{\bf 1}_{\left\{ \left|\xi\right|\leq\sqrt{1.01}\alpha_{z}^{\mathrm{lb}}\right\} }\big], \mathbb{E}\big[{\bf 1}_{\left\{ \left|\xi\right|\leq\sqrt{1.01}\alpha_{z}^{\mathrm{lb}}\right\} }\big] \}$
and $\tilde{\zeta}_{2}:=\mathbb{E}\big[\xi^{2}{\bf 1}_{\left\{ \left|\xi\right|>\sqrt{0.99}\alpha_{h}\right\} }\big]$.
The third term is controlled by
\[
	\frac{1}{m}\sum_{i=1}^{m}I_{3,i}{\bf 1}_{\mathcal{E}_{3}^{i}}\leq2\tau\frac{\alpha_{p}\left\Vert \boldsymbol{p}\right\Vert }{\alpha_{z}^{\mathrm{lb}}\left\Vert \boldsymbol{z}\right\Vert }\left\{ \frac{1}{m}\sum_{i=1}^{m}\left(\boldsymbol{a}_{i}^{\top}\boldsymbol{h}\right)^{2}\right\} 
	\lesssim  \tau\epsilon\left\Vert \boldsymbol{h}\right\Vert ^{2}.
\]
Fourthly, it arises from the AM-GM inequality that 
\begin{eqnarray*}
	\frac{1}{m}\sum_{i=1}^{m}I_{4,i}{\bf 1}_{\mathcal{E}_{3}^{i}} 
	& \leq & \frac{1.9\tau^{2}\alpha_{p}\left\Vert \boldsymbol{p}\right\Vert }{\alpha_{z}^{\mathrm{lb}}\left\Vert \boldsymbol{z}\right\Vert }\frac{1}{m}\sum_{i=1}^{m}\left|\boldsymbol{a}_{i}^{\top}\boldsymbol{h}\right|\left|\boldsymbol{a}_{i}^{\top}\boldsymbol{p}\right|
	\lesssim \epsilon\tau^{2}\frac{1}{m}\sum_{i=1}^{m}\left\{ 2\left|\boldsymbol{a}_{i}^{\top}\boldsymbol{h}\right|^{2}+\frac{1}{8}\left|\boldsymbol{a}_{i}^{\top}\boldsymbol{p}\right|^{2}\right\} \\
	& \lesssim &  \epsilon\tau^{2}\left\Vert \boldsymbol{p}\right\Vert ^{2}.
\end{eqnarray*}
Finally, the last term is bounded by
\begin{eqnarray*}
	\frac{1}{m}\sum_{i=1}^{m}I_{5,i}{\bf 1}_{\mathcal{E}_{3}^{i}}
	 & \leq & \frac{1}{m}\sum_{i=1}^{m}\frac{1.95\tau^{3}\left|\boldsymbol{a}_{i}^{\top}\boldsymbol{p}\right|^{3}}{\left|\boldsymbol{a}_{i}^{\top}\boldsymbol{z}\right|}\left(\frac{\boldsymbol{a}_{i}^{\top}\boldsymbol{x}}{\boldsymbol{a}_{i}^{\top}\boldsymbol{z}}\right)^{2}
	\leq \frac{1.95\tau^{3}\alpha_{p}^{3}\left\Vert \boldsymbol{p}\right\Vert ^{3}}{(\alpha_{z}^{\mathrm{lb}})^{3}\left\Vert \boldsymbol{z}\right\Vert ^{3}}\frac{1}{m}\sum_{i=1}^{m}\left(\boldsymbol{a}_{i}^{\top}\boldsymbol{x}\right)^{2} \\
	& \lesssim & \tau^{3}\epsilon\frac{\left\Vert \boldsymbol{x}\right\Vert ^{2}}{\left\Vert \boldsymbol{z}\right\Vert ^{2}}\left\Vert \boldsymbol{p}\right\Vert ^{2}.
\end{eqnarray*}
Under the hypothesis (\ref{eq:p-h-ratio}), we can further derive
$\frac{1}{m}\sum_{i=1}^{m}I_{1,i}{\bf 1}_{\mathcal{E}_{3}^{i}}\leq\tau\left(1.1+\delta\right)\left\Vert \boldsymbol{p}\right\Vert ^{2}$.
Putting all the above bounds together yields that the truncated objective
function is majorized by
\begin{align}
 	& \frac{1}{m}\sum_{i=1}^{m}\left\{ \ell_{i}\left(\boldsymbol{z}+\tau\boldsymbol{p}\right)-\ell_{i}\left(\boldsymbol{z}\right)\right\} {\bf 1}_{\mathcal{E}_{3}^{i}}\leq\frac{1}{m}\sum_{i=1}^{m}\left(I_{1,i}+I_{2,i}+I_{3,i}+I_{4,i}+I_{5,i}\right){\bf 1}_{\mathcal{E}_{3}^{i}}\nonumber \\
 	& \quad\leq\tau\left(1.1+\delta\right)\left\Vert \boldsymbol{p}\right\Vert ^{2}-1.95\left(1-\tilde{\zeta}_{1}-\tilde{\zeta}_{2}\right)\tau^{2}\left\Vert \boldsymbol{p}\right\Vert ^{2}+\tau\tilde{\epsilon}\left\Vert \boldsymbol{p}\right\Vert ^{2}\nonumber \\
 	& \quad=\left\{ \tau\left(1.1+\delta\right)-1.95\left(1-\tilde{\zeta}_{1}-\tilde{\zeta}_{2}\right)\tau^{2}+\tau\tilde{\epsilon}\right\} \left\Vert \boldsymbol{p}\right\Vert ^{2}
	\label{eq:majorization}
\end{align}
for some constant $\tilde{\epsilon}>0$ that is linear in $\epsilon$.

Note that the backtracking line search seeks a point satisfying $\frac{1}{m}\sum_{i=1}^{m}\left\{ \ell_{i}\left(\boldsymbol{z}+\tau\boldsymbol{p}\right)-\ell_{i}\left(\boldsymbol{z}\right)\right\} {\bf 1}_{\mathcal{E}_{3}^{i}}\geq\frac{1}{2}\tau\left\Vert \boldsymbol{p}\right\Vert ^{2}$.
Given the above majorization (\ref{eq:majorization}), this search
criterion is satisfied only if
\[
	\tau/2\leq\tau\left(1.1+\delta\right)-1.95(1-\tilde{\zeta}_{1}-\tilde{\zeta}_{2})\tau^{2}+\tau\tilde{\epsilon}
\]
or, equivalently,
\[
	\tau\leq\frac{0.6+\delta+\tilde{\epsilon}}{1.95(1-\tilde{\zeta}_{1}-\tilde{\zeta}_{2})}:=\tau_{\mathrm{ub}}.
\]
Taking $\delta$ and $\tilde{\epsilon}$ to be sufficiently small,
we see that $\tau\leq\tau_{\mathrm{ub}}\leq0.384$, provided that $\alpha_{z}^{\mathrm{lb}}\leq0.1$, $\alpha_z^{\mathrm{ub}} \geq 5$, $\alpha_h \geq 6$,
and $\alpha_{p}\geq 5$. 

Using very similar arguments, one can also show that $\frac{1}{m}\sum_{i=1}^{m}\left\{ \ell_{i}\left(\boldsymbol{z}+\tau\boldsymbol{p}\right)-\ell_{i}\left(\boldsymbol{z}\right)\right\} {\bf 1}_{\mathcal{E}_{3}^{i}}$
is minorized by a similar quadratic function, which combined with
the stopping criterion $\frac{1}{m}\sum_{i=1}^{m}\left\{ \ell_{i}\left(\boldsymbol{z}+\tau\boldsymbol{p}\right)-\ell_{i}\left(\boldsymbol{z}\right)\right\} {\bf 1}_{\mathcal{E}_{3}^{i}}\geq\frac{1}{2}\tau\left\Vert \boldsymbol{p}\right\Vert ^{2}$
suggests that $\tau$ is bounded away from 0. We omit this part for
conciseness.

\section*{Acknowledgements}

E.~C.~is partially supported by NSF under grant CCF-0963835 and by the
Math + X Award from the Simons Foundation.  Y.~C.~is supported by the
same NSF grant.  We thank Carlos Sing-Long and Weijie Su for their constructive
comments about an early version of the manuscript. 
  E.~C.~is grateful to Xiaodong Li and Mahdi Soltanolkotabi for many
  discussions about Wirtinger flows. We thank the anonymous reviewers for helpful comments.

\bibliographystyle{unsrt}
\bibliography{bibfileNonconvex}

\end{document}